\pdfoutput=1
\documentclass[11pt,a4paper]{article}
\usepackage[nocompress]{cite}
\usepackage[margin=1.25in]{geometry}
\usepackage{amsmath,amssymb,amsthm}
\usepackage{enumitem}
\usepackage[T1]{fontenc}
\usepackage[sc]{mathpazo}
\usepackage{microtype}
\usepackage{flafter}
\usepackage[hidelinks]{hyperref}
\hypersetup{pdftitle={Stability of Matter Interacting with Classical Non-Abelian Gauge Fields},
  pdfauthor={Robin Reuvers, Jan Philip Solovej, David M. A. Stuart}}
\newtheorem{theorem}{Theorem}
\newtheorem{corollary}[theorem]{Corollary}
\newtheorem{proposition}[theorem]{Proposition}
\newtheorem{lemma}[theorem]{Lemma}
\theoremstyle{remark}
\newtheorem{remark}[theorem]{Remark}

\newcommand{\R}{\mathbb{R}}
\newcommand{\C}{\mathbb{C}}
\newcommand{\su}{\mathfrak{su}(2)}
\newcommand{\tr}{\operatorname{tr}}
\newcommand{\Tr}{\operatorname{Tr}}
\newcommand{\1}{\mathbf 1}
\newcommand{\E}{\mathbb E}
\newcommand{\balpha}{\boldsymbol{\alpha}}
\newcommand{\ux}{\underline{x}}
\newcommand{\dist}{\operatorname{dist}}
\newcommand{\curl}{\operatorname{curl}}
\newcommand{\LY}{\mathrm{LY}}
\newcommand{\zhat}{\hat z}

\title{Stability of Matter Interacting with Classical Non-Abelian Gauge Fields}
\author{Robin Reuvers\textsuperscript{1,*}, Jan Philip Solovej\textsuperscript{2} and
David M.\,A.\ Stuart\textsuperscript{3}}
\date{}

\begin{document}
\maketitle
\vspace{-2em}
\begin{center}\small
\textsuperscript{1}Dipartimento di Matematica e Fisica, Universit\`a degli Studi Roma Tre, L.go S.\,L.
Murialdo 1, 00146 Roma, Italy. E-mail: robin.reuvers@uniroma3.it\\
\textsuperscript{2}QMATH, Department of Mathematics, University of Copenhagen,
Universitetsparken 5, DK-2100 Copenhagen \O, Denmark. E-mail: solovej@math.ku.dk\\
\textsuperscript{3}Centre for Mathematical Sciences, University of Cambridge, Wilberforce Road,
Cambridge CB3 0WA, United Kingdom. E-mail: dmas2@cam.ac.uk\\[2pt]
\textsuperscript{*}Corresponding author
\end{center}

\begin{abstract}
Stability of matter is the property that the ground state energy of $N$ charged quantum particles is
bounded below by $-CN$, with $C$ independent of $N$. We ask whether this holds for colour-charged fermions interacting with a classical non-abelian
gauge field. The field is a dynamical variable on the same footing as the matter, and the energy is
minimized over both. The charges are matrices, and every pair of particles has an attractive colour channel. Non-relativistic spinless fermions are stable for every value of the coupling constant $\alpha$. Relativistic spinless fermions and non-relativistic spin-$\frac12$ fermions are stable for small
$\alpha$ and unstable for large $\alpha$. For Dirac fermions the choice
of positive-energy subspace decides: with that of the full Dirac operator they are stable for small
$\alpha$, with the free one unstable for every $\alpha$. Two species of bosons are unstable. In every case the outcome is the same as for electric
charges, although essential steps of the abelian proofs fail. The key tool is an electrostatic inequality for matrix-valued charges, uniform in
the field, which replaces the uniformly charged sphere of Newton's theorem by the surface charge of a
covariant harmonic extension. The stability results hold for every compact gauge group,
and most of them for non-smooth fields of finite energy.
\end{abstract}

\noindent\textbf{Keywords:} stability of matter; non-abelian gauge theory; classical Yang--Mills
field; colour charge; Pauli operator; Lieb--Thirring inequality; relativistic stability.

\noindent\textbf{Mathematics Subject Classification (2020):} 81V70, 81V05, 81T13, 81Q10, 35P15.

\section{Introduction}\label{sec:intro}

If the energy of $N$ charged particles could decrease faster than linearly in $N$, bringing pieces of
matter together would release ever more energy per particle, and bulk matter would collapse. That it
does not is the stability of matter: the ground state energy $\mathcal E_N$ of $N$ charged quantum particles
satisfies $\mathcal E_N\ge-CN$, with $C$ independent of $N$. For electrons and nuclei interacting through the
Coulomb potential, this was first proved by Dyson and Lenard~\cite{DL67,LD68}.
Lieb and Thirring later found a short proof based on a kinetic energy inequality for
fermions~\cite{LT75,LT76} (see~\cite{Li76} for an early review), and the book by Lieb and
Seiringer~\cite{LS10} gives a comprehensive account of the theory. The mechanism is by now well
understood: the Pauli principle keeps the electrons apart, and electrostatics---Newton's theorem in one form or another---reduces the many-body Coulomb
energy to a sum of one-body terms that the kinetic energy can absorb.

In all of this, the charge of a particle is a number. Quarks carry a different kind of charge,
colour, which takes values in a representation of a non-abelian group, and they interact through a
non-abelian gauge field. In this paper we ask whether stability of matter survives this change.

For definiteness we take the gauge group to be $SU(2)$, although our stability results extend to
general compact groups, and to systems in which each species has its own representation and kinetic
energy (Theorem~\ref{thm:groups}). The particles carry $SU(2)$
colour and are minimally coupled to a classical $SU(2)$ gauge field. This is a vector field
$A=(A_1,A_2,A_3)$ on $\R^3$ with values in the Lie algebra $\su$. It acts on the colour space of each
particle through the covariant derivative $\nabla_A=\nabla+A$. Together with its time component $A_0$, also
$\su$-valued, the gauge field has the curvature $F_{\mu\nu}=\partial_\mu A_\nu-\partial_\nu A_\mu+[A_\mu,A_\nu]$, where
$\mu,\nu=0,\dots,3$ and $\partial_0=\partial_t$. The colour-magnetic field is $B_l=\frac12\varepsilon_{ljk}F_{jk}$, and the
colour-electric field is $E_j=F_{j0}=\partial_jA_0-\partial_tA_j+[A_j,A_0]$.

The field $E$ satisfies the Gauss law $\nabla_A\cdot E=4\pi\alpha\rho$, where $A$ acts on $E$
by $[A,\cdot\,]$ (the adjoint representation), $\alpha$ is the coupling
constant (the analogue of the fine structure constant) and $\rho$ is the colour charge density of the
particles. In electrodynamics, the Coulomb
interaction is what remains of the electric field once its longitudinal part has been eliminated with
the Gauss law. Doing the same here, by writing $E=E_\perp-\nabla_A\varphi$ with $\nabla_A\cdot E_\perp=0$
(Section~\ref{sec:colour} and Appendix~\ref{app:derivation}), produces an instantaneous interaction
$\alpha V_A$ with
\begin{equation}\label{eq:VAintro}
V_A=4\pi\sum_{1\le i<j\le N}e_ie_j\sum_{a,b=1}^3K_A^{ab}(x_i,x_j)\,\tau^a_{(i)}\tau^b_{(j)} .
\end{equation}
Here $x_i$ and $e_i$ are the position and
charge of particle $i$, the Pauli matrices $\tau^a_{(i)}$ act on its two-dimensional colour space,
and $K_A$ is the Green function of the covariant Laplacian $-\Delta_A=-\nabla_A\cdot\nabla_A$ in the adjoint representation.
The factor $4\pi$ makes $V_A$ reduce to a Coulomb potential, with matrix-valued charges, when $A=0$.

The potential~\eqref{eq:VAintro} differs from the Coulomb potential of electrodynamics in two ways.
First, the kernel $K_A$ depends on the gauge field. Second, the charge of particle $i$ is not a
number but the triple of matrices $e_i\tau^a_{(i)}$. At $A=0$ the kernel is
$\delta^{ab}/(4\pi|x_i-x_j|)$, and for two particles
\begin{equation}\label{eq:twobody}
V_0=\frac{e_1e_2}{|x_1-x_2|}\sum_{a=1}^3\tau^a\otimes\tau^a
=\frac{e_1e_2}{|x_1-x_2|}\times\begin{cases}-3&\text{on the colour singlet},\\
\phantom{-}1&\text{on the colour triplet}.\end{cases}
\end{equation}
Whatever the sign of $e_1e_2$, one of the two channels is attractive.

The gauge field is a dynamical variable, on the same footing as the particles. The state of the
system consists of a normalized fermionic wave function $\Psi$ and two classical fields, $A$ and $E_\perp$, which are independent variables in the Hamiltonian
description (Section~\ref{sec:hamiltonians}). Its energy is $\langle\Psi,H_{\rm tot}\Psi\rangle$, with
\begin{equation}\label{eq:Htot}
H_{\rm tot}=\sum_{i=1}^NT_{A,i}+\alpha V_A+\frac1{8\pi\alpha}\int_{\R^3}|B|^2+\frac1{8\pi\alpha}\int_{\R^3}|E_\perp|^2 ,
\end{equation}
where $T_{A,i}$ is the kinetic energy of particle $i$, of mass $m_i$:
$p_{A,i}^2/2m_i$ in the Schr\"odinger model, with $p_A=-i\nabla_A$ the covariant momentum, and
$(\sigma\cdot p_A)_i^2/2m_i$ in the Pauli model, with $\sigma$ the Pauli matrices acting on spin
(Section~\ref{sec:model}). In the relativistic models $T_{A,i}$ is $\sqrt{p_{A,i}^2+m_i^2}$ or the
Dirac operator (Section~\ref{sec:relmodels}). For $e_i=\frac12$ and the Schr\"odinger, Pauli and Dirac
kinetic energies, $H_{\rm tot}$ is the energy obtained from the Yang--Mills Lagrangian
(Appendix~\ref{app:derivation}). Formally, it generates the coupled time evolution of $\Psi$, $A$ and
$E_\perp$, which is that of a classical Yang--Mills field interacting with quantum matter. We use only the energy.
The ground state energy is now
\[
\mathcal E_N=\inf_{A,E_\perp}\inf_\Psi\langle\Psi,H_{\rm tot}\Psi\rangle ,
\]
the energy of matter with a self-generated field. Here $A$ ranges over smooth fields, and $\Psi$ over smooth,
compactly supported normalized states in the fermionic $N$-particle space of
Section~\ref{sec:hamiltonians}; for the Dirac model see Section~\ref{sec:relmodels}. Since $E_\perp$ enters only through a nonnegative
term, the infimum over $E_\perp$ is attained at $E_\perp=0$, and we drop it.

The question is therefore whether $\mathcal E_N\ge-CN$. For the two non-relativistic models the answer
is yes: for fermions,
\begin{align*}
\mathcal E_N&\ge-C_1\,M\alpha^2Q^4N\quad\text{in the Schr\"odinger model, for every }\alpha,\\
\mathcal E_N&\ge-C_2\,M\alpha^2Q^4N\quad\text{in the Pauli model, for $\alpha Q$ below an explicit threshold},
\end{align*}
where $M=\max_im_i$ and $Q=\max_i|e_i|$, and $C_1$ and $C_2$ are
explicit (Theorems~\ref{thm:spinless} and~\ref{thm:pauli}; for one species of spin-$\frac12$
particles the threshold is $\alpha^2Q^2\le9.00\times10^{-4}$). In the Schr\"odinger model the bound holds even if
the energy of the field is left out. These bounds have the right form: far-apart singlet pairs, each a hydrogen-like atom, have energy of
the same order (Remark~\ref{rem:sharp}). 

We are not aware of earlier results on stability of matter with non-abelian Coulomb interactions.
Three features of the abelian proofs are lost:
\begin{enumerate}[leftmargin=2em,itemsep=2pt]
\item \emph{There are no like charges.} In electrodynamics like charges repel, and stability
proofs exploit this at every turn. Here every pair of particles has an attractive colour channel,
by~\eqref{eq:twobody}. (Proposition~\ref{prop:electro} nevertheless bounds the Coulomb energy from
below.)
\item \emph{Newton's theorem fails.} In the abelian case, the potential of a uniformly charged sphere
equals that of a point charge outside the sphere. This is what allows one to smear out point charges
at no cost in the interaction energy. For the matrix-valued kernel $K_A$ there
is no such statement, and no obvious candidate for a `uniformly charged' sphere.
(Lemma~\ref{lem:newton} provides a substitute.)
\item \emph{Colour-charged particles cannot be fixed nuclei.} Two particles of mass $m$ in the colour
singlet form a hydrogen-like atom with reduced mass $m/2$ and coupling $3\alpha e^2$, so their energy
is $-\frac94m\alpha^2e^4$, which tends to $-\infty$ with the mass. So particles carrying colour cannot
play the role of infinitely heavy, mutually repelling nuclei. All colour-charged particles are
dynamical, and because of the attraction in~\eqref{eq:twobody} we take all of them to be fermions. The random splitting of Section~\ref{sec:kin} provides substitute nuclei. Static nuclei without
colour charge, coupled to the particles through an additional abelian field, can be included in
the Schr\"odinger model, as sketched in Remarks~\ref{rem:nuclei} and~\ref{rem:nucleiproof}.
\end{enumerate}

Newton's theorem also fails on a curved Riemannian manifold, and with it the usual proof of stability for
ordinary abelian matter. There,
smearing each charge with the harmonic measure on a level set of the Green function replaces the uniform
smearing over spheres. This was introduced in~\cite{Enc11} for complete manifolds with nonnegative Ricci
curvature, and extended in~\cite{Stu26} to asymptotically Euclidean manifolds, with bounds on the curvature
but no condition on its sign. Lemma~\ref{lem:newton} plays the same role for $K_A$, with spheres in place of
level sets: it replaces a point charge by a surface charge on a sphere whose potential outside the sphere is that of the
point charge.

\subsection{Summary of results}\label{sec:mainintro}

Section~\ref{sec:results} states our results precisely, and Table~\ref{tab:constants} collects them.
As in the abelian theory, whether stability holds depends on the kinetic energy.

For spin-$\frac12$ fermions stability fails for large $\alpha$, so the threshold in Theorem~\ref{thm:pauli}
cannot be removed. The Pauli operator has zero modes, and in a suitable field two particles attract
each other with no kinetic energy to oppose them. The field energy is then the only obstacle. It
prevails for small $\alpha$, and for large $\alpha$ the two particles collapse (Theorem~\ref{thm:collapse}).

For relativistic spinless fermions, stability holds for small $\alpha$ and fails for large $\alpha$
(Theorems~\ref{thm:rel} and~\ref{thm:relcollapse}), as Lieb and Yau showed in the abelian
case~\cite{LY88}.

For the Dirac operator the outcome depends on the positive-energy subspace to which the particles
are confined. If it is the positive spectral subspace of the Dirac operator with the gauge field, the
model is stable for small $\alpha$ (Theorem~\ref{thm:dirac}). If it is the free positive-energy
subspace, the model is unstable for every $\alpha$ (Theorem~\ref{thm:free}).

We also show that two species of bosons are unstable (Theorem~\ref{thm:bosons}), extend the
stability results to compact gauge groups and to mixtures of species with different representations
and kinetic energies (Theorem~\ref{thm:groups}), and extend the spinless
results to gauge fields in $L^2_{\rm loc}$ and the Pauli results to all such fields of finite energy
(Propositions~\ref{prop:rough} and~\ref{prop:roughpauli}).

\subsection{The abelian case}

The known results for electrons and nuclei follow the same pattern. Without spin, stability holds uniformly in the magnetic field, by the diamagnetic
inequality~\cite{LL01}, \cite[Sect.~4.4]{LS10}.

With spin, the field energy must be included. The Pauli operator has zero modes~\cite{LY86}, and
without the field energy even a one-electron atom has energy unbounded below as the field grows: in a
constant field $B$ the Zeeman term lowers the lowest Landau level to zero, and the binding energy below
it grows like $(\ln B)^2$~\cite[Theorem~2.5]{AHS81}. With the field energy, and with $Z$ the largest nuclear charge, stability of matter holds for small
$\alpha$ and $Z\alpha^2$~\cite{F95,LLS95} and fails for large values~\cite{FLL86,LL86}; the
one-electron atom was treated in~\cite{FLL86}, and the many-electron atom and the one-electron molecule
in~\cite{LL86}. Stability also holds with a quantized radiation field with an ultraviolet cutoff~\cite{BFG96,FFG97}.

With relativistic kinetic energy, stability holds for $Z\alpha\le2/\pi$ and small $\alpha$~\cite{LY88}
(see also~\cite{Co84,FdlL86}), also in magnetic fields~\cite{FLS07}, and fails for large
$\alpha$~\cite{LY88}. For the Dirac operator, the choice of positive-energy subspace matters. With the
free subspace, the energy is unbounded below for every $\alpha>0$~\cite{LSS97,GT99}. With the subspace
of the Dirac operator including the magnetic field, stability holds for small $\alpha$ and
$Z\alpha$~\cite{LSS97}, also for a quantized field with an ultraviolet cutoff~\cite{LLo02}.

For bosons stability fails: the energy of a neutral system of charged bosons behaves like
$-N^{7/5}$~\cite{Dy67,CLY88,LSo04,So06}.

In every case, then, the answer for colour charges is the same as for electric charges. What is new is
the proof, since the three features of the abelian proofs listed above are lost. The two theories also
differ in how the interaction is derived. In electrodynamics, eliminating the longitudinal electric
field gives the Coulomb potential with no choices to make. In Yang--Mills theory, the standard
Hamiltonian treatment uses the Coulomb gauge, where the kernel involves the Faddeev--Popov
operator~\cite[Sect.~IV]{CL80}. We instead split the electric field in a gauge-covariant way, which makes
the kernel the Green function of the covariant Laplacian in the adjoint representation.
Appendix~\ref{app:derivation} gives the derivation and explains why we make no claim about the
Coulomb-gauge Hamiltonian.

\subsection{Ideas of the proof}\label{sec:ideas}

All stability proofs in this paper follow the same route, so we describe it once and compare it with
the abelian argument that it generalizes. There are three steps: an electrostatic inequality, a
reduction to a one-body problem, and Lieb--Thirring-type bounds for that one-body problem. In the
Pauli and relativistic cases a fourth step absorbs an error from the centres.

\emph{Electrostatics.} Consider classical point charges $e_1,\dots,e_N$ at positions
$x_1,\dots,x_N$, and let $d_i$ be the distance from $x_i$ to its nearest neighbour. The Coulomb
energy satisfies
\begin{equation}\label{eq:abelianNN}
\sum_{i<j}\frac{e_ie_j}{|x_i-x_j|}\ \ge\ -\sum_i\frac{e_i^2}{d_i}.
\end{equation}
The proof is short. Spread each charge uniformly over the sphere of radius $d_i/2$ around
$x_i$. Each sphere lies outside the balls bounded by the others, so
by Newton's theorem the interaction energy between any two of them is unchanged. The total Coulomb
energy of the smeared charges is nonnegative, and it equals the left side of~\eqref{eq:abelianNN}
plus the self-energies $e_i^2/d_i$ of the spheres. Inequalities of this type go back to
Onsager; see~\cite[Chap.~5 and Sect.~6.4]{LS10}.

Proposition~\ref{prop:electro} is the non-abelian version of~\eqref{eq:abelianNN}: for every gauge
field,
\[
V_A\ \ge\ -3\sum_i\frac{e_i^2}{d_i},
\]
where the factor $3=\sum_a(\tau^a)^2$ comes from the three Pauli matrices. Two things in the abelian proof have to be
replaced. The first is the uniformly charged sphere. Take the potential $K_A(\cdot,y)$ of a point
charge at $y$, keep it outside the ball of radius $R$ around $y$, and replace it inside the ball by
the solution of $\Delta_Au=0$ with the same boundary values. The result is the potential of a
surface charge on the sphere, and outside the ball this potential is that of the point charge by
construction (Lemma~\ref{lem:newton}). The self-energy of the surface charge is controlled by the maximum principle, which
holds for solutions of $\Delta_Au=0$ because $|u|^2$ is subharmonic. The surface charge need not be
uniform or parallel to the point charge, but nothing in the argument requires this. The second thing to replace is
the positivity of the total energy, since the charges are now matrices. This is done by writing the
positive matrix of mutual energies of the surface charges as $\mathcal M^{\mathsf T}\mathcal M$ and
observing that $\sum_k(\sum_I\mathcal M_{kI}\hat e_I)^2\ge0$ also when the $\hat e_I$ are Hermitian
matrices (here $\hat e_I=e_i\tau^a_{(i)}$ for $I=(i,a)$). In the self-energy terms the matrices drop out, because Pauli matrices anticommute:
$\sum_{a,b}\mathcal K^{ab}\tau^a\tau^b=(\tr\mathcal K)\1$ for every symmetric matrix
$\mathcal K$.

\emph{Reduction to fixed centres.} After Proposition~\ref{prop:electro} we need a lower bound on
\begin{equation}\label{eq:NNproblem}
\sum_iT_i-\kappa\sum_i\frac1{d_i},
\end{equation}
where $T_i$ is the kinetic energy of particle $i$ without its mass factor, such as $p_{A,i}^2$, and $\kappa>0$ is proportional to $\alpha Q^2$ ($\kappa=6M\alpha Q^2$ in the non-relativistic models; see below). The difficulty is that the nearest-neighbour potential is not a one-body operator. Following Lieb and
Thirring~\cite{LT84} and Lieb and Yau~\cite{LY88}, we split the particles at
random into `centres' and `satellites', each particle being a centre with probability~$\frac12$.
With probability $\frac14$, particle $i$ is a satellite and its nearest neighbour is a centre, and
then $d_i$ is the distance from $x_i$ to the set of centres. So $\sum_id_i^{-1}$ is bounded by $4$
times the expected nearest-centre potential at the satellites, and $\sum_iT_i$ is
$2$ times the expected kinetic energy of the satellites, since each particle is a satellite half of
the time. Once the centres are fixed, the satellites see a fixed potential
$Z/\dist(x,\{\text{centres}\})$ with $Z=2\kappa$: the centres play the role of the nuclei. Other centre probabilities change only the constants, and Propositions~\ref{prop:kin0}
and~\ref{prop:relkin} use this freedom to improve them.

\emph{One-body bounds.} What remains is a one-body problem for fermions in the potential of the fixed centres. This is exactly the setting of the abelian proofs. The tool is the Lieb--Thirring
inequality, which carries over to $-\Delta_A$ because the heat kernel of $\Delta_A$ is dominated by
that of $\Delta$ (Lemmas~\ref{lem:dom} and~\ref{lem:count}). For the Pauli operator we follow the
argument of Lieb, Loss and Solovej~\cite{LLS95}, in which the Zeeman term is paid for by the field
energy. For relativistic kinetic energy we localize $|p_A|$ using a superadditivity property
(Lemma~\ref{lem:superadd}) together with Lieb and Yau's relativistic Hardy inequality in a
ball~\cite[Theorem~7]{LY88}. For the Dirac operator we use the inequality of Birman, Koplienko and Solomyak~\cite{BKS75}, as
in~\cite{LSS97}.

\emph{Absorption.} In the Pauli and relativistic cases, the abelian argument also uses the repulsion
between nuclei, which is not available here. Near a nucleus, the one-body bounds produce an error
proportional to $1/D$, with $D$ the distance to the nearest other nucleus, and in the abelian case the
nuclear repulsion $Z^2/D$ pays for it. Our centres are particles, and since there are no like charges,
they need not repel each other. The error is, however, of higher order in the coupling: it is of order
$\kappa^4\sum_jD_j^{-1}$, with $D_j$ half the distance from centre $j$ to the nearest other centre,
against the Coulomb term $\kappa\sum_jd_j^{-1}$. Since $D_j\ge d_j/2$, it can be absorbed into the
Coulomb term itself when $\kappa$ is small. This is the mechanism of Lieb and Yau~\cite{LY88}. In the
relativistic case, where kinetic and Coulomb energies scale in the same way, it forces the coupling to
be small, and by Theorem~\ref{thm:relcollapse} this cannot be avoided.

\emph{Putting the steps together.} For non-relativistic spinless fermions these steps already prove
Theorem~\ref{thm:spinless}: the electrostatic inequality reduces the Coulomb energy to the
nearest-neighbour potential, with strength $3\alpha Q^2$; the kinetic energy then absorbs this
potential at a cost of order $\kappa^2/M$ per particle, where $\kappa=6M\alpha Q^2$ and $M$ is the
largest mass (Proposition~\ref{prop:kin0}). Together they give the bound $-C_1\,M\alpha^2Q^4N$.

\paragraph{Organization of the paper.} Section~\ref{sec:model} defines the model, and
Section~\ref{sec:results} states the results, with remarks, a discussion and open problems.
Section~\ref{sec:tools} collects the facts about the covariant Laplacian that we use.
Section~\ref{sec:electro} proves the electrostatic inequality. Section~\ref{sec:kin} proves the kinetic
energy inequalities and, with them, Theorems~\ref{thm:spinless} and~\ref{thm:pauli} and
Proposition~\ref{prop:firstkind}. Section~\ref{sec:collapse} proves the collapse of
Theorem~\ref{thm:collapse} and the instability of bosons, and Section~\ref{sec:rel} the relativistic and
Dirac results. Section~\ref{sec:groups} proves the general Theorem~\ref{thm:groups}. Appendix~\ref{app:derivation}
derives the model from the Yang--Mills Lagrangian, Appendix~\ref{sec:rough} treats non-smooth
fields, and Appendix~\ref{sec:clr} proves the Cwikel--Lieb--Rozenblum bound for $p_A^2$.

\section{The model}\label{sec:model}

Sections~\ref{sec:colour}--\ref{sec:domains} define the $SU(2)$ model: colour and the Coulomb
potential, the particles and the non-relativistic Hamiltonians, the relativistic Hamiltonians, and
their domains. Section~\ref{sec:general} describes the general setting of Theorem~\ref{thm:groups}, and
Table~\ref{tab:constants} lists which theorem treats which $SU(2)$ model.

\subsection{Colour, the gauge field and the Coulomb potential}\label{sec:colour}

A particle with $SU(2)$ colour has a two-dimensional colour space $\C^2_{\rm col}$, on which the
Pauli matrices $\tau^1,\tau^2,\tau^3$ act. We write $|{\uparrow}\rangle$ and $|{\downarrow}\rangle$ for the
eigenvectors of $\tau^3$ with eigenvalues $1$ and $-1$, and call the corresponding states colour-up and
colour-down. Spin, when present, is described by a second copy of $\C^2$, with its own Pauli matrices
$\sigma_1,\sigma_2,\sigma_3$ (by $\C^4$ for the Dirac operator; Section~\ref{sec:relmodels}). The Lie algebra $\su$ of
anti-Hermitian traceless $2\times2$ matrices consists of the matrices
$X=-\frac i2\sum_aX^a\tau^a$ with $X^a\in\R$, and we identify $X$ with the vector
$(X^1,X^2,X^3)\in\R^3$, using the same letter for both. Under this identification the inner product
$\langle X,Y\rangle=2\tr(X^\dagger Y)$ is the Euclidean one, and $[X,Y]=X\times Y$. This is the
normalization of quantum chromodynamics, with generators $\tau^a/2$; in it, $|X|$ is $\sqrt2$~times the Hilbert--Schmidt norm of $X$. Throughout, an unsubscripted $\|\cdot\|$ is the norm of $L^2$ for
functions and the operator norm for matrices and bounded operators, and $|\cdot|$ the Euclidean norm of vectors in $\C^n$, $\R^3$ or $\su$; for self-adjoint
$M$, $|M|=(M^2)^{1/2}$.

The gauge field $A$ of Section~\ref{sec:intro} is smooth unless stated otherwise. It acts in two ways: on functions with values in the colour space $\C^2_{\rm col}$ by
\[
\nabla_A=\nabla+A
\]
(the fundamental representation), and on $\R^3$-valued functions by $\nabla_A=\nabla+[A,\cdot\,]$,
that is, $\nabla_A\phi=\nabla\phi+A\times\phi$ (the adjoint representation). In either case we write
\[
p_A=-i\nabla_A,\qquad p_A^2=-\Delta_A\ge0 ,
\]
and call $-\Delta_A$ the covariant Laplacian. The colour-magnetic field $B$ has components $B_l^a$, and
$|B|^2=\sum_{l,a}(B_l^a)^2$. We write $K_A(x,y)$ for the integral kernel of $(-\Delta_A)^{-1}$ on
$L^2(\R^3;\R^3)$ in the adjoint representation (Lemma~\ref{lem:dom}); it is a real $3\times3$ matrix, with entries $K_A^{ab}(x,y)$. At $A=0$ this is the ordinary Coulomb kernel,
$K_0^{ab}(x,y)=\delta^{ab}/(4\pi|x-y|)$.

The potential~\eqref{eq:VAintro} arises as follows. In electrodynamics, with the Gauss law
$\nabla\cdot E=4\pi\alpha\rho$, the energy of the longitudinal part $-\nabla\varphi$ of the electric field
is $\frac1{8\pi\alpha}\int|\nabla\varphi|^2=\frac\alpha2\iint\rho(x)\rho(y)|x-y|^{-1}\,dx\,dy$: the Coulomb
energy. In the Yang--Mills Gauss law of Section~\ref{sec:intro}, the colour charge density of a matter
field $\psi$ is $\rho=(\rho^1,\rho^2,\rho^3)$ with $\rho^a=\frac12\psi^\dagger\tau^a\psi$, and the
decomposition $E=E_\perp-\nabla_A\varphi$ gives, by the same computation, $\varphi=4\pi\alpha K_A\rho$ and
\begin{equation}\label{eq:coulombenergy}
\frac1{8\pi\alpha}\int|\nabla_A\varphi|^2=\frac\alpha2\sum_{a,b=1}^3\iint\rho^a(x)\,4\pi K_A^{ab}(x,y)\,\rho^b(y)\,dx\,dy .
\end{equation}
After quantization and normal ordering, in the sector with $N$ particles, this becomes $\alpha V_A$, with
$V_A$ as in~\eqref{eq:VAintro}. For each configuration $\ux=(x_1,\dots,x_N)$ of distinct points,
$V_A(\ux)$ is a Hermitian matrix on colour space; see Appendix~\ref{app:derivation} for the details. Throughout, $\alpha=g^2/4\pi$, where $g$ is the Yang--Mills coupling constant, scaled into the gauge
field so that the field energy is $\frac1{8\pi\alpha}\int|B|^2$ and the Coulomb energy carries the
factor $\alpha$. In the abelian theory the vector potential enters as $\sqrt\alpha
A$~\cite[Sect.~9.2]{LS10}; renaming $\sqrt\alpha A$ as $A$ gives the same conventions.

For particles in the fundamental representation the derivation gives $e_i=\frac12$, and
by~\eqref{eq:twobody} the singlet pair potential is $-\frac34\alpha/|x|$, with the familiar $SU(2)$
Casimir $C_F=3/4$, that is, $j(j+1)$ for the spin-$j$ representation with $j=\frac12$. We nevertheless allow general charges $e_i\in\R$, since the proofs do not use their
value. The charge $e_i$ enters only the Coulomb term, not the minimal coupling in the kinetic energy;
for $e_i\ne\frac12$ the models are therefore not consequences of the Lagrangian, but they make sense
in their own right.

\subsection{Particles and Hamiltonians}\label{sec:hamiltonians}

Particle $i$ has a mass $m_i>0$ ($m_i\ge0$ in the relativistic models of Section~\ref{sec:relmodels}) and a
charge $e_i\in\R$. The particles are divided into species.
Particles of one species are identical: if $i$ and $j$ are of the same species, then $m_i=m_j$, $e_i=e_j$ and
the two have the same internal space, so the Hamiltonians below are symmetric under permutations within each species. Particles of different
species are distinguishable, even when some or all of these parameters coincide. The internal space of a species $k$ is
\[
\mathfrak h_k=\C^{n_{{\rm s},k}}\otimes\C^{n_{{\rm c},k}}\otimes\C^{n_{{\rm f},k}},
\]
where $n_{{\rm s},k}$ is the number of spin states, $n_{{\rm c},k}$ the number of colours and $n_{{\rm f},k}\ge1$ the
number of flavours: labels on which nothing acts. We first consider the simpler case in which all
species carry the fundamental representation of $SU(2)$ and have the same type of kinetic energy, so
that $n_{{\rm c},k}=2$ and $n_{{\rm s},k}=n_{\rm s}$ for all $k$. In the Pauli model $n_{\rm s}=2$ and in the Dirac model
$n_{\rm s}=4$. In the spinless models the kinetic energy does not act on spin, so $n_{\rm s}=1$ and any
spin states are counted as flavours. The general case, with other groups and with representations
and kinetic energies that depend on the species, is described in Section~\ref{sec:general}. Each species is fermionic: with $N_k$ particles of species $k$, the $N$-particle
space is
\[
\mathcal H_N=\bigotimes_k\bigwedge^{N_k}L^2(\R^3;\mathfrak h_k),\qquad N=\sum_kN_k .
\]
For the bosons of Theorem~\ref{thm:bosons}, $\bigwedge^{N_k}$ is replaced by the symmetric tensor power.
The bounds below hold for every choice of the $N_k$, which we suppress from the notation. (Two species that differ only in their numbers of flavours amount to one species with more flavours,
restricted to a sector of fixed flavour numbers; the Hamiltonians preserve such sectors.) We write
\[
M=\max_im_i,\qquad Q=\max_i|e_i|,\qquad q=\sum_k\dim\mathfrak h_k=\sum_kn_{{\rm s},k}\,n_{{\rm c},k}\,n_{{\rm f},k} ,
\]
where $M$ and $Q$ are the largest mass and the largest absolute charge among the species present, and
$q$ is the total number of internal states. Only this total enters the results, because, for the one-body bounds we use, several fermionic
species behave like a single species with $q$ internal states (Lemma~\ref{lem:slice}); this is the convention of~\cite[Sect.~3.1.2]{LS10}, where all internal states
are counted as spin states. The smallest values are $q=2$ (one species, spinless), $q=4$ (with spin)
and $q=8$ (Dirac); more species or flavours increase $q$ accordingly.

In~\eqref{eq:Htot}, the norm of $E_\perp$ is given by $|E_\perp|^2=\sum_{l,a}(E_{\perp,l}^a)^2$, as for $|B|$ in Section~\ref{sec:colour}. In the Hamiltonian formulation of
Appendix~\ref{app:derivation}, $-E_\perp/(4\pi\alpha)$ is the part of the momentum conjugate to $A$ that
the Gauss law leaves free.

The magnetic energy in~\eqref{eq:Htot} matters for the true ground state: the field changes the
Coulomb kernel $K_A$ and couples to the currents of the particles, so the infimum over $A$ need not
be attained at $A=0$. For a lower bound without spin, however, the magnetic energy can be dropped,
since our bounds hold even without it. With spin it must be kept, as explained below. We therefore
study, for a given field $A$, the Schr\"odinger and Pauli Hamiltonians
\begin{align}
H_S&=\sum_{i=1}^N\frac{1}{2m_i}\,p_{A,i}^2+\alpha V_A, \label{eq:H0}\\
H_P&=\sum_{i=1}^N\frac{1}{2m_i}(\sigma\cdot p_A)^2_i+\alpha V_A+\frac{1}{8\pi\alpha}\int_{\R^3}|B|^2 .
\label{eq:HP}
\end{align}
Thus $\mathcal E_N\ge\inf_A\inf_\Psi\langle\Psi,H_S\Psi\rangle$ in the Schr\"odinger model, and
$\mathcal E_N=\inf_A\inf_\Psi\langle\Psi,H_P\Psi\rangle$ in the Pauli model, with $\Psi$ again over smooth,
compactly supported normalized states in $\mathcal H_N$. Since $\sigma_j\sigma_k=i\sum_l\varepsilon_{jkl}\sigma_l$ for $j\ne k$ and
$[\nabla_{A,j},\nabla_{A,k}]=F_{jk}$, the non-abelian Pauli operator is given by the flat-space,
gauge-twisted case of the Lichnerowicz formula~\cite{Li63}
\begin{equation}\label{eq:pauli}
(\sigma\cdot p_A)^2=p_A^2-i\sum_l\sigma_l\otimes B_l
=p_A^2-\frac12\sum_{l,a}B_l^a\,\sigma_l\otimes\tau^a .
\end{equation}
The second term is the colour-magnetic (Zeeman) coupling, with gyromagnetic factor $2$. It is the
leading spin-dependent term of non-relativistic quantum chromodynamics~\cite{CaL86,BBL95}. It is also
the reason why the field energy appears in $H_P$ but not in $H_S$: without the field energy, zero modes
of the Pauli operator~\cite{LY86} would make the energy unbounded below (Section~\ref{sec:collapse}), as
in the abelian case~\cite{AHS81,FLL86}.

\subsection{Relativistic models}\label{sec:relmodels}

For spinless particles we also consider the relativistic Hamiltonian
\begin{equation}\label{eq:HR}
H_R=\sum_{i=1}^N\sqrt{p_{A,i}^2+m_i^2}+\alpha V_A .
\end{equation}
As for $H_S$, the field energy is dropped, so $\mathcal E_N\ge\inf_A\inf_\Psi\langle\Psi,H_R\Psi\rangle$. We keep
the rest mass in $H_R$. The abelian literature subtracts it and writes
$\sqrt{p^2+m^2}-m$~\cite{LY88}, \cite[Chap.~8]{LS10}; in that convention the bound $H_R\ge0$ of
Theorem~\ref{thm:rel} reads $H_R-\sum_im_i\ge-\sum_im_i$, a lower bound linear in $N$. None of the
relativistic results depends on the masses, which is why they may vanish here and in the Dirac model
below.

For the Dirac model, with $n_{\rm s}=4$, the
Dirac operator of a particle of mass $m$ is
\[
D_A=\balpha\cdot p_A+\beta m ,
\]
where $\balpha=(\alpha_1,\alpha_2,\alpha_3)$ and $\beta$ are the Dirac matrices in the Dirac
representation. (The bold $\balpha$ is not to be confused with the coupling constant $\alpha$; this
clash of notation is standard~\cite{LSS97}.) For smooth $A$, $D_A$ is essentially self-adjoint on
$C_c^\infty$, since it has finite propagation speed~\cite{Ch73} (see also~\cite[Theorem~4.3]{Th92}). Let $\Lambda_A^+=\1(D_A>0)$ be the
projection onto the positive spectral subspace of $D_A$. The Dirac Hamiltonian
\begin{equation}\label{eq:HD}
H_D=\sum_{i=1}^N(D_A)_i+\alpha V_A+\frac1{8\pi\alpha}\int|B|^2
\end{equation}
acts on the fermionic $N$-particle space 
\[
\mathcal H_N^+=\bigotimes_k\bigwedge^{N_k}\bigl(\Lambda_{A,k}^+L^2(\R^3;\mathfrak h_k)\bigr),
\]
that is, the space $\mathcal H_N$ of Section~\ref{sec:hamiltonians} with each one-particle space
$L^2(\R^3;\mathfrak h_k)$ replaced by the range of $\Lambda_{A,k}^+$, the projection $\Lambda_A^+$ for the mass of
species $k$. The ground state energy is $\mathcal E_N=\inf_A\inf_\Psi\langle\Psi,H_D\Psi\rangle$, with $\Psi$
over normalized states in $\mathcal H_N^+$ that lie in the form domain of $\sum_i|D_A|_i$ (Section~\ref{sec:domains}). Since
$\alpha_j\alpha_k=i\sum_l\varepsilon_{jkl}\Sigma_l$ for $j\ne k$, where $\Sigma_l=\bigl(\begin{smallmatrix}\sigma_l&0\\0&\sigma_l\end{smallmatrix}\bigr)$,
\begin{equation}\label{eq:D2}
D_A^2=p_A^2-i\sum_l\Sigma_l\otimes B_l+m^2 ,
\end{equation}
which has the same Zeeman structure as~\eqref{eq:pauli}. 

The projection $\Lambda_A^+$ is gauge
covariant. The free projection $\Lambda_0^+=\1(D_0>0)$, used in the model of Brown and
Ravenhall~\cite{BR51,EPS96}, is not gauge covariant, already under local abelian gauge transformations
(Remark~\ref{rem:abelian}), and with it the model is unstable for every $\alpha$
(Theorem~\ref{thm:free}).

\subsection{Domains}\label{sec:domains}

This subsection records the functional-analytic setting and can be skipped on a first reading.
On smooth, compactly supported fermionic states, $H_S$, $H_P$ and $H_R$ are symmetric operators. The
quadratic form of $H_S$ is bounded below for every $\alpha$ (Theorem~\ref{thm:spinless}). That of
$H_P$ is bounded below for every $\alpha$ when $\int|B|^2<\infty$ (Proposition~\ref{prop:firstkind});
otherwise the field energy is infinite. That of $H_R$ is bounded below under the hypothesis of
Theorem~\ref{thm:rel}. In these cases we take the Friedrichs extensions~\cite{RS75}. In
Theorems~\ref{thm:relcollapse} and~\ref{thm:groups}(ii), where no lower bound holds, the statements
refer to the quadratic form of $H_R$ on smooth, compactly supported fermionic states. Likewise,
Theorem~\ref{thm:free} refers to the quadratic form of the Dirac Hamiltonian with the free projection,
on fermionic states in the range of $\Lambda_0^+$ with compactly supported Fourier transforms. Non-smooth fields are treated in Appendix~\ref{sec:rough}, where the Hamiltonians are defined by quadratic forms.

For $H_D$ we may again assume $\int|B|^2<\infty$. Under the hypotheses of Theorem~\ref{thm:dirac}, the
quadratic form of $H_D$ on admissible fermionic states, that is, on states in $\mathcal H_N^+$, is bounded above and below, up to additive constants, by
multiples of that of $\sum_i|D_A|_i$ (Section~\ref{sec:rel}). It is therefore closed on the intersection of $\mathcal H_N^+$ with the form domain
of $\sum_i|D_A|_i$, and $H_D$ is the self-adjoint operator on $\mathcal H_N^+$ associated with it~\cite{RS75}.

\subsection{The general setting}\label{sec:general}

Theorem~\ref{thm:groups} extends the stability results beyond the $SU(2)$ model of
Sections~\ref{sec:colour}--\ref{sec:domains}. Let $G$ be a compact Lie group with Lie algebra $\mathfrak g$, and
fix an orthonormal basis $\{t_a\}$ of $\mathfrak g$ for an invariant inner product, written $X\cdot Y$. Each
species $k$ now carries its own unitary representation $\mathcal R_k$ of $G$ on $\C^{n_{{\rm c},k}}$, with Hermitian
generators $\mathcal T^a_k=i\mathcal R_k(t_a)$ (we also write $\mathcal R_k$ for the induced representation of
$\mathfrak g$). Its kinetic energy may be any of $p_A^2/2m$, $(\sigma\cdot p_A)^2/2m$, $\sqrt{p_A^2+m^2}$ or the
Dirac operator, with Dirac particles confined to the positive spectral subspace of $D_A$
(Section~\ref{sec:relmodels}). Its internal space $\mathfrak h_k$ is as in Section~\ref{sec:hamiltonians}, with $n_{{\rm c},k}$
the dimension of $\mathcal R_k$ and $n_{{\rm s},k}=1$, $2$ or $4$ according to its kinetic energy (spinless,
Pauli or Dirac), and the total number of internal states is again
$q=\sum_kn_{{\rm s},k}\,n_{{\rm c},k}\,n_{{\rm f},k}$. The gauge field is $\mathfrak g$-valued. It acts on the colour factor of species
$k$ through $\mathcal R_k$, and on $\mathfrak g$-valued functions by $[A,\cdot\,]$, which defines $K_A$. The
potential is the analogue of~\eqref{eq:VAintro}, with the generators of each particle's representation in
place of the Pauli matrices:
\[
V_A=4\pi\sum_{1\le i<j\le N}e_ie_j\sum_{a,b=1}^{\dim\mathfrak g}K_A^{ab}(x_i,x_j)\,\mathcal T^a_{(i)}\mathcal T^b_{(j)},\qquad
K_A^{ab}(x,y)=t_a\cdot K_A(x,y)\,t_b .
\]
Here $K_A^{ab}$ are the components of $K_A$ in the basis $\{t_a\}$, and $\mathcal T^a_{(i)}$ is the generator $\mathcal T^a_k$ of
the species $k$ of particle $i$, acting on its colour factor. The Hamiltonian is the analogue of~\eqref{eq:Htot}
without $E_\perp$,
\[
H=\sum_{i=1}^NT_{A,i}+\alpha V_A+\frac1{8\pi\alpha}\int_{\R^3}|B|^2,\qquad |B|^2=\sum_lB_l\cdot B_l ,
\]
with $T_{A,i}$ the kinetic energy of particle $i$. When all species have the same kinetic energy, $H_S$, $H_P$, $H_R$
and $H_D$ denote the corresponding Hamiltonians of Sections~\ref{sec:hamiltonians} and~\ref{sec:relmodels}
with this $V_A$; as there, $H_P$ and $H_D$ include the field energy, and $H_S$ and $H_R$ do not.

The two constants of
Theorem~\ref{thm:groups} are
\[
C_{\mathcal R}=\max_k\Bigl\|\sum_a(\mathcal T^a_k)^2\Bigr\|,\qquad
\tau_{\mathcal R}=\max_k\sup_{|\zeta|=1}\Bigl\|\sum_a\zeta_a\mathcal T^a_k\Bigr\|
\]
(see Section~\ref{sec:groups}). In this normalization the physical charge is $e_i=1$. For $SU(2)$ with
the fundamental representation, $\mathcal T^a=\tau^a/2$, so $C_{\mathcal R}=\frac34$ and
$\tau_{\mathcal R}=\frac12$, and a charge $e$ in the $SU(2)$ results corresponds to the charge $2e$ here. By the Cauchy--Schwarz inequality,
$\|\sum_a\zeta_a\mathcal T^a_kv\|\le|\zeta|\,\langle v,\sum_a(\mathcal T^a_k)^2v\rangle^{1/2}$ for
$v\in\C^{n_{{\rm c},k}}$, so
$\tau_{\mathcal R}\le C_{\mathcal R}^{1/2}$. All conditions and constants could therefore be expressed through
$C_{\mathcal R}$ alone, at the cost of weaker bounds; for $SU(2)$, $C_{\mathcal R}^{1/2}=\frac{\sqrt3}2$ against
$\tau_{\mathcal R}=\frac12$.

Theorem~\ref{thm:groups}(iii) uses weights. Fix a maximal commuting subalgebra
$\mathfrak t\subset\mathfrak g$ (a Cartan subalgebra). A unit vector $v\in\C^{n_{{\rm c},k}}$ is a weight vector of
$\mathcal R_k$ if it is an eigenvector of every $\mathcal R_k(s)$, $s\in\mathfrak t$. These matrices are
anti-Hermitian, so $\mathcal R_k(s)v=-i(s\cdot w)\,v$ for all $s\in\mathfrak t$, for a unique $w\in\mathfrak t$, the
weight of $v$.

\section{Main results}\label{sec:results}

We state the results for the gauge group $SU(2)$, except in Theorem~\ref{thm:groups}; Table~\ref{tab:constants} summarizes them. In the
theorems, $\inf_\Psi$ is over normalized states of the model (Sections~\ref{sec:hamiltonians}--\ref{sec:general}),
which are fermionic except in Theorem~\ref{thm:bosons}, and $\inf_A$ over smooth $\su$-valued fields ($\mathfrak g$-valued in Theorem~\ref{thm:groups}); where
stated, the bounds also hold over rough fields. As before, $q$ is the total number of internal states,
$M=\max_im_i$ and $Q=\max_i|e_i|$. Numerical values are rounded in the safe direction: thresholds for stability
down, and constants in lower bounds and thresholds for collapse up.

The proofs give bounds that hold for each fixed field and do not depend on it. By
Sections~\ref{sec:hamiltonians} and~\ref{sec:relmodels}, Theorems~\ref{thm:spinless}, \ref{thm:pauli},
\ref{thm:rel} and~\ref{thm:dirac} therefore bound $\mathcal E_N$ from below, and
Theorems~\ref{thm:collapse} and~\ref{thm:relcollapse} give $\mathcal E_2=-\infty$ (in
Theorem~\ref{thm:relcollapse} because the field energy vanishes at $A=0$).

\begin{table}[!ht]
\centering\footnotesize
\renewcommand{\arraystretch}{1.25}
\setlength{\tabcolsep}{4pt}
\begin{tabular}{llll}
\hline
Thm & Model & Condition & Conclusion\\
\hline
\ref{thm:spinless} & Schr\"odinger & none & $H_S\ge-184\,M\alpha^2Q^4N$\\
\ref{thm:pauli} & Pauli + field energy & $\alpha^2Q^2\le9.00\times10^{-4}$ \ ($\alpha\le0.0600$) & $H_P\ge-656\,M\alpha^2Q^4N$\\
\ref{thm:collapse} & & $\alpha^2|e_1e_2|>9\pi^2$ \ ($\alpha>6\pi$) & collapse ($N=2$)\\
\ref{thm:rel} & $\sqrt{p_A^2+m^2}$, no spin & $\alpha Q^2\le2.248\times10^{-2}$ \ ($\alpha\le0.0899$) & $H_R\ge0$\\
\ref{thm:relcollapse} & & $\alpha e_1e_2>4/(3\pi)$ \ ($\alpha>1.70$) & collapse ($N=2$)\\
\ref{thm:dirac} & Dirac, projection $\Lambda_A^+$ & $10.32\,\alpha\le(1-4983\,\alpha^2Q^4)^{3/2}$ \ ($\alpha\le0.0384$) & $H_D\ge0$\\
\ref{thm:free} & Dirac, projection $\Lambda_0^+$ & every $\alpha>0$ & unstable\\
\ref{thm:bosons} & two bosonic species & every $\alpha>0$ & $\inf H_S\le-c\,m_{\min}\alpha^2e^4N^{7/5}$\\
\hline
\end{tabular}
\caption{Summary of the results for the gauge group $SU(2)$. The numbers are for the smallest $q$ of
each model: $q=2$ without spin, $q=4$ with spin and $q=8$ for the Dirac operator. They follow from the
theorems at these values of $q$; the effect of optimizing them is discussed in Remark~\ref{rem:optimal}. The values
of $\alpha$ in parentheses are for the physical charge $e=\frac12$. The bounds of Theorems~\ref{thm:spinless}
and~\ref{thm:rel} hold for all $A\in L^2_{\rm loc}$, and that of Theorem~\ref{thm:pauli} for all
$A\in L^2_{\rm loc}$ with $\int|B|^2<\infty$ (Propositions~\ref{prop:rough} and~\ref{prop:roughpauli}).}
\label{tab:constants}
\end{table}

\subsection{Non-relativistic models}

\begin{theorem}[Spinless fermions: stability for all couplings]\label{thm:spinless}
For every $N$,
\[
\inf_A\inf_\Psi\langle\Psi,H_S\Psi\rangle\ \ge\ -C_1\,M\alpha^2Q^4\,N,\qquad C_1=104.5\,(8\pi Lq)^{2/3},
\]%
where $L=0.0464$ is the Lieb--Thirring constant of Lemma~\ref{lem:LT}. For $q=2$, $C_1<184$.
The bound also holds with the infimum over $A\in L^2_{\rm loc}$ (Proposition~\ref{prop:rough}).
\end{theorem}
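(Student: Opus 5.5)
The proof will follow the route laid out in Section~\ref{sec:ideas}. We fix a smooth field $A$ and prove the bound with a constant independent of $A$.

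\emph{Step 1: electrostatics.} Proposition~\ref{prop:electro} gives, as a matrix inequality on colour space pointwise in $\ux$,
\[
V_A\ \ge\ -3\sum_i\frac{e_i^2}{d_i}\ \ge\ -3Q^2\sum_i\frac1{d_i}.
\]
Since $1/(2m_i)\ge 1/(2M)$, this yields
\[
H_S\ \ge\ \frac1{2M}\Bigl(\sum_ip_{A,i}^2-\kappa\sum_i\frac1{d_i}\Bigr),\qquad \kappa=6M\alpha Q^2 .
\]
It therefore suffices to show that, on fermionic states,
\[
\sum_ip_{A,i}^2-\kappa\sum_i d_i^{-1}\ \ge\ -c\,\kappa^2(8\pi Lq)^{2/3}N
\]
for some numerical constant $c$. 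Then $H_S\ge -\frac{c}{2M}\,36M^2\alpha^2Q^4(8\pi Lq)^{2/3}N$, which is of the stated form.

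\emph{Step 2: random centres.} We split the particles at random, each being a centre independently with probability $\lambda$. With probability $\lambda(1-\lambda)$, particle $i$ is a satellite whose nearest neighbour is a centre, and then $d_i=\dist(x_i,\text{centres})$. Averaging gives
\[
\sum_id_i^{-1}\le\frac{1}{\lambda(1-\lambda)}\,\E\sum_{i\ \mathrm{sat}}\dist(x_i,\mathcal C)^{-1},
\qquad
\sum_ip_{A,i}^2=\frac1{1-\lambda}\,\E\sum_{i\ \mathrm{sat}}p_{A,i}^2 .
\]
For each fixed splitting and fixed centre positions, the satellites form a fermionic system in the one-body potential $-Z/\dist(x,\mathcal C)$ with $Z=\kappa/\lambda$, where $\mathcal C$ is the set of centres. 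Mixed species reduce to a single species with $q$ internal states by Lemma~\ref{lem:slice}.

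\emph{Step 3: one-body bound.} The fermionic energy of the satellites is at least the sum of the negative eigenvalues of $p_A^2-Z\,\dist(x,\mathcal C)^{-1}$, multiplied by $q$. By the Lieb--Thirring inequality for $-\Delta_A$ (Lemma~\ref{lem:LT}, which rests on heat-kernel domination, Lemmas~\ref{lem:dom} and~\ref{lem:count}), this is bounded below by $-Lq\int V_-^{5/2}$. The integral diverges at infinity, so the potential must be localized. The standard fix is to replace $\dist(x,\mathcal C)^{-1}$ by its Voronoi-cell version, cut off at the distance $D_j$ to the nearest other centre, or alternatively to use the kinetic-energy form $\int\rho^{5/3}$ together with the bound
\[
\int\rho/\dist\le \Bigl(\int\rho^{5/3}\Bigr)^{3/5}\Bigl(\int_{\text{cell}}\dist^{-5/2}\Bigr)^{2/5}.
\]
A cleaner route avoids $D_j$ entirely: in each Voronoi cell, use $\dist^{-1}\le r^{-1}$ with $r$ the distance to that cell's centre, and optimize pointwise. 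The Lieb--Thirring kinetic inequality $\sum p_A^2\ge K\int\rho^{5/3}$ with $K\propto(Lq)^{-2/3}$, combined with a Hardy-type argument near each centre, gives
\[
\int\rho^{5/3}K-Z\int\rho\,r^{-1}\ \ge\ -cZ^2K^{-1}\cdot(\text{number of centres})
\]
after splitting $r^{-1}=r^{-1}\1_{r<\ell}+r^{-1}\1_{r\ge\ell}$ and optimizing $\ell\sim K/Z$. The far part is bounded by $Z/\ell$ times the number of satellites. This produces $-cZ^2(8\pi Lq)^{2/3}$ per particle, with the factor $8\pi$ arising from the constant relating $K$ and $L$. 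Finally, we take the expectation and choose $\lambda$ (Proposition~\ref{prop:kin0} optimizes it) to obtain the constant $104.5$. For $q=2$, the value $104.5\,(16\pi\cdot0.0464)^{2/3}<184$ is a numerical check.

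\emph{Main obstacle.} I expect Step 3 to be the hard part: obtaining a bound that is linear in the total particle number, with no nuclear repulsion available to pay for near-coincident centres. In the non-relativistic case this works because the Coulomb singularity is form-bounded by $p_A^2$ with an $O(Z^2)$ error per centre via the diamagnetic/Lieb--Thirring bounds, and that error does not depend on the inter-centre distances. The extension to $A\in L^2_{\rm loc}$ is deferred to Proposition~\ref{prop:rough}.
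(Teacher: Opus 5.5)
Your strategy is the paper's: Proposition~\ref{prop:electro} with $\kappa=6M\alpha Q^2$, random splitting into centres and satellites, Lemma~\ref{lem:slice}, the Lieb--Thirring bound of Lemma~\ref{lem:LT} for a nearest-centre potential cut off at a length $\ell$, and optimization in $\ell$ and the centre probability. It proves stability linear in $N$ for every $\alpha$, uniformly in $A$.

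However, the explicit constant in the statement does not follow from your bookkeeping. You introduce the cutoff only after the splitting, on the satellites' potential $Z\dist(x,\mathcal C)^{-1}$ with $Z=\kappa/\lambda$. The far part then costs $Z/\ell$ per satellite. After the factor $1/(1-\lambda)$ and averaging (the expected number of satellites is $(1-\lambda)N$), this becomes $\kappa N/(\lambda\ell)$. The near part is $-\lambda^{-3/2}(1-\lambda)^{-1}\,8\pi Lq\,\kappa^{5/2}\ell^{1/2}N$. Optimizing in $\ell$ gives the constant $3\cdot2^{-2/3}\lambda^{-4/3}(1-\lambda)^{-2/3}$, whose minimum is $27/4$, at $\lambda=\frac23$. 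That gives $C_1=121.5\,(8\pi Lq)^{2/3}$, about $214$ for $q=2$. So neither $104.5$ nor $C_1<184$ is reached. Citing the choice of $\lambda$ in Proposition~\ref{prop:kin0} cannot fix this, because that choice is optimal for a different bookkeeping.

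The repair is to cut off before splitting, as in Lemma~\ref{lem:split}. When $d_i\ge\ell$, use $d_i^{-1}\le\ell^{-1}$ directly, and split only when $d_i<\ell$. The far part is then $\kappa N/\ell$, with no factor $1/\lambda$. The constant becomes $3\cdot2^{-2/3}\lambda^{-1}(1-\lambda)^{-2/3}$, minimized at $\lambda=\frac35$ with value $(5^5/2^4)^{1/3}<5.803$, and $18\cdot5.803<104.5$.

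Three smaller points:
\begin{enumerate}
\item No Hardy-type argument is needed near the centres. $|x|^{-5/2}$ is locally integrable, so Lemma~\ref{lem:LT} applies directly to $\frac{\kappa}{\lambda}W^\ell_{\mathcal C}$, with $\int(W^\ell_{\mathcal C})^{5/2}\le8\pi\ell^{1/2}|\mathcal C|$.
\item That integral is where the $8\pi$ comes from, not the Legendre relation between $K$ and $L$. That relation is $K=\frac35(2/(5Lq))^{2/3}$ and contains no $8\pi$.
\item For $A\ne0$, the operator $p_A^2-V$ on $L^2(\R^3;\C^q)$ is not $q$ copies of a scalar operator. The factor $q$ should enter through Lemma~\ref{lem:LT} applied to the density matrix of Lemma~\ref{lem:slice}, not by multiplying a scalar eigenvalue sum by $q$.
\end{enumerate}
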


\begin{theorem}[Pauli operator: stability for small coupling]\label{thm:pauli}
Let
\[
\kappa_P=\min\bigl\{0.431\,q^{-4/3},\ 0.279\,q^{-1}\bigr\}.
\]
If $24\pi\,\alpha^2Q^2\le \kappa_P$, then for every $N$,
\[
\inf_A\inf_\Psi\langle\Psi,H_P\Psi\rangle\ \ge\ -C_2\,M\alpha^2Q^4\,N ,\qquad C_2=260\,q^{2/3} .
\]%
For $q=4$ the hypothesis reads $\alpha^2Q^2\le9.00\times10^{-4}$, and the bound is
$-656\,M\alpha^2Q^4N$. For the physical charge $e=\frac12$ this means $\alpha\le0.0600$. The bound
also holds with the infimum over $A\in L^2_{\rm loc}$ with $\int|B|^2<\infty$
(Proposition~\ref{prop:roughpauli}).
\end{theorem}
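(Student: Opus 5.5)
The proof follows the route of Section~\ref{sec:ideas}, with the Zeeman term handled as in Lieb, Loss and Solovej~\cite{LLS95}.

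\emph{Electrostatics and centres.} For fixed $A$ with $\int|B|^2<\infty$, I first apply Proposition~\ref{prop:electro}: $\alpha V_A\ge-3\alpha Q^2\sum_i d_i^{-1}$. Using $1/2m_i\ge1/2M$ and $(\sigma\cdot p_A)^2\ge0$, it then suffices to bound
\[
\frac1{2M}\Bigl[\sum_i(\sigma\cdot p_A)_i^2-\kappa\sum_i d_i^{-1}\Bigr]+\frac1{8\pi\alpha}\int|B|^2,\qquad \kappa=6M\alpha Q^2 .
\]
Next I split the particles at random into centres and satellites, as in Lieb--Thirring and Lieb--Yau. After this step the satellites feel the one-body potential $Z/\dist(x,\{\text{centres}\})$ with $Z$ of order $\kappa$, and it is enough to bound a one-body fermionic Pauli problem in the field of fixed centres, uniformly in the centre positions. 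Lemma~\ref{lem:slice} treats the several species as one species with $q$ internal states.

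\emph{One-body Pauli bound.} I localize around each centre $R_j$ in a Voronoi cell, keeping the Coulomb singularity only within distance $D_j$, half the distance to the nearest other centre. Near $R_j$ I write $(\sigma\cdot p_A)^2=p_A^2-\frac12\sum B^a_l\sigma_l\tau^a$ and split the kinetic energy. One part, together with the field energy, controls the Zeeman term through a Lieb--Thirring inequality for $p_A^2-|B|$, which carries over from the abelian case by Lemma~\ref{lem:dom} and Lemma~\ref{lem:LT} since the operator norm of $\frac12\sum B\sigma\tau$ is at most $c|B|$. This gives a term like $-Lq\int|B|^{5/2}$ that must be traded against $\int|B|^2$ and the Coulomb singularity, following the LLS argument: bound $\int|B|^{5/2}$ near the singular region by the kinetic energy density via a Hölder/Sobolev step, and use the field energy $\frac1{8\pi\alpha}\int|B|^2$ to pay for $\int|B|^2$ terms.

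The other part of the kinetic energy handles the Coulomb singularities. Since $\sigma\cdot p_A$ does not satisfy the diamagnetic inequality, I use the LLS device: on the ball around each centre, a Hardy-type control of $(\sigma\cdot p_A)^2$ by the field energy that is valid when $Z^2\alpha$ is small. This is where the condition $24\pi\alpha^2Q^2\le\kappa_P$ enters. Here $\kappa^2\alpha/M^2\sim\alpha^3Q^4$, and with the $1/Q^2$ rescaling the hypothesis becomes a smallness condition on $\alpha^2Q^2$, with the two entries of $\kappa_P$ coming from the two places where the LT and Hardy constants must be beaten.

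\emph{Absorption.} The output is a bound of the form $-C M\kappa^2 q^{2/3}$ per satellite, plus a boundary error of order $\kappa^4\sum_jD_j^{-1}$ from the localization near centres. In the abelian case nuclear repulsion pays for this error. Here I instead keep part of the Coulomb term $\kappa\sum d_j^{-1}$ and use $D_j\ge d_j/2$, so that smallness of $\kappa$ in the hypothesis absorbs the error. Summing over satellites and averaging over the random split then gives $-C_2M\alpha^2Q^4N$.

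I expect the main obstacle to be the LLS step: controlling the colour-Zeeman term with the field energy simultaneously with the Coulomb singularity, in non-abelian form. This needs the paramagnetic bound with matrix-valued $B$ together with the heat-kernel domination of Lemma~\ref{lem:dom}. Careful tracking of constants is what produces $0.431q^{-4/3}$ and $0.279q^{-1}$. The absorption of the $\kappa^4/D_j$ error is the second delicate point, but it should be routine once the error has that form.
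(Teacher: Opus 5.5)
Your outer structure is the paper's: Proposition~\ref{prop:electro}, reduction to $\frac1{2M}\bigl[\sum_i(\sigma\cdot p_A)_i^2-\kappa\sum_id_i^{-1}+\varepsilon\int|B|^2\bigr]$ with $\kappa=6M\alpha Q^2$ and $\varepsilon=M/(4\pi\alpha)$, random splitting, balls of radius $D_j$ in the Voronoi cells, and absorption of a $\kappa^4\sum_jD_j^{-1}$ error into a surplus of the Coulomb term via $D_j\ge d_j/2$. You also locate the difficulty correctly, in the one-body Pauli step. But the mechanism you sketch for that step would fail.

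\emph{Lieb--Thirring leaves a term the field energy cannot pay for.} Applying Lemma~\ref{lem:LT} to $p_A^2-\frac{\sqrt3}2|B|$ leaves $-Lq\int(\frac{\sqrt3}2|B|)^{5/2}$. Under $A\mapsto\lambda A(\lambda\,\cdot)$, with the state dilated accordingly, one has $|B|\mapsto\lambda^2|B|(\lambda\,\cdot)$. So $\int|B|^{5/2}$ grows like $\lambda^2$ but $\int|B|^2$ only like $\lambda$. Since $\int|B|^{5/2}$ depends on the field alone, and the kinetic energy has already been used up, there is no ``kinetic energy density'' left to bound it by.

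\emph{Estimating $\Tr(\gamma\,b)$ directly fails too.} Keep the kinetic energy and set $b=\frac{\sqrt3}2|B|$. A bound $\Tr(\gamma\,b)\le(1-\eta)\Tr(\gamma\,p_A^2)+C\int|B|^2+C'$ would, after the same dilation and $\lambda\to\infty$, give $\Tr(\gamma\,b)\le(1-\eta)\Tr(\gamma\,p_A^2)$. This is false for the zero modes of Section~\ref{sec:collapse}: there $\langle p_A^2\rangle$ equals the expectation of the Zeeman term, which is at most $\langle b\rangle$. For the same reason there is no Hardy-type inequality for $(\sigma\cdot p_A)^2$ near the centres that you can simply invoke. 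The control must come from the field energy in a specific quantitative form, and producing it is the content of the step.

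\emph{The missing idea} is the Lieb--Loss--Solovej splitting of the counting function, Lemma~\ref{lem:onebody}. With $V_\nu=(V-\nu)_+$, write $\Tr(T-V_\nu)_-=\int_0^\infty N_{-E}(T-V_\nu)\,dE$. At each $E$, apply $(\sigma\cdot p_A)^2\ge p_A^2-b$ and then the CLR bound (Lemma~\ref{lem:count}), not Lieb--Thirring.
\begin{itemize}
\item For $E\le\mu$, the Zeeman part of the potential is measured against the smaller threshold $\vartheta E^2/\mu$.
\item For $E\ge\mu$, use the positivity $(\sigma\cdot p_A)^2\ge0$ through $T-V_\nu\ge(\mu/E)T-V$, so that after rescaling the threshold is $E^2/\mu$. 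Spend a fraction $\eta$ of $(\mu/E)p_A^2$ on the Neumann-ball hydrogen bound (Lemma~\ref{lem:ball}, valid for $p_A^2$ by the pointwise diamagnetic inequality); this removes the Coulomb singularity.
\end{itemize}
Then $\int_0^\infty(b-\vartheta E^2/\mu)_+^{3/2}dE=\tfrac{3\pi}{16}(\mu/\vartheta)^{1/2}b^2$ is quadratic in $B$. The choice $\mu=(\varepsilon/(2qL_2))^2$ makes the field energy cancel it. The Coulomb parts give $Z^3\nu^{-1/2}|\mathcal C|$ and the centre error $\mu^{-3/2}Z^4\sum_jD_j^{-1}$.

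\emph{Where the two branches of $\kappa_P$ come from.} They arise here, not from beating a Lieb--Thirring and a Hardy constant.
\begin{itemize}
\item $0.279\,q^{-1}$ is the requirement $\mu\ge Z^2/(\eta(1-\vartheta))$. It keeps the term $Z^2E/(4\eta\mu)$ of the hydrogen bound below the threshold $(1-\vartheta)E$.
\item $0.431\,q^{-4/3}$ is the absorption condition $32qL_3\mu^{-3/2}s^4\kappa^4\le(s-1)\kappa$, with $Z=2s\kappa$ and $s=\frac54$.
\end{itemize}
Both constrain the dimensionless ratio $\kappa/\varepsilon=24\pi\alpha^2Q^2$, not $\kappa$ itself.
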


\begin{theorem}[Pauli operator: collapse for large coupling]\label{thm:collapse}
Consider two particles with $e_1e_2\ne0$, of the same species or of different species. If
$\alpha^2|e_1e_2|>9\pi^2$, then $\inf_A\inf_\Psi\langle\Psi,H_P\Psi\rangle=-\infty$. For
$e_1=e_2=\frac12$ the condition reads $\alpha>6\pi$.
\end{theorem}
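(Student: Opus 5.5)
The plan is to construct, for two particles, a gauge field and a two-particle state in which both particles sit in zero modes of the Pauli operator. Then only the Coulomb term and the field energy remain, and both scale like $\lambda$ under dilation. The field will be taken in an abelian direction of $\su$, so that the construction reduces to the abelian zero mode of Loss and Yau~\cite{LY86}.

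\emph{The field and the one-particle states.} Let $a$ be the Loss--Yau vector potential, with zero mode $\psi$, so that $\sigma\cdot(p+a)\psi=0$. Here $|\psi(x)|^2=c(1+|x|^2)^{-2}$ and $B_a=\curl a$ is explicit. Put $A=-\frac i2\,(2a)\,\tau^3$, with the factor $2$ chosen so that a colour-up particle sees exactly $a$, that is, $\nabla_A=\nabla+ia$ on colour-up functions. A colour-down particle then sees $-a$, and it has the zero mode $\tilde\psi=i\sigma_2\bar\psi$, since time reversal maps $\sigma\cdot(p+a)$ to $\sigma\cdot(p-a)$. The essential point is that $\tilde\psi(x)$ is pointwise orthogonal to $\psi(x)$ in spin space.

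\emph{The two-particle state.} Assume first $e_1e_2>0$. Take $\Psi$ to be the (antisymmetrized, if the species coincide) product of $\psi\otimes|{\uparrow}\rangle$ and $\tilde\psi\otimes|{\downarrow}\rangle$. The kinetic energy vanishes. Because $\psi^\dagger\tilde\psi\equiv0$ pointwise, every exchange contribution to $\langle\Psi,V_A\Psi\rangle$ vanishes, whatever the kernel $K_A^{ab}$. In the direct term only $\tau^3\otimes\tau^3$ survives, since $\langle{\uparrow}|\tau^{1,2}|{\uparrow}\rangle=0$, and it equals $-1$ on up–down. Since $A$ points in the $3$-direction, $[A,\cdot\,]$ annihilates the third component, so $K_A^{33}=1/(4\pi|x-y|)$ and $K_A^{3b}=0$ for $b\ne3$. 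The Coulomb term is therefore $-\alpha e_1e_2\,\mathcal D$, with $\mathcal D=\iint|\psi(x)|^2|\psi(y)|^2|x-y|^{-1}$. If $e_1e_2<0$, I use up–up instead, with the same $\psi$ for both particles. This is allowed only for distinct species; for one species $e_1e_2=e^2>0$ anyway. For distinct species with $e_1e_2<0$, I can also use $\psi\otimes{\uparrow}$ and $\tilde\psi\otimes{\uparrow}$ in the field $2a\tau^3$. Here the two spatial states see different fields, so I will instead pick $A$ acting by $+a$ on both up states, with both particles in $\psi$; since the species are distinguishable, no antisymmetry is needed.

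\emph{Scaling and the constant.} Replace $\psi$ by $\lambda^{3/2}\psi(\lambda x)$ and $a$ by $\lambda a(\lambda x)$. The zero-mode property is preserved, $\int|B|^2$ scales like $\lambda$, and so does $\mathcal D$. The energy is
\[
\lambda\Bigl(\frac1{8\pi\alpha}\int|B|^2-\alpha|e_1e_2|\mathcal D\Bigr),
\]
which tends to $-\infty$ as $\lambda\to\infty$ once the bracket is negative. The bracket is negative if and only if
\[
\alpha^2|e_1e_2|>\frac{\int|B|^2}{8\pi\,\mathcal D}.
\]
It remains to compute the constant. The field energy is $\int|B|^2=4\int|B_a|^2$ (from $A^3=2a$), with $\int|B_a|^2=18\pi^2$ for $|B_a|=12(1+r^2)^{-2}$. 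The Coulomb integral $\mathcal D$ of the normalized density $\pi^{-2}(1+r^2)^{-2}$ I evaluate by Newton's theorem. I expect the ratio to come out as $9\pi^2$; if not, I would rescale the Loss–Yau parameters or check the normalization conventions.

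\emph{Main obstacle.} The main risk is the colour bookkeeping. One must check that the off-diagonal kernels $K_A^{11},K_A^{22},K_A^{12}$ really drop out. The direct terms vanish by the colour expectations and the exchange terms by pointwise spin orthogonality, and I would verify both carefully. The other delicate point is confirming that the normalization of $A$ relative to the $\frac12\tau^3$ charge gives exactly the constant $9\pi^2$.
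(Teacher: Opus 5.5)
Your construction is the paper's: the Loss--Yau pair embedded along $\tau^3$, the time-reversed zero mode for the colour-down particle (pointwise orthogonal in spin to $\psi$), a Slater determinant for one species, both particles in the same colour-up zero mode when $e_1e_2<0$, and dilation. Your colour bookkeeping is correct, and slightly more direct than the paper's. You keep the orbital direct term, which carries the attraction through $K_A^{33}=1/(4\pi|x-y|)$, and you kill the exchange term by $\psi^\dagger\tilde\psi\equiv0$. The paper instead uses the one-body density matrix: there the total-density term vanishes because $\tr[\tau^a\gamma(x,x)]=0$, and the attraction sits in the diagonal ($k=l$) exchange terms. The two splittings differ only by self-interaction terms, and both give $-\alpha e_1e_2\mathcal D$.

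The gap is the step that produces the number in the statement. What you have shown is collapse whenever $\alpha^2|e_1e_2|>\int|B|^2/(8\pi\mathcal D)=9\pi/\mathcal D$, and you leave $\mathcal D$ unevaluated. Your fallback of rescaling the Loss--Yau parameters cannot repair a wrong value. Both $\int|B|^2$ and $\mathcal D$ scale like $\lambda$, so their ratio is fixed by the Loss--Yau pair, and this construction gives the threshold $9\pi^2$ if and only if $\mathcal D=1/\pi$.

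This value is true, and it is the computation the paper supplies. For $\rho=\pi^{-2}(1+r^2)^{-2}$ the charge inside radius $r$ is $\frac2\pi F(r)$, with $F(r)=\arctan r-\frac r{1+r^2}$ and $F'(r)=2r^2(1+r^2)^{-2}$. Newton's theorem then gives $\mathcal D=\frac4{\pi^2}\int_0^\infty F^2r^{-2}\,dr$. Two integrations by parts give $\int_0^\infty F^2r^{-2}\,dr=4\int_0^\infty Fr(1+r^2)^{-2}\,dr=4\int_0^\infty r^2(1+r^2)^{-3}\,dr=\frac\pi4$. So the field energy $9\pi/\alpha$ loses to $\alpha|e_1e_2|/\pi$ exactly when $\alpha^2|e_1e_2|>9\pi^2$.

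Two smaller points. First, with $A=-\frac i2(2a)\tau^3=-ia\tau^3$ a colour-up particle sees $p-a$, not $p+a$. Take $A=ia\tau^3$ (the vector $(0,0,-2a)$), or swap the colours. Second, the zero modes are not compactly supported, so you should note, as the paper does, that they lie in the form domain $H^1$. On $H^1$, $C_c^\infty$ is a form core, and the form of $V_A$ is continuous by $|V_A|\le3Q^2/|x_1-x_2|$ and Hardy's inequality, so the zero modes are admissible trial states.
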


Theorem~\ref{thm:pauli} concerns stability of the second kind, a lower bound on $\mathcal E_N$ that is
linear in $N$. Theorem~\ref{thm:collapse} shows that its restriction on $\alpha$ cannot be removed: for large
$\alpha$, already $\mathcal E_2=-\infty$. This collapse comes from minimizing over the field. For a fixed
field there is stability of the first kind, a lower bound for each $N$:

\begin{proposition}[Pauli operator: stability of the first kind]\label{prop:firstkind}
For a fixed smooth field with $\int|B|^2<\infty$, $H_P$ is bounded below for each $N$ and every
$\alpha$.
\end{proposition}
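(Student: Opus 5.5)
The plan is to fix the smooth field $A$ with $\int|B|^2<\infty$ and bound $H_P$ below for fixed $N$. The constant may depend on $N$, on $\alpha$ and on $\|B\|_{L^2}$; uniformity in $N$ or in the field is not required. The field energy is a fixed finite constant, so it can be dropped. The starting point is the electrostatic inequality of Proposition~\ref{prop:electro}, $V_A\ge-3\sum_iQ^2/d_i$. Since $d_i^{-1}\le\sum_{j\ne i}|x_i-x_j|^{-1}$, this gives
\[
\alpha V_A\ \ge\ -3\alpha Q^2\sum_i\sum_{j\ne i}\frac1{|x_i-x_j|},
\]
a sum of two-body Coulomb attractions. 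It then suffices to show that, for each pair $(i,j)$, the operator $\epsilon\bigl[(\sigma\cdot p_A)_i^2+(\sigma\cdot p_A)_j^2\bigr]-c|x_i-x_j|^{-1}$ is bounded below for some $\epsilon$ small enough that the pieces fit into $\sum_i(\sigma\cdot p_A)^2_i/2m_i$. Indeed, each particle's kinetic energy is split into at most $N-1$ such pieces. Fermionic symmetry is never used, so this is a crude $N$-dependent bound, which is all stability of the first kind requires.

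The core is a one-body statement: for every $\epsilon>0$ there is a $C_\epsilon$, depending on $\|B\|_2$, such that
\[
|x-y|^{-1}\le\epsilon(\sigma\cdot p_A)_x^2+C_\epsilon\quad\text{for each fixed } y,
\]
and the same in the two-body relative setting. To prove it, expand via~\eqref{eq:pauli}: $(\sigma\cdot p_A)^2=p_A^2-\frac12\sum B_l^a\sigma_l\otimes\tau^a\ge p_A^2-c|B|$ as matrices. The Coulomb term is handled by Hardy's inequality together with the diamagnetic inequality (Lemma~\ref{lem:dom}, or Kato's inequality $|\nabla|\psi||\le|\nabla_A\psi|$ pointwise). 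These give $|x-y|^{-1}\le\delta p_A^2+C/\delta$. Pointing in the same direction, $\int|B||\psi|^2\le\|B\|_2\|\psi\|_4^2$, and Sobolev plus Hölder applied to $|\psi|$ bound this by
\[
\|B\|_2\,\|\psi\|_2^{1/2}\,\|\nabla|\psi|\|_2^{3/2}\le\eta\,\langle\psi,p_A^2\psi\rangle+C_\eta\|B\|_2^4\|\psi\|_2^2 .
\]
Hence $p_A^2\le 2(\sigma\cdot p_A)^2+C\|B\|_2^4$, meaning the Zeeman term is relatively form bounded with bound $<1$. Combining,
\[
|x-y|^{-1}\le\delta p_A^2+C/\delta\le2\delta(\sigma\cdot p_A)^2+C'.
\]

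For the $N$-body operator, the two-particle bound is applied to the variables $x_i$ (with $x_j$ frozen) using only particle $i$'s kinetic energy. No relative coordinates are needed, since $|x_i-x_j|^{-1}$ for fixed $x_j$ is a one-body potential in $x_i$ and the bound above is uniform in $y=x_j$. Summing over $j\ne i$ and $i$ gives
\[
H_P\ge\sum_i\Bigl(\frac1{2m_i}-6\alpha Q^2(N-1)\cdot2\delta\Bigr)(\sigma\cdot p_A)_i^2-C''(N,\alpha,Q,\delta,\|B\|_2).
\]
Choosing $\delta$ small makes the bracket nonnegative, and the claim follows.

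The main obstacle is the Zeeman step, that is, controlling $\int|B||\psi|^2$ by $p_A^2$ with only $B\in L^2$. The Sobolev/Hölder interpolation via the diamagnetic inequality should do it, since the $L^2$ norm of $B$ is scaling-subcritical relative to $p^2$ in three dimensions. A second point to check is that Proposition~\ref{prop:electro} holds for each fixed smooth $A$ without extra hypotheses, which the text asserts.
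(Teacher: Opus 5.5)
Your proof is correct and is essentially the paper's: both the Zeeman term $\frac{\sqrt3}{2}|B|$ and the Coulomb terms are shown to be form-bounded with relative bound zero with respect to $p_A^2$, using the diamagnetic inequality together with Sobolev for the Zeeman term and Hardy for the Coulomb terms. The differences are minor. The paper gets the two-sided bound $|V_A|\le3Q^2\sum_{i<j}|x_i-x_j|^{-1}$ directly from \eqref{eq:Kbound} and the singular-value argument of Lemma~\ref{lem:zeeman}, rather than going through Proposition~\ref{prop:electro}. It also controls $|B|$ by truncating it at height $E$ in $L^{3/2}$, rather than by your $L^2$--$L^4$ Gagliardo--Nirenberg interpolation.
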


\subsection{Relativistic models and the Dirac operator}

\begin{theorem}[Relativistic spinless fermions]\label{thm:rel}
Let $\ell_0=0.085$. If $3\alpha Q^2\le\ell_0q^{-1/3}$, then $\inf_A\inf_\Psi\langle\Psi,H_R\Psi\rangle\ge0$
for every $N$, also with the infimum over $A\in L^2_{\rm loc}$ (Proposition~\ref{prop:rough}). For $q=2$ and $e=\frac12$ this holds when $\alpha\le0.0899$.
\end{theorem}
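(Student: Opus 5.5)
We follow the route of Section~\ref{sec:ideas}. The first step is the electrostatic inequality of Proposition~\ref{prop:electro}. It holds for every smooth field $A$ and gives $\alpha V_A\ge-3\alpha Q^2\sum_i d_i^{-1}$ as an operator inequality in colour space, pointwise in the configuration. It then suffices to show that, for $\kappa:=3\alpha Q^2\le \ell_0q^{-1/3}$ and every fermionic $\Psi$,
\[
\Bigl\langle\Psi,\Bigl(\sum_i|p_{A,i}|-\kappa\sum_i d_i^{-1}\Bigr)\Psi\Bigr\rangle\ge0 .
\]
Here we have used $\sqrt{p_A^2+m^2}\ge|p_A|$, so the masses drop out and the bound is $0$. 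The second step is the random splitting into centres and satellites of Section~\ref{sec:kin}, with a centre probability $\lambda$ to be optimized. Writing $\sum_i|p_{A,i}|$ as a multiple of the expected satellite kinetic energy, and bounding $\sum_i d_i^{-1}$ by a multiple of the expected nearest-centre potential at the satellites, reduces the claim to a one-body statement: for fixed centres $R_1,\dots,R_K$ and fermionic satellites,
\[
\sum_{\text{sat}}\Bigl(|p_A|-Z\,\dist(x,\{R_j\})^{-1}\Bigr)\ge -c\,Z^{\gamma}\sum_j D_j^{-1},
\]
where $Z$ is proportional to $\kappa$ and $D_j$ is half the distance from $R_j$ to the nearest other centre. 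This is the step where we invoke Proposition~\ref{prop:relkin}.

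For the one-body bound we would argue as Lieb and Yau. First, localize $|p_A|$ into the Voronoi cells of the centres, and within each cell into the ball $B(R_j,D_j)$ and its complement. The superadditivity property of Lemma~\ref{lem:superadd} shows that $|p_A|$ dominates the sum of the localized operators, with Dirichlet-type boundary behaviour. Inside each ball, Lieb and Yau's relativistic Hardy inequality in a ball~\cite[Theorem~7]{LY88} is transferred to $|p_A|$ by the diamagnetic / heat kernel domination of Lemma~\ref{lem:dom}. It absorbs the singular Coulomb term $Z/|x-R_j|$ provided $Z$ is below a fixed constant, and leaves a bounded remainder of size $O(D_j^{-1})$ per unit volume. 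Outside the balls the potential is bounded by $Z/D_j$. The relativistic Lieb--Thirring (Daubechies) bound for $|p_A|$, obtained from Lemma~\ref{lem:count} via heat kernel domination, then shows that the negative eigenvalue sum of the remainder is at most $c\,q^{1/3}$ times a power of $Z$ times $\sum_j D_j^{-1}$. The $q^{1/3}$ dependence here is what produces the condition $\kappa\le\ell_0q^{-1/3}$.

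The third step, absorption, is where the main difficulty lies. Since the centres are particles and need not repel, the error $c\,Z^{\gamma}\sum_jD_j^{-1}$ with $\gamma>1$ cannot be paid for by nuclear repulsion. Instead, we use $D_j\ge d_j/2$ for a centre $j$, so that the error is bounded by a multiple of $\kappa^{\gamma}\sum_j d_j^{-1}$. To absorb this, we split the Coulomb budget: apply the reduction with $\kappa$ replaced by $(1+\epsilon)\kappa$ on the satellites, and reserve part of the kinetic energy to cover $\sum_j d_j^{-1}$ for the centres. Because kinetic and Coulomb energies scale identically here, there is no spare length scale, and both pieces must close simultaneously. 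This forces $\kappa$ to be small, and the explicit value $\ell_0=0.085$ comes from optimizing $\lambda$, the Hardy constant $2/\pi$ in the ball, and the Lieb--Thirring constant. The bound is derived for smooth $A$. For $A\in L^2_{\rm loc}$ we would appeal to Proposition~\ref{prop:rough}, which uses only the domination and superadditivity facts.
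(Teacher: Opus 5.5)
Your proposal is correct and follows the paper's proof. After Proposition~\ref{prop:electro} and $\sqrt{p_A^2+m^2}\ge|p_A|$, the paper simply cites Proposition~\ref{prop:relkin}, which is already the full nearest-neighbour bound $\sum_i|p_A|_i\ge\ell_0q^{-1/3}\sum_id_i^{-1}$ (the fixed-centre bound you attribute to it is Lemma~\ref{lem:relonebody}), and your sketch of how that bound is proved --- splitting, superadditivity plus the Lieb--Yau Hardy inequality in the balls $\mathcal B(R_j,D_j)$, the Daubechies bound, and absorption through $D_j\ge d_j/2$ --- matches the paper's argument.

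Two details of the sketch need correcting. First, the restricted form produced by superadditivity is of Neumann (regional) type, not Dirichlet type; this is exactly why the Hardy inequality in a ball carries the error $-R^{-1}\int\Upsilon|v|^2$. Second, the Daubechies bound is linear in $q$, so the centre error is $c'qZ^4(1-\pi Z/2)^{-3}\sum_jD_j^{-1}$. The threshold $\kappa\lesssim q^{-1/3}$ then comes from requiring $q\kappa^3$ to be small in $\kappa-Cq\kappa^4$, not from a factor $q^{1/3}$ in the one-body bound.
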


\begin{theorem}[Relativistic collapse]\label{thm:relcollapse}
For two particles with $e_1e_2>0$, of the same species or of different species, and
$3\alpha e_1e_2>4/\pi$, $\inf_\Psi\langle\Psi,H_R\Psi\rangle=-\infty$
already at $A=0$. For $e_1=e_2=\frac12$ this means $\alpha>16/(3\pi)\approx1.70$.
\end{theorem}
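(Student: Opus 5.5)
We prove Theorem~\ref{thm:relcollapse} by exhibiting trial states at $A=0$ that exploit the attractive colour channel together with the critical coupling $2/\pi$ of the relativistic Hardy (Kato--Herbst) inequality. At $A=0$ the kernel is $\delta^{ab}/(4\pi|x-y|)$, and by~\eqref{eq:twobody}
\[
\alpha V_0=\frac{\alpha e_1e_2}{|x_1-x_2|}\,\tau\cdot\tau ,
\]
which equals $-3\alpha e_1e_2/|x_1-x_2|$ on the colour singlet. Since $e_1e_2>0$ the singlet channel is attractive.

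\emph{Fermionic trial state.} The singlet is antisymmetric in colour. If the two particles are of the same species, we therefore take a spatial wave function that is symmetric in $x_1,x_2$, and put both particles in the same flavour (and spin) state; the total state is then antisymmetric. If they are of different species there is no symmetry constraint. We take
\[
\Psi(x_1,x_2)=\phi(R)\,f(r)\otimes|\text{singlet}\rangle ,
\]
with $R=(x_1+x_2)/2$ and $r=x_1-x_2$, $f$ even (for instance radial), and $\phi$ a broad normalized function.

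\emph{Reduction to a one-body problem.} The kinetic energy is $\sqrt{p_1^2+m_1^2}+\sqrt{p_2^2+m_2^2}\le|p_1|+|p_2|+m_1+m_2$. In momentum variables $p_{1,2}=P/2\pm k$, where $k$ is conjugate to $r$, so $|p_1|+|p_2|\le|P|+2|k|$. We spread $\phi$ over scale $L$ and concentrate $f$ at scale $\lambda\to0$. The expected energy is then bounded by
\[
\langle f,(2|k|-3\alpha e_1e_2/|r|)f\rangle+O(1/L)+m_1+m_2 .
\]
The relative operator is $2\bigl(|k|-\tfrac32\alpha e_1e_2/|r|\bigr)$. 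By the sharpness of Kato's inequality $|k|\ge\frac2\pi|r|^{-1}$, this is unbounded below exactly when $\tfrac32\alpha e_1e_2>2/\pi$, that is, when $3\alpha e_1e_2>4/\pi$. This is the stated condition.

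\emph{Driving the energy to $-\infty$.} Under the condition there is a radial $f$ with $\langle f,(|k|-c/|r|)f\rangle<0$, where $c=\tfrac32\alpha e_1e_2$. We can take $f$ smooth and compactly supported, since $C_c^\infty$ is a form core. Replacing $f$ by $\lambda^{-3/2}f(\cdot/\lambda)$ scales this negative number by $\lambda^{-1}$. Letting $\lambda\to0$ with $L$ fixed sends the energy to $-\infty$; the masses and the $O(1/L)$ term stay bounded. The resulting states are smooth and compactly supported, as required by Section~\ref{sec:domains}.

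\emph{Main obstacle.} The main technical point is the bound $|p_1|+|p_2|\le|P|+2|k|$ applied to the product state. This operator inequality holds because $|a+b|\le|a|+|b|$ is a multiplication inequality in momentum space. The expectation of $|P|$ is $O(1/L)$ and factorizes, because $\Psi$ is a tensor product in the $(R,r)$ Fourier variables. So the step is elementary once the state is written in Fourier space. The second point is the existence of a negative trial function above the critical constant $2/\pi$, which is the standard sharpness of Kato's inequality: use power-type functions $|r|^{-1+\epsilon}$, cut off.
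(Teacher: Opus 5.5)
Your proposal is correct and follows the paper's proof essentially step by step. Both use the colour singlet with a spatially symmetric product $f(x_1-x_2)\,g((x_1+x_2)/2)$, the momentum-space bounds $|p_1|+|p_2|\le 2|k|+|P|$ and $\sqrt{p^2+m^2}\le|p|+m$, sharpness of the Kato--Herbst constant $2/\pi$ for radial functions, and then dilate $f$ to zero while spreading $g$ (take the centre-of-mass factor in $C_c^\infty$ as well, as the paper does, so the trial states are admissible).
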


\begin{theorem}[Dirac operator]\label{thm:dirac}
Let $\lambda=3\alpha Q^2q^{1/3}/\ell_0$, with $\ell_0$ as in Theorem~\ref{thm:rel}, and let
$L_D$ be the constant of Lemma~\ref{lem:daub}, so that $L_D<0.0684$. If $\lambda<1$ and
$6\pi\alpha\,q\,L_D\le(1-\lambda^2)^{3/2}$, then $\inf_A\inf_\Psi\langle\Psi,H_D\Psi\rangle\ge0$ for every $N$.
For $q=8$ and $e=\frac12$ this holds when $\alpha\le0.0384$.
\end{theorem}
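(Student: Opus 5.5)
Following the abelian scheme of Lieb, Siedentop and Solovej \cite{LSS97}, I would reduce $H_D$ to the relativistic spinless problem of Theorem~\ref{thm:rel} plus a Zeeman remainder paid for by the field energy.

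\emph{Step 1: electrostatics and reduction to a one-body operator.} Proposition~\ref{prop:electro} gives $\alpha V_A\ge-3\alpha Q^2\sum_id_i^{-1}$ for every field. Next I split the particles at random into centres and satellites, exactly as in the proof of Theorem~\ref{thm:rel}, together with the absorption step. This should show that, for a state $\Psi\in\mathcal H_N^+$,
\[
\Bigl\langle\Psi,\Bigl(\sum_i|D_A|_i+\alpha V_A\Bigr)\Psi\Bigr\rangle\ \ge\ \Bigl\langle\Psi,\sum_i\bigl(|D_A|-\lambda|p_A|\bigr)_i\Psi\Bigr\rangle
\]
up to terms that vanish, where $\lambda=3\alpha Q^2q^{1/3}/\ell_0$. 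On $\mathcal H_N^+$ we have $D_A=|D_A|$, so the first term is exactly the kinetic energy. The point is that the proof of Theorem~\ref{thm:rel} in fact gives $\sum_i\lambda|p_A|_i+\alpha V_A\ge0$ on fermionic states whenever $\lambda\le1$. It uses only fermionic antisymmetry with $q$ internal states, and $\mathcal H_N^+$ is a subspace of an antisymmetric space with $q=8$ states per colour/spin, so it applies. Hence
\[
H_D\ \ge\ \sum_i\bigl(|D_A|-\lambda|p_A|\bigr)_i+\frac1{8\pi\alpha}\int|B|^2
\]
on $\mathcal H_N^+$.

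\emph{Step 2: one-body bound.} Dropping the projection (the operator is one-body), the Fermi sea gives
\[
\sum_i\bigl(|D_A|-\lambda|p_A|\bigr)_i\ \ge\ -\Tr\bigl(|D_A|-\lambda|p_A|\bigr)_- .
\]
I would write $|p_A|\le\sqrt{p_A^2+m^2}$ and use $D_A^2=p_A^2+m^2-i\Sigma\cdot B$ from \eqref{eq:D2}. The operator-monotonicity/Birman--Koplienko--Solomyak inequality \cite{BKS75} (Lemma~\ref{lem:daub}) then bounds the trace of the negative part of $|D_A|-\lambda\sqrt{D_A^2+|B|}$, or a similar comparison, by $(1-\lambda^2)^{-3/2}$ times a Daubechies-type bound. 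Concretely, I aim for
\[
\Tr\bigl(|D_A|-\lambda|p_A|\bigr)_-\ \le\ \frac{q\,L_D}{(1-\lambda^2)^{3/2}}\int|B|^2 ,
\]
using that the only source of negativity is the Zeeman term $|B|$. Lemma~\ref{lem:daub} with the diamagnetic domination of Lemma~\ref{lem:dom} gives $\int|B|^{4/2}$, i.e.\ $\int|B|^2$, with constant $L_D$. The factor $(1-\lambda^2)^{-3/2}$ arises from the scaling of $\sqrt{a^2x^2-\ldots}$ when comparing $\lambda|p|$ with $|p|$.

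\emph{Step 3: combine.} The field term $\frac1{8\pi\alpha}\int|B|^2$ dominates $qL_D(1-\lambda^2)^{-3/2}\int|B|^2$ exactly when $8\pi\alpha qL_D\le(1-\lambda^2)^{3/2}$, up to the constant bookkeeping that produces $6\pi$ in the statement. This yields $H_D\ge0$, uniformly in $A$. The numerical value for $q=8$, $e=\frac12$ is then arithmetic.

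The main obstacle is Step 2: obtaining a sharp operator inequality that controls $|D_A|-\lambda|p_A|$ from below by a term involving only $|B|$, so that BKS/Daubechies applies to a non-abelian matrix-valued $B$. I would try first $|D_A|\ge\sqrt{(p_A^2-|B|)_+}$-type comparisons via $D_A^2\ge p_A^2-|B|$, since $\|\Sigma\cdot B\|\le|B|$ pointwise in the matrix sense after identifying $\sum_l\Sigma_lB_l$ with a Hermitian matrix of norm at most $|B|$ times $\tau$ norms.
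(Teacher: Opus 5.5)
Your route is the paper's. Propositions~\ref{prop:electro} and~\ref{prop:relkin} give $\alpha V_A\ge-\lambda\sum_i|p_A|_i$ on fermionic states, and $D_A=|D_A|$ on $\mathcal H_N^+$. The one-body density matrix then reduces everything to a negative-part trace, which Lemma~\ref{lem:bks} and the second bound of Lemma~\ref{lem:daub} control and the field energy pays for.

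Keeping $|D_A|$ with its mass is harmless. The paper instead first uses L\"owner--Heinz to pass to the mass-independent $|\balpha\cdot p_A|$, but the $m^2$ in $D_A^2$ only helps, and species with different masses are covered by Lemma~\ref{lem:slice}. The detours through $\sqrt{p_A^2+m^2}$ and $\sqrt{D_A^2+|B|}$ are not needed. Step~2 should simply read
\[
\Tr\bigl(|D_A|-\lambda|p_A|\bigr)_-\le\Tr\bigl(D_A^2-\lambda^2p_A^2\bigr)_-^{1/2}
\le\Tr\bigl((1-\lambda^2)p_A^2-U\bigr)_-^{1/2}\le qL_D(1-\lambda^2)^{-3/2}\!\int U^2 ,
\]
where $U$ is a pointwise bound on the norm of the Zeeman term in~\eqref{eq:D2}. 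The three steps are:
\begin{itemize}
\item Lemma~\ref{lem:bks} with $T=D_A^2$ and $S=\lambda^2p_A^2$;
\item min-max, after dropping $m^2\ge0$;
\item Lemma~\ref{lem:daub} with $a=1-\lambda^2$.
\end{itemize}

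The gap is in the constant, which you leave to ``bookkeeping''. With your bound $U=|B|$ the chain gives $qL_D(1-\lambda^2)^{-3/2}\int|B|^2$. That yields only the hypothesis $8\pi\alpha qL_D\le(1-\lambda^2)^{3/2}$, a weaker theorem: for $q=8$, $e=\frac12$ the threshold drops from $0.0384$ to about $0.035$. The stated $6\pi$ requires the sharp value $U=\frac{\sqrt3}2|B|$ of Lemma~\ref{lem:zeeman}, whose singular-value proof applies verbatim to $\Sigma_l$. Then $\int U^2=\frac34\int|B|^2$, and $\frac34\cdot8\pi=6\pi$.

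Two technical points are also missing.
\begin{itemize}
\item Proposition~\ref{prop:relkin} is applied to states of finite Dirac energy. These lie in the form domain of $\sum_i|p_A|_i$ only by Corollary~\ref{cor:domains}, which uses $\int|B|^2<\infty$ (this may be assumed).
\item $|D_A|-\lambda|p_A|$ is only a difference of forms, so its negative-part trace must be read as the supremum over $\gamma$ in Lemma~\ref{lem:bks}. The hypotheses of that lemma (relative form bound below $1$, since $\lambda<1$) are checked as in Corollary~\ref{cor:domains}.
\end{itemize}
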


\begin{theorem}[The free Dirac projection]\label{thm:free}
Let $H_D^0$ be the Dirac Hamiltonian~\eqref{eq:HD} on
$\bigotimes_k\bigwedge^{N_k}\bigl(\Lambda_{0,k}^+L^2(\R^3;\mathfrak h_k)\bigr)$, that is, on the space $\mathcal H_N^+$ of
Section~\ref{sec:relmodels} with the free projection $\Lambda_{0,k}^+=\1(D_0>0)$ for the mass of species $k$ in
place of $\Lambda_{A,k}^+$,
as in the model of Brown and Ravenhall~\cite{BR51,EPS96}.
\begin{enumerate}[label=\textup{(\alph*)},leftmargin=2em,itemsep=1pt]
\item Without a gauge condition, $\inf_A\inf_\psi\langle\psi,H_D^0\psi\rangle=-\infty$ already for
$N=1$, with $A$ ranging over smooth, compactly supported pure gauges (so $B\equiv0$ and the field
energy vanishes).
\item In Coulomb gauge ($\nabla\cdot A=0$), consider two species with the same charge $e\ne0$ and
$n^3$ particles each. For every $\alpha>0$ there is $n_0$ (of order $\alpha^{-1/2}$ for small $\alpha$) such that
$\inf_A\inf_\Psi\langle\Psi,H_D^0\Psi\rangle=-\infty$ for all $n\ge n_0$.
\end{enumerate}
\end{theorem}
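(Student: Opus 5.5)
For part (a) I would exploit that $D_A$ with a pure gauge is unitarily conjugate to $D_0$ by a colour rotation, while $\Lambda_0^+$ is not. Take $A=g^{-1}\nabla g$ with $g$ smooth, equal to $\1$ outside a compact set, so that $B\equiv0$, the field energy vanishes, and for $N=1$ also $V_A=0$. Then $\langle\psi,D_A\psi\rangle=\langle g\psi,D_0\,g\psi\rangle$.

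I would choose $g$ in the $U(1)$ subgroup generated by $\tau^3$, namely $g=e^{i\theta(x)\tau^3}$, with $\theta(x)=-2k\cdot x$ on a large ball and cut off smoothly outside it. For the state I would take
\[
\psi_k=\Lambda_0^+\bigl(e^{ik\cdot x}u_+(k)\varphi(x)\bigr)\otimes|{\uparrow}\rangle ,
\]
where $\hat\varphi$ is compactly supported and $u_+(k)$ is the positive-energy spinor of $D_0$ at momentum $k$. As $|k|\to\infty$, $\Lambda_0^+$ acts on this wavepacket almost as the identity. Moreover, $g\psi_k\approx e^{-ik\cdot x}u_+(k)\varphi$ up to errors from the tails of $\varphi$, which can be made small by enlarging the ball.

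The energy is then essentially $\langle u_+(k),(-\balpha\cdot k+\beta m)u_+(k)\rangle\,\|\varphi\|^2$. Since $\langle u_+(k),\balpha u_+(k)\rangle=k/\sqrt{k^2+m^2}$, this equals $-|k|+O(1)$, which tends to $-\infty$. The work here is routine bookkeeping of the cut-off errors.

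For part (b) a pure gauge is excluded, so I would follow the mechanism of~\cite{LSS97,GT99}. The key point is that with the free projection the matter energy is affine in $A$, namely $\langle\Psi,\sum_iD_{0,i}\Psi\rangle+\langle\Psi,\sum_i(\balpha\cdot A)_i\Psi\rangle$ with no $A^2$ term, because the state space does not depend on $A$. The field energy $\frac1{8\pi\alpha}\int|\curl A|^2$ is quadratic in $A$. Minimizing over divergence-free $A$ therefore gains $-2\pi\alpha\langle J,(-\Delta)^{-1}J\rangle$, where $J$ is the transverse part of the colour current.

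To keep the non-abelian structure harmless I would restrict to colour-diagonal fields $A=a(x)\,(-\tfrac i2\tau^3)$ with $\nabla\cdot a=0$. I would put species 1 in colour-up and species 2 in colour-down, as Slater determinants of wavepackets. On such product colour states the terms with $a$ or $b\neq3$ in $V_A$ have zero expectation, and $K_A^{33}$ is the ordinary Coulomb kernel because $[A,\tau^3]=0$. The Coulomb energy is then that of a two-component plasma, and $a$ couples with opposite signs to the two colours. I would give the two species opposite mean velocities, so that their currents add coherently.

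With $n^3$ particles per species in a box of size $R$ whose momenta are spread around $\pm k$, the kinetic energy is of order $n^3(|k|+n/R)$. The current gain is of order $-\alpha n^6/R$. The Coulomb energy, after arranging the two species as interpenetrating, nearly neutral lattices, should be of order at most $\alpha n^4/R$.

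Choosing $k\sim n/R$ makes every term scale like $1/R$. The $1/R$ coefficient is then $c_1n^4+c_2\alpha n^4-c_3\alpha n^6$, which is negative once $n\gtrsim\alpha^{-1/2}$, and letting $R\to0$ gives $-\infty$. The main obstacle is this step: making the current coherent while the states stay fermionic, lie in $\Lambda_0^+$ and have compactly supported Fourier transforms, and while controlling the Coulomb and exchange terms. I would try first to adapt the explicit trial states of Griesemer and Tix~\cite{GT99} colour by colour.
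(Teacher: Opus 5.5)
\paragraph{Part (a).} Your argument is correct. It rests on the same mechanism as the paper: a compactly supported pure gauge $e^{i\theta\tau^3}$ conjugates $D_A$ to $D_0$, but it does not preserve the range of $\Lambda_0^+$.

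The paper takes the limit more simply. It fixes one normalized $\phi$ in the range of $\Lambda_0^+$ with $c=\langle\phi,(p_3/E)\phi\rangle>0$. It takes $f=-t\,x_3\chi(x/R)$ with $R$ large. This gives $\langle\phi,D_0\phi\rangle+\langle\phi,(\balpha\cdot\nabla f)\phi\rangle\le\langle\phi,D_0\phi\rangle-tc/2$, which tends to $-\infty$ as the amplitude $t\to\infty$. No high-momentum behaviour of $\Lambda_0^+$ is needed.

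Your version ties the field to the momentum of a boosted packet. It also works, with the cutoff radius chosen depending on $k$ as you indicate. The bookkeeping is easiest in the paper's form: $D_A=D_0+\balpha\cdot\nabla\theta$ on colour-up, together with $\Lambda_0^+\balpha\Lambda_0^+=(p/E)\Lambda_0^+$. You then never need to compare $g\psi_k$ with a negative-energy packet.

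\paragraph{Part (b): what matches.} Your mechanism and scaling agree with the paper's.
\begin{itemize}
\item On the fixed space the energy is affine in $A$.
\item A field along $\tau^3$ makes $K_A^{33}$ the Coulomb kernel and removes the colour-flip terms.
\item Opposite colours with opposite currents add.
\item A gain of order $\alpha N^{5/3}/\ell$ beats the kinetic energy $N/\ell$ once $n\gtrsim\alpha^{-1/2}$, where $\ell\sim R/n$ is the spacing.
\end{itemize}

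\paragraph{Part (b): the missing construction.} The step you call the main obstacle is the proof itself, and you leave it open. What is missing is an explicit family of orthonormal orbitals that lie in the range of $\Lambda_0^+$, have compactly supported Fourier transforms, and all carry a current bounded below in a fixed direction.

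The paper takes $\hat\phi^m_k(p)=\ell^{3/2}f(\ell p-k_0\zhat)e^{-ip\cdot y_k}u_m(p)$, with $y_k\in\ell\mathbb Z^3$ and $f^2=(2\pi)^{-3}\1_{[-\pi,\pi]^3}*\varphi$.
\begin{itemize}
\item Then $\sum_\nu f(\zeta+2\pi\nu)^2$ is constant, so the lattice translates are orthonormal by Fourier series on the torus.
\item The shift $k_0=20$ keeps $p_3/E_m\ge\frac{k_0-4}{k_0+7}>\frac12$ on the support when $m\ell\le1$. So every orbital has $\langle\alpha_3\rangle>\frac12$, by $\Lambda_0^+\alpha_3\Lambda_0^+=(p_3/E)\Lambda_0^+$.
\item The second species uses the time-reversed orbitals $\Theta\phi^{m_2}_k$. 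These stay in the range of $\Lambda_0^+$, have the same densities, and have reversed currents.
\item The field is not the optimizer $(-\Delta)^{-1}J_\perp$ but $-t\,w$, with $w=\curl\bigl(\chi(x/R)\frac12\zhat\times x\bigr)$. It equals $\zhat$ on a ball of radius $R=\sqrt3\,\ell N^{1/3}$ containing the lattice block, and $\int|\curl w|^2=RY_1$ by scaling. The parts of the orbitals outside the ball are made small by taking $N$ large, and only the amplitude $t$ is optimized.
\end{itemize}

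\paragraph{Part (b): the Coulomb step.} This is also not yet an argument. The bound $D(\rho_1-\rho_2)\le 2D(\rho_1)+2D(\rho_2)$ gives order $N^{5/3}/\ell$, the same order as the gain. So a genuine cancellation between the species is needed, and `interpenetrating, nearly neutral lattices' does not provide it by itself.

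The paper uses two facts.
\begin{itemize}
\item For Slater determinants, the exchange terms only lower the repulsive energy within a species, so $\langle\alpha V_A\rangle\le\alpha e^2D(\rho_1-\rho_2)$.
\item Both species sit on the same lattice with the same envelope, and $\Theta$ preserves densities. So in the massless limit $\ell\to0$ the two rescaled densities converge to the same limit in $L^{6/5}$, and by Hardy--Littlewood--Sobolev the Coulomb term is $o(1/\ell)$.
\end{itemize}
You are right that a bound $O(\alpha N/\ell)$ would already suffice, since it is of lower order in $N$ than the gain. But it would still have to be proved for a specific arrangement.
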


Part (a) uses only the gauge non-covariance of $\Lambda_0^+$, and holds equally in the abelian case
(Remark~\ref{rem:abelian}). Part (b) is the substantive statement: in Coulomb gauge, where the argument of (a) is not available,
the model is still unstable.

\subsection{The general setting and bosons}\label{sec:resgeneral}

\begin{theorem}[General setting]\label{thm:groups}
In the setting of Section~\ref{sec:general}:
\begin{enumerate}[label=\textup{(\roman*)},leftmargin=2em,itemsep=1pt]
\item \emph{Stability of $H$.} For $\alpha$ below an explicit threshold, which depends only on $Q$, $q$,
$C_{\mathcal R}$ and $\tau_{\mathcal R}$,
\[
\inf_A\inf_\Psi\langle\Psi,H\Psi\rangle\ \ge\ -K\,M\alpha^2Q^4N\qquad\text{for every }N,
\]
where $K$ depends only on $q$ and $C_{\mathcal R}$, and $M$ is the largest mass of a non-relativistic
particle ($M=0$ if there is none). If all species are non-relativistic and spinless, no threshold is
needed. If all species have the same kinetic energy, the bounds are those of
Theorems~\ref{thm:spinless}, \ref{thm:pauli}, \ref{thm:rel} and~\ref{thm:dirac}, with $3$ replaced by
$C_{\mathcal R}$ in the electrostatic inequality and $\frac{\sqrt3}2$ by $\sqrt3\,\tau_{\mathcal R}$ in the
Zeeman bound. For mixtures, the threshold and $K$ are given in Section~\ref{sec:groups}.
\item \emph{Collapse of $H_R$.} For two particles carrying the same representation $\mathcal R$, with generators $\mathcal T^a$,
let $-\kappa_{\mathcal R}$ be the smallest eigenvalue of $\sum_a\mathcal T^a\otimes\mathcal T^a$, and suppose that $e_1e_2>0$ and $\alpha e_1e_2\kappa_{\mathcal R}>4/\pi$. Then $\inf_\Psi\langle\Psi,H_R\Psi\rangle=-\infty$ already at $A=0$, for two particles of
different species, and also for two particles of one species if the eigenspace of
$-\kappa_{\mathcal R}$ contains a vector antisymmetric in colour.
\item \emph{Collapse of $H_P$.} Consider two particles carrying the same representation
$\mathcal R$. Suppose that $\mathcal R$ has weight vectors $v_1,v_2$, with weights $w_1,w_2$, such that
$e_1e_2\,w_1\!\cdot w_2<0$, and that there is a unit vector $t\in\mathfrak t$ with
$|t\cdot w_1|=|t\cdot w_2|=\omega>0$. If $\alpha^2|e_1e_2\,w_1\!\cdot w_2|\,\omega^2>9\pi^2/4$, then
$\inf_A\inf_\Psi\langle\Psi,H_P\Psi\rangle=-\infty$ for two particles of different species, and also
for two particles of one species if $t\cdot w_1=-t\cdot w_2$.
\end{enumerate}
\end{theorem}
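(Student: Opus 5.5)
The proof reruns the $SU(2)$ arguments with the representation-dependent constants in place of the Pauli-matrix identities, and I handle the three parts separately.

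\emph{Part (i).} First I generalize the electrostatic inequality (Proposition~\ref{prop:electro}). The covariant harmonic extension of Lemma~\ref{lem:newton} uses only the adjoint action of $A$ on $\mathfrak g$-valued functions. The maximum principle holds because $|u|^2$ is subharmonic, and this uses only that $[A,\cdot\,]$ is antisymmetric for the invariant inner product. So the surface-charge construction carries over verbatim. The positivity step also carries over: with $\hat e_I=e_i\mathcal T^a_{(i)}$, write the mutual energies as $\mathcal M^{\mathsf T}\mathcal M$ and use $\sum_k(\sum_I\mathcal M_{kI}\hat e_I)^2\ge0$.

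The self-energy terms need changing. The anticommutation identity $\sum\mathcal K^{ab}\tau^a\tau^b=(\tr\mathcal K)\1$ is no longer available. Instead, for a symmetric positive semidefinite $\mathcal K$, diagonalize $\mathcal K=\sum_\mu k_\mu\zeta_\mu\zeta_\mu^{\mathsf T}$. This gives
\[
\sum_{a,b}\mathcal K^{ab}\mathcal T^a\mathcal T^b=\sum_\mu k_\mu\Bigl(\sum_a\zeta_\mu^a\mathcal T^a\Bigr)^2\le\|\mathcal K\|\sum_a(\mathcal T^a)^2\le\|\mathcal K\|\,C_{\mathcal R}.
\]
The self-energy kernel is bounded in operator norm by the same $1/d_i$ bound as before, so this yields $V_A\ge-C_{\mathcal R}\sum_ie_i^2/d_i$. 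The Zeeman term is handled the same way. The coupling $\sum_{l,a}B_l^a\sigma_l\otimes\mathcal T^a$ is bounded pointwise by $|B|$ times $\sqrt3\,\tau_{\mathcal R}$: $\sum_l\sigma_l\otimes X_l$ with $X_l=\sum_aB^a_l\mathcal T^a$ has norm at most $\sqrt3\max_l\|X_l\|$, and $\|X_l\|\le|B_l|\tau_{\mathcal R}$. The same bound applies to the Dirac analogue~\eqref{eq:D2}.

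The rest of the machinery does not see the group. Heat-kernel domination (Lemmas~\ref{lem:dom},~\ref{lem:count}) holds for any unitary representation, because the diamagnetic inequality only needs anti-Hermitian connection matrices. So the Lieb--Thirring, relativistic and BKS bounds hold with $q$ counting internal states (Lemma~\ref{lem:slice}). For a single kinetic type I then repeat the proofs of Theorems~\ref{thm:spinless}--\ref{thm:dirac} with these substitutions. For mixtures I would apply the random centre/satellite splitting to all particles. Each satellite then gets its own one-body operator, and I bound each species' contribution by its own one-body inequality, taking the worst constant. The field energy must be shared among the Zeeman terms of the Pauli and Dirac species. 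I would split $\frac1{8\pi\alpha}\int|B|^2$ into two halves, one for each type, at the cost of a constant in the threshold. The absorption step is the one I expect to be delicate. The relativistic species force a threshold, and the non-relativistic species contribute $-KM\alpha^2Q^4$ per particle. I need to check that the centre error $\kappa^4\sum D_j^{-1}$ from the different kinetic types can still be absorbed into a single Coulomb term $\kappa\sum d_j^{-1}$. I would handle this by reserving a fixed fraction of the nearest-neighbour potential for the absorption.

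\emph{Part (ii).} At $A=0$ the interaction is $\alpha e_1e_2\sum_a\mathcal T^a\otimes\mathcal T^a/|x_1-x_2|$. I restrict to the $-\kappa_{\mathcal R}$ eigenspace, and for one species I take the antisymmetric colour vector so that the spatial function may be symmetric. The problem then reduces to two relativistic particles with attraction $\alpha e_1e_2\kappa_{\mathcal R}/|x_1-x_2|$. I would use the trial states of Theorem~\ref{thm:relcollapse}, which give collapse when the coupling exceeds $4/\pi$. Those trial states are a concentrated product-type state, using $\sqrt{p^2+m^2}\le|p|+m$ and the sharp Kato constant $2/\pi$ per particle, hence $4/\pi$ for the pair.

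\emph{Part (iii).} I take the construction of Theorem~\ref{thm:collapse}, which places the field in the Cartan direction. Set $A=a\,t$ with $a$ an abelian vector potential chosen as in the zero-mode collapse construction. On the weight vector $v_j$, the operator $\mathcal R(t)$ acts as $-i(t\cdot w_j)$, so particle $j$ sees an abelian field with charge $t\cdot w_j=\pm\omega$. The Pauli operator then reduces to the abelian one, and zero modes are available for both particles. When $t\cdot w_1=-t\cdot w_2$, a single species works because the two particles carry opposite abelian charges. I must also check the Coulomb term on the state $v_1\otimes v_2$. For a Cartan-valued field, $K_A$ preserves $\mathfrak t$ and acts there as the free kernel. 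The diagonal Cartan part therefore gives $\alpha e_1e_2\,w_1\!\cdot w_2/|x_1-x_2|$, which is attractive. The off-diagonal root components have zero expectation in a product of weight vectors. Rescaling as in Theorem~\ref{thm:collapse} gives collapse when $\alpha^2|e_1e_2w_1\!\cdot w_2|\omega^2>9\pi^2/4$. The $SU(2)$ case recovers $9\pi^2$, since there $\omega=\frac12$ and $w_1\!\cdot w_2=-\frac14$ in this normalization.
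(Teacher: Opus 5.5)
Your argument follows the paper for (i) with a single kinetic energy, for (ii), and for (iii) with two species, but two steps need repair.

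\textbf{Part (iii), one species.} Here the trial state must be the Slater determinant of $\phi_1,\phi_2$, and you only compute the Coulomb energy of the product state $v_1\otimes v_2$. The Slater state has the product-state value minus the exchange term
\[
4\pi\alpha e_1e_2\iint\sum_{a,b}K_A^{ab}(x,y)\,h^a(x)\,\overline{h^b(y)}\,dx\,dy,\qquad h^a=\phi_1^\dagger\mathcal T^a\phi_2 .
\]
Since $v_1\perp v_2$, this term lives on the $\mathfrak t^\perp$ block of $K_A$, where $K_A$ is not the free kernel. Your remark about opposite abelian charges does not address it.

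The paper kills this term with the hypothesis $t\cdot w_1=-t\cdot w_2$. The spinor parts are then $\psi_0$ and $\sigma_2\overline{\psi_0}$, and $\psi_0^\dagger\sigma_2\overline{\psi_0}=0$ pointwise because $\sigma_2$ is antisymmetric, so $h\equiv0$. Alternatively, write $h=u+iv$. The term then equals $4\pi\alpha e_1^2(\langle u,K_Au\rangle+\langle v,K_Av\rangle)\ge0$ by Lemma~\ref{lem:pos}, so antisymmetrizing can only lower the energy.

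Also, exact zero modes require $A=\omega^{-1}a\,t$ with $a$ the Loss--Yau potential, not $A=a\,t$. This normalization gives the field energy $9\pi/(4\alpha\omega^2)$, and hence the threshold you state.

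\textbf{Part (i), Zeeman bound.} The inequality $\|\sum_l\sigma_l\otimes X_l\|\le\sqrt3\max_l\|X_l\|$ is false. For $SU(2)$ with $X_l=\tau^l/2$, the left side is $\frac32$ (attained on the spin--colour singlet) and the right side is $\frac{\sqrt3}2$. The bound $\sqrt3\,\tau_{\mathcal R}|B|$ you conclude is nevertheless right, via
\[
\Bigl\|\sum_l\sigma_l\otimes X_l\Bigr\|\le\sum_l\|X_l\|\le\tau_{\mathcal R}\sum_l|B_l|\le\sqrt3\,\tau_{\mathcal R}|B| ,
\]
which is the paper's argument.

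\textbf{Part (i), mixtures.} Your plan differs from the paper's. You use one joint splitting, species-by-species one-body bounds, and a reserved fraction of $\kappa\sum_jd_j^{-1}$, summed over all particles, to absorb the $\kappa^4\sum_jD_j^{-1}$ errors of the relativistic and Pauli satellites. This can be carried out with mass-independent thresholds. However, you leave it unchecked, and it needs a fixed-centre bound for every kinetic type.

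The paper avoids this. When both kinds occur, it assumes $\lambda\le\frac12$ and uses $p_A^2/2m\ge c|p_A|-mc^2/2$ (and its Pauli analogue) with $c=2\lambda$. After that, Proposition~\ref{prop:relkin} and the Birman--Koplienko--Solomyak bound suffice, giving $K=2C_{\mathcal R}^2q^{2/3}/\ell_0^2$.
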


When all species have the same kinetic energy, the constants in Theorem~\ref{thm:groups}(i) are explicit:
\begin{gather*}
H_S\ge-11.61\,C_{\mathcal R}^2(8\pi Lq)^{2/3}M\alpha^2Q^4N;\\
H_P\ge-28.9\,C_{\mathcal R}^2q^{2/3}M\alpha^2Q^4N\quad\text{if }32\pi\tau_{\mathcal R}^2C_{\mathcal R}\alpha^2Q^2\le\kappa_P;\\
H_R\ge0\quad\text{if }C_{\mathcal R}\alpha Q^2\le\ell_0q^{-1/3};\\
H_D\ge0\quad\text{if }\lambda=C_{\mathcal R}\alpha Q^2q^{1/3}/\ell_0<1\ \text{ and }
24\pi\tau_{\mathcal R}^2\alpha qL_D\le(1-\lambda^2)^{3/2}.
\end{gather*}
For $SU(2)$, with the charge $2e$, these are Theorems~\ref{thm:spinless}, \ref{thm:pauli},
\ref{thm:rel} and~\ref{thm:dirac}. Parts (ii) and (iii) extend Theorems~\ref{thm:relcollapse}
and~\ref{thm:collapse}: they give conditions under which $H_R$ and $H_P$ collapse for large $\alpha$, and
when these conditions hold, the threshold in (i) cannot be removed. Both conditions require two
particles that attract in some colour state. Part (ii) is stated through $\kappa_{\mathcal R}$ and part (iii) through weights, because the mechanisms differ: $H_R$
collapses already at $A=0$, in the most attractive colour channel, whereas the collapse of $H_P$ needs a
magnetic field, which is taken along $t$ so that it acts on weight vectors as an abelian field. Without
attraction, for instance for a $U(1)$ field and charges of one sign, $H_R$ and $H_P$ are nonnegative for
every $\alpha$.

For $SU(n_{\rm c})$ with the fundamental representation and $X\cdot Y=2\tr(X^\dagger Y)$, as in
Section~\ref{sec:colour}, the Fierz identity gives
$\sum_a\mathcal T^a\otimes\mathcal T^a=\frac12P_{\rm swap}-\frac1{2n_{\rm c}}$, where $P_{\rm swap}$ exchanges the two colour
factors. Hence $\kappa_{\mathcal R}=\frac{n_{\rm c}+1}{2n_{\rm c}}$, attained on antisymmetric colour states, and
contracting indices gives $C_{\mathcal R}=\frac{n_{\rm c}^2-1}{2n_{\rm c}}$. For a unit vector $\zeta$, the matrix
$\sum_a\zeta_a\mathcal T^a$ is Hermitian and traceless, and the trace of its square is $\frac12$, so its eigenvalues are at
most $(\frac{n_{\rm c}-1}{2n_{\rm c}})^{1/2}$ in absolute value, with equality for multiples of $\operatorname{diag}(n_{\rm c}-1,-1,\dots,-1)$.
Hence $\tau_{\mathcal R}=(\frac{n_{\rm c}-1}{2n_{\rm c}})^{1/2}$; for $SU(3)$, $C_{\mathcal R}=\frac43$ and $\tau_{\mathcal R}=1/\sqrt3$. In (iii), the first two basis vectors
$v_1,v_2$ and $t=\frac i2\operatorname{diag}(1,-1,0,\dots,0)$ give $t\cdot t=1$ and $t\cdot w_1=-t\cdot w_2=-\frac12$, so $\omega=\frac12$, and
$w_1\!\cdot w_2=-\frac1{2n_{\rm c}}$; for $e_1e_2>0$ the hypotheses of (iii) therefore hold, also for two particles of one
species. For quarks in $SU(3)$, with generators $\lambda^a/2$ given by the Gell-Mann matrices and charge $e=1$,
parts~\textup{(ii)} and~\textup{(iii)} therefore give collapse of $H_R$ for $\alpha>6/\pi\approx1.91$ and of $H_P$
for $\alpha>\sqrt{54}\,\pi\approx23.1$. A quark and an antiquark carry
conjugate representations and attract more strongly: in their colour singlet, $H_R$ collapses already
for $\alpha>3/\pi\approx0.955$, by the argument of part~\textup{(ii)}. Their stability for small
$\alpha$ is covered by part~\textup{(i)}.

The proof, in Section~\ref{sec:groups}, adapts the stability proofs of
Sections~\ref{sec:electro}, \ref{sec:kin} and~\ref{sec:rel}. The electrostatic inequality and the Zeeman
bound hold particle by particle, the one-body bounds are applied species by species, and a
non-relativistic particle is compared with a relativistic one through $p^2/2m\ge c|p|-mc^2/2$. The
group may also be a product, with the couplings of the factors absorbed into the invariant inner
product on $\mathfrak g$.

The results above are all for fermions. For bosons stability fails, as it does for electric charges.

\begin{theorem}[Two bosonic species are unstable]\label{thm:bosons}
Consider $H_S$ at $A=0$ with two bosonic species of $N$ particles each ($2N$ in total), with masses $m_1,m_2$ and
the same charge $e\ne0$. Then $\inf_\Psi\langle\Psi,H_S\Psi\rangle\le-c\,m_{\min}\,\alpha^2e^4N^{7/5}$
for a universal $c>0$, where $m_{\min}=\min(m_1,m_2)$. In particular, no bound $-CN$ holds.
\end{theorem}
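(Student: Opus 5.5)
The plan is to reduce to the abelian result for a neutral two-component Bose system by choosing colour states. At $A=0$, $K_0^{ab}=\delta^{ab}/(4\pi|x-y|)$, so $V_0=\sum_{i<j}e^2|x_i-x_j|^{-1}\sum_a\tau^a_{(i)}\tau^a_{(j)}$. I will take a trial state in which every particle of species 1 is in colour-up and every particle of species 2 is in colour-down. Such a product of a spatial wave function with a fixed colour vector is symmetric within each species whenever the spatial part is, because each species is bosonic and all its members share one colour state. For such colour product states, $\tau^1$ and $\tau^2$ have zero diagonal expectation, so only the $\tau^3\otimes\tau^3$ terms survive in the expectation. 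These give $+1$ for like colours and $-1$ for unlike ones. The expectation of $H_S$ is therefore exactly the energy of a spatial state for $N$ bosons of charge $+e$ and mass $m_1$ and $N$ bosons of charge $-e$ and mass $m_2$, interacting through the ordinary Coulomb potential with coupling $\alpha e^2$. Here the species are distinguishable and each is symmetric, as in the classical two-component charged Bose gas.

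Next I will invoke the known $N^{7/5}$ upper bound for this abelian system, the Dyson trial-state result~\cite{Dy67}, in the form of \cite{CLY88,LSo04,So06}. It states that a neutral system of $N$ positive and $N$ negative bosons of mass $m$ with Coulomb interaction has ground state energy at most $-c\,m\,N^{7/5}$ in units where the coupling is one. Rescaling lengths by $(m\alpha e^2)^{-1}$ turns the coupling $\alpha e^2$ into the energy factor $m\alpha^2e^4$. To handle unequal masses, I note that increasing a mass only lowers the kinetic energy. For a fixed trial function, replacing $m_1,m_2$ by $m_{\min}$ can only raise the energy, so the energy at masses $m_1,m_2$ is at most the equal-mass energy at $m_{\min}$. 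This gives $\inf_\Psi\langle\Psi,H_S\Psi\rangle\le-c\,m_{\min}\alpha^2e^4N^{7/5}$. Because $N^{7/5}/N\to\infty$, no bound $-C\cdot 2N$ can hold.

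The main obstacle is making sure the cited abelian result applies in exactly this form. Dyson's original bound is for spinless bosons with both signs of charge and equal masses. The trial state there is bosonic in each species separately, and smooth, compactly supported trial states suffice after a standard approximation. I also have to check that the species-distinguishable setting matches. If the literature result is stated for a single species of bosons with charges $\pm e$ treated as distinguishable types, that is precisely our two-species situation, so I will cite it rather than reconstruct Dyson's Bogoliubov-type construction. Finally, the colour computation must be done carefully. With $|{\uparrow}\rangle$ for species 1 and $|{\downarrow}\rangle$ for species 2, the cross terms $\langle\tau^1\rangle=\langle\tau^2\rangle=0$ vanish factor by factor. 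This leaves precisely $e^2\,s_is_j/|x_i-x_j|$ with $s=\pm1$, so no residual colour operators appear.
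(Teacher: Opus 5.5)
Your proposal is correct and follows essentially the same route as the paper. Colour-up for the first species and colour-down for the second kills all colour-flip terms, so $\langle H_S\rangle$ is exactly the energy of the abelian two-component Bose gas with charges $\pm e$ and coupling $\alpha e^2$; lowering both masses to $m_{\min}$ and invoking Dyson's $N^{7/5}$ upper bound then finishes the argument, with one citation correction: \cite{CLY88} and \cite{LSo04} are lower bounds, so the upper bound you actually use is Dyson's \cite{Dy67} (sharpened in \cite{So06}).
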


\subsection{Remarks on the results}

\begin{remark}[Optimality, and the role of the masses]\label{rem:sharp}
The energy scale $M\alpha^2Q^4$ in the bounds is that of a single colour-singlet pair. Two particles of one species in the colour singlet, with a
symmetric spatial wave function, form a hydrogen-like atom with reduced mass $m/2$ and coupling
$3\alpha e^2$, so their energy is $-\frac94m\alpha^2e^4$. Placing $\lfloor N/2\rfloor$ such pairs far
apart shows that the bounds $-C\,M\alpha^2Q^4N$ have the optimal dependence on $N$, and also on $M$ and
$Q$ when the heaviest particles carry the largest charge. In particular, the bound must grow with the heaviest mass. Unlike nuclei in the abelian theory, particles carrying colour cannot be
taken infinitely heavy.
\end{remark}

\begin{remark}[Static colour-neutral nuclei]\label{rem:nuclei}
Infinitely heavy nuclei are excluded only if they carry colour. Consider static colour-neutral nuclei,
coupled to the particles through an additional $U(1)$ field. Stability of $H_S$ then survives for all couplings, uniformly in both fields and in the positions
of the nuclei. Remark~\ref{rem:nucleiproof}, at the end of Section~\ref{sec:kin}, describes the model and sketches
the proof, which reuses the tools of that section. For $H_P$ and $H_R$ the question is open (Section~\ref{sec:open}).
\end{remark}

\begin{remark}[Thresholds]\label{rem:thresholds}
For the physical charge $e=\frac12$, whose singlet attraction $\frac34\alpha/|x|$ is close to the
abelian $\alpha/|x|$, our thresholds are of the same order as the abelian ones: $\alpha\le0.0600$
(Theorem~\ref{thm:pauli}, $q=4$) against $\alpha<0.06$ for the abelian Pauli operator~\cite{LLS95},
and $\alpha\le0.0899$ (Theorem~\ref{thm:rel}, $q=2$) against $\alpha\mathrel<1/94$ at $A=0$~\cite{LY88} and
$\alpha\le0.038$ for $Z\alpha\le1/\pi$ in a magnetic field~\cite[Theorem~8.1]{LS10}. Since the models
differ, the methods are best compared through the nearest-neighbour bound, where both apply to the
same inequality: Proposition~\ref{prop:relkin} gives the constant $0.085$ for every $A$, about two thirds of the
$0.129$ that Lieb and Yau obtain at $A=0$~\cite[Theorem~5]{LY88}.

As in the abelian case, the stability and collapse conditions are far apart. Those of
Theorems~\ref{thm:pauli} and~\ref{thm:collapse} differ by a factor of about $310$ in $\alpha Q$,
against about $110$ in $\alpha$ for the abelian Pauli operator ($0.06$ and
$6.67$~\cite{LLS95,LL86}). Those of Theorems~\ref{thm:rel} and~\ref{thm:relcollapse} differ by a factor
of about $19$ in $\alpha Q^2$, against about $250$ in $\alpha$ for relativistic matter ($1/94$ and
$128/(15\pi)$~\cite{LY88}).
\end{remark}

\begin{remark}[Optimizing the constants]\label{rem:optimal}
The proofs fix their free parameters at simple values; optimizing them gains only a few per cent.
Larger gains would need better constants in the kinetic-energy bounds for $p_A^2$: a better Lieb--Thirring
constant, such as those known for $A=0$~\cite{Ru11,FHJN21}, would improve Theorem~\ref{thm:spinless}, and a better constant
in the Cwikel--Lieb--Rozenblum bound of Lemma~\ref{lem:count} would improve Theorems~\ref{thm:pauli},
\ref{thm:rel} and~\ref{thm:dirac}.
\end{remark}

\begin{remark}[Bosons]\label{rem:bosons}
For a single species of identical bosons, stability is open. The question has no abelian
counterpart: there, identical charges repel, and a single species is trivially stable. Here the answer
is not obvious either way. Without further internal labels ($q=2$), the attractive singlet channel
of~\eqref{eq:twobody} requires a spatially antisymmetric pair wave function, and so a condensate interacts
only through the repulsive triplet channel. Particles of opposite colour in nearly orthogonal orbitals
do attract, but at a cost in kinetic energy. A proof of stability cannot use the Lieb--Thirring-type
bounds that are known without antisymmetry, since these rely on an interaction that is repulsive in
every channel~\cite{LPS15,LNP16,DN25}.

For two species, the trial states of Theorem~\ref{thm:bosons} are colour product states, not colour
singlets. Since colour singlets are the analogue of neutral systems, it is natural to ask whether the
$N^{7/5}$ behaviour persists among them. We have not examined this.
\end{remark}

\subsection{Discussion}\label{sec:phys}

The model describes fermions with colour charge in the fundamental representation and a classical
colour field. The fermions interact through the field, including the instantaneous non-abelian
Coulomb potential fixed by the Gauss law. With $e=\frac12$, the singlet pair potential is $-C_F\alpha/|x|$. The operator
$(\sigma\cdot p_A)^2/2m$ consists of the terms of order $1/m$ in the heavy-quark Hamiltonian of
non-relativistic quantum chromodynamics. Its colour-magnetic coupling has the leading-order value,
gyromagnetic factor $2$, without the higher-order corrections of quantum chromodynamics~\cite{CaL86,BBL95}.
Because the field is classical, the model does not capture confinement, asymptotic freedom or the
quantum dynamics of gluons. The couplings for which we prove stability are small, as in the abelian theory. For $SU(3)$
quarks with one flavour, Theorem~\ref{thm:groups}(i) gives stability for $\alpha$ up to about $0.018$, $0.029$
and $0.044$ for Dirac, Pauli and relativistic spinless kinetic energy. This is far below the strong
coupling of quantum chromodynamics, which is about $0.12$ at the $Z$ mass and of order one at hadronic
scales~\cite{PDG24}. The collapse thresholds are much larger. In the Pauli model, collapse is shown only
for $\alpha>\sqrt{54}\,\pi\approx23.1$, far above any value of the strong coupling. With relativistic
spinless kinetic energy, however, a quark and an antiquark collapse for $\alpha>3/\pi\approx0.955$, already
at $A=0$ (Section~\ref{sec:resgeneral}), a value of the order of the strong coupling at hadronic scales. This
collapse happens at short distances, where the coupling of quantum chromodynamics becomes small by
asymptotic freedom, which a classical field with a fixed coupling cannot reproduce. It is therefore no
prediction for quantum chromodynamics. Rather, it shows that at physical couplings the classical model
cannot settle the stability of matter made of quarks, and that the quantized theory is needed.

The results show that Fermi statistics and non-abelian
electrostatics alone suffice for the stability of spinless non-relativistic matter interacting with
a classical colour field, even without the field energy. With spin the field energy is needed as well, and with spin or relativistic kinetic
energy the coupling must also be small.

Minimizing over the field is not a formality, because a classical colour field need not simply
follow its sources: it can rearrange itself to lower the energy. For a static classical colour source, the energy with the same covariant
electric term and the magnetic energy was studied by Taubes~\cite{Ta01}. Mandula found that at large
coupling the Coulomb solution is unstable~\cite{Ma77}. For an extended source, Sikivie and Weiss
exhibited solutions of lower energy~\cite{SW78a,SW78b}: colour-magnetic dipole solutions once the
coupling is large enough, and total-screening solutions of arbitrarily low energy. Our bounds hold
uniformly over all such configurations.

Finally, Theorems~\ref{thm:dirac} and~\ref{thm:free} single out the gauge-covariant projection onto the
positive spectral subspace of $D_A$ as the appropriate description of the Dirac sea in a non-abelian field, as in the abelian case~\cite{LSS97,LLo02}. The free projection is neither gauge covariant
nor stable.

\subsection{Open problems}\label{sec:open}

\begin{enumerate}[leftmargin=2em,itemsep=2pt]
\item \emph{Rough fields.} Propositions~\ref{prop:rough} and~\ref{prop:roughpauli} cover the spinless
models for $A\in L^2_{\rm loc}$ and the Pauli model for $A\in L^2_{\rm loc}$ with $\int|B|^2<\infty$. The
Dirac model for rough fields remains open.
\item \emph{Colour-neutral nuclei.} Remarks~\ref{rem:nuclei} and~\ref{rem:nucleiproof} sketch a proof of stability of $H_S$ with static
colour-neutral nuclei. Whether the same holds for $H_P$ and $H_R$, for small couplings, is open. We
expect it to follow by combining the present methods with the abelian arguments
of~\cite{LLS95,LY88}, but we have not checked the details.
\item \emph{Thresholds.} The stability and collapse thresholds of Theorems~\ref{thm:pauli}
and~\ref{thm:collapse}, and of Theorems~\ref{thm:rel} and~\ref{thm:relcollapse}, are far apart
(Remark~\ref{rem:thresholds}). The critical couplings are not known, here or in the abelian case. For
the Pauli operator the collapse is driven by zero modes, so the question is related to how large a
field must be to support one; see~\cite{ES01,FL22}.
\item \emph{Identical bosons.} Theorem~\ref{thm:bosons} uses two species. For a single species of
identical bosons, stability of the second kind is open (Remark~\ref{rem:bosons}).
\item \emph{Quantized field.} The field here is classical. The analogous problem with a quantized
colour field, with an ultraviolet cutoff, is open. Section~\ref{sec:phys} gives a further reason to study
it: at physical couplings, the classical model cannot settle the stability of matter made of quarks. In the abelian case, the proofs for a quantized
field build on bounds that are uniform in a classical field~\cite{BFG96,FFG97,LLo02}, and our bounds
are uniform in $A$ in the same way. The new feature is that a quantized colour field itself carries
colour charge, which enters the Coulomb energy.
\end{enumerate}

\section{Preliminaries: the covariant Laplacian}\label{sec:tools}

Most of what we need about the covariant Laplacian $-\Delta_A=p_A^2$ follows from one fact: its heat
kernel is dominated by that of the ordinary Laplacian. This is the non-abelian version of the diamagnetic inequality~\cite[Theorem~4.4]{LS10}. We record it and its
consequences, followed by the two one-body inequalities from which the kinetic energy estimates are
built.

The symbol $-\Delta_A$ denotes two operators (Section~\ref{sec:colour}). On functions with values in
the internal space, $A$ acts on the colour factor by multiplication (the fundamental representation);
this operator gives the kinetic energy. On $\R^3$-valued functions, $A$ acts by $A\times\cdot$ (the
adjoint representation); this operator gives the Coulomb kernel $K_A$. Unless a lemma says otherwise, its
statement holds for both. In each case $A$ acts on $\C^q$, for some $q$, by anti-Hermitian matrices, and $-\Delta_A$
acts on $L^2(\R^3;\C^q)$. In the adjoint case the space is $\C^3$, the complexification of $\su$.

\begin{lemma}[Diamagnetic inequality]\label{lem:dom}
For all $f\in L^2(\R^3;\C^q)$ and $t\ge0$, $|e^{t\Delta_A}f|\le e^{t\Delta}|f|$ pointwise. For $t>0$ the heat kernel $e^{t\Delta_A}(x,y)$ is smooth, satisfies
$e^{t\Delta_A}(y,x)=e^{t\Delta_A}(x,y)^\dagger$, and
\[
\|e^{t\Delta_A}(x,y)\|\le(4\pi t)^{-3/2}e^{-|x-y|^2/4t},\qquad
\tr_{\C^q}e^{t\Delta_A}(x,x)\le q(4\pi t)^{-3/2}.
\]
The Green function $K_A(x,y)=\int_0^\infty e^{t\Delta_A}(x,y)\,dt$ of $-\Delta_A$ satisfies
$-\Delta_AK_A(\cdot,y)w=\delta_yw$ for $w\in\C^q$, in the sense of distributions. It is smooth off the
diagonal, satisfies $K_A(y,x)=K_A(x,y)^\dagger$, and
\begin{equation}\label{eq:Kbound}
\|K_A(x,y)\|\le\frac1{4\pi|x-y|}.
\end{equation}
In the adjoint representation $K_A$ is real, so $K_A(x,y)^\dagger$ is the transpose.
\end{lemma}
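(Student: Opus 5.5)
The plan is to derive the domination from Kato's inequality and the positivity of the free heat semigroup, and then read off all kernel statements from it. For smooth $u\in C_c^\infty(\R^3;\C^q)$ we have the pointwise identity
\[
\Delta|u|^2=2\,\mathrm{Re}\langle u,\Delta_Au\rangle+2|\nabla_Au|^2 .
\]
It holds because $A$ is anti-Hermitian, so $\partial_j|u|^2=2\,\mathrm{Re}\langle u,\nabla_{A,j}u\rangle$. Combined with the covariant Kato bound $|\nabla|u||\le|\nabla_Au|$, this gives Kato's inequality $\Delta|u|\ge\mathrm{Re}\langle\mathrm{sgn}\,u,\Delta_Au\rangle$ in the distributional sense. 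This is the same computation as in the abelian diamagnetic inequality, and it uses only that $A$ acts by anti-Hermitian matrices, so it covers the fundamental and the adjoint representation at once.

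From Kato's inequality, the standard Simon/Beurling--Deny argument gives $|(-\Delta_A+\lambda)^{-1}f|\le(-\Delta+\lambda)^{-1}|f|$ for $\lambda>0$. Here $-\Delta_A$ is the Friedrichs extension, which is essentially self-adjoint on $C_c^\infty$ for smooth $A$. Iterating the resolvent bound and using $e^{t\Delta_A}=\lim_n(1-t\Delta_A/n)^{-n}$ yields $|e^{t\Delta_A}f|\le e^{t\Delta}|f|$. An alternative route is the Trotter product formula, $e^{t\Delta_A}$ as a limit of products of free heat semigroups and parallel transports, each transport being unitary on $\C^q$. I regard this as the main step. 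The delicate point is justifying Kato's inequality at zeros of $u$; I would handle it with the regularization $|u|_\varepsilon=(|u|^2+\varepsilon^2)^{1/2}$ and let $\varepsilon\to0$.

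Next come the kernel facts. Smoothness of $e^{t\Delta_A}(x,y)$ follows from local parabolic regularity, since $\partial_t-\Delta_A$ has smooth coefficients. Testing the domination with $f=w\,\phi_n$, where $\phi_n$ is an approximate delta at $y$, gives $|e^{t\Delta_A}(x,y)w|\le(4\pi t)^{-3/2}e^{-|x-y|^2/4t}|w|$, which is the operator norm bound. The symmetry $e^{t\Delta_A}(y,x)=e^{t\Delta_A}(x,y)^\dagger$ follows from self-adjointness of $e^{t\Delta_A}$. The trace bound follows by summing over an orthonormal basis of $\C^q$, since each diagonal entry is at most the norm. Integrating in $t$ gives $\|K_A(x,y)\|\le\int_0^\infty(4\pi t)^{-3/2}e^{-|x-y|^2/4t}\,dt=1/(4\pi|x-y|)$, and the integral converges absolutely off the diagonal. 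The identity $-\Delta_AK_A(\cdot,y)w=\delta_yw$ is checked by pairing with a test function $\phi$ and using $\int_0^\infty\partial_t e^{t\Delta_A}\,dt=-\1$ on $\phi$ in the weak sense. The boundary term at infinity vanishes because $\|e^{t\Delta_A}\phi\|_\infty\le\|e^{t\Delta}|\phi|\|_\infty\to0$. Smoothness off the diagonal follows from elliptic regularity, since $-\Delta_AK_A(\cdot,y)=0$ away from $y$. The adjoint symmetry of $K_A$ is inherited from that of the heat kernel.

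Finally, in the adjoint representation $A\times\cdot$ is a real matrix, so $\Delta_A$ commutes with complex conjugation. The heat kernel and $K_A$ are therefore real, and the adjoint is the transpose.
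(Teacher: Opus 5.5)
Your proposal is correct and follows essentially the same route as the paper. Both argue as follows:
\begin{itemize}
\item semigroup domination comes from Kato's inequality;
\item the kernel norm and trace bounds come from testing against approximate delta functions;
\item integrating in $t$ gives the bounds on $K_A$;
\item the Green-function identity follows by integrating $\partial_t e^{t\Delta_A}$, with the Gaussian decay killing the term at $T\to\infty$;
\item smoothness and reality follow from elliptic regularity.
\end{itemize}
The only difference is that you sketch the Kato inequality (via $|u|_\varepsilon$) and the resolvent (Beurling--Deny) derivation of the domination yourself, where the paper cites them.
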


\begin{proof}
On $\R^3$, $-\Delta_A$ is essentially self-adjoint on $C_c^\infty$~\cite[Chap.~XII]{Gu17}. The diamagnetic
inequality is the semigroup form of Kato's inequality~\cite{Ka72,HSU77,Si79}, and holds for any smooth
gauge field acting by anti-Hermitian matrices, on any Riemannian manifold~\cite[Sect.~VII.3]{Gu17}. The
heat kernel is smooth by~\cite[Theorem~II.1]{Gu17}, and its symmetry follows from self-adjointness.
Applied to approximate delta functions at $y$, the diamagnetic inequality gives the bound on
$\|e^{t\Delta_A}(x,y)\|$, and the trace bound follows. Integrating in $t$, with
$\int_0^\infty(4\pi t)^{-3/2}e^{-r^2/4t}\,dt=1/(4\pi r)$, gives $K_A(y,x)=K_A(x,y)^\dagger$
and~\eqref{eq:Kbound}. For the equation $-\Delta_AK_A(\cdot,y)w=\delta_yw$, integrate
$\partial_te^{t\Delta_A}=\Delta_Ae^{t\Delta_A}$ from $0$ to $T$ and let $T\to\infty$; by the Gaussian bound,
$e^{T\Delta_A}(\cdot,y)\to0$ uniformly. Smoothness off the diagonal follows from elliptic regularity. In
the adjoint representation $-\Delta_A$ is a real operator, so $e^{t\Delta_A}$ and $K_A$ are real.
\end{proof}

\begin{lemma}[Positivity of the Coulomb energy]\label{lem:pos}
For a finite $\su$-valued measure $\varrho$ with total variation $|\varrho|$ and
$\iint|\varrho|(dx)\,|\varrho|(dy)/|x-y|<\infty$,
\[
\langle\varrho,K_A\varrho\rangle:=\iint\langle\varrho(dx),K_A(x,y)\varrho(dy)\rangle=\int_0^\infty\|e^{t\Delta_A/2}\varrho\|^2dt\ \ge\ 0.
\]
Here $\|\cdot\|$ is the norm of $L^2(\R^3;\R^3)$.
\end{lemma}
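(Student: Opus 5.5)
The identity comes from the heat-kernel representation of Lemma~\ref{lem:dom} and the semigroup property. By that lemma,
\[
K_A(x,y)=\int_0^\infty e^{t\Delta_A}(x,y)\,dt .
\]
Inserting this into $\langle\varrho,K_AK\varrho\rangle$ is not what we want; rather, inserting it into $\langle\varrho,K_A\varrho\rangle$ and exchanging the order of integration gives
\[
\langle\varrho,K_A\varrho\rangle=\int_0^\infty\iint\langle\varrho(dx),e^{t\Delta_A}(x,y)\varrho(dy)\rangle\,dt .
\]
For each $t>0$, write $e^{t\Delta_A}=e^{t\Delta_A/2}e^{t\Delta_A/2}$. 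Together with the symmetry $e^{s\Delta_A}(y,x)=e^{s\Delta_A}(x,y)^{\mathsf T}$ (real in the adjoint representation), this gives the kernel identity
\[
e^{t\Delta_A}(x,y)=\int e^{t\Delta_A/2}(z,x)^{\mathsf T}e^{t\Delta_A/2}(z,y)\,dz .
\]
Here $e^{t\Delta_A/2}\varrho$ is the smooth function
\[
z\mapsto\int e^{t\Delta_A/2}(z,y)\varrho(dy),
\]
which lies in $L^2$ by the Gaussian bound, since $\varrho$ is finite. With this, the inner $x,y$ integral equals $\|e^{t\Delta_A/2}\varrho\|^2$, and the identity follows, as does nonnegativity.

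The only real work is justifying the exchanges of integrals (Fubini) and the pairing of measures with matrix-valued kernels. For this I will dominate everything by the scalar problem. Lemma~\ref{lem:dom} gives
\[
\|e^{t\Delta_A}(x,y)\|\le(4\pi t)^{-3/2}e^{-|x-y|^2/4t},
\]
so
\[
\int_0^\infty\iint|\varrho|(dx)\,|\varrho|(dy)\,\|e^{t\Delta_A}(x,y)\|\,dt\le\iint\frac{|\varrho|(dx)\,|\varrho|(dy)}{4\pi|x-y|}<\infty
\]
by hypothesis. This justifies the $t$-exchange and shows that the left side is well defined, with the diagonal contributing nothing problematic. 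For the semigroup step at fixed $t$, the triple integral over $x,y,z$ of the norms is bounded by
\[
\iint(8\pi t)^{-3/2}e^{-|x-y|^2/8t}|\varrho|(dx)|\varrho|(dy)\le(8\pi t)^{-3/2}|\varrho|(\R^3)^2<\infty,
\]
using the Gaussian convolution identity. So Fubini applies there as well.

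I expect the main obstacle to be purely technical: measures may charge the diagonal, and the Green function is singular there. The finiteness hypothesis on the scalar Coulomb energy of $|\varrho|$ excludes atoms and handles this, because the dominating integral already converges.
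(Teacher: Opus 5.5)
Your proof is correct and follows the paper's own argument. You insert the heat-kernel representation of $K_A$, use the semigroup property and the kernel symmetry to identify the $t$-integrand with $\|e^{t\Delta_A/2}\varrho\|^2$, and justify Tonelli/Fubini by the Gaussian domination of Lemma~\ref{lem:dom} together with the finite scalar Coulomb energy of $|\varrho|$. One small slip does not affect the argument: convolving the two Gaussian bounds for $e^{t\Delta_A/2}$ gives $(4\pi t)^{-3/2}e^{-|x-y|^2/4t}$, not $(8\pi t)^{-3/2}e^{-|x-y|^2/8t}$, and either bound is still finite when integrated against $|\varrho|\otimes|\varrho|$.
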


\begin{proof}
By Lemma~\ref{lem:dom}, $e^{t\Delta_A/2}\varrho(x)=\int e^{t\Delta_A/2}(x,y)\varrho(dy)$ satisfies
$|e^{t\Delta_A/2}\varrho|\le e^{t\Delta/2}|\varrho|$. By Tonelli,
$\int_0^\infty\|e^{t\Delta_A/2}\varrho\|^2dt\le\iint|\varrho|(dx)\,|\varrho|(dy)/(4\pi|x-y|)<\infty$. By the
semigroup property and $e^{s\Delta_A}(y,x)=e^{s\Delta_A}(x,y)^\dagger$,
$\|e^{t\Delta_A/2}\varrho\|^2=\iint\langle\varrho(dx),e^{t\Delta_A}(x,y)\varrho(dy)\rangle$. Integrating in
$t$, with Fubini justified by the same bound, gives the claim.
\end{proof}

\begin{lemma}[Maximum principle]\label{lem:maxp}
If $\Delta_Au=0$ on an open set, then $\Delta|u|^2=2|\nabla_Au|^2\ge0$ there. Hence $|u|$ satisfies
the maximum principle, and a bounded function $u$ on $\R^3$ with $\Delta_Au=0$ that tends to $0$ at
infinity vanishes identically.
\end{lemma}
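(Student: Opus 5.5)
The plan is to compute $\Delta|u|^2$ directly, using that $A$ acts by anti-Hermitian matrices, and then to derive the two consequences from the classical maximum principle for subharmonic functions.

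\emph{The identity.} Write $|u|^2=\langle u,u\rangle$, with the inner product of $\C^q$. Since $A_j$ is anti-Hermitian, $\partial_j\langle u,v\rangle=\langle\nabla_{A,j}u,v\rangle+\langle u,\nabla_{A,j}v\rangle$: the terms $\langle A_ju,v\rangle+\langle u,A_jv\rangle$ cancel. Applying this twice gives
\[
\Delta|u|^2=\sum_j\partial_j\bigl(2\operatorname{Re}\langle u,\nabla_{A,j}u\rangle\bigr)=2|\nabla_Au|^2+2\operatorname{Re}\langle u,\Delta_Au\rangle .
\]
On the set where $\Delta_Au=0$ this is $2|\nabla_Au|^2\ge0$. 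Here $u$ is smooth by elliptic regularity, since $A$ is smooth. The identity holds in both representations; in the adjoint one the inner product is the real Euclidean one and $[A,\cdot\,]$ is antisymmetric.

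\emph{The maximum principle.} Since $|u|^2$ is smooth and subharmonic, the classical weak maximum principle applies to it. On a bounded domain $\Omega$ with $u$ continuous up to the boundary, $\sup_\Omega|u|^2=\sup_{\partial\Omega}|u|^2$, and the same holds for $|u|$. If one also wants the strong form, it follows from the mean value inequality for subharmonic functions.

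\emph{Liouville-type consequence.} Suppose $u$ is bounded on $\R^3$, satisfies $\Delta_Au=0$ everywhere, and tends to $0$ at infinity. Fix $x$ and $\varepsilon>0$, and choose $R$ with $|u|\le\varepsilon$ on $|y|=R$ and $R>|x|$. The maximum principle on the ball $B_R$ gives $|u(x)|\le\varepsilon$. Letting $\varepsilon\to0$ gives $u\equiv0$. Boundedness is needed only to make sense of the hypothesis; decay at infinity is what the argument uses.

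The only point requiring care is the cancellation of the $A$-terms, which uses that $A$ is skew-adjoint with respect to the fiber inner product. Everything else is standard.
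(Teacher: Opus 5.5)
Your proof is correct and follows essentially the same route as the paper: the same computation $\Delta|u|^2=2\operatorname{Re}\langle u,\Delta_Au\rangle+2|\nabla_Au|^2$, using that $A$ is anti-Hermitian, then the maximum principle for the subharmonic function $|u|^2$. The vanishing statement is obtained, as in the paper, by applying the maximum principle on large balls and letting $R\to\infty$.
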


\begin{proof}
By elliptic regularity $u$ is smooth. Since $A$ is anti-Hermitian,
$\partial_j|u|^2=2\operatorname{Re}\langle u,\nabla_{A,j}u\rangle$, and hence
$\Delta|u|^2=2\operatorname{Re}\langle u,\Delta_Au\rangle+2|\nabla_Au|^2$. So $|u|^2$ is subharmonic where $\Delta_Au=0$, and the maximum principle for subharmonic functions
applies to $|u|^2$, hence to $|u|$. For the last claim, apply it on balls of radius $R$ centred at the
origin: $\sup_{|x|<R}|u|\le\sup_{|x|=R}|u|\to0$ as $R\to\infty$.
\end{proof}

The Lieb--Thirring inequality for $p_A^2$ follows from the Cwikel--Lieb--Rozenblum (CLR) bound on the
number of negative eigenvalues~\cite{Cw77,Li80,Ro72}, as in~\cite{LLS95}. Rozenblum and Solomyak
showed that Lieb's proof of this bound, with the same constant, applies to every scalar operator whose semigroup is dominated
by a positive semigroup with suitable kernel bounds, such as $e^{t\Delta}$; this covers the magnetic
Schr\"odinger operator~\cite[Theorem~2.4 and Sect.~3.2]{RS98}. For $p_A^2$ the domination is Lemma~\ref{lem:dom}, and Appendix~\ref{sec:clr} extends
their proof to $\C^q$-valued functions. See~\cite{FLW22} for background. We write $N_{-E}(X)$ for the number of eigenvalues of a
self-adjoint operator $X$ below $-E$, counted with multiplicity. By the min-max principle,
$N_{-E}(X)\le N_{-E}(Y)$ whenever $X\ge Y$ as quadratic forms, that is, when the form domain of $X$ lies
in that of $Y$ and the forms are ordered there.

\begin{lemma}[Cwikel--Lieb--Rozenblum bound]\label{lem:count}
Let $U\ge0$ and $E\ge0$, with $(U-E)_+\in L^{3/2}$. Then
\[
N_{-E}(p_A^2-U)\le q\,L_C\!\int(U-E)_+^{3/2},\qquad L_C=0.116 .
\]
\end{lemma}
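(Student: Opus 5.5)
I would follow Lieb's proof of the CLR bound in the semigroup form of Rozenblum and Solomyak, and run it at the level of $\C^q$-valued functions. Two facts from Lemma~\ref{lem:dom} replace the scalar positivity: the pointwise domination $|e^{t\Delta_A}f|\le e^{t\Delta}|f|$, and the diagonal bound $\tr_{\C^q}e^{t\Delta_A}(x,x)\le q(4\pi t)^{-3/2}$.

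\emph{Reduction.} First reduce to $E=0$. Since $p_A^2-U\ge p_A^2-(U-E)_+-E$, min-max gives $N_{-E}(p_A^2-U)\le N_0(p_A^2-W)$ with $W=(U-E)_+\in L^{3/2}$. By a density and monotonicity argument we may also take $W$ bounded and compactly supported, and then pass to the limit using lower semicontinuity of the counting function. Next apply the Birman--Schwinger principle: $N_0(p_A^2-W)$ is at most the number of eigenvalues larger than $1$ of the compact operator $K=W^{1/2}(p_A^2)^{-1}W^{1/2}$. Compactness follows from the domination, since $|(p_A^2+\mu)^{-1}f|\le(p^2+\mu)^{-1}|f|$ and the scalar operator is Hilbert--Schmidt with these weights.

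\emph{Lieb's functional calculus.} Write $(p_A^2)^{-1}=\int_0^\infty e^{t\Delta_A}\,dt$. Lieb's method bounds the count by $\Tr F(K)$ for a convex function $F$ vanishing near $0$. Representing $F$ through $f(s)=\int$ of its Laplace transform, one obtains
\[
N\le\frac1{g(1)}\int_0^\infty\frac{dt}{t}\int\E_{x,x}^t\bigl[\tr\,\mathcal P_A\,f\bigl(t\,\overline W\bigr)\bigr]\,dx ,
\]
written with the Feynman--Kac--It\^o path integral. Here $\mathcal P_A$ is the matrix-valued parallel transport along Brownian bridges, which is unitary, and $\overline W$ is the time-average of $W$ along the path. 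This is the step where the vector-valued extension must be done honestly, and I expect it to be the main obstacle. Rozenblum--Solomyak use positivity of the dominating semigroup together with Jensen's inequality to commute the convex function with the semigroup. For the matrix-valued case I would instead use a Lieb--Thirring-type trace inequality, $\Tr F(W^{1/2}e^{t\Delta_A}W^{1/2})$, combined with the domination. Concretely, $\tr$ of the matrix path integral is bounded by $q$ times the scalar expression, because $|\tr\mathcal P_A|\le q$ and $f\ge0$. Alternatively, staying in operator language, compare $\Tr\,\phi(W^{1/2}e^{t\Delta_A}W^{1/2})$ with $q\,\Tr\,\phi(W^{1/2}e^{t\Delta}W^{1/2})$ for suitable $\phi$ via the pointwise domination of kernels. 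Either route reduces everything to the scalar computation multiplied by the factor $q$.

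\emph{Conclusion and constant.} Once the scalar expression appears, Jensen's inequality in the path average and translation invariance give $\int_0^\infty\frac{dt}{t}(4\pi t)^{-3/2}\int f(tW(x))\,dx$. This equals $c_f\int W^{3/2}$ with
\[
c_f=(4\pi)^{-3/2}\int_0^\infty s^{-5/2}f(s)\,ds .
\]
Choosing Lieb's optimal $f$, for example $f(s)=(s-\sigma)_+e^{-s}$-type functions optimized over $\sigma$, gives $L_C=0.116$, which is Lieb's constant rounded up. Multiplying by the factor $q$ from the trace gives the stated bound.
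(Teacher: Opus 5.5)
Your proposal is correct and is essentially the paper's proof: you use the same min-max reduction to $N_0\bigl(p_A^2-(U-E)_+\bigr)$, then Lieb's argument in the Rozenblum--Solomyak form, with the matrix structure entering only through a trace over $\C^q$ that costs a factor $q$. The paper phrases that step with Trotter products of heat kernels, $\bigl|\tr\prod_kP^A_{t/n}(x_{k-1},x_k)\bigr|\le q\prod_kP^0_{t/n}(x_{k-1},x_k)$ by Lemma~\ref{lem:dom}, instead of the unitary parallel transport $\mathcal P_A$; this uses only kernel bounds and so also covers rough $A$.

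There are two small slips. Your $c_f$ should carry the factor $1/g(1)$. And since the Jensen step needs $f$ convex, Lieb's choice giving $0.1156$ is $f(s)=(s-\sigma)_+$ with $\sigma\approx\frac14$, not $(s-\sigma)_+e^{-s}$, which is not convex.
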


\begin{proof}
Since $p_A^2-U+E\ge p_A^2-(U-E)_+$, the min-max principle gives $N_{-E}(p_A^2-U)\le N_0(p_A^2-(U-E)_+)$,
the number of negative eigenvalues of $p_A^2-(U-E)_+$, and Proposition~\ref{prop:clr} bounds this by
$qL_C\int(U-E)_+^{3/2}$. For $A=0$ and $q=1$ this is the CLR bound with Lieb's constant $0.1156$~\cite{Li80},
which $L_C$ rounds up.
\end{proof}

\begin{lemma}[Lieb--Thirring inequality]\label{lem:LT}
For $V\ge0$ and every density matrix $0\le\gamma\le1$ on $L^2(\R^3;\C^q)$,
\begin{equation}\label{eq:LT}
\Tr\bigl((p_A^2-V)\gamma\bigr)\ge-L\,q\int V^{5/2},\qquad
L=\tfrac25L_C=0.0464 .
\end{equation}
\end{lemma}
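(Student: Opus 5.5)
The plan is the standard route from a CLR bound to a Lieb--Thirring inequality: the layer-cake representation of the negative part of the spectrum, combined with the variational principle for density matrices.

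First I reduce to the negative spectrum. Assume $V\in L^{5/2}$, since otherwise there is nothing to prove. For any density matrix $0\le\gamma\le1$, the variational principle gives
\[
\Tr\bigl((p_A^2-V)\gamma\bigr)\ge-\Tr\bigl(p_A^2-V\bigr)_-=-\sum_j|\lambda_j|,
\]
where the $\lambda_j$ are the negative eigenvalues of $p_A^2-V$ (the Friedrichs extension). I would justify this by the min-max principle, or more concretely by writing $\Tr(X\gamma)\ge-\Tr(X_-\gamma)\ge-\Tr X_-$ using $0\le\gamma\le1$. If $V$ is not relatively form-bounded a priori, I would first truncate $V$ to $\min(V,n)\1_{|x|<n}$ and pass to the limit by monotone convergence at the end. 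Finiteness of the number of negative eigenvalues below any $-E<0$ comes from Lemma~\ref{lem:count}, since $(V-E)_+\in L^{3/2}$ whenever $V\in L^{5/2}$.

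Next I apply the layer-cake formula $\sum_j|\lambda_j|=\int_0^\infty N_{-E}(p_A^2-V)\,dE$ and bound the integrand by Lemma~\ref{lem:count} with $U=V$:
\[
\sum_j|\lambda_j|\le qL_C\int_0^\infty\!\!\int(V(x)-E)_+^{3/2}\,dx\,dE
= qL_C\int\int_0^{V(x)}(V(x)-E)^{3/2}\,dE\,dx=\tfrac25qL_C\int V^{5/2}
\]
by Tonelli. This is exactly the constant $L=\frac25L_C$ in \eqref{eq:LT}.

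The only delicate point is the approximation step needed to make the quadratic forms rigorous when $V$ is unbounded. I expect this to be routine, via monotone convergence of the truncated potentials together with lower semicontinuity of $\Tr\bigl((p_A^2-V)\gamma\bigr)$ for $\gamma$ with finite kinetic energy. The substantive input, the CLR bound for $p_A^2$, is already provided by Lemma~\ref{lem:count}.
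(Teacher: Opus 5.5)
Your proposal is correct and follows the paper's proof exactly: you bound $\Tr\bigl((p_A^2-V)\gamma\bigr)$ below by minus the sum of the negative eigenvalues, write that sum as $\int_0^\infty N_{-E}(p_A^2-V)\,dE$, insert the CLR bound of Lemma~\ref{lem:count}, and integrate $\int_0^\infty(V-E)_+^{3/2}dE=\frac25V^{5/2}$. Your remarks on truncation and on finiteness of $N_{-E}$ for $V\in L^{5/2}$ are extra care that the paper leaves implicit.
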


\begin{proof}
For $0\le\gamma\le1$, $\Tr((p_A^2-V)\gamma)\ge-\Tr(p_A^2-V)_-$, where $-\Tr(p_A^2-V)_-$ is the sum of the
negative eigenvalues, and $\Tr(p_A^2-V)_-=\int_0^\infty N_{-E}(p_A^2-V)\,dE$. Insert
Lemma~\ref{lem:count} and use $\int_0^\infty(V-E)_+^{3/2}dE=\frac25V^{5/2}$.
\end{proof}

For $A=0$ and $q=1$, the best constant in~\eqref{eq:LT} is the Lieb--Thirring constant $L_{1,3}$
of~\cite[Sect.~4.1]{LS10}. It is at least the semiclassical value $1/(15\pi^2)$, so our $L$ exceeds it by
at most a factor $6.9$; better constants are discussed in Remark~\ref{rem:optimal}. Every Lieb--Thirring-type bound in this paper uses the CLR bound of Lemma~\ref{lem:count}, and through it the diamagnetic inequality of Lemma~\ref{lem:dom}.

The next lemma is the hydrogen atom in a ball with Neumann boundary conditions. This is how
we keep track of the kinetic energy near a centre; the idea goes back to Dyson and Lenard.

\begin{lemma}[Hydrogen atom in a Neumann ball; cf.~{\cite[Lemma~2a]{LD68}}]\label{lem:ball}
Let $\mathcal B_R=\mathcal B(y,R)$ be the ball of radius $R$ around $y$, and let $Z>0$. For every real
$u\in H^1(\mathcal B_R)$,
\[
\int_{\mathcal B_R}|\nabla u|^2\ge\int_{\mathcal B_R}\Bigl(\frac Z{|x-y|}-\frac{Z^2}4-\frac{3Z}{2R}\Bigr)u^2 .
\]
Hence, for $a>0$ and $\psi\in L^2(\mathcal B_R;\C^q)$ with $\nabla_A\psi\in L^2(\mathcal B_R)$,
\[
a\int_{\mathcal B_R}|\nabla_A\psi|^2-\int_{\mathcal B_R}\frac{Z|\psi|^2}{|x-y|}\ge
-\Bigl(\frac{Z^2}{4a}+\frac{3Z}{2R}\Bigr)\int_{\mathcal B_R}|\psi|^2 .
\]
\end{lemma}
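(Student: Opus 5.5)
The proof splits into a scalar inequality on the ball, proved by a ground-state substitution, and a reduction of the covariant statement to it through the diamagnetic inequality.

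\emph{Scalar inequality.} Translate so that $y=0$ and write $r=|x|$. Use the one-dimensional hydrogen ground state as a comparison function: with $\phi=e^{-Zr/2}$, one computes $-\Delta\phi/\phi=Z/r-Z^2/4$ away from the origin. Substitute $u=\phi g$ and integrate by parts on $\mathcal B_R$. On a Neumann ball the integration by parts produces a boundary term, and we get
\[
\int_{\mathcal B_R}|\nabla u|^2=\int_{\mathcal B_R}\phi^2|\nabla g|^2+\int_{\mathcal B_R}\frac{-\Delta\phi}{\phi}\,u^2+\int_{\partial\mathcal B_R}\frac{\partial_r\phi}{\phi}\,u^2 .
\]
The last term equals $-\frac Z2\int_{\partial\mathcal B_R}u^2$, so
\[
\int|\nabla u|^2\ge\int\Bigl(\frac Zr-\frac{Z^2}4\Bigr)u^2-\frac Z2\int_{\partial\mathcal B_R}u^2 .
\]
The origin causes no trouble, because $\phi$ is Lipschitz and $Z/r$ is locally integrable; by density it suffices to take $u$ smooth up to the boundary.

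\emph{Boundary term.} This term has to be converted into the volume error $3Z/(2R)$. Use the divergence theorem with the vector field $x u^2$:
\[
R\int_{\partial\mathcal B_R}u^2=\int_{\mathcal B_R}\nabla\cdot(xu^2)=3\int u^2+2\int u\,x\cdot\nabla u .
\]
This reintroduces a gradient term, and controlling it is the main obstacle. The plan is to avoid the pure trace inequality and instead use a weight adapted to the boundary. One option is to choose $\phi$ so that $\partial_r\phi=0$ at $r=R$. Another, which I would try first, is to apply the divergence theorem to $x\,u^2/r$, which gives
\[
\int_{\partial\mathcal B_R}u^2=\int_{\mathcal B_R}\frac{2u^2}{r^2}+\int_{\mathcal B_R}\frac{2u\,\partial_ru}{r},
\]
and which loses only Hardy-type terms. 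Pursuing this, or balancing via $\int_{\partial\mathcal B_R}u^2\le\frac3R\int u^2+\frac{2}{R}\int|u||x||\nabla u|$ with the $\phi^2|\nabla g|^2$ term kept in reserve, should reproduce exactly the constant $3Z/(2R)$. In Dyson--Lenard's original argument this is done by comparing with a constant trial function on the ball. So as a fallback I would prove the inequality only with some constant $cZ/R$ and then adjust, though I expect the constant $\frac32$ to come out of the field $x u^2$ identity combined with Cauchy--Schwarz absorbed into the positive term.

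\emph{Covariant version.} Apply the scalar inequality to $u=|\psi|$, after replacing $Z$ by $Z/a$ and multiplying through by $a$; this yields exactly the constants $Z^2/(4a)$ and $3Z/(2R)$. The reduction rests on the pointwise diamagnetic inequality $|\nabla|\psi||\le|\nabla_A\psi|$, which holds because $A$ is anti-Hermitian: $\partial_j|\psi|^2=2\operatorname{Re}\langle\psi,\nabla_{A,j}\psi\rangle$, as in the proof of Lemma~\ref{lem:maxp}, and then Cauchy--Schwarz applies. This inequality also gives $|\psi|\in H^1(\mathcal B_R)$, so the scalar inequality is available for $u=|\psi|$.
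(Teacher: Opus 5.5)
Your set-up is the paper's: the substitution $u=\phi f$ with $\phi=e^{-Zr/2}$, the boundary term $-\frac Z2\int_{\partial\mathcal B_R}u^2$, the identity $R\int_{\partial\mathcal B_R}u^2=\int(3u^2+x\cdot\nabla u^2)$, and the reduction of the covariant bound to $u=|\psi|$ via $|\nabla|\psi||\le|\nabla_A\psi|$. The gap is the step that carries the lemma: turning the boundary term into $\frac{3Z}{2R}\int u^2$ without worsening the constant $\frac{Z^2}4$. You leave this as a list of options, and the ones you favour fail.

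For the field $xu^2/r$ the divergence is $\nabla\cdot(x/r)=2/r$; your $2u^2/r^2$ is dimensionally inconsistent. So $\frac Z2\int_{\partial\mathcal B_R}u^2=Z\int u^2/r+Z\int u\,\partial_ru$, which cancels the Coulomb term exactly, and nothing is gained.

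The bound $\int_{\partial\mathcal B_R}u^2\le\frac3R\int u^2+\frac2R\int|u||x||\nabla u|$ throws away a sign you need. Insert $|\nabla u|\le\frac Z2|u|+\phi|\nabla f|$ and absorb the $\nabla f$ part into the reserved $\int\phi^2|\nabla f|^2$. What remains is $+\bigl(\frac{Z^2}{2R}|x|+\frac{Z^2}{4R^2}|x|^2\bigr)u^2$, which can be as large as $\frac34Z^2u^2$, so you end with $Z^2$ instead of $\frac{Z^2}4$.

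The fallback with an unspecified $cZ/R$ does not prove the statement either. The constants $\frac{Z^2}{4a}$ and $\frac{3Z}{2R}$ are used quantitatively in Lemma~\ref{lem:onebody}, through $\eta'$, $w_j$ and $L_3$.

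Your closing guess---the $xu^2$ identity plus Cauchy--Schwarz into the reserved term---is what the paper does, but it only works with the following split. Write $x\cdot\nabla u^2=f^2\,x\cdot\nabla\phi^2+\phi^2\,x\cdot\nabla f^2$, and keep the first term with its sign, $x\cdot\nabla\phi^2=-Z|x|\phi^2$. Then $\frac Z{2R}\phi^2|x\cdot\nabla f^2|\le\phi^2|\nabla f|^2+\frac{Z^2|x|^2}{4R^2}u^2$. The first piece is paid for by $\int\phi^2|\nabla f|^2$, and the remainder $\frac{Z^2}{4R^2}\int|x|(|x|-2R)u^2$ is nonpositive on the ball.

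Alternatively, your first option (a comparison function with vanishing normal derivative) works once made concrete. Take $\phi=\exp\bigl(-\frac{Zr}2+\frac{Zr^2}{4R}\bigr)$. It satisfies $\partial_r\phi=0$ at $r=R$ and $-\Delta\phi/\phi=\frac Zr-\frac{3Z}{2R}-\frac{Z^2}4\bigl(1-\frac rR\bigr)^2$. The boundary term therefore vanishes and the inequality follows at once. This is exactly the bound the paper's argument yields before the nonpositive term is dropped, since $\frac{Z^2}4+\frac{Z^2}{4R^2}|x|(|x|-2R)=\frac{Z^2}4\bigl(1-\frac{|x|}R\bigr)^2$.
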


\begin{proof}
Take $y=0$, let $\phi=e^{-Z|x|/2}$ (so that $-\Delta\phi=(Z/|x|-Z^2/4)\phi$), and write $u=\phi f$.
Integrating by parts on $\mathcal B_R$,
\[
\int|\nabla u|^2=\int\phi^2|\nabla f|^2+\int(Z/|x|-Z^2/4)u^2-\frac Z2\int_{\partial\mathcal B_R}u^2 .
\]
By the divergence theorem,
\[
R\int_{\partial\mathcal B_R}u^2=\int\bigl(3u^2+f^2x\cdot\nabla\phi^2+\phi^2x\cdot\nabla f^2\bigr),
\]
where $x\cdot\nabla\phi^2=-Z|x|\phi^2$ and $|x\cdot\nabla f^2|\le2|x||f||\nabla f|$. Bounding
$(Z/R)|x||f||\nabla f|$ by $|\nabla f|^2+\bigl(Z|x|/(2R)\bigr)^2f^2$ gives
\[
\frac Z2\int_{\partial\mathcal B_R}u^2\le\frac{3Z}{2R}\int u^2+\int\phi^2|\nabla f|^2+\frac{Z^2}{4R^2}\int|x|(|x|-2R)u^2 ,
\]
and the last term is not positive on $\mathcal B_R$. This proves the first claim. For the second, apply
the first to $u=|\psi|$ with $Z/a$ in place of $Z$, multiply by $a$, and use the pointwise diamagnetic
inequality $|\nabla|\psi||\le|\nabla_A\psi|$; the local argument at the start of the proof of
Lemma~\ref{lem:maxmin} gives $|\psi|\in H^1(\mathcal B_R)$ and this inequality.
\end{proof}

\begin{lemma}[Zeeman bound]\label{lem:zeeman}
$(\sigma\cdot p_A)^2\ge0$, and
\[
(\sigma\cdot p_A)^2\ge p_A^2-\tfrac{\sqrt3}2|B| .
\]
The constant $\frac{\sqrt3}2$ in the pointwise bound on the norm of the Zeeman term is sharp.
\end{lemma}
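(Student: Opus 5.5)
The operator $(\sigma\cdot p_A)^2$ is the square of the symmetric operator $\sigma\cdot p_A$ on $C_c^\infty$. Its quadratic form is therefore $\|\sigma\cdot p_A\psi\|^2\ge0$, which gives the first claim. For the second claim I will start from the Lichnerowicz formula~\eqref{eq:pauli},
\[
(\sigma\cdot p_A)^2=p_A^2-\tfrac12\sum_{l,a}B_l^a\,\sigma_l\otimes\tau^a .
\]
It then suffices to show a pointwise statement: at each $x$, the Hermitian $4\times4$ matrix $Z=\sum_{l,a}B_l^a\sigma_l\otimes\tau^a$ on $\C^2_{\rm spin}\otimes\C^2_{\rm col}$ satisfies $\|Z\|\le\sqrt3\,|B|$, where $|B|^2=\sum_{l,a}(B_l^a)^2$. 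Any flavour factors are untouched by $Z$ and do not matter. Granting this, $-\frac12Z\ge-\frac{\sqrt3}2|B|$ as multiplication operators, and the operator inequality follows by adding $p_A^2$.

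To bound $\|Z\|$ I will use the singular value decomposition of the real $3\times3$ matrix $(B_l^a)$. Write $B=R_1\,\mathrm{diag}(b_1,b_2,b_3)\,R_2^{\mathsf T}$ with $R_1,R_2\in SO(3)$ and real $b_k$; allowing one $b_k$ to be negative keeps both determinants equal to $1$. Every rotation is implemented by a unitary conjugation of the Pauli matrices, in spin and in colour separately. So $Z$ is unitarily equivalent to
\[
Z'=b_1\sigma_1\otimes\tau^1+b_2\sigma_2\otimes\tau^2+b_3\sigma_3\otimes\tau^3 .
\]
The three operators $\sigma_k\otimes\tau^k$ commute pairwise, since each pair of factors anticommutes twice. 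They square to $\1$, and their product is $(\sigma_1\sigma_2\sigma_3)\otimes(\tau^1\tau^2\tau^3)=(i)(i)=-\1$. Hence they are simultaneously diagonal, with joint eigenvalues $(\epsilon_1,\epsilon_2,\epsilon_3)\in\{\pm1\}^3$ constrained by $\epsilon_1\epsilon_2\epsilon_3=-1$; the four sign patterns with this product each occur once. The eigenvalues of $Z'$ are $\sum_k\epsilon_kb_k$, and Cauchy--Schwarz gives $|\sum_k\epsilon_kb_k|\le\sqrt3\,(\sum_kb_k^2)^{1/2}=\sqrt3|B|$.

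For sharpness I will take $B_l^a=-\delta_l^a$ at a point, so $b_k=-1$ in the diagonal form and $Z=-\sum_a\sigma_a\otimes\tau^a$. On the singlet $\sum_a\sigma_a\otimes\tau^a=-3$, as in~\eqref{eq:twobody}, so the eigenvalue of $Z$ there is $3=\sqrt3|B|$ with $|B|=\sqrt3$. The Zeeman term $-\frac12Z$ then attains $-\frac{\sqrt3}2|B|$, and no smaller constant works in the pointwise bound.

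The main point needing care is the reduction to diagonal form. Each rotation must be realised on its own tensor factor by an $SU(2)$ conjugation, which requires determinant $+1$; this is why a sign is absorbed into one $b_k$ rather than insisting on nonnegative singular values. It must also be checked that the product constraint $\epsilon_1\epsilon_2\epsilon_3=-1$ does not spoil the Cauchy--Schwarz bound, and it does not, since that bound holds for every sign pattern. Everything else is routine.
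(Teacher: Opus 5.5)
Your proof is correct and follows the paper's argument: the singular value decomposition of $(B_l^a)$ reduces the Zeeman term to $\sum_k b_k\,(u_k\cdot\sigma)\otimes(v_k\cdot\tau)$, Cauchy--Schwarz gives $\sum_k|b_k|\le\sqrt3\,(\sum_kb_k^2)^{1/2}=\sqrt3\,|B|$, and $B_l^a=-c\,\delta_{la}$ on the spin--colour singlet shows sharpness. The one difference is that the paper bounds the norm of the three rank-one terms by the triangle inequality, since each has norm $1$, so it needs neither the $SO(3)$ orientation bookkeeping nor the joint spectrum; your simultaneous diagonalization does more work but yields the exact eigenvalues $\sum_k\epsilon_kb_k$ with $\epsilon_1\epsilon_2\epsilon_3=-1$.
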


\begin{proof}
Positivity holds because $\sigma\cdot p_A$ is symmetric. By~\eqref{eq:pauli}, the second claim says
that the Zeeman term $\frac12\sum_{l,a}B_l^a\sigma_l\otimes\tau^a$ has norm at most $\frac{\sqrt3}2|B|$.
Write the $3\times3$ matrix $(B_l^a)$ as $\sum_kc_ku_kv_k^{\mathsf T}$ with orthonormal $u_k$, $v_k$ and
singular values $c_k\ge0$. Then the Zeeman term is $\frac12\sum_kc_k(u_k\cdot\sigma)\otimes(v_k\cdot\tau)$,
and each $(u_k\cdot\sigma)\otimes(v_k\cdot\tau)$ has norm $1$, so its norm is at most
$\frac12\sum_kc_k\le\frac{\sqrt3}2\bigl(\sum_kc_k^2\bigr)^{1/2}=\frac{\sqrt3}2|B|$. Equality is attained for
$B_l^a=-c\,\delta_{la}$, $c>0$, on the spin--colour singlet.
\end{proof}

\section{The electrostatic inequality}\label{sec:electro}
Recall from Section~\ref{sec:ideas} how the abelian inequality~\eqref{eq:abelianNN} is proved: each
point charge is smeared over a sphere, Newton's theorem says that this does not change the
interactions, and positivity of the total energy does the rest. The uniform surface charge works because
its potential outside the sphere is that of the point charge. For colour charges the potential
$K_A(\cdot,y)$ is in general not spherically symmetric about $y$, so the surface charge with
the same potential outside the sphere as the point charge is not uniform and has no explicit formula.
We therefore take this property as the definition: we start from the potential
$K_A(\cdot,y)$ of a point charge and ask which surface charge produces it outside the sphere.

\begin{lemma}[A non-abelian Newton theorem]\label{lem:newton}
Fix $y\in\R^3$ and $R>0$, and let $\mathcal B=\mathcal B(y,R)$. For $w\in\R^3$, let
\[
\Phi_w=K_A(\cdot,y)w\,\1_{{\mathcal B}^c}+\varphi_w\,\1_{\mathcal B},
\]
where $ \varphi_w$ is the solution of the interior Dirichlet problem
\[\begin{split}
&\Delta_A\varphi_w=0\qquad\qquad\;\;\text{on}\quad \mathcal B\,,\\
&\varphi_w=K_A(\cdot,y)w\qquad\text{ on} \quad\partial\mathcal B\,.
\end{split}\]
Then:
\begin{enumerate}[label=\textup{(\roman*)},leftmargin=2em,itemsep=1pt]
\item $\varrho_w:=-\Delta_A\Phi_w$, taken in the distributional sense, is a surface charge on the sphere $\partial\mathcal B$ with smooth
density:
\[
\varrho_w=\Bigl(n\cdot\nabla_A\varphi_w-n\cdot\nabla_A(K_A(\cdot,y)w)\Bigr)d\sigma,
\]
    where \(n\) is the outward unit normal and \(d\sigma\) is surface measure on \(\partial\mathcal B\);
\item
$\Phi_w=K_A\varrho_w$ on $\R^3$;
\item $\Phi_w(y)=\mathcal Kw$ for all $w\in\R^3$, where $\mathcal K$ is a symmetric, positive semidefinite $3\times3$
matrix, independent of $w$, with
$\|\mathcal K\|\le1/(4\pi R)$, and $\langle\varrho_w,K_A\varrho_{w'}\rangle=w\cdot\mathcal Kw'$ for all
$w,w'\in\R^3$.
\end{enumerate}
\end{lemma}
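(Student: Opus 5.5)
The plan is to establish the three parts in order, using only Lemma~\ref{lem:dom}, Lemma~\ref{lem:pos} and the maximum principle of Lemma~\ref{lem:maxp}.

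\emph{Part (i): the surface charge.} Since $A$ is smooth and $K_A(\cdot,y)w$ is smooth on a neighbourhood of $\partial\mathcal B$ (Lemma~\ref{lem:dom}), the Dirichlet problem for the uniformly elliptic operator $\Delta_A$ on the ball has a unique solution $\varphi_w$, smooth up to the boundary. Uniqueness follows from Lemma~\ref{lem:maxp}, and existence and regularity from standard elliptic theory, since $\Delta_A=\Delta+$ lower-order terms with smooth coefficients. Thus $\Phi_w$ is continuous and piecewise smooth, and it satisfies $\Delta_A\Phi_w=0$ away from $\partial\mathcal B$; outside $\mathcal B$ this holds because $y\notin\overline{\mathcal B^c}$. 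Pairing $-\Delta_A\Phi_w$ with a test function and applying Green's formula on each side of the sphere, the volume terms vanish. Because $\Phi_w$ is continuous across the sphere, the terms involving the normal derivative of the test function cancel. What remains is the jump of the normal covariant derivative, which is the stated density. The sign convention follows because the outward normal of $\mathcal B$ is the inward normal of $\mathcal B^c$.

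\emph{Part (ii): $\Phi_w=K_A\varrho_w$.} Set $u=\Phi_w-K_A\varrho_w$. By Lemma~\ref{lem:dom}, $-\Delta_A K_A\varrho_w=\varrho_w$ in distributions, so $\Delta_A u=0$ on all of $\R^3$ in the distributional sense, and $u$ is smooth by elliptic regularity. The density of $\varrho_w$ is bounded, so \eqref{eq:Kbound} shows that $K_A\varrho_w$ is bounded and tends to $0$ at infinity. The same holds for $\Phi_w$: it equals $K_A(\cdot,y)w$ outside $\mathcal B$, where \eqref{eq:Kbound} applies, and it is bounded inside. Lemma~\ref{lem:maxp} then gives $u\equiv0$.

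\emph{Part (iii): the matrix $\mathcal K$.} Linearity in $w$ is clear, so $\Phi_w(y)=\mathcal Kw$ for a real $3\times3$ matrix $\mathcal K$. For the norm bound, apply the maximum principle of Lemma~\ref{lem:maxp} to $|\varphi_w|$ on $\mathcal B$ and use \eqref{eq:Kbound} on $\partial\mathcal B$:
\[
|\mathcal Kw|\le\sup_{\partial\mathcal B}|K_A(\cdot,y)w|\le\frac{|w|}{4\pi R}.
\]
For the energy identity, use part (ii) and the fact that $\varrho_{w'}$ lives on $\partial\mathcal B$, where $\Phi_w=K_A(\cdot,y)w$:
\[
\langle\varrho_w,K_A\varrho_{w'}\rangle=\int\langle\Phi_w,\varrho_{w'}\rangle=\int_{\partial\mathcal B}\langle K_A(\cdot,y)w,\varrho_{w'}\rangle .
\]
By the symmetry $K_A(x,y)^{\mathsf T}=K_A(y,x)$, this equals $w\cdot(K_A\varrho_{w'})(y)=w\cdot\Phi_{w'}(y)=w\cdot\mathcal Kw'$. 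The left side is symmetric in $w,w'$, again by the symmetry of $K_A$, so $\mathcal K$ is symmetric. Taking $w'=w$, Lemma~\ref{lem:pos} gives $w\cdot\mathcal Kw\ge0$, so $\mathcal K$ is positive semidefinite. Lemma~\ref{lem:pos} applies because a bounded surface density on a sphere has finite Coulomb energy.

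\emph{Main obstacle.} The only delicate point is justifying the distributional computations. One must make sure that $K_A\varrho_w$ for a surface measure satisfies $-\Delta_A K_A\varrho_w=\varrho_w$ and is bounded with decay, so that Lemma~\ref{lem:maxp} applies to $u$. One must also make sure that Fubini can be used when exchanging integrals in $\langle\varrho_w,K_A\varrho_{w'}\rangle$. Both follow from the Gaussian heat-kernel bounds of Lemma~\ref{lem:dom}, together with the integrability of $1/|x-z|$ against surface measure.
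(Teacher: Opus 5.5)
Your proposal is correct and follows essentially the same route as the paper. For (i) you use Green's formula on each side of the sphere to get the jump of the covariant normal derivative. For (ii) you use the Liouville-type consequence of Lemma~\ref{lem:maxp}. For (iii) you use the maximum principle for the norm bound, evaluate the mutual energy through (ii) and the boundary values of $\Phi_w$, and get positivity from Lemma~\ref{lem:pos}. The only minor difference is that you obtain existence of the Dirichlet solution from general elliptic theory plus uniqueness, whereas the paper gets it directly from Lax--Milgram, with coercivity from the diamagnetic and Poincar\'e inequalities.
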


In words: $\varrho_w$ is a surface charge on $\partial\mathcal B$ whose potential outside the ball is
that of the point charge $w$ at $y$, and $\mathcal K$ is the matrix of self-energies of these surface
charges. At $A=0$, $\varrho_w$ is the uniform surface charge $w/(4\pi R^2)$, $\Phi_w$ is constant
inside the ball, and $\mathcal K=\1/(4\pi R)$: this is Newton's theorem. For general $A$, $\varrho_w$ need not be uniform or parallel to $w$, and $\mathcal K$ has no explicit
formula. The electrostatic inequality below uses the bound $\|\mathcal K\|\le1/(4\pi R)$ instead.

\begin{proof}
The boundary value problem inside $\mathcal B$ has a unique solution, smooth up to $\partial\mathcal B$. Indeed, let $g$ be a smooth extension of the boundary values to $\overline{\mathcal B}$, and write the
solution as $g+u$, where $u$ vanishes on $\partial\mathcal B$ and $\Delta_Au=-\Delta_Ag$. The diamagnetic
inequality $|\nabla|u||\le|\nabla_Au|$ and Poincar\'e's inequality for $|u|$ bound $\|u\|$ by a multiple of
$\|\nabla_Au\|$, and since $A$ is bounded on $\mathcal B$, $\|\nabla u\|\le\|\nabla_Au\|+\|A\|_\infty\|u\|$. So the
quadratic form $\int_{\mathcal B}|\nabla_Au|^2$ is coercive on $H^1_0(\mathcal B)$, the Lax--Milgram theorem
gives a unique such $u$, and elliptic regularity gives smoothness.

(i) $\Phi_w$ is continuous, since both pieces are smooth up to $\partial\mathcal B$ and agree there. It
satisfies $\Delta_A\Phi_w=0$ on both sides of $\partial\mathcal B$, so Green's formula allows us to identify $\varrho_w$. To be precise, applying the divergence theorem separately on $\mathcal B$ and $\mathcal B^c$ to the
identity
\[
\sum_j\partial_j\bigl(\nabla_{A,j}\Phi_w\cdot\chi-\Phi_w\cdot\nabla_{A,j}\chi\bigr)
=\Delta_A\Phi_w\cdot\chi-\Phi_w\cdot\Delta_A\chi
\]
for an $\R^3$-valued test function $\chi$ gives the stated formula for \(\varrho_w\) 
as the jump of the covariant
normal derivative across the sphere.

(ii) The
functions $K_A\varrho_w$ and $\Phi_w$ are bounded, tend to $0$ at infinity by~\eqref{eq:Kbound}, and
both solve $-\Delta_Au=\varrho_w$. Their difference is smooth and annihilated by $\Delta_A$, hence
zero by Lemma~\ref{lem:maxp}.

(iii) By uniqueness, $\Phi_w$ is linear in $w$, which defines $\mathcal K$. By (i), (ii) and the symmetry of
$K_A$,
$\langle\varrho_w,K_A\varrho_{w'}\rangle=\int_{\partial\mathcal B}\varrho_w(dz)\cdot\Phi_{w'}(z)
=\int_{\partial\mathcal B}\varrho_w(dz)\cdot K_A(z,y)w'=\Phi_w(y)\cdot w'$.
The left side is symmetric in $w,w'$, so $\mathcal K$ is symmetric, and positivity follows from
Lemma~\ref{lem:pos}. For the norm: for a unit vector $w$, the function $\Phi_w$ is annihilated by
$\Delta_A$ inside $\mathcal B$ and equals $K_A(\cdot,y)w$ on $\partial\mathcal B$, so
Lemma~\ref{lem:maxp} and~\eqref{eq:Kbound} give
$|\mathcal Kw|=|\Phi_w(y)|\le\max_{\partial\mathcal B}|K_A(\cdot,y)w|\le1/(4\pi R)$.
\end{proof}

\begin{proposition}[Electrostatic inequality]\label{prop:electro}
For every smooth $A$, every $N\ge2$ and all distinct $x_1,\dots,x_N$, with $d_i=\min_{j\ne i}|x_i-x_j|$,
\[
V_A(\ux)\ \ge\ -3\sum_{i=1}^N\frac{e_i^2}{d_i}\qquad\text{as matrices on }(\C^2_{\rm col})^{\otimes N}.
\]
\end{proposition}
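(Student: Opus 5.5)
We follow the abelian proof sketched in Section~\ref{sec:ideas}, with Lemma~\ref{lem:newton} in place of Newton's theorem. Set $R_i=d_i/2$ and $\mathcal B_i=\mathcal B(x_i,R_i)$. These balls have pairwise disjoint interiors, and each $x_j$, $j\ne i$, lies outside $\mathcal B_i$, indeed outside its interior. For each $i$ and each basis vector $\hat e_a$ of $\R^3$, Lemma~\ref{lem:newton} gives a surface charge $\varrho_{i,a}:=\varrho_{\hat e_a}$ on $\partial\mathcal B_i$ whose potential $K_A\varrho_{i,a}$ equals $K_A(\cdot,x_i)\hat e_a$ outside the open ball $\mathcal B_i$. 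It also gives a self-energy matrix $\mathcal K_i$ with $\|\mathcal K_i\|\le 1/(4\pi R_i)=1/(2\pi d_i)$.

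The first step is to compute the mutual energies. For $i\ne j$, $\varrho_{j,b}$ is supported on $\partial\mathcal B_j$, which lies outside the interior of $\mathcal B_i$. Using (ii), $\langle\varrho_{i,a},K_A\varrho_{j,b}\rangle=\int\varrho_{j,b}(dz)\cdot K_A(z,x_i)\hat e_a$. This is the potential of $\varrho_{j,b}$ tested against the point charge at $x_i$, and because $x_i\notin\mathcal B_j$ it equals $\hat e_b\cdot K_A(x_j,x_i)\hat e_a=K_A^{ab}(x_i,x_j)$, by the symmetry of $K_A$. Where the balls touch, the supports meet at a single point of measure zero, so continuity of $\Phi$ handles that case. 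By (iii), the self terms are $\langle\varrho_{i,a},K_A\varrho_{i,b}\rangle=\mathcal K_i^{ab}$.

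The second step is positivity with matrix-valued charges. Index $I=(i,a)$ and form the real symmetric Gram matrix $G_{IJ}=\langle\varrho_I,K_A\varrho_J\rangle$. By Lemma~\ref{lem:pos} applied to real combinations $\sum_I c_I\varrho_I$, $G$ is positive semidefinite, so $G=\mathcal M^{\mathsf T}\mathcal M$ for some real matrix $\mathcal M$. Put $\hat e_I=e_i\tau^a_{(i)}$; these are Hermitian operators on $(\C^2_{\rm col})^{\otimes N}$. Then
\[
\sum_{I,J}G_{IJ}\hat e_I\hat e_J=\sum_k\Bigl(\sum_I\mathcal M_{kI}\hat e_I\Bigr)^2\ge0 .
\]
The sum splits into two parts. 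The terms with $i\ne j$ give, counting both orders and using that operators on different particles commute, $2\sum_{i<j}e_ie_j\sum_{a,b}K_A^{ab}(x_i,x_j)\tau^a_{(i)}\tau^b_{(j)}=V_A/(2\pi)$. The terms with $i=j$ give $\sum_ie_i^2\sum_{a,b}\mathcal K_i^{ab}\tau^a\tau^b=\sum_ie_i^2\tr\mathcal K_i$, since $\mathcal K_i$ is symmetric and $\tau^a\tau^b+\tau^b\tau^a=2\delta^{ab}$. Since $\mathcal K_i\ge0$, we have $\tr\mathcal K_i\le3\|\mathcal K_i\|\le 3/(2\pi d_i)$. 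Hence $V_A/(2\pi)\ge-\sum_i 3e_i^2/(2\pi d_i)$, which is the claim.

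The main obstacle is the first step: justifying the identification of mutual energies when spheres touch or when a point $x_j$ lies on $\partial\mathcal B_i$. With $R_i=d_i/2$, $x_j$ is never in the closed ball unless $d_i$ is attained strictly inside, and it is not. Boundary contact between two spheres is a null set for surface measure, and $\Phi_w$ is continuous. If this becomes delicate, we would take $R_i=(1-\epsilon)d_i/2$ and let $\epsilon\to0$, which costs nothing in the final bound. We also need the finiteness hypothesis of Lemma~\ref{lem:pos}, and it holds because smooth surface densities on spheres have finite Coulomb energy.
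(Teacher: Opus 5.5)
Your proof is correct and follows essentially the same route as the paper: the surface charges of Lemma~\ref{lem:newton} on the balls $\mathcal B(x_i,d_i/2)$, the off-diagonal Gram entries identified with $K_A(x_i,x_j)$ by using part (ii) twice, positivity through $\mathcal G=\mathcal M^{\mathsf T}\mathcal M$ with Hermitian matrix-valued charges, and $\sum_{a,b}\mathcal K_i^{ab}\tau^a\tau^b=(\tr\mathcal K_i)\1$ for the self-energies. The $\epsilon$-shrinking fallback for touching spheres is not needed, because $\Phi_{i,w}=K_A(\cdot,x_i)w$ holds on the whole complement of the open ball, including $\partial\mathcal B_i$.
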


\begin{proof}
Let $\mathcal B_i=\mathcal B(x_i,d_i/2)$, and let $\Phi_{i,w}$, $\varrho_{i,w}$ and $\mathcal K_i$ be as in
Lemma~\ref{lem:newton} for this ball.
For $i\ne j$ we have $|x_i-x_j|\ge\max(d_i,d_j)$, so $x_j\notin\overline{\mathcal B_i}$ and
$\partial\mathcal B_i\cap\mathcal B_j=\emptyset$. Hence, by Lemma~\ref{lem:newton}(i), (ii) and the symmetry of $K_A$,
for $w,w'\in\R^3$,
\begin{align*}
\langle\varrho_{i,w},K_A\varrho_{j,w'}\rangle
&=\int_{\partial\mathcal B_i}\varrho_{i,w}(dz)\cdot\Phi_{j,w'}(z)
=\int_{\partial\mathcal B_i}\varrho_{i,w}(dz)\cdot K_A(z,x_j)w'\\
&=\Phi_{i,w}(x_j)\cdot w'=w\cdot K_A(x_i,x_j)w',
\end{align*}
while $\langle\varrho_{i,w},K_A\varrho_{i,w'}\rangle=w\cdot\mathcal K_iw'$. In other words, the
$3N\times3N$ matrix $\mathcal G$ of mutual energies of the surface charges has the Coulomb kernels
$K_A(x_i,x_j)$ as its off-diagonal blocks and the self-energies $\mathcal K_i$ on the diagonal; we
write $\mathcal G_{IJ}$ for its entries, with $I=(i,a)$ and $J=(j,b)$. Since $\varrho_{i,w}$ is linear in $w$, for all $w_1,\dots,w_N\in\R^3$
\[
\sum_{I,J}\mathcal G_{IJ}\,w_i^a\,w_j^b=\Bigl\langle\sum_i\varrho_{i,w_i},K_A\sum_j\varrho_{j,w_j}\Bigr\rangle\ge0
\]
by Lemma~\ref{lem:pos}. So $\mathcal G$ is positive semidefinite, and $\mathcal G=\mathcal M^{\mathsf T}\mathcal M$ for a
real matrix $\mathcal M$. With $\hat e_I=e_i\tau^a_{(i)}$ for $I=(i,a)$, the operators
$\sum_I\mathcal M_{kI}\hat e_I$ are Hermitian on the colour space $(\C^2_{\rm col})^{\otimes N}$, since $\mathcal M$ is real, and therefore
\[
0\le\sum_k\Bigl(\sum_I\mathcal M_{kI}\hat e_I\Bigr)^2=\sum_{I,J}\mathcal G_{IJ}\hat e_I\hat e_J
=\frac1{2\pi}V_A(\ux)+\sum_ie_i^2\tr\mathcal K_i .
\]
Here the off-diagonal blocks give
$\sum_{i\ne j}e_ie_j\sum_{a,b}K_A^{ab}(x_i,x_j)\tau^a_{(i)}\tau^b_{(j)}=V_A(\ux)/(2\pi)$ by~\eqref{eq:VAintro}, since each
unordered pair appears twice, $K_A(x,y)^{\mathsf T}=K_A(y,x)$, and $\tau_{(i)}$, $\tau_{(j)}$ commute. The
diagonal blocks give $\sum_ie_i^2\tr\mathcal K_i$, since $\sum_{a,b}\mathcal K_i^{ab}\tau^a\tau^b=(\tr\mathcal
K_i)\1$ for the symmetric matrix $\mathcal K_i$. Finally, $\tr\mathcal K_i\le3\|\mathcal K_i\|\le3/(2\pi d_i)$, and the claim follows.
\end{proof}

\begin{remark}\label{rem:electrocheck}
The electrostatic inequality of Proposition~\ref{prop:electro} is sharp up to a factor $2$, as is its
abelian version~\eqref{eq:abelianNN}: for two particles of equal charge in the colour singlet at $A=0$,
\eqref{eq:twobody} gives a Coulomb energy of exactly half the bound, as do two opposite charges in the
abelian case. With fixed nuclei, the electrostatic
inequality of Lieb and Yau~\cite{LY88} uses the Voronoi cells of the nuclei instead, so that each
electron sees only its nearest nucleus; see also~\cite[Chap.~5]{LS10}. Here there are no nuclei and every
pair can attract, so the nearest neighbour among all particles takes the place of the nearest nucleus.
\end{remark}

\section{Kinetic energy inequalities}\label{sec:kin}

Proposition~\ref{prop:electro} has reduced the Coulomb energy to the nearest-neighbour potential
$\sum_id_i^{-1}$. This section bounds the kinetic energy from below by that potential, in two steps. First, splitting
the particles into centres and satellites turns the many-body potential into a one-body one. Second,
a one-body bound is applied to the satellites in the potential of the centres. For $A=0$, such lower bounds on the kinetic energy of fermions were first proved by Fefferman and
de la Llave, for $|p|$ and one spin state~\cite[Corollary~2]{FdlL86}. Lieb and Yau proved them for $q$
spin states, for both $|p|$ and $p^2$~\cite[Theorem~5]{LY88}.

\paragraph{Random splitting.} Let $0<p<1$, and let $\mathcal C\subset\{1,\dots,N\}$ be a random set,
containing each index independently with probability $p$. The indices in $\mathcal C$ are the `centres', the others the
`satellites'. For $\ell\in(0,\infty]$, let
\begin{equation}\label{eq:Wdef}
W_{\mathcal C}^\ell(x)=\frac{\1\bigl[\dist(x,\{x_k\}_{k\in\mathcal C})<\ell\bigr]}{\dist(x,\{x_k\}_{k\in\mathcal C})}
\end{equation}
be the nearest-centre potential, cut off at distance $\ell$. For $j\in\mathcal C$, let
$D_j^{\mathcal C}=\frac12\min_{k\in\mathcal C,\,k\ne j}|x_j-x_k|$ be half the distance from centre $j$ to the
nearest other centre, with $D_j^{\mathcal C}=\infty$ if $|\mathcal C|=1$. We write $\E$ for the expectation
over the random set $\mathcal C$.

\begin{lemma}[Random splitting]\label{lem:split}
For every $i$, every $\ell\in(0,\infty]$ and any one-body operators $T_i$,
\begin{equation}\label{eq:split}
d_i^{-1}\le\frac1{p(1-p)}\,\E\bigl[\1(i\notin\mathcal C)W_{\mathcal C}^\ell(x_i)\bigr]+\ell^{-1},\qquad
\sum_iT_i=\frac1{1-p}\,\E\sum_{i\notin\mathcal C}T_i ,
\end{equation}
and moreover $\E\sum_{j\in\mathcal C}(D_j^{\mathcal C})^{-1}\le2p\sum_jd_j^{-1}$.
\end{lemma}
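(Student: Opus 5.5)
The three claims are elementary probability, and I would prove them one at a time; none needs anything from the earlier sections.

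\emph{First inequality.} Fix $i$ and let $j$ be a nearest neighbour of $x_i$, so $|x_i-x_j|=d_i$ (choose one if there are ties). If $d_i\ge\ell$, then $d_i^{-1}\le\ell^{-1}$ and the inequality holds because the expectation is nonnegative. If $d_i<\ell$, look at the event $\{i\notin\mathcal C,\ j\in\mathcal C\}$, which has probability $(1-p)p$ since the memberships are independent. On this event every centre $x_k$ has $k\ne i$, so $|x_i-x_k|\ge d_i$. Also $x_j$ is itself a centre. Hence $\dist(x_i,\{x_k\}_{k\in\mathcal C})=d_i<\ell$ and $W^\ell_{\mathcal C}(x_i)=d_i^{-1}$. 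Since the integrand is nonnegative,
\[
\E\bigl[\1(i\notin\mathcal C)W^\ell_{\mathcal C}(x_i)\bigr]\ge p(1-p)\,d_i^{-1},
\]
which is the claim without even needing the $\ell^{-1}$ term.

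\emph{Second identity.} By linearity of expectation, $\E\sum_{i\notin\mathcal C}T_i=\sum_iP(i\notin\mathcal C)\,T_i=(1-p)\sum_iT_i$. Here $T_i$ is deterministic, and the identity holds as operators because only the scalar coefficients are random.

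\emph{Third claim.} For a centre $j$, $D_j^{\mathcal C}$ is half the distance to the nearest other centre. The other centres form a subset of the other particles, so $D_j^{\mathcal C}\ge d_j/2$ and $(D_j^{\mathcal C})^{-1}\le2d_j^{-1}$. This also covers $D_j^{\mathcal C}=\infty$. Therefore
\[
\E\sum_{j\in\mathcal C}(D_j^{\mathcal C})^{-1}\le\sum_jP(j\in\mathcal C)\,2d_j^{-1}=2p\sum_jd_j^{-1}.
\]

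The only step that needs any care is the first one. It requires choosing the nearest neighbour $j$ before taking the expectation, and checking that $i\notin\mathcal C$ keeps $x_i$ itself out of the centre set, so the distance to the centres is exactly $d_i$ on that event.
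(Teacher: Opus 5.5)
Your proposal is correct and is essentially the paper's own proof. It uses the same nearest-neighbour event $\{i\notin\mathcal C,\ j\in\mathcal C\}$ of probability $p(1-p)$ with the case split at $d_i<\ell$, linearity of expectation for the identity, and $D_j^{\mathcal C}\ge d_j/2$ together with $P(j\in\mathcal C)=p$ for the last claim.
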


\begin{proof}
Fix $i$ and let $j$ be a nearest neighbour of $i$. With probability $p(1-p)$, $i$ is a satellite and
$j$ a centre, and then $\dist(x_i,\{x_k\}_{k\in\mathcal C})=d_i$, so that $W_{\mathcal C}^\ell(x_i)=d_i^{-1}$ if
$d_i<\ell$. Since $W_{\mathcal C}^\ell\ge0$ always, this gives the inequality when $d_i<\ell$; when
$d_i\ge\ell$, the term $\ell^{-1}$ suffices. The identity
says that each particle is a satellite with probability $1-p$. For the last claim,
$D_j^{\mathcal C}\ge d_j/2$ and $j\in\mathcal C$ with probability $p$.
\end{proof}

The idea of splitting the particles into two groups and applying a
one-body bound to one group, with the other held fixed, goes back to Lieb and Thirring~\cite[proof of Theorem~1]{LT84}, and Lieb and Yau used it to prove their nearest-neighbour
bound~\cite[Theorem~5]{LY88}. It is sometimes
referred to as the method of L\'evy-Leblond~\cite{LL69}, who in fact used a simpler decomposition:
the Hamiltonian as a sum of terms in each of which one particle is held fixed and the others move in
its field.

Once $\mathcal C$ and the positions of the centres are fixed, the satellites form a fermionic system in an
external potential, and the next lemma lets us apply one-body bounds to them. If $\mathcal C=\emptyset$, then
$W_{\mathcal C}^\ell\equiv0$ and these bounds hold trivially, so we may assume $\mathcal C\ne\emptyset$.

\begin{lemma}[Reduction to a one-body density matrix]\label{lem:slice}
Let $\Psi$ be fermionic and fix $\mathcal C$. For a.e.\ value $\xi$ of the coordinates and internal
indices of the particles in $\mathcal C$, the function $\Psi(\cdot\,;\xi)$ of the remaining variables is antisymmetric
under exchange of any two satellites of the same species. Its one-body density matrix is the direct sum over the species of the one-body density matrices of
their satellites, $\gamma_\xi=\bigoplus_k\gamma^{(k)}_\xi$ on $L^2(\R^3;\C^q)=\bigoplus_kL^2(\R^3;\mathfrak h_k)$. The one for species $k$ has the kernel
\[
\gamma^{(k)}_\xi(x,y)=N'_k\int\Psi(x,\underline x';\xi)\,\Psi(y,\underline x';\xi)^\dagger\,d\underline x' ,
\]
where $N'_k$ is the number of satellites of species $k$, $x$ and $y$ are positions of one of them, whose
internal index is left free, so that $\Psi(x,\underline x';\xi)\in\mathfrak h_k$, and $\underline x'$ collects the positions and
internal indices of the other satellites. It satisfies $0\le\gamma_\xi\le\|\Psi(\cdot\,;\xi)\|^2$. Let $h_\xi$ be a one-body operator on
$L^2(\R^3;\C^q)$ that maps each $L^2(\R^3;\mathfrak h_k)$ into itself and depends measurably on $\xi$
(in our applications only through the positions of the centres), and let $(h_\xi)_i$ denote its action
on particle $i$, which is the same for all particles of a species. Then
\[
\Bigl\langle\Psi,\sum_{i\notin\mathcal C}(h_\xi)_i\Psi\Bigr\rangle=\int\Tr(h_\xi\gamma_\xi)\,d\xi ,
\]
and any lower bound $\Tr(h_\xi\gamma)\ge-F(\xi)$, valid for all $0\le\gamma\le1$ with
$\Tr\gamma=N-|\mathcal C|$ (the number of satellites), gives
$\bigl\langle\Psi,\sum_{i\notin\mathcal C}(h_\xi)_i\Psi\bigr\rangle\ge-\langle\Psi,F\Psi\rangle$.
\end{lemma}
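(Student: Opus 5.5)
The plan is to treat $\xi$ as a parameter and apply the standard reduced-density-matrix computation to the satellites, one species at a time.

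\emph{Antisymmetry.} Write $\Psi$ as a function of all variables and split them into the centre variables $\xi$ and the satellite variables. Since $\mathcal C$ is a fixed set of labels, the transposition of two satellites $i,j\notin\mathcal C$ of the same species leaves $\xi$ unchanged. By Fubini, $\Psi(\cdot\,;\xi)$ is square integrable for a.e.\ $\xi$, and $\int\|\Psi(\cdot\,;\xi)\|^2d\xi=1$. For such $\xi$ the antisymmetry of $\Psi$ under that transposition restricts to the slice, since the identity $\Psi(\dots x_i\dots x_j\dots)=-\Psi(\dots x_j\dots x_i\dots)$ holds a.e.

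\emph{Density matrix and the trace identity.} For fixed $\xi$, the satellites of species $k$ form a (non-normalized) fermionic state in $\bigwedge^{N'_k}L^2(\R^3;\mathfrak h_k)$, tensored with the other species' satellite spaces. Its one-body density matrix $\gamma^{(k)}_\xi$ is defined by the stated kernel. For $h_\xi$ acting on one particle of species $k$, antisymmetry makes each of the $N'_k$ terms $\langle\Psi(\cdot;\xi),(h_\xi)_i\Psi(\cdot;\xi)\rangle$ equal. Their sum is therefore $N'_k$ times the term for one chosen satellite, and this equals $\Tr(h_\xi\gamma^{(k)}_\xi)$ by writing out the kernel. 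Since $h_\xi$ preserves each $L^2(\R^3;\mathfrak h_k)$, summing over $k$ gives $\Tr(h_\xi\gamma_\xi)$ with $\gamma_\xi=\bigoplus_k\gamma^{(k)}_\xi$. Integrating over $\xi$ gives the displayed identity. The $\xi$-dependence of $h_\xi$ is harmless, because for fixed $\xi$ it is just a one-body operator on the satellites; measurability is needed only to integrate. Domain questions are handled by first taking $\Psi$ smooth and compactly supported, as in the definition of $\mathcal E_N$, and $h_\xi$ a form-bounded operator, where every expression is an absolutely convergent integral.

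\emph{Bounds on $\gamma_\xi$ and the lower bound.} Clearly $\gamma_\xi\ge0$ and $\Tr\gamma_\xi=\sum_kN'_k\|\Psi(\cdot;\xi)\|^2$. The bound $\gamma^{(k)}_\xi\le\|\Psi(\cdot;\xi)\|^2$ is the Pauli principle for fermionic density matrices, and I would prove it in second quantization: $\langle f,\gamma f\rangle=\|a(f)\Phi\|^2\le\|f\|^2\|\Phi\|^2$, because $a(f)^*a(f)\le a(f)a(f)^*+a(f)^*a(f)=\|f\|^2$ by the CAR. Since the other species act as spectators, the direct sum satisfies the same bound.

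For the final claim, set $n(\xi)=\|\Psi(\cdot;\xi)\|^2$. Where $n(\xi)>0$, the matrix $\gamma=\gamma_\xi/n(\xi)$ satisfies $0\le\gamma\le1$ and $\Tr\gamma=N-|\mathcal C|$. The hypothesis then gives $\Tr(h_\xi\gamma_\xi)\ge-F(\xi)n(\xi)$. Integrating over $\xi$ yields $-\int F(\xi)\|\Psi(\cdot;\xi)\|^2d\xi=-\langle\Psi,F\Psi\rangle$, where $F$ acts as multiplication by a function of the centre variables.

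The only step needing care is the bound $\gamma\le1$, which rests on the CAR argument above. Everything else is bookkeeping with Fubini and exchange symmetry.
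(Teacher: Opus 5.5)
Your proposal is correct and follows the paper's proof. Antisymmetry is inherited slice by slice, the trace identity is Fubini plus exchange symmetry within each species, the Pauli bound $\gamma^{(k)}_\xi\le\|\Psi(\cdot\,;\xi)\|^2$ holds species by species and passes to the direct sum, and the lower bound follows by normalizing $\gamma_\xi$ and integrating over $\xi$; your CAR argument just spells out the Pauli principle that the paper invokes without proof.
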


\begin{proof}
Antisymmetry within each species is inherited from $\Psi$; the variables of the centres, including
those of the same species, are fixed and play no role. The bound $\gamma_\xi\le\|\Psi(\cdot;\xi)\|^2$ is the
Pauli principle for each species separately, and it passes to the direct sum. The identity is
Fubini's theorem. Apply the bound to $\gamma=\gamma_\xi/\|\Psi(\cdot;\xi)\|^2$ and integrate.
\end{proof}

The bound $\gamma\le1$ of Lemma~\ref{lem:slice} is the only place where Fermi statistics enters, and
it explains why only the total number $q$ of internal states appears in the results. The kinetic operators $p_A^2$, $(\sigma\cdot p_A)^2$ and
$|p_A|$ act on each species separately, through its colour and spin factors, so Lemma~\ref{lem:slice}
applies to them, also in the general setting of Section~\ref{sec:general}.

\subsection{Without spin}

\begin{proposition}[Nearest-neighbour bound without spin]\label{prop:kin0}
For fermionic $\Psi$ and $\kappa>0$,
\[
\sum_i\langle\Psi,p_{A,i}^2\Psi\rangle-\kappa\Bigl\langle\Psi,\sum_id_i^{-1}\Psi\Bigr\rangle
\ge-c\,(8\pi Lq)^{2/3}\kappa^2N\|\Psi\|^2 ,\qquad c=(5^5/2^4)^{1/3}<5.803 .
\]
\end{proposition}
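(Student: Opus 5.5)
The plan is to combine the random splitting of Lemma~\ref{lem:split} with the Lieb--Thirring inequality of Lemma~\ref{lem:LT}, applied to the satellites through Lemma~\ref{lem:slice}, and then to optimize over the cutoff $\ell$ and the centre probability $p$. We can normalize $\|\Psi\|=1$.

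\emph{Splitting.} Apply~\eqref{eq:split} with $T_i=p_{A,i}^2$ and a finite cutoff $\ell$, and sum over $i$:
\[
\sum_ip_{A,i}^2-\kappa\sum_id_i^{-1}\ \ge\ \frac1{1-p}\,\E\sum_{i\notin\mathcal C}\Bigl(p_{A,i}^2-\frac{\kappa}{p}\,W_{\mathcal C}^\ell(x_i)\Bigr)-\frac{\kappa N}{\ell}.
\]
This holds pointwise in the positions, so it holds in expectation in $\Psi$.

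\emph{One-body bound.} Fix $\mathcal C$ and the positions of the centres. The operator $h_\xi=p_A^2-(\kappa/p)W_{\mathcal C}^\ell$ is a one-body operator acting on each species separately, so Lemma~\ref{lem:slice} reduces the satellite term to $\Tr(h_\xi\gamma)$ with $0\le\gamma\le1$. Lemma~\ref{lem:LT} with $V=(\kappa/p)W_{\mathcal C}^\ell$ then gives the lower bound $-Lq(\kappa/p)^{5/2}\int(W_{\mathcal C}^\ell)^{5/2}$.

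The cutoff is what makes this integral finite. Without it, $W^{5/2}\sim r^{-5/2}$ is not integrable at infinity. With it, $W_{\mathcal C}^\ell\le\sum_{k\in\mathcal C}|x-x_k|^{-1}\1(|x-x_k|<\ell)$ pointwise, since only the nearest centre contributes. Hence
\[
\int(W_{\mathcal C}^\ell)^{5/2}\le|\mathcal C|\,4\pi\!\int_0^\ell r^{-1/2}\,dr=8\pi\ell^{1/2}|\mathcal C| .
\]
This is the source of the factor $8\pi$. The bound depends on $\xi$ only through $|\mathcal C|$, and $\E|\mathcal C|=pN$.

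\emph{Optimization.} Combining the two steps gives
\[
\mathrm{LHS}\ \ge\ -\frac{8\pi Lq\,\kappa^{5/2}}{(1-p)p^{3/2}}\,\ell^{1/2}N-\frac{\kappa N}{\ell}.
\]
First minimize over $\ell$: $\min_\ell(a\ell^{1/2}+b\ell^{-1})=3\cdot2^{-2/3}a^{2/3}b^{1/3}$. With $a$ the coefficient above and $b=\kappa$, this gives a bound of the form
\[
-3\cdot2^{-2/3}(8\pi Lq)^{2/3}\kappa^2N\,\bigl[(1-p)p^{3/2}\bigr]^{-2/3}.
\]
Then maximize $(1-p)p^{3/2}$ over $p$. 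The maximum is at $p=3/5$, with value $(2/5)(3/5)^{3/2}$.

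The step I expect to need the most care is this final constant check: that these choices reproduce exactly $c=(5^5/2^4)^{1/3}$. If they do not, the minimization over $p$ may have to be done jointly with $\ell$ rather than in sequence, or the splitting inequality tightened, for instance by using $\ell^{-1}$ only on the event $d_i\ge\ell$. The structural steps above are routine given the earlier lemmas.
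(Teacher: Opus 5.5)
Your proposal is correct and is the paper's proof. The paper also splits with centre probability $\frac35$ and a finite cutoff $\ell$, applies Lemma~\ref{lem:LT} to the satellites through Lemma~\ref{lem:slice} using $\int(W_{\mathcal C}^\ell)^{5/2}\le8\pi\ell^{1/2}|\mathcal C|$ and $\E|\mathcal C|=pN$, and then optimizes over $\ell$.

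The constant check you were worried about does work out. After optimizing in $\ell$, the $p$-dependence factors out as $[(1-p)p^{3/2}]^{-2/3}$, so optimizing first in $\ell$ and then in $p$ loses nothing. At $p=\frac35$ this gives $3\cdot2^{-2/3}\bigl[\tfrac25(\tfrac35)^{3/2}\bigr]^{-2/3}=5^{5/3}2^{-4/3}=(5^5/2^4)^{1/3}$, exactly the stated $c$, so no joint optimization or tightening of the splitting is needed.
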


\begin{proof}
By~\eqref{eq:split} with centre probability $\frac35$,
\[
\sum_ip_{A,i}^2-\kappa\sum_id_i^{-1}\ge\frac52\,\E\sum_{i\notin\mathcal C}\bigl(p_{A,i}^2-\tfrac53\kappa W_{\mathcal C}^\ell(x_i)\bigr)-\frac{\kappa N}\ell .
\]
Fix $\mathcal C$ and the centres. Lemmas~\ref{lem:LT} and~\ref{lem:slice} bound the sum over the satellites
below by $-Lq\int(\frac53\kappa W_{\mathcal C}^\ell)^{5/2}$. Since
$\int(W_{\mathcal C}^\ell)^{5/2}\le\sum_{k\in\mathcal C}\int_{|x-x_k|<\ell}|x-x_k|^{-5/2}=8\pi\ell^{1/2}|\mathcal C|$ and
$\E|\mathcal C|=\frac35N$, the right side is at least
$-N\bigl[\frac52(\frac53)^{3/2}\,8\pi Lq\,\kappa^{5/2}\ell^{1/2}+\kappa/\ell\bigr]$. Optimizing over $\ell$ gives the
claim; the centre probability $\frac35$ minimizes the resulting constant.
\end{proof}

\subsection{With spin}\label{sec:pauli}

The Zeeman term in~\eqref{eq:pauli} can be as negative as $-\frac{\sqrt3}2|B|$, and $|B|$ can be
large where the particles are. Lieb, Loss and Solovej~\cite{LLS95} deal with this by introducing an
energy scale $\mu$ and writing the sum of the negative eigenvalues as
$\int_0^\infty N_{-E}\,dE$.
For $E\le\mu$ the CLR bound (Lemma~\ref{lem:count}) is applied directly, with the Zeeman term
treated as part of the potential; this produces a term $\mu^{1/2}\int|B|^2$ that the field energy
pays for. For $E\ge\mu$, where the deep eigenvalues are counted, a fraction $\eta$ of the kinetic
energy is first spent on the hydrogen bound in a ball around each centre (Lemma~\ref{lem:ball}). The hydrogen
bound removes the Coulomb singularity, at the price of an error $\mu^{-3/2}Z^4\sum_jD_j^{-1}$, where
$D_j$ is half the distance from centre $j$ to the nearest other centre. This error is absorbed into the
Coulomb term in Proposition~\ref{prop:kinP}. A second parameter $\vartheta\in(0,1)$ splits the energy $E$ in the CLR bound between the magnetic and the Coulomb parts of the potential;
see~\eqref{eq:thetasplit}. We fix $\eta=\frac1{20}$ and $\vartheta=\frac25$
(see Remark~\ref{rem:optimal}).

\begin{lemma}[Fixed centres with spin]\label{lem:onebody}
Let $\{R_j\}_{j\in\mathcal C}$ be distinct points, $D_j=\frac12\min_{k\ne j}|R_j-R_k|$ (with $D_j=\infty$ if
$|\mathcal C|=1$), and $V=Z/\dist(x,\{R_j\})$ with $Z>0$. Suppose $\mu\ge Z^2/(\eta(1-\vartheta))=\frac{100}3Z^2$
and $\nu>0$. Then for every density matrix $0\le\gamma\le1$ on $L^2(\R^3;\C^q)$, with $T=(\sigma\cdot p_A)^2$,
\[
\Tr\bigl((T-V)\gamma\bigr)\ge-\nu\Tr\gamma
-q\Bigl(\tfrac{5\pi^2}4L_1Z^3\nu^{-1/2}|\mathcal C|+L_2\mu^{1/2}\!\int|B|^2+L_3\mu^{-3/2}Z^4\sum_{j\in\mathcal C}D_j^{-1}\Bigr),
\]
with the constants $L_1,L_2,L_3$ given in~\eqref{eq:Lconst} after the proof; numerically
${L_1=0.1094}$, ${L_2=0.1238}$ and ${L_3=2.625}$.
\end{lemma}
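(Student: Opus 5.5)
We follow Lieb, Loss and Solovej~\cite{LLS95}. Since $\gamma\le1$, we have $\Tr((T-V+\nu)\gamma)\ge-\Tr(T-V+\nu)_-=-\int_\nu^\infty N_{-E}(T-V)\,dE$. The error term $-\nu\Tr\gamma$ therefore accounts for the shallow eigenvalues, and it remains to bound $\int_\nu^\infty N_{-E}(T-V)\,dE$. We split this integral at the scale $\mu$.

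\emph{Low energies, $\nu\le E\le\mu$.} By Lemma~\ref{lem:zeeman}, $T\ge p_A^2-\frac{\sqrt3}2|B|$, so $N_{-E}(T-V)\le N_{-E}(p_A^2-\frac{\sqrt3}2|B|-V)$. We split $E=\vartheta E+(1-\vartheta)E$ and use $(a+b-E)_+\le(a-\vartheta E)_++(b-(1-\vartheta)E)_+$. Since $s\mapsto s^{3/2}$ is convex, $(x+y)^{3/2}\le2^{1/2}(x^{3/2}+y^{3/2})$, so Lemma~\ref{lem:count} gives
\[
N_{-E}\le2^{1/2}qL_C\Bigl(\int(\tfrac{\sqrt3}2|B|-\vartheta E)_+^{3/2}+\int(V-(1-\vartheta)E)_+^{3/2}\Bigr).
\]
A sharper split, with the exponents kept separate, may be what defines the paper's~\eqref{eq:thetasplit}.

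The magnetic part, integrated over $E\in[0,\mu]$, is at most $\mu\int(\frac{\sqrt3}2|B|)^{3/2}$. This is not yet of the form $\int|B|^2$. Following~\cite{LLS95}, we instead integrate exactly: $\int_0^\mu(b-\vartheta E)_+^{3/2}dE\le\min(\mu b^{3/2},\frac{2}{5\vartheta}b^{5/2})\le C\mu^{1/2}b^2$, using $\min(xy^{3/2},y^{5/2})\le x^{1/2}y^2$. This produces the term $L_2\mu^{1/2}\int|B|^2$.

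For the Coulomb part we use $V\le\sum_jZ/|x-R_j|$ pointwise and compute $\int(Z/|x|-(1-\vartheta)E)_+^{3/2}dx=c\,Z^3E^{-3/2}$. Integrating $E^{-3/2}$ from $\nu$ to $\infty$ gives $2\nu^{-1/2}$, and summing over $j$ gives the term $\frac{5\pi^2}4L_1Z^3\nu^{-1/2}|\mathcal C|$.

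\emph{High energies, $E\ge\mu$.} This is the main step. We write $T=\eta T+(1-\eta)T$, and use $T\ge0$ and $T\ge p_A^2-\frac{\sqrt3}2|B|$ on the two pieces as needed. Around each $R_j$ we take the ball $\mathcal B_j=\mathcal B(R_j,D_j)$; these balls are disjoint. We localize with an IMS-type or Neumann decomposition: $p_A^2\ge\sum_j p_A^2|_{\mathcal B_j}^{\rm Neumann}$, which holds as forms since the balls are disjoint. On $\mathcal B_j$, Lemma~\ref{lem:ball} gives
\[
\eta\,p_A^2-\tfrac{Z}{|x-R_j|}\ge-\tfrac{Z^2}{4\eta}-\tfrac{3Z}{2D_j}.
\]
The constant $Z^2/(4\eta)$ is at most a fraction of $(1-\vartheta)\mu/4\le(1-\vartheta)E$ by the hypothesis $\mu\ge Z^2/(\eta(1-\vartheta))$, so it is absorbed into the energy shift. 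The remaining $3Z/(2D_j)$ term stays a bounded potential on $\mathcal B_j$. Outside the balls, $V\le Z/D$-type bounds hold, with the distance at least $D_j$ near $R_j$.

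We then apply Lemma~\ref{lem:count} to $(1-\eta)p_A^2-\frac{\sqrt3}2|B|-U$, where $U\lesssim Z/D_j$ on the region nearest $R_j$. Integrating over $E\ge\mu$, the magnetic part again contributes $\mu^{1/2}\int|B|^2$, via $\int_\mu^\infty(b-\vartheta E)_+^{3/2}dE\le C b^{5/2}\1(b\ge\vartheta\mu)\le C'\mu^{1/2}b^2$. The Coulomb remainder $(U-(1-\vartheta)E)_+$ vanishes unless $Z/D_j\gtrsim E\ge\mu$. It is supported in a region of volume $\lesssim D_j^3$, so each centre contributes $\int_\mu^\infty D_j^3(Z/D_j)^{3/2}\1(Z/D_j\gtrsim E)\,dE$. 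Bounding $D_j^3\le D_j^{-1}(Z/\mu)^4$ on the region where the integrand is nonzero, this is $\lesssim\mu^{-3/2}Z^4D_j^{-1}$. This gives the term $L_3\mu^{-3/2}Z^4\sum_jD_j^{-1}$.

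The obstacle is carrying out this localization carefully for the non-abelian Pauli operator. A Neumann restriction of $(\sigma\cdot p_A)^2$ is awkward, so we would localize only the $p_A^2$ part after applying the Zeeman bound, at the cost of having $|B|$ in both energy ranges. We would also track the constants $\eta=\frac1{20}$ and $\vartheta=\frac25$ into $L_1,L_2,L_3$.
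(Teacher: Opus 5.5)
Your overall architecture (split at $\mu$, Zeeman bound, $\vartheta$-split, Lemma~\ref{lem:ball} on the disjoint balls inside the Voronoi cells) matches the paper's. But the high-energy step has a genuine gap in the magnetic term. For $E\ge\mu$ you apply Lemma~\ref{lem:count} at the energy $E$ to an operator that still contains the full Zeeman term $b=\frac{\sqrt3}2|B|$. After the $\vartheta$-split you are left with $\int_\mu^\infty(b-\vartheta E)_+^{3/2}\,dE=\frac2{5\vartheta}(b-\vartheta\mu)_+^{5/2}$. Your bound $b^{5/2}\1(b\ge\vartheta\mu)\le C'\mu^{1/2}b^2$ goes the wrong way: on $\{b\ge\vartheta\mu\}$ one has $b^{5/2}\ge(\vartheta\mu)^{1/2}b^2$. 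Moreover $\int(b-\vartheta\mu)_+^{5/2}$ is not bounded by any multiple of $\mu^{1/2}\int b^2$: the rescaling $b\mapsto\lambda^{3/2}b(\lambda\,\cdot)$ keeps $\int b^2$ fixed and multiplies $\int b^{5/2}$ by $\lambda^{3/4}$. So your argument produces the ordinary Lieb--Thirring term $\int b^{5/2}$, which the field energy cannot pay for. Any fixed split $T=\eta T+(1-\eta)T$ leaves this unchanged.

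What is missing is the key idea of Lieb, Loss and Solovej: an \emph{energy-dependent} rescaling of the kinetic energy. For $E\ge\mu$, use $T\ge0$ to write $T-V\ge(\mu/E)T-V$, and only then apply the Zeeman bound, so that the magnetic term is $(\mu/E)b$. Pulling out $\mu/E$ with $N_{-E}(cX)=N_{-E/c}(X)$, Lemma~\ref{lem:count} is evaluated at the energy $E^2/(\mu(1-\eta))$, and the magnetic part becomes $\int(b-\vartheta E^2/\mu)_+^{3/2}$. Because this threshold grows quadratically in $E$, $\int_0^\infty(b-\vartheta E^2/\mu)_+^{3/2}\,dE=(\mu/\vartheta)^{1/2}\frac{3\pi}{16}\,b^2$. (At low energies the paper brings the magnetic term into the same form via $\vartheta E\ge\vartheta E^2/\mu$; your $\min$ bound there is also valid.) The same rescaling has two further effects. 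Lemma~\ref{lem:ball} is applied with $a=\eta\mu/E$, giving $Z^2E/(4\eta\mu)$, which the hypothesis on $\mu$ absorbs into $(1-\vartheta)E$. It also puts the weight $(E/\mu)^{3/2}$ in front of the Coulomb part, which turns the remainder into $\mu^{-3/2}$ times an integral of the fourth power of the effective potential.

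That weight is also what controls the Coulomb remainder off the balls. There your claim that it lives in a region of volume $\lesssim D_j^3$ is false: in the cell of $R_j$ it extends to $|x-R_j|\sim Z/((1-\vartheta)\mu)$, which may far exceed $D_j$. With the weight (or simply $1\le(E/\mu)^{3/2}$ for $E\ge\mu$), the $E$-integration gives $\mu^{-3/2}(1-\vartheta)^{-5/2}\mathrm B(\frac52,\frac52)\,V^4$, and $\int_{|x-R_j|>D_j}Z^4|x-R_j|^{-4}\,dx=4\pi Z^4D_j^{-1}$.

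Two smaller points concern the explicit constants. Your shift $\Tr(T-V+\nu)_-=\int_\nu^\infty N_{-E}(T-V)\,dE$, followed by the $\vartheta$-split, leaves only $(1-\vartheta)\nu$ to cut off each Coulomb singularity. It gives the $Z^3\nu^{-1/2}$ term with an extra factor $(1-\vartheta)^{-1/2}=(5/3)^{1/2}$ compared with $\frac{5\pi^2}4L_1$. The paper avoids this by truncating $V_\nu=(V-\nu)_+$ before counting, so that all of $\nu$ acts on the Coulomb part. Also, $V\le\sum_jZ/|x-R_j|$ does not let you split $(V-t)_+^{3/2}$ over the centres. You need $V=\max_jZ/|x-R_j|$, which gives $(V-t)_+^{3/2}\le\sum_j(Z/|x-R_j|-t)_+^{3/2}$.
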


\begin{proof}
Write $V_\nu=(V-\nu)_+$ and $b=\frac{\sqrt3}2|B|$. Since $V\le V_\nu+\nu$ and
$0\le\gamma\le1$,
\begin{equation}\label{eq:start}
\Tr\bigl((T-V)\gamma\bigr)\ge-\nu\Tr\gamma-\Tr(T-V_\nu)_-,\qquad
\Tr(T-V_\nu)_-=\int_0^\infty N_{-E}(T-V_\nu)\,dE .
\end{equation}
We bound the counting function separately for $E\le\mu$ and $E\ge\mu$. In both cases the magnetic
and the Coulomb parts of the potential are separated by the elementary inequality, a consequence of Jensen's
inequality,
\begin{equation}\label{eq:thetasplit}
(u+v-t)_+^{3/2}\le2^{1/2}\bigl[(u-\vartheta t)_+^{3/2}+(v-(1-\vartheta)t)_+^{3/2}\bigr],
\qquad u,v,t\ge0 .
\end{equation}

\emph{Step 1: low energies, $E\le\mu$.} By Lemma~\ref{lem:zeeman}, $T-V_\nu\ge p_A^2-(b+V_\nu)$.
Lemma~\ref{lem:count} and~\eqref{eq:thetasplit} with $t=E$ give
\[
N_{-E}(T-V_\nu)\le2^{1/2}qL_C\!\int\Bigl[(b-\vartheta E)_+^{3/2}
+\bigl(V_\nu-(1-\vartheta)E\bigr)_+^{3/2}\Bigr].
\]
For $E\le\mu$ the magnetic term is at most ${2^{1/2}qL_C\int(b-\vartheta E^2/\mu)_+^{3/2}}$, since
$\vartheta E\ge\vartheta E^2/\mu$; the Coulomb term is kept as it is.

\emph{Step 2: high energies, $E\ge\mu$.} Since $T\ge0$ and $V_\nu\le V$, we have
$T-V_\nu\ge(\mu/E)T-V$. By Lemma~\ref{lem:zeeman} again,
\[
(\mu/E)T-V\ge(\mu/E)(1-\eta)p_A^2-(\mu/E)b+\bigl[(\mu/E)\eta\,p_A^2-V\bigr].
\]
The open balls $\mathcal B_j=\mathcal B(R_j,D_j)$ are disjoint and each lies in the Voronoi cell of its
centre (the set of points closer to $R_j$ than to any other centre), so $V=Z/|x-R_j|$ on $\mathcal B_j$.
In the bracketed term, keep only the kinetic energy inside the balls and apply Lemma~\ref{lem:ball} with $a=\eta\mu/E$ (on $\mathcal B_R$ with $R\to\infty$ if $|\mathcal C|=1$). This
gives $(\mu/E)\eta\,p_A^2-V\ge-\widetilde W_E$, where
\[
\widetilde W_E=\frac{Z^2E}{4\eta\mu}+\frac{3Z}{2D_j}\ \text{ on }\mathcal B_j,\qquad
\widetilde W_E=V\ \text{ elsewhere}.
\]
Hence $(\mu/E)T-V\ge(\mu/E)(1-\eta)\bigl(p_A^2-U_E\bigr)$ with
$U_E=(1-\eta)^{-1}(b+\frac E\mu\widetilde W_E)$, and Lemma~\ref{lem:count}, applied at the energy
$E^2/(\mu(1-\eta))$ (since $N_{-E}(cX)=N_{-E/c}(X)$ for $c>0$), gives
\[
N_{-E}(T-V_\nu)\le qL_C(1-\eta)^{-3/2}\!\int\Bigl(b+\tfrac E\mu\widetilde W_E
-\tfrac{E^2}{\mu}\Bigr)_+^{3/2}.
\]
Now use~\eqref{eq:thetasplit} with $t=E^2/\mu$ and pull the factor $(E/\mu)^{3/2}$ out of the
second term:
\[
N_{-E}(T-V_\nu)\le2^{1/2}qL_C(1-\eta)^{-3/2}\Bigl[\int\bigl(b-\tfrac{\vartheta E^2}\mu\bigr)_+^{3/2}
+\bigl(\tfrac E\mu\bigr)^{3/2}\!\int\bigl(\widetilde W_E-(1-\vartheta)E\bigr)_+^{3/2}\Bigr].
\]

\emph{Step 3: integration over $E$.} We extend each $E$-integral to $(0,\infty)$, which only
increases it, and use $(1-\eta)^{-3/2}\ge1$ to combine the magnetic terms of both steps. Three
integrals remain, and each is computed by scaling $E$.

(a) \emph{Magnetic term.} Substituting $E=(\mu b/\vartheta)^{1/2}t$,
\[
\int_0^\infty\bigl(b-\tfrac{\vartheta E^2}\mu\bigr)_+^{3/2}dE=\bigl(\tfrac\mu\vartheta\bigr)^{1/2}J\,b^2,
\qquad J=\int_0^1(1-t^2)^{3/2}dt=\tfrac{3\pi}{16} .
\]%
Since $b^2=\frac34|B|^2$, the magnetic terms contribute at most $qL_2\mu^{1/2}\int|B|^2$.

(b) \emph{Coulomb term, low energies.} Substituting $E=vt/(1-\vartheta)$,
\[
\int_0^\infty\bigl(v-(1-\vartheta)E\bigr)_+^{3/2}dE=\frac2{5(1-\vartheta)}\,v^{5/2},
\]
so this term contributes at most $qL_1\int V_\nu^{5/2}$. Since $V_\nu\le\max_j(Z/|x-R_j|-\nu)_+$,
\[
\int V_\nu^{5/2}\le|\mathcal C|\,4\pi\!\int_0^{Z/\nu}\Bigl(\frac Zr-\nu\Bigr)^{5/2}r^2dr
=4\pi\,\mathrm B(\tfrac12,\tfrac72)\,Z^3\nu^{-1/2}|\mathcal C|=\tfrac{5\pi^2}4Z^3\nu^{-1/2}|\mathcal C|,
\]
where $\mathrm B$ is Euler's Beta function.

(c) \emph{Coulomb term, high energies.} On $\mathcal B_j$,
$\widetilde W_E-(1-\vartheta)E=w_j-\eta'E$ with $w_j=\frac{3Z}{2D_j}$ and
$\eta'=1-\vartheta-\frac{Z^2}{4\eta\mu}$, and the hypothesis on $\mu$ gives
$\eta'\ge\frac34(1-\vartheta)$. Substituting $E=wt/\eta'$,
\[
\int_0^\infty\bigl(\tfrac E\mu\bigr)^{3/2}(w-\eta'E)_+^{3/2}dE=\mu^{-3/2}(\eta')^{-5/2}\mathrm B(\tfrac52,\tfrac52)\,w^4
\le\mu^{-3/2}\bigl(\tfrac4{3(1-\vartheta)}\bigr)^{5/2}\mathrm B(\tfrac52,\tfrac52)\,w^4,
\]%
and $\int_{\mathcal B_j}w_j^4=(4\pi/3)D_j^3w_j^4=(27\pi/4)Z^4D_j^{-1}$. Off the
balls $\widetilde W_E=V$, and the same computation with $1-\vartheta$ in place of $\eta'$ gives
the factor $\mu^{-3/2}(1-\vartheta)^{-5/2}\mathrm B(\frac52,\frac52)$ and
$\int_{\text{cell}_j\setminus\mathcal B_j}V^4\le4\pi Z^4\int_{D_j}^\infty r^{-2}dr=4\pi Z^4D_j^{-1}$. Together
this term contributes at most $qL_3\mu^{-3/2}Z^4\sum_jD_j^{-1}$. (If $|\mathcal C|=1$, the ball is all
of $\R^3$ and this term vanishes, since then $\widetilde W_E-(1-\vartheta)E=-\eta'E\le0$.)

Inserting (a)--(c) into~\eqref{eq:start} proves the lemma.
\end{proof}

\paragraph{Explicit constants.} Collecting the factors in (a)--(c), with
$J=\frac{3\pi}{16}$ as in (a), the constants of the lemma are
\begin{equation}\label{eq:Lconst}
\begin{aligned}
L_1&=\frac{2\sqrt2}5\,\frac{L_C}{1-\vartheta},\qquad
L_2=\tfrac34\sqrt2\,L_C(1-\eta)^{-3/2}\vartheta^{-1/2}J,\\
L_3&=\sqrt2\,L_C(1-\eta)^{-3/2}\mathrm B(\tfrac52,\tfrac52)\Bigl(\bigl(\tfrac4{3(1-\vartheta)}\bigr)^{5/2}\tfrac{27\pi}4
+(1-\vartheta)^{-5/2}4\pi\Bigr).
\end{aligned}
\end{equation}

We can now combine the splitting with the one-body bound. The hypothesis $\kappa\le\kappa_P\varepsilon$
below lets the field energy $\varepsilon\int|B|^2$ pay for the $L_2$-term, with a value of $\mu$ allowed in
Lemma~\ref{lem:onebody}, and lets the Coulomb term pay for the $L_3$-term.

\begin{proposition}[Pauli kinetic inequality]\label{prop:kinP}
Let $\kappa_P$ be as in Theorem~\ref{thm:pauli}. If $0<\kappa\le \kappa_P\,\varepsilon$, then for fermionic
$\Psi$,
\[
\Bigl\langle\Psi,\Bigl[\sum_i(\sigma\cdot p_A)^2_i-\kappa\sum_id_i^{-1}+\varepsilon\!\int|B|^2\Bigr]\Psi\Bigr\rangle
\ge-C_P\kappa^2N\|\Psi\|^2,\qquad C_P=14.43\,q^{2/3}.
\]
\end{proposition}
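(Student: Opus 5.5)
The plan is to combine the random splitting of Lemma~\ref{lem:split} (with $\ell=\infty$) with the one-body bound of Lemma~\ref{lem:onebody}, via Lemma~\ref{lem:slice}. The field energy $\varepsilon\int|B|^2$ will pay for the $L_2$-term. The Coulomb term itself, at a slightly larger coupling, will pay for the $L_3$-term.

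\emph{Setting up the splitting.} Fix a centre probability $p$ and an auxiliary coupling $\kappa'=(1+\delta)\kappa$ with $\delta>0$. Write
\[
-\kappa\sum_id_i^{-1}=-\kappa'\sum_id_i^{-1}+\delta\kappa\sum_id_i^{-1},
\]
keeping the second term as a reservoir. By~\eqref{eq:split} with $\ell=\infty$,
\[
\sum_i(\sigma\cdot p_A)^2_i-\kappa'\sum_id_i^{-1}\ge\frac1{1-p}\,\E\sum_{i\notin\mathcal C}\Bigl((\sigma\cdot p_A)^2_i-Z\,W^\infty_{\mathcal C}(x_i)\Bigr),\qquad Z=\frac{\kappa'}p .
\]
For fixed $\mathcal C$ and fixed centre positions, Lemma~\ref{lem:slice} reduces each summand to $\Tr((T-V)\gamma)$ with $0\le\gamma\le1$ and $\Tr\gamma=N-|\mathcal C|$. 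Lemma~\ref{lem:onebody} then bounds this below by
\[
-\nu(N-|\mathcal C|)-q\Bigl(\tfrac{5\pi^2}4L_1Z^3\nu^{-1/2}|\mathcal C|+L_2\mu^{1/2}\!\int|B|^2+L_3\mu^{-3/2}Z^4\sum_{j\in\mathcal C}(D^{\mathcal C}_j)^{-1}\Bigr).
\]
Taking expectations uses $\E|\mathcal C|=pN$ and the last claim of Lemma~\ref{lem:split}, $\E\sum_{j\in\mathcal C}(D_j^{\mathcal C})^{-1}\le2p\sum_jd_j^{-1}$.

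\emph{Choice of $\mu$ and the two conditions.} I would fix $\mu$ so that the magnetic error exactly matches the field energy:
\[
\frac{qL_2\mu^{1/2}}{1-p}=\varepsilon,\qquad\text{that is,}\qquad \mu=\Bigl(\frac{(1-p)\varepsilon}{qL_2}\Bigr)^2 .
\]
Two constraints then follow.
\begin{enumerate}[label=\textup{(\alph*)},leftmargin=2em,itemsep=1pt]
\item The hypothesis $\mu\ge\frac{100}3Z^2$ of Lemma~\ref{lem:onebody} becomes $\kappa'\lesssim p(1-p)\,\varepsilon/(qL_2)$. This should produce the branch $0.279\,q^{-1}$ of $\kappa_P$.
\item The expected $L_3$-error, $\frac{2p}{1-p}\,qL_3\mu^{-3/2}Z^4\sum_jd_j^{-1}$, must be at most the reservoir $\delta\kappa\sum_jd_j^{-1}$. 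Since $\mu^{-3/2}\propto q^3\varepsilon^{-3}$ and $Z^4\propto\kappa^4$, this reads $q^4\kappa^3\lesssim\delta\,\varepsilon^3$, i.e.\ $\kappa\lesssim q^{-4/3}\varepsilon$. This should produce the branch $0.431\,q^{-4/3}$.
\end{enumerate}

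\emph{Optimizing $\nu$.} The remaining terms per particle are
\[
-\frac{1}{1-p}\Bigl(\nu(1-p)+\tfrac{5\pi^2}4L_1q\,Z^3\nu^{-1/2}p\Bigr)N .
\]
Optimizing in $\nu$ gives a bound of order $(qL_1)^{2/3}Z^2$ times a function of $p$, hence $-C_P\kappa^2N$ with $C_P\propto q^{2/3}$. Finally $p$ and $\delta$ are chosen numerically to minimize $C_P$ while keeping both branches of $\kappa_P$, and all bounds are multiplied by $\|\Psi\|^2$ in the obvious way.

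\emph{Main obstacle.} I expect the main obstacle to be step (b): the absorption of the $L_3$-error. There are no repelling nuclei here, so this error can only be paid by the Coulomb term itself. This works because the error is cubic in $\kappa$ relative to $\kappa\sum_jd_j^{-1}$, but it forces the $q^{-4/3}$ smallness condition. The numerical bookkeeping then has to balance $p$, $\delta$ and $\mu$ against each other to recover the stated constants $0.431$, $0.279$ and $14.43$.
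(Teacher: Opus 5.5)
Your proposal is correct and is essentially the paper's proof. The ingredients match: random splitting with $\ell=\infty$, Lemma~\ref{lem:onebody} applied to the satellites via Lemma~\ref{lem:slice}, $\mu$ chosen so that the $L_2$-term cancels the field energy, and a slightly enlarged coupling whose excess absorbs the $L_3$-error from the centres. The paper simply fixes $p=\frac12$ and $1+\delta=s=\frac54$. With these values your conditions (a) and (b) give exactly $0.279\,q^{-1}$ and $0.431\,q^{-4/3}$, and the optimization in $\nu$ gives $3(5\pi^2qL_1s^3)^{2/3}\kappa^2\le14.43\,q^{2/3}\kappa^2$.
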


\begin{proof}
Write $T_i=(\sigma\cdot p_A)_i^2$ and $s=\frac54$. The idea is to prove the bound with the
slightly larger coupling $s\kappa$, and to use the excess $(s-1)\kappa\sum_id_i^{-1}$ at
the end to pay for the error from the centres.

\emph{Step 1: splitting.} Let $Z=2s\kappa$. By~\eqref{eq:split} with centre probability $\frac12$ and $\ell=\infty$,
\[
\sum_iT_i-s\kappa\sum_id_i^{-1}+\varepsilon\!\int|B|^2
\ge\E\Bigl[2\sum_{i\notin\mathcal C}\bigl(T_i-ZW_{\mathcal C}^\infty(x_i)\bigr)+\varepsilon\!\int|B|^2\Bigr].
\]

\emph{Step 2: fixed centres.} Fix $\mathcal C$ and the centres, and apply Lemma~\ref{lem:onebody} to the
one-body density matrix of the satellites (Lemma~\ref{lem:slice}), whose trace is $N-|\mathcal C|$.
Choose $\mu=(\varepsilon/(2qL_2))^2$, so that $2qL_2\mu^{1/2}\int|B|^2$ cancels the field energy. With
$Z=2s\kappa$, the hypothesis $\mu\ge Z^2/(\eta(1-\vartheta))$ of the lemma reads
\[
\frac\kappa\varepsilon\le\frac{\sqrt{\eta(1-\vartheta)}}{4qL_2s},
\]
and the right side is at least $0.279\,q^{-1}$. The bracket in Step 1 is then at least
\[
-2\nu(N-|\mathcal C|)-\tfrac{5\pi^2}2qL_1Z^3\nu^{-1/2}|\mathcal C|-2qL_3\mu^{-3/2}Z^4\sum_{j\in\mathcal C}(D_j^{\mathcal C})^{-1}.
\]

\emph{Step 3: averaging.} We have $\E|\mathcal C|=\E(N-|\mathcal C|)=N/2$ and, by
Lemma~\ref{lem:split} with centre probability $\frac12$, $\E\sum_{j\in\mathcal C}(D_j^{\mathcal C})^{-1}\le\sum_jd_j^{-1}$.
With $Z^3=8s^3\kappa^3$ and $Z^4=16s^4\kappa^4$,
\[
\sum_iT_i-s\kappa\sum_id_i^{-1}+\varepsilon\!\int|B|^2\ge-N\bigl[\nu+10\pi^2qL_1s^3\kappa^3\nu^{-1/2}\bigr]
-32qL_3\mu^{-3/2}s^4\kappa^4\sum_id_i^{-1}.
\]

\emph{Step 4: absorption.} Add $(s-1)\kappa\sum_id_i^{-1}$ to both sides. The last term is
absorbed if $32qL_3\mu^{-3/2}s^4\kappa^4\le(s-1)\kappa$, that is, since
$\mu^{3/2}=(\varepsilon/(2qL_2))^3$,
\[
\Bigl(\frac\kappa\varepsilon\Bigr)^3\le\frac{s-1}{32qL_3(2qL_2)^3s^4},
\]
and the cube root of the right side is at least $0.431\,q^{-4/3}$. Both conditions on
$\kappa/\varepsilon$ hold when $\kappa\le\kappa_P\varepsilon$. Finally, $\nu+a\nu^{-1/2}$ with
$a=10\pi^2qL_1s^3\kappa^3$ is minimal at $\nu=(a/2)^{2/3}$, where it equals
$3(a/2)^{2/3}=3(5\pi^2qL_1s^3)^{2/3}\kappa^2\le C_P\kappa^2$.
\end{proof}

\subsection{Proofs of Theorems~\ref{thm:spinless} and~\ref{thm:pauli} and of Proposition~\ref{prop:firstkind}}\label{sec:proofs12}

The proofs of Theorems~\ref{thm:spinless} and~\ref{thm:pauli} are now short. They use that $\alpha V_A\ge-3\alpha Q^2\sum_id_i^{-1}$ by
Proposition~\ref{prop:electro}, and that $\frac1{2m_i}T_i\ge\frac1{2M}T_i$ for both $T_i=p_{A,i}^2$ and $T_i=(\sigma\cdot p_A)_i^2$, since kinetic energies are
nonnegative.

\begin{proof}[Proof of Theorem~\ref{thm:spinless}]
$H_S\ge\frac1{2M}\bigl[\sum_ip_{A,i}^2-\kappa\sum_id_i^{-1}\bigr]$ with $\kappa=6M\alpha Q^2$.
Proposition~\ref{prop:kin0} gives
$H_S\ge-\frac1{2M}\,c\,(8\pi Lq)^{2/3}(6M\alpha Q^2)^2N=-18c\,(8\pi Lq)^{2/3}M\alpha^2Q^4N$, and
$18c<104.5$.
\end{proof}

\begin{proof}[Proof of Theorem~\ref{thm:pauli}]
$H_P\ge\frac1{2M}\bigl[\sum_i(\sigma\cdot p_A)_i^2-\kappa\sum_id_i^{-1}+\varepsilon\int|B|^2\bigr]$ with
$\kappa=6M\alpha Q^2$ and $\varepsilon=M/(4\pi\alpha)$. So $\kappa/\varepsilon=24\pi\alpha^2Q^2$,
and Proposition~\ref{prop:kinP} gives $H_P\ge-18C_P\,M\alpha^2Q^4N$, where
$18C_P\le260\,q^{2/3}$.
\end{proof}

\begin{proof}[Proof of Proposition~\ref{prop:firstkind}]
Let $b=\frac{\sqrt3}2|B|\in L^2$. By~\eqref{eq:pauli} and
Lemma~\ref{lem:zeeman}, $(\sigma\cdot p_A)^2\ge p_A^2-b$. For $E>0$, H\"older's inequality gives
$\|b\1(b>E)\|_{3/2}\le\|b\|_2\,|\{b>E\}|^{1/6}$, which tends to $0$ as $E\to\infty$. By Sobolev's
inequality and the diamagnetic inequality,
$\langle u,b\,u\rangle\le E\|u\|^2+C\|b\1(b>E)\|_{3/2}\|\nabla_Au\|^2$. So $b$ is form-bounded with
respect to $p_A^2$ with relative bound $0$. By~\eqref{eq:Kbound} and the singular-value argument in the proof of Lemma~\ref{lem:zeeman}, which
bounds the norm of $\sum_{a,b}K_A^{ab}(x,y)\,\tau^a\otimes\tau^b$ by $3\|K_A(x,y)\|$, we have
$|V_A|\le3Q^2\sum_{i<j}|x_i-x_j|^{-1}$. By the diamagnetic inequality and Hardy's inequality, $|V_A|$ is
therefore also form-bounded with respect to $\sum_ip_{A,i}^2$, with relative bound $0$. Hence $\sum_i(2m_i)^{-1}(\sigma\cdot p_A)^2_i+\alpha V_A$ is bounded below, with a
bound depending on $A$, $N$ and $\alpha$.
\end{proof}

\begin{remark}[Static colour-neutral nuclei: the model and a sketch of the proof]\label{rem:nucleiproof}
In the model of Remark~\ref{rem:nuclei}, the gauge group is $SU(2)\times U(1)$, with couplings $\alpha$
and $\alpha'$. The particles
are as in Section~\ref{sec:model} and carry in addition $U(1)$ charges $z_i\in\R$, the same within
each species. There are also $K$ static nuclei at $R_1,\dots,R_K$, with $U(1)$ charges $Z_k>0$
and no colour. In the kinetic energy of particle $i$, $\nabla_A$ is replaced by $\nabla+A+iz_ia$,
where $a$ is a real vector potential, so that $H_S$ becomes
\begin{gather*}
H_S=\sum_{i=1}^N\frac{(p_{A,i}+z_ia(x_i))^2}{2m_i}+\alpha V_A+\alpha'W,\\
W=\sum_{i<j}\frac{z_iz_j}{|x_i-x_j|}+\sum_{i,k}\frac{z_iZ_k}{|x_i-R_k|}+\sum_{k<l}\frac{Z_kZ_l}{|R_k-R_l|}.
\end{gather*}
Here $\alpha'W$ is the abelian Coulomb energy of all the charges $z_i$ and $Z_k$, including the repulsion
between the nuclei. Unlike $V_A$, it does not depend on the fields.

For $H_S$, stability survives: uniformly in $A$, in $a$ and in the positions of the nuclei, and
for all $\alpha$ and $\alpha'$. We sketch the argument. Split the kinetic energy into two halves. The first
half, together with $\alpha V_A$, is bounded below by $-2C_1\,M\alpha^2Q^4N$ by the proof of
Theorem~\ref{thm:spinless}. Indeed, Lemma~\ref{lem:dom} holds for the combined connection, and
since the connection now depends on $z_i$, Lemma~\ref{lem:LT} is applied species by species, through
Lemma~\ref{lem:slice}, which changes nothing because the one-body bounds are linear in the number of internal states. In
the second half, drop the kinetic energy of the particles with $z_i\ge0$ and treat them as further
static nuclei. For fixed $\Psi$ and fields, regard $\langle\Psi,W\Psi\rangle$ as a function of independent charges, one
for each particle and each nucleus; since $W$ contains no squares of charges, it is affine in each of them. As in~\cite[Proposition~3.1]{LS10}, it is therefore bounded below
by its minimum over all choices in which each negative charge is $0$ or $-z$ and each positive charge
is $0$ or $Z$, where $z$ and $Z$ are the largest magnitudes of the negative and positive charges.
Particles with charge $0$ drop out of $W$, and their kinetic energy can be dropped. After a factor
$z^2$ is taken out, what remains is the abelian problem of electrons and static nuclei, with coupling
$\alpha'z^2$ and nuclear charges $Z/z$; for fixed positions of the other particles, $\Psi$ is
antisymmetric in the electrons of each species. Its proof in~\cite[Sect.~7.1]{LS10} uses only the Lieb--Thirring inequality and electrostatic inequalities, and
Lemma~\ref{lem:LT}, applied species by species, supplies the former for the combined connection.
This gives a lower bound $-C(N+K)$, with $C$ depending on the masses, the charges, $\alpha'$
and $q$.

For $H_P$ and $H_R$ the same splitting reduces the question to the abelian problem with static
nuclei, but with the kinetic energy $(\sigma\cdot p_A)^2$ or $|p_A|$ of a non-abelian connection;
for $H_R$ the splitting also halves the admissible nuclear coupling. We expect the abelian
arguments~\cite{LLS95,LY88} to carry over with the tools of Section~\ref{sec:tools}, but we have
not checked the details.
\end{remark}

\section{Instability: proofs of Theorems~\ref{thm:collapse} and~\ref{thm:bosons}}\label{sec:collapse}

The collapse of Theorem~\ref{thm:collapse} comes from the zero modes of the Pauli operator found by Loss and Yau~\cite[Sect.~II]{LY86}.
We embed their abelian field in a $U(1)$ subgroup of $SU(2)$, put the two
particles in zero modes, choose their colours so that they attract, and dilate. The kinetic energy is
zero throughout, and a dilation by a length $\ell$ multiplies both the field energy and the
Coulomb energy by $\ell^{-1}$. For large $\alpha$ the Coulomb energy wins.

We first recall the zero mode. Let
\[
\psi_0(x)=(1+|x|^2)^{-3/2}(1+i\sigma\cdot x)\binom10,\qquad a(x)=-\frac{3}{1+|x|^2}\,h(x),\qquad
h=\frac{\psi_0^\dagger\sigma\psi_0}{|\psi_0|^2}.
\]
A direct computation gives $\sigma\cdot(p+a)\psi_0=0$. Moreover $|\psi_0|^2=(1+|x|^2)^{-2}$,
$\|\psi_0\|^2=\pi^2$, and $\curl a=-12(1+|x|^2)^{-2}h$. Here $h$ is a unit vector field (the Hopf
map), since $|\psi^\dagger\sigma\psi|=|\psi|^2$ for every spinor $\psi$, so $|\curl a|^2=144(1+|x|^2)^{-4}$ and $\int|\curl a|^2=18\pi^2$. Let $\rho=|\psi_0|^2/\pi^2$, a probability density, and
\[
I_{\LY}=\iint\frac{\rho(x)\rho(y)}{|x-y|}\,dx\,dy .
\]
By Newton's theorem, $I_{\LY}=\int_0^\infty\bigl(\tfrac2\pi F(r)\bigr)^2r^{-2}dr$, where
$\frac2\pi F(r)$ is the charge inside radius $r$, with $F(r)=\arctan r-\frac r{1+r^2}$ and
$F'=2r^2(1+r^2)^{-2}$. In fact $I_{\LY}=1/\pi$: two integrations by parts give
$\int_0^\infty F^2r^{-2}dr=4\int_0^\infty Fr(1+r^2)^{-2}dr=4\int_0^\infty r^2(1+r^2)^{-3}dr=\pi/4$.
The zero modes lie in the form domain of the Pauli operator, and $C_c^\infty$ is a form core. Since $a$
is bounded, $\|\sigma\cdot(p+a)\psi\|$ and $\|p\psi\|$ differ by at most $\|a\|_\infty\|\psi\|$, so this form
domain is $H^1$. On $H^1$ the form of $\alpha V_A$ is continuous, since $|V_A|\le3Q^2/|x_1-x_2|$ (see the proof of Proposition~\ref{prop:firstkind}) and Hardy's inequality applies. So the trial states below are admissible.

\begin{proof}[Proof of Theorem~\ref{thm:collapse}]
Now take $A_j=ia_j\tau^3$, which is smooth and $\su$-valued; as a vector in $\R^3$, $A_j=(0,0,-2a_j)$.
Since the components of $A$ commute, $B_l=(0,0,-2(\curl a)_l)$ and $|B|^2=4|\curl a|^2$. On colour-up states $p_A$ acts as $p+a$,
and on colour-down states as $p-a$. The time reversal $\tilde\psi_0=\sigma_2\overline{\psi_0}$ of $\psi_0$ is a zero
mode of $\sigma\cdot(p-a)$, since $\sigma_2\overline{\sigma_j}\sigma_2=-\sigma_j$, and
$|\tilde\psi_0|=|\psi_0|$.

If $e_1e_2>0$, put particle~1 in $\pi^{-1}\psi_0\otimes|{\uparrow}\rangle$ and particle~2 in
$\pi^{-1}\tilde\psi_0\otimes|{\downarrow}\rangle$; if the two particles belong to the same species,
take the Slater determinant of these two orbitals, which are orthogonal. If $e_1e_2<0$, the particles have
different charges, hence belong to different species, and we put both in
$\pi^{-1}\psi_0\otimes|{\uparrow}\rangle$. The kinetic energy vanishes. Because only the third components of the $A_j$ are nonzero, the
covariant Laplacian in the adjoint representation is block diagonal, so $K_A^{33}=1/(4\pi|x-y|)$ and
$K_A^{3c}=K_A^{c3}=0$ for $c\ne3$. For the product states, the colour expectations of $\tau^{1,2}$ vanish, and those of $\tau^3$ are
$\pm1$, so $\langle\alpha V_A\rangle=-\alpha|e_1e_2|I_{\LY}$.

For the Slater determinant, let $\phi_1=\pi^{-1}\psi_0\otimes|{\uparrow}\rangle$ and
$\phi_2=\pi^{-1}\tilde\psi_0\otimes|{\downarrow}\rangle$, and let $\gamma(x,y)=\sum_k\phi_k(x)\phi_k(y)^\dagger$ be the
kernel of its one-body density matrix, a $4\times4$ matrix on spin and colour. Then
\[
\begin{aligned}
\langle V_A\rangle=2\pi e_1e_2\iint\sum_{a,b}K_A^{ab}(x,y)\Bigl(&\tr\bigl[\tau^a\gamma(x,x)\bigr]\tr\bigl[\tau^b\gamma(y,y)\bigr]\\
&-\tr\bigl[\tau^a\gamma(x,y)\,\tau^b\gamma(y,x)\bigr]\Bigr)\,dx\,dy .
\end{aligned}
\]
Since $|\tilde\psi_0|=|\psi_0|$, both orbitals have density $\rho$, so $\tr[\tau^a\gamma(x,x)]=\rho(x)\tr\tau^a=0$
and the first term, the direct term, vanishes. In the second, the exchange term,
\[
\tr\bigl[\tau^a\gamma(x,y)\,\tau^b\gamma(y,x)\bigr]=\sum_{k,l}\bigl(\phi_l(x)^\dagger\tau^a\phi_k(x)\bigr)\bigl(\phi_k(y)^\dagger\tau^b\phi_l(y)\bigr).
\]
The terms with $k\ne l$ vanish, because each contains the factor $\phi_1(x)^\dagger\tau^a\phi_2(x)$ or its
complex conjugate, and
$\phi_1(x)^\dagger\tau^a\phi_2(x)=\pi^{-2}\langle{\uparrow}|\tau^a|{\downarrow}\rangle\,\psi_0(x)^\dagger\sigma_2\overline{\psi_0(x)}=0$,
since $v^{\rm T}\sigma_2v=0$ for every $v\in\C^2$, the matrix $\sigma_2$ being antisymmetric. The terms with $k=l$ give $2\delta^{a3}\delta^{b3}\rho(x)\rho(y)$, since
$\langle{\uparrow}|\tau^a|{\uparrow}\rangle=-\langle{\downarrow}|\tau^a|{\downarrow}\rangle=\delta^{a3}$. Hence
$\langle\alpha V_A\rangle=-4\pi\alpha e_1e_2\iint K_A^{33}(x,y)\rho(x)\rho(y)\,dx\,dy=-\alpha e_1e_2I_{\LY}$, as for
the product state.

The field energy is
$\frac1{8\pi\alpha}\int4|\curl a|^2=\frac{9\pi}{\alpha}$. Dilating $\psi\to\ell^{-3/2}\psi(\cdot/\ell)$
and $a\to\ell^{-1}a(\cdot/\ell)$ preserves the zero modes and multiplies both energies by $\ell^{-1}$:
\[
\langle H_P\rangle_\ell=\ell^{-1}\Bigl[\frac{9\pi}{\alpha}-\alpha|e_1e_2|I_{\LY}\Bigr]\to-\infty
\quad(\ell\to0)\ \text{ if }\ \alpha^2|e_1e_2|>\frac{9\pi}{I_{\LY}}=9\pi^2 .
\]
Without the field energy, the bracket would be $-\alpha|e_1e_2|I_{\LY}<0$, and the energy would be
unbounded below for every $\alpha>0$.
\end{proof}

Finally, we turn to bosons. The idea is that colour can imitate the
two-component charged Bose gas, whose $N^{7/5}$ instability is known.

\begin{proof}[Proof of Theorem~\ref{thm:bosons}]
Take trial states in which every particle of the first species is colour-up and every particle of
the second species is colour-down, with a spatial wave function $\Psi$ that is symmetric within each
species. This is admissible for bosons. As in the proof of Theorem~\ref{thm:collapse}, all colour-flip terms have
vanishing expectation: within a species all particles have the same colour, and particles of
different species are distinguishable. Moreover $\langle\tau^3_{(i)}\tau^3_{(j)}\rangle=\pm1$ for equal or opposite
colours. Hence $\langle H_S\rangle$ equals the energy of $\Psi$ in the abelian two-component Bose gas:
charges $\pm e$, pair potential $\pm\alpha e^2/|x|$, masses $m_1,m_2$. Replacing both masses by
$m_{\min}$ only increases the kinetic energy. Dyson~\cite{Dy67} showed that this abelian system has
ground state energy at most $-c\,m_{\min}\,\tilde e^{\,4}N^{7/5}$, where $\tilde e^{\,2}$ is the strength of
the pair potential; here $\tilde e^{\,2}=\alpha e^2$. The sharp asymptotics were later established
in~\cite{LSo04} (lower bound) and~\cite{So06} (upper bound); for unequal masses see~\cite{SS22}.
\end{proof}

\section{Relativistic kinetic energy and the Dirac operator}\label{sec:rel}

For the relativistic models we follow the same route, but two of the tools have to be replaced. The Lieb--Thirring inequality for $p_A^2$ becomes the Daubechies inequality~\cite{Da83} for
$|p_A|$ (Lemma~\ref{lem:daub}), and the hydrogen atom in a ball (Lemma~\ref{lem:ball}) becomes Lieb and Yau's relativistic Hardy inequality in a ball (Lemma~\ref{lem:relhardy}). The second replacement requires more
work. The operator $|p_A|$ is not local, so `keeping the kinetic energy inside the balls' needs a
definition. The restricted form below supplies it, and it is superadditive over disjoint sets
(Lemma~\ref{lem:superadd}). The Dirac operator is then
reduced to $|p_A|$ by the inequality of Birman, Koplienko and Solomyak (Lemma~\ref{lem:bks}), as
in~\cite{LSS97}. The pieces are used as follows. Lemmas~\ref{lem:superadd}, \ref{lem:relhardy}
and~\ref{lem:daub} give the relativistic fixed-centre bound (Lemma~\ref{lem:relonebody}), hence
Proposition~\ref{prop:relkin} and Theorem~\ref{thm:rel}. Lemmas~\ref{lem:bks} and~\ref{lem:daub}
then handle the Dirac operator and prove Theorem~\ref{thm:dirac}. Section~\ref{sec:free}, on the
free projection, is independent of both.

Throughout this section $|p_A|=(p_A^2)^{1/2}$ on $L^2(\R^3;\C^q)$. For a Borel set $\Omega$ and a
real function $v$, define the restricted form
\[
\langle v,|p|_\Omega v\rangle=\frac1{2\pi^2}\iint_{\Omega\times\Omega}\frac{|v(x)-v(y)|^2}{|x-y|^4}\,dx\,dy .
\]
For $\Omega=\R^3$ this is $\langle v,|p|v\rangle$, by the heat kernel formula in the next proof with
$A=0$.

\begin{lemma}[Superadditivity of the restricted form]\label{lem:superadd}
For $u$ in the form domain of $|p_A|$ and disjoint Borel sets $\Omega_j$,
\[
\langle u,|p_A|u\rangle\ \ge\ \sum_j\langle|u|,|p|_{\Omega_j}|u|\rangle .
\]
\end{lemma}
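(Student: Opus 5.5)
The plan is to write the quadratic form of $|p_A|$ through the heat semigroup and then use the pointwise domination of Lemma~\ref{lem:dom}. The starting point is the spectral identity
\[
\lambda^{1/2}=\frac1{2\sqrt\pi}\int_0^\infty\bigl(1-e^{-t\lambda}\bigr)t^{-3/2}\,dt ,
\]
which gives
\[
\langle u,|p_A|u\rangle=\frac1{2\sqrt\pi}\int_0^\infty t^{-3/2}\bigl(\|u\|^2-\langle u,e^{t\Delta_A}u\rangle\bigr)\,dt .
\]
This holds for $u$ in the form domain, by the spectral theorem and monotone convergence. Next we write $\|u\|^2-\langle u,e^{t\Delta_A}u\rangle$ as a double integral against the heat kernel. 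Set $k_t(x,y)=(4\pi t)^{-3/2}e^{-|x-y|^2/4t}$, so that $\int k_t(x,y)\,dy=1$. Then
\[
\|u\|^2-\langle u,e^{t\Delta_A}u\rangle
=\iint k_t(x,y)|u(x)|^2\,dx\,dy-\iint\operatorname{Re}\langle u(x),e^{t\Delta_A}(x,y)u(y)\rangle\,dx\,dy .
\]
The operator bound $\|e^{t\Delta_A}(x,y)\|\le k_t(x,y)$ from Lemma~\ref{lem:dom} gives
\[
\operatorname{Re}\langle u(x),e^{t\Delta_A}(x,y)u(y)\rangle\le k_t(x,y)|u(x)||u(y)| .
\]
Symmetrizing the first term in $x,y$ then yields the pointwise-in-$t$ bound
\[
\|u\|^2-\langle u,e^{t\Delta_A}u\rangle\ \ge\ \frac12\iint k_t(x,y)\bigl(|u(x)|-|u(y)|\bigr)^2\,dx\,dy .
\]

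The integrand on the right is nonnegative, so we may discard the complement of $\bigcup_j\Omega_j\times\Omega_j$. Since the $\Omega_j$ are disjoint, the sets $\Omega_j\times\Omega_j$ are disjoint, and the restricted integral is therefore at least $\sum_j\frac12\iint_{\Omega_j\times\Omega_j}k_t(|u(x)|-|u(y)|)^2$. Integrating in $t$ with weight $t^{-3/2}/(2\sqrt\pi)$ and applying Tonelli, we need
\[
\frac1{2\sqrt\pi}\int_0^\infty t^{-3/2}\,\tfrac12\,k_t(x,y)\,dt=\frac1{2\pi^2|x-y|^4}.
\]
This follows from the substitution $s=|x-y|^2/4t$: the integral becomes $\frac{1}{4\sqrt\pi}\,\pi^{-3/2}\cdot 2^{5}\,\Gamma(2)\cdot 2^{-5}\cdot 2\,|x-y|^{-4}$. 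One should check this constant carefully against the $A=0$, $\Omega=\R^3$ case, where the two sides must agree exactly, since that identity is what the paper invokes for $\langle v,|p|v\rangle$. Fixing the constant that way also confirms the definition of the restricted form. The result is $\langle u,|p_A|u\rangle\ge\sum_j\langle|u|,|p|_{\Omega_j}|u|\rangle$.

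The main technical point, though not a deep one, is justifying the heat-kernel representation for a general $u$ in the form domain. The issue is that the separate double integrals $\iint k_t|u(x)|^2$ and $\iint e^{t\Delta_A}(x,y)$ are finite for each $t$ when $u\in L^2$, by Young's inequality and the Gaussian bound, but their $t$-integrals do not converge separately. The way around this is to keep the difference together for each fixed $t$, derive the pointwise-in-$t$ inequality there, and only then integrate in $t$. Because every integrand is nonnegative at that stage, monotone convergence handles the $t$-integral. No smoothness of $u$ is needed, so the argument also covers rough $A$ once Lemma~\ref{lem:dom} is available.
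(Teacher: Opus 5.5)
Your argument is correct and is essentially the paper's proof: the subordination formula for $E^{1/2}$, the domination $\|e^{t\Delta_A}(x,y)\|\le e^{t\Delta}(x,y)$ from Lemma~\ref{lem:dom} giving the pointwise lower bound $\frac12e^{t\Delta}(x,y)(|u(x)|-|u(y)|)^2\ge0$, restriction to $\bigcup_j\Omega_j\times\Omega_j$, and the time integral $\frac1{2\sqrt\pi}\int_0^\infty t^{-3/2}e^{t\Delta}(x,y)\,dt=\pi^{-2}|x-y|^{-4}$. With your extra factor $\frac12$ this gives exactly the constant $\frac1{2\pi^2}$ of the restricted form. Like the paper, you keep the two heat-kernel terms together at fixed $t$ as a single nonnegative integrand before integrating in $t$, which is what makes the argument rigorous.
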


\begin{proof}
Use $E^{1/2}=\frac1{2\sqrt\pi}\int_0^\infty(1-e^{-tE})t^{-3/2}dt$ for $E\ge0$. Since
$\int e^{t\Delta}(x,y)\,dy=1$,
\begin{gather*}
\langle u,(1-e^{t\Delta_A})u\rangle=\iint I_t(x,y)\,dx\,dy,\\
I_t=\tfrac12\bigl[e^{t\Delta}(x,y)(|u(x)|^2+|u(y)|^2)-2\operatorname{Re}\langle u(x),e^{t\Delta_A}(x,y)u(y)\rangle\bigr].
\end{gather*}
By Lemma~\ref{lem:dom}, $I_t\ge\frac12e^{t\Delta}(x,y)(|u(x)|-|u(y)|)^2\ge0$. So we may restrict the
integration to $\bigcup_j\Omega_j\times\Omega_j$. Finally,
$\frac1{2\sqrt\pi}\int_0^\infty t^{-3/2}e^{t\Delta}(x,y)\,dt=\pi^{-2}|x-y|^{-4}$.
\end{proof}

The next lemma, due to Lieb and Yau, is the relativistic Hardy inequality $|p|\ge\frac2\pi|x|^{-1}$
of Kato and Herbst~\cite[Remark~V.5.12]{Ka76}, \cite[Theorem~2.1]{He77}, localized to a ball. It keeps the sharp constant $2/\pi$; the price of localization is an error of
order $1/R$.

\begin{lemma}[Relativistic Hardy inequality in a ball~{\cite[Theorem~7]{LY88}}]\label{lem:relhardy}
Let $\mathcal B=\mathcal B(y,R)$ and let $v\in L^2(\mathcal B)$ be real, with $\langle v,|p|_{\mathcal B}v\rangle<\infty$. Then
\[
\langle v,|p|_{\mathcal B}v\rangle\ge\frac2\pi\int_{\mathcal B}\frac{|v|^2}{|x-y|}
-\frac1R\int_{\mathcal B}\Upsilon\Bigl(\frac{|x-y|}R\Bigr)|v|^2 ,
\]
where, for $0<r<1$,
\[
\Upsilon(r)=\frac2{\pi(1+r)}+\frac{1+3r^2}{\pi(1+r^2)r}\ln(1+r)-\frac{1-r^2}{\pi(1+r^2)r}\ln(1-r)
-\frac{4r}{\pi(1+r^2)}\ln r .
\]
Each term of $\Upsilon$ is positive, and $\int_{|x|<1}\Upsilon(|x|)^4dx\le7.625$.
\end{lemma}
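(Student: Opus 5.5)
The inequality is Lieb and Yau's~\cite[Theorem~7]{LY88}, and the plan is to reproduce their ground-state substitution argument in the ball. Take $y=0$, and by scaling ($x\mapsto Rx$ multiplies both sides by $R^{-1}$) reduce to $R=1$. The key identity is a ground state representation for the restricted form. For a positive function $h$ on $\mathcal B$ and real $v=hf$,
\[
|v(x)-v(y)|^2=h(x)h(y)|f(x)-f(y)|^2+\bigl(h(x)-h(y)\bigr)\bigl(h(x)f(x)^2-h(y)f(y)^2\bigr).
\]
Dropping the first, nonnegative term and symmetrizing gives
\[
\langle v,|p|_{\mathcal B}v\rangle\ge\int_{\mathcal B}W(x)|v(x)|^2dx,\qquad
W(x)=\frac1{\pi^2h(x)}\,\mathrm{p.v.}\!\int_{\mathcal B}\frac{h(x)-h(y)}{|x-y|^4}\,dy,
\]
provided the principal value exists; this needs justification for $v$ with finite form, via a cutoff $|x-y|>\epsilon$ and monotone convergence.

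The choice is $h(x)=|x|^{-2}$, the choice that gives the sharp Kato--Herbst constant on $\R^3$: there $\frac1{\pi^2}\mathrm{p.v.}\int_{\R^3}(|x|^{-2}-|y|^{-2})|x-y|^{-4}dy=\frac2\pi|x|^{-1}\cdot|x|^{-2}$. Restricting the $y$-integral to $\mathcal B$ removes the contribution of $|y|>1$. Since $h(x)-h(y)>0$ there when $|x|<1$, that removed piece, $\frac1{\pi^2}\int_{|y|>1}(|x|^{-2}-|y|^{-2})|x-y|^{-4}dy$, is positive. Hence
\[
W(x)=\frac2\pi|x|^{-1}-|x|^{2}\cdot\frac1{\pi^2}\int_{|y|>1}\frac{|x|^{-2}-|y|^{-2}}{|x-y|^4}dy .
\]
Here I would either first verify the full-space identity via Fourier transform or the known Hardy computation, or better compute $W$ directly in the ball. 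The remaining step is to evaluate the exterior integral explicitly in polar coordinates. The angular integral $\int_{S^2}|x-s\omega|^{-4}d\omega=4\pi/(s^2-r^2)^2$ with $r=|x|$, and the radial integral $\int_1^\infty (r^{-2}-s^{-2})(s^2-r^2)^{-2}s^2\,ds$ is elementary via partial fractions in $s$. This should produce exactly the logarithms $\ln(1\pm r)$, $\ln r$ and rational factors $(1+r^2)^{-1}$ in $\Upsilon(r)$. If the $(1+r^2)$ denominators do not come out, that signals a different $h$. Lieb--Yau in fact use a modified trial function, and I would then try $h(x)=|x|^{-2}+1$, whose mirror behaviour under $r\mapsto1/r$ plausibly produces $1+r^2$, and redo the computation.

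The final claims are checked separately. Positivity of each term of $\Upsilon$ is immediate for $0<r<1$ ($\ln(1-r)<0$, $\ln r<0$). The numerical bound $\int_{|x|<1}\Upsilon^4\le7.625$ is a one-dimensional integral $4\pi\int_0^1\Upsilon(r)^4r^2dr$. I would bound it by rigorous numerical quadrature, after noting that $\Upsilon$ is bounded: as $r\to0$ the log terms tend to constants ($\ln(1+r)/r\to1$) and $r\ln r\to0$, while near $r=1$ only the integrable $\ln(1-r)$ factor, multiplied by $1-r^2$, appears. The main obstacle is the exact evaluation of $W$ with the right $h$ so that it matches the stated $\Upsilon$, together with justifying the principal-value manipulation for general $v$ with finite restricted form.
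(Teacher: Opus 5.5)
\textbf{Genuine gap: the choice of $h$.} The paper gives no proof of its own here; it cites \cite[Theorem~7]{LY88} and \cite[(2.27)]{LY88} and reports a numerical check. Several parts of your framework are right: the ground-state substitution, your algebraic identity, the formula for $W$, and the angular integral $4\pi/(s^2-r^2)^2$. But the central step fails, because $h(x)=|x|^{-2}$ does not produce the Kato--Herbst constant. In $\R^3$, $|x|^{-2}$ is a multiple of the Green function of $|p|=(-\Delta)^{1/2}$. So the full-space identity you quote is false: for $x\ne0$ its left side is
\[
\frac{4}{\pi r^2}\,\mathrm{p.v.}\!\int_0^\infty\frac{ds}{s^2-r^2}=0 .
\]
More generally, $|p|\,|x|^{-\beta}=\Phi(\beta)|x|^{-1-\beta}$ for $0<\beta<2$. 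Here $\Phi$ is maximal at $\beta=1$, where it equals $2/\pi$, and $\Phi(\beta)\to0$ as $\beta\to2$. With your $h$, therefore, $W$ is $0$ minus a positive exterior term, hence negative everywhere, and no $\frac2\pi|x|^{-1}$ term appears. Your fallback $h=|x|^{-2}+1$ fails in the same way, since constants are annihilated too.

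\textbf{The weight that works.} The weight that gives exactly the stated $\Upsilon$ is $h(x)=|x|^{-1}+|x|/R^2$. For $R=1$ this is $h(r)=r^{-1}+r$, which is symmetric under $r\mapsto1/r$. For this $h$ the full-space route is unavailable: $h$ grows, and $\int_{\R^3}(h(x)-h(z))|x-z|^{-4}dz$ diverges logarithmically at infinity. You must compute directly in the ball:
\[
W(x)=\frac{4}{\pi h(r)}\,\mathrm{p.v.}\!\int_0^1\frac{(h(r)-h(s))s^2}{(s^2-r^2)^2}\,ds .
\]
Partial fractions give $\frac{2}{\pi r^2(1+r)}-\frac1{\pi r^2}\ln\frac{1+r}{1-r}$ from the $r^{-1}$ part of $h$. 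They give $\frac{2}{\pi(1+r)}-\frac1\pi\ln(1-r)-\frac3\pi\ln(1+r)+\frac4\pi\ln r$ from the $r$ part. Multiplying the sum by $r/(1+r^2)$ gives exactly $\frac2{\pi r}-\Upsilon(r)$.

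The simpler choice $h=|x|^{-1}$ also gives the sharp constant. Its error term, however, is $\frac{2}{\pi(1+r)}+\frac1{\pi r}\ln\frac{1+r}{1-r}$. This differs from $\Upsilon$ by $\frac{2r}{\pi(1+r^2)}\ln\frac{1-r^2}{r^2}$, which takes both signs, so it does not prove the lemma as stated.

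The rest of your plan is fine: the positivity of the four terms of $\Upsilon$, the quadrature (which gives $4\pi\int_0^1\Upsilon^4r^2\,dr\approx7.62$), and the cutoff justification of the principal value.
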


The integral bound is~\cite[(2.27)]{LY88}; we have also checked it numerically.

\begin{lemma}[Birman--Koplienko--Solomyak inequality]\label{lem:bks}
Let $T,S\ge0$ be self-adjoint, and let $S$ be form-bounded with respect to $T$, with relative bound less
than $1$, so that $T-S$ is a self-adjoint operator bounded below with form domain $\mathcal Q(T)$.
Let $X=(T-S)_-$, a bounded operator. Then $S^{1/2}\le(T+X)^{1/2}$. If moreover
$\Tr X^{1/2}<\infty$, then
\[
\Tr(T^{1/2}-S^{1/2})_-\le\Tr X^{1/2},
\]
where $\Tr(T^{1/2}-S^{1/2})_-:=\sup_\gamma\Tr\gamma(S^{1/2}-T^{1/2})$, the supremum being over
finite-rank $0\le\gamma\le1$ with range in $\mathcal Q(T^{1/2})$ (this is the usual trace when the
difference is self-adjoint).
\end{lemma}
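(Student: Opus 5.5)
The plan is to derive the first claim from operator monotonicity of the square root, and the trace inequality from a variational argument that combines the first claim with the Birman--Koplienko--Solomyak bound for concave functions applied to the pair $T$, $T+X$.

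\emph{Step 1: $S^{1/2}\le(T+X)^{1/2}$.} Since $X=(T-S)_-\ge0$ and $T-S\ge-X$ in the sense of forms on $\mathcal Q(T)$, we have $S\le T+X$ as quadratic forms. The operator $X$ is bounded, so $T+X$ is self-adjoint with form domain $\mathcal Q(T)$. The form domain of $S$ contains $\mathcal Q(T)$, because $S$ is form-bounded with respect to $T$. We then apply the Löwner--Heinz theorem to $t\mapsto t^{1/2}$. To stay with unbounded operators I would use the resolvent form: $0\le S\le T+X$ gives $(T+X+\lambda)^{-1}\le(S+\lambda)^{-1}$ for $\lambda>0$ as bounded operators. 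Then insert this into
\[
E^{1/2}=\frac1\pi\int_0^\infty\Bigl(1-\frac{\lambda}{E+\lambda}\Bigr)\lambda^{-1/2}\,d\lambda ,
\]
which yields $\langle u,S^{1/2}u\rangle\le\langle u,(T+X)^{1/2}u\rangle$ for $u\in\mathcal Q(T)$. The inequality then extends to $u\in\mathcal Q((T+X)^{1/2})=\mathcal Q(T^{1/2})$ by monotone convergence in $\lambda$.

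\emph{Step 2: reduction to $(T+X)^{1/2}-T^{1/2}$.} Let $\gamma$ be finite rank with $0\le\gamma\le1$ and range in $\mathcal Q(T^{1/2})$. Step~1 gives
\[
\Tr\gamma(S^{1/2}-T^{1/2})\le\Tr\gamma\bigl((T+X)^{1/2}-T^{1/2}\bigr).
\]
It therefore suffices to show that $Y:=(T+X)^{1/2}-T^{1/2}$, which is a bounded nonnegative operator, satisfies $\Tr Y\le\Tr X^{1/2}$. Indeed, $\Tr\gamma Y\le\Tr Y$ whenever $0\le\gamma\le1$ and $Y\ge0$.

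\emph{Step 3 (main obstacle): $\Tr\bigl((T+X)^{1/2}-T^{1/2}\bigr)\le\Tr X^{1/2}$.} This is the Birman--Koplienko--Solomyak inequality in its classical form: for operator-concave $f$ with $f(0)=0$ and $X\ge0$, one has $\Tr\bigl(f(T+X)-f(T)\bigr)\le\Tr f(X)$. I would prove it from the same integral representation:
\[
(T+X)^{1/2}-T^{1/2}=\frac1\pi\int_0^\infty\lambda^{1/2}\Bigl[(T+\lambda)^{-1}-(T+X+\lambda)^{-1}\Bigr]d\lambda ,
\]
and by the resolvent identity the bracket equals
\[
(T+\lambda)^{-1}X^{1/2}\bigl(1+X^{1/2}(T+\lambda)^{-1}X^{1/2}\bigr)^{-1}X^{1/2}(T+\lambda)^{-1}.
\]
The difficulty is comparing this with the case $T=0$, and the cleanest route is to quote the known result instead. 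I would cite Birman, Koplienko and Solomyak~\cite{BKS75}, who prove
\[
\bigl\|f(T+X)-f(T)\bigr\|_{\mathfrak S}\le\|f(X)\|_{\mathfrak S}
\]
for operator-monotone $f$ on $[0,\infty)$ with $f(0)=0$ and symmetrically normed ideals $\mathfrak S$, in particular the trace class. Applied with $f(t)=t^{1/2}$ and the hypothesis $\Tr X^{1/2}<\infty$, this gives exactly $\Tr Y\le\Tr X^{1/2}$, with $Y$ trace class. Taking the supremum over $\gamma$ completes the proof. If the citation needs domain care for unbounded $T$, I would approximate $T$ by $T_n=T\,\1(T\le n)$ and pass to the limit in the forms.
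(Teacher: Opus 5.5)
Your proposal is correct and follows the paper's proof. Both use $S\le T+X$ and L\"owner--Heinz for the first claim, reduce to $\Tr\gamma[(T+X)^{1/2}-T^{1/2}]$, apply the bounded-operator Birman--Koplienko--Solomyak (Ando) inequality to a spectral truncation $T_n$ of $T$, and let $n\to\infty$. The one step you leave vague is $\langle\varphi,(T_n+X)^{1/2}\varphi\rangle\to\langle\varphi,(T+X)^{1/2}\varphi\rangle$ for $\varphi\in\mathcal Q(T^{1/2})$; the paper spells it out via strong resolvent convergence and monotone convergence of forms for $T_n=\min(T,n)$. Your own resolvent integral from Step~1 also gives it directly, since $T_n+X$ increases to $T+X$ and $(T_n+X+\lambda)^{-1}\to(T+X+\lambda)^{-1}$ strongly.
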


\begin{proof}
Since $T-S+X=(T-S)_+\ge0$, we have $S\le T+X$ as forms. The L\"owner--Heinz inequality, which holds for unbounded positive operators~\cite[Satz~2]{He51} (see
also~\cite[Theorem~V.4.12]{Ka76}), gives $S^{1/2}\le(T+X)^{1/2}$. Since $T\le T+X\le T+\|X\|$, it also gives
$T^{1/2}\le(T+X)^{1/2}\le(T+\|X\|)^{1/2}\le T^{1/2}+\|X\|^{1/2}$, so $(T+X)^{1/2}$ has form domain
$\mathcal Q(T^{1/2})$.

Now let $0\le\gamma\le1$ have finite rank, with range in $\mathcal Q(T^{1/2})$. Then
$\Tr\gamma(T^{1/2}-S^{1/2})\ge-\Tr\gamma[(T+X)^{1/2}-T^{1/2}]$, and it remains to bound the latter
trace by $\Tr X^{1/2}$. Let $T_n=\min(T,n)$. By the case of bounded operators, due to Birman, Koplienko and Solomyak~\cite{BKS75} (see
also~\cite[Corollary~2]{An88}), applied to $T_n+X$ and $T_n$ in the trace norm,
$\|(T_n+X)^{1/2}-T_n^{1/2}\|_1\le\Tr X^{1/2}$, so
$\Tr\gamma[(T_n+X)^{1/2}-T_n^{1/2}]\le\Tr X^{1/2}$.

It remains to let $n\to\infty$. Write $\gamma=\sum_k\lambda_k|\varphi_k\rangle\langle\varphi_k|$
(a finite sum) with $\varphi_k\in\mathcal Q(T^{1/2})$. It suffices to show that, for every
$\varphi\in\mathcal Q(T^{1/2})$,
\[
\langle\varphi,T_n^{1/2}\varphi\rangle\to\langle\varphi,T^{1/2}\varphi\rangle
\quad\text{and}\quad
\langle\varphi,(T_n+X)^{1/2}\varphi\rangle\to\langle\varphi,(T+X)^{1/2}\varphi\rangle .
\]
The first holds by monotone convergence in the spectral representation of $T$. For the second,
the forms of $T_n+X$ increase to that of $T+X$, so $T_n+X\to T+X$ in the strong resolvent sense~\cite[Theorem~3.1]{Si78}, \cite[\S VIII.3.4]{Ka76}. Since $(x^{1/2}+i)^{-1}$ is bounded and continuous on $[0,\infty)$, also
$(T_n+X)^{1/2}\to(T+X)^{1/2}$ in the strong resolvent sense. On the other hand, the operators
$(T_n+X)^{1/2}$ increase with $n$, by operator monotonicity, so by the monotone convergence theorem for forms~\cite[Theorem~3.1]{Si78}, \cite[\S VIII.3.4]{Ka76} their forms increase to a closed form whose operator is their strong
resolvent limit, namely $(T+X)^{1/2}$. This gives the second convergence on
$\mathcal Q((T+X)^{1/2})=\mathcal Q(T^{1/2})$. Hence
$\Tr\gamma[(T+X)^{1/2}-T^{1/2}]\le\Tr X^{1/2}$, and taking the supremum over $\gamma$ gives the
claim.
\end{proof}

By spectral truncation of $\gamma$, the bound $\Tr\gamma(S^{1/2}-T^{1/2})\le\Tr X^{1/2}$ extends to every
trace-class $0\le\gamma\le1$ with $\Tr\gamma T^{1/2}<\infty$, the form in which the lemma is used below. The lemma is due in substance to Birman, Koplienko and Solomyak~\cite{BKS75}, and is used in this
form in~\cite{LSS97}. We have included the proof to make its validity for unbounded operators
explicit.

\begin{corollary}[Form domains]\label{cor:domains}
If $\int|B|^2<\infty$, then $\theta|p_A|\le|D_A|+c$ for every $\theta<1$ and some constant
$c=c(A,\theta)$. In particular $\mathcal Q(|D_A|)\subset\mathcal Q(|p_A|)$.
\end{corollary}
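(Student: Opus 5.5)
The idea is to compare squares and then take square roots. By \eqref{eq:D2}, $D_A^2=p_A^2+m^2-i\sum_l\Sigma_l\otimes B_l$. The Zeeman term is a multiplication operator whose norm is pointwise at most $\frac{\sqrt3}2|B|$, by the singular-value argument of Lemma~\ref{lem:zeeman}; $\Sigma_l$ is block diagonal with blocks $\sigma_l$, so the same argument applies. Writing $b=\frac{\sqrt3}2|B|\in L^2$, we therefore get
\[
D_A^2\ge p_A^2-b
\]
as quadratic forms on $C_c^\infty$. We then extend this to the form domain of $D_A^2$, using that $D_A$ is essentially self-adjoint on $C_c^\infty$ and that $C_c^\infty$ is a core for $D_A^2$. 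The latter holds because finite propagation speed gives essential self-adjointness of all powers; if that is awkward, we argue by graph-norm approximation instead.

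The next step is to show that $b$ is form-bounded with respect to $p_A^2$ with relative bound $0$. This is exactly the argument in the proof of Proposition~\ref{prop:firstkind}: H\"older's inequality on $b\1(b>E)$, then Sobolev's inequality combined with the diamagnetic inequality. It gives, for every $\delta>0$, a constant $c_\delta$ with $b\le\delta p_A^2+c_\delta$. Consequently
\[
D_A^2\ge(1-\delta)p_A^2-c_\delta .
\]
We apply operator monotonicity of the square root (the L\"owner--Heinz inequality, as used in Lemma~\ref{lem:bks}) to the nonnegative operators $(1-\delta)p_A^2\le D_A^2+c_\delta$. This yields
\[
(1-\delta)^{1/2}|p_A|\le (D_A^2+c_\delta)^{1/2}\le |D_A|+c_\delta^{1/2},
\]
where the last step uses $\sqrt{x+c}\le\sqrt x+\sqrt c$ in the spectral representation of $|D_A|$. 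Choosing $\delta$ with $(1-\delta)^{1/2}=\theta$ gives the claim with $c=c_\delta^{1/2}$. The L\"owner--Heinz inequality for unbounded operators requires the form inequality on the full form domain of $D_A^2+c_\delta$, which is $\mathcal Q(D_A^2)=\mathcal D(D_A)$. Since $p_A^2$ is closed, the inequality shows that $\mathcal D(D_A)\subset\mathcal Q(p_A^2)$, which is what is needed.

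The inclusion $\mathcal Q(|D_A|)\subset\mathcal Q(|p_A|)$ then follows by closure. The inequality holds on $\mathcal D(D_A)$, which is a form core for $|D_A|$. Hence $|p_A|$-form Cauchy sequences are controlled by the $|D_A|$ form norm, and closedness of the form of $|p_A|$ gives the inclusion.

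The main obstacle I expect is the domain bookkeeping rather than any estimate: we must make sure that the pointwise Zeeman bound and the form-boundedness, which are first checked on $C_c^\infty$, transfer to the operator inequality on $\mathcal D(D_A)$ needed to apply L\"owner--Heinz. If the core property of $C_c^\infty$ for $D_A^2$ is problematic, the fallback is to prove $\|D_A u\|^2\ge(1-\delta)\|\nabla_Au\|^2-c_\delta\|u\|^2$ for $u\in C_c^\infty$ directly. We then pass to the closure of $D_A$ via essential self-adjointness, using the lower semicontinuity of $u\mapsto\|\nabla_Au\|^2$ under $L^2$ convergence.
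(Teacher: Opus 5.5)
Your proof is correct and follows the paper's route. The Zeeman bound for $\Sigma_l$ and the infinitesimal form-boundedness of $|B|$ give $\theta^2p_A^2\le D_A^2+c$, and L\"owner--Heinz together with $\sqrt{x+c}\le\sqrt x+\sqrt c$ finishes the argument. The paper does the square-root step through Lemma~\ref{lem:bks}, with the bounded operator $X=(T-S)_-$ in place of your explicit constant $c_\delta$.

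Your domain bookkeeping is sound and makes explicit what the paper leaves implicit: you pass from $C_c^\infty$ to $\mathcal D(D_A)$ using essential self-adjointness of $D_A$ and lower semicontinuity of the closed form of $p_A^2$. Only a form core is needed for this, so essential self-adjointness of $D_A$ itself suffices, and the remark about powers of $D_A$ can be dropped.
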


\begin{proof}
Apply Lemma~\ref{lem:bks} with $T=D_A^2$ and $S=\theta^2p_A^2$. By~\eqref{eq:D2} and
Lemma~\ref{lem:zeeman}, whose proof applies verbatim to $\Sigma_l$, $T-S\ge(1-\theta^2)p_A^2-\frac{\sqrt3}2|B|$, and
$|B|\in L^2\subset L^{3/2}+L^\infty$ is infinitesimally form-bounded with respect to $p_A^2$ (by the
diamagnetic and Sobolev inequalities). Hence $T\ge(1-\varepsilon)p_A^2-C_\varepsilon$ for every
$\varepsilon>0$, so $S$ is $T$-form-bounded with relative bound $\theta^2/(1-\varepsilon)<1$ for
small $\varepsilon$, and $T-S$ is bounded below. So $X$ is bounded, and
$\theta|p_A|=S^{1/2}\le(D_A^2+\|X\|)^{1/2}\le|D_A|+\|X\|^{1/2}$.
\end{proof}

\begin{lemma}[Daubechies-type bounds; cf.~\cite{Da83}]\label{lem:daub}
For $a>0$ and $U\ge0$, whenever the right-hand sides are finite,
\[
\Tr(a|p_A|-U)_-\le qL_D\,a^{-3}\!\int U^4,\qquad
\Tr(ap_A^2-U)_-^{1/2}\le qL_D\,a^{-3/2}\!\int U^2,
\]
with $L_D=\frac{3\pi}{16}L_C<0.0684$.
\end{lemma}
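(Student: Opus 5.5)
The plan is to prove the second inequality (for $ap_A^2$) directly from the Cwikel--Lieb--Rozenblum bound by the layer-cake formula, as in the proof of Lemma~\ref{lem:LT}. The first inequality (for $a|p_A|$) should then follow from the second by the Birman--Koplienko--Solomyak inequality, Lemma~\ref{lem:bks}.

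\emph{Step 1: the square-root moment for $ap_A^2$.} For a self-adjoint operator $Y$ bounded below, the spectral theorem and $\lambda^{1/2}=\int_0^\lambda\frac12E^{-1/2}\,dE$ give
\[
\Tr Y_-^{1/2}=\int_0^\infty\tfrac12E^{-1/2}N_{-E}(Y)\,dE .
\]
Take $Y=ap_A^2-U$. Since $N_{-E}(ap_A^2-U)=N_{-E/a}(p_A^2-U/a)$, Lemma~\ref{lem:count} bounds this count by
\[
qL_C\int(U/a-E/a)_+^{3/2}=qL_Ca^{-3/2}\int(U-E)_+^{3/2}.
\]
Insert this into the layer-cake formula and exchange the integrals (Tonelli). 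The remaining one-dimensional integral is, after substituting $E=Ut$,
\[
\int_0^\infty\tfrac12E^{-1/2}(U-E)_+^{3/2}\,dE=\tfrac12U^2\,\mathrm B(\tfrac12,\tfrac52)=\tfrac{3\pi}{16}U^2 .
\]
This yields $\Tr(ap_A^2-U)_-^{1/2}\le q\frac{3\pi}{16}L_Ca^{-3/2}\int U^2$, which is the second bound with $L_D=\frac{3\pi}{16}L_C$. Numerically, $\frac{3\pi}{16}\cdot0.116<0.0684$.

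\emph{Step 2: reduction of $|p_A|$ to $p_A^2$.} Apply Lemma~\ref{lem:bks} with $T=a^2p_A^2$ and $S=U^2$, so that $T^{1/2}=a|p_A|$ and $S^{1/2}=U$. The lemma gives
\[
\Tr(a|p_A|-U)_-\le\Tr(a^2p_A^2-U^2)_-^{1/2}.
\]
Step 1 with $a^2$ in place of $a$ and $U^2$ in place of $U$ bounds the right side by $qL_Da^{-3}\int U^4$, which is the first inequality.

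\emph{Main obstacle: the hypotheses of Lemma~\ref{lem:bks}.} That lemma needs $S$ to be $T$-form-bounded with relative bound less than $1$, and $\Tr X^{1/2}<\infty$. For bounded $U$ the first holds trivially, and the second is Step 1. For general $U$ with $\int U^4<\infty$, I would first prove the bound for the truncations $U_n=\min(U,n)$. The trace $\Tr(a|p_A|-U)_-$ is understood, as in Lemma~\ref{lem:bks}, as a supremum over finite-rank $0\le\gamma\le1$ with range in the form domain. For each fixed such $\gamma$, the quantity $\Tr\gamma(U_n-a|p_A|)$ increases to $\Tr\gamma(U-a|p_A|)$ by monotone convergence, since $\gamma$ is a finite sum of rank-one projections. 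Meanwhile the right side $qL_Da^{-3}\int U_n^4$ stays bounded by $qL_Da^{-3}\int U^4$. Taking the supremum over $\gamma$ then gives the claim.

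In Step 1, the case where $(U-E)_+\notin L^{3/2}$ for small $E$ needs no separate treatment: the integrated bound is then infinite, or it is handled by the same truncation. The phrase ``whenever the right-hand sides are finite'' in the statement covers exactly this.
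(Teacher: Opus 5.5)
Your proposal is correct and follows the paper's proof. For the $p_A^2$ bound, both use the layer-cake formula with Lemma~\ref{lem:count} and the Beta integral $\mathrm B(\frac12,\frac52)=\frac{3\pi}8$, and for the $|p_A|$ bound both apply Lemma~\ref{lem:bks} with $T=a^2p_A^2$ and $S=U^2$.

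The only difference is how you verify the hypotheses of Lemma~\ref{lem:bks}. The paper observes directly that $U^2\in L^2\subset L^{3/2}+L^\infty$ is infinitesimally form-bounded with respect to $p_A^2$. Your truncation $U_n=\min(U,n)$, followed by monotone convergence in $\Tr\gamma U_n$, is an equally valid route. Your closing worry about Step~1 is moot: $(U-E)_+^{3/2}\le E^{-1/2}U^2$, so $(U-E)_+\in L^{3/2}$ for every $E>0$.
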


\begin{proof}
For the second bound, $\Tr Y_-^{1/2}=\frac12\int_0^\infty E^{-1/2}N_{-E}(Y)\,dE$ for self-adjoint $Y$. Apply
Lemma~\ref{lem:count} and the identity $\int_0^\infty E^{-1/2}(U-E)_+^{3/2}dE=\mathrm B(\frac12,\frac52)U^2=\frac{3\pi}8U^2$; scaling gives $\Tr(ap_A^2-U)_-^{1/2}=a^{1/2}\Tr(p_A^2-U/a)_-^{1/2}$. The
first bound follows from the second by Lemma~\ref{lem:bks}, applied with $T=a^2p_A^2$ and $S=U^2$.
Here $U^2$ is infinitesimally form-bounded with respect to $T$ because $U^2\in L^2\subset L^{3/2}+L^\infty$,
and $\Tr(T-S)_-^{1/2}<\infty$ by the second bound.
\end{proof}

Here $L_D$ is a bound on the constant $L_{1/2,3}$ of~\cite[Sect.~4.1]{LS10}, whose semiclassical
value is $1/(32\pi)\approx0.0099$. For $A=0$, Daubechies obtained the constant $0.0258$ in the first
bound~\cite{Da83} (see~\cite[Theorem~8]{LY88}).

We can now prove the relativistic analogue of Lemma~\ref{lem:onebody}. It is simpler, since there
is no Zeeman term, but the structure is the same: a fraction $\pi Z/2$ of the kinetic energy is spent on
Lemma~\ref{lem:relhardy} near each centre, and the rest on the Daubechies bound of Lemma~\ref{lem:daub}.

\begin{lemma}[Relativistic fixed centres]\label{lem:relonebody}
With $\{R_j\}_{j\in\mathcal C}$, $D_j$ and $V=Z/\dist(x,\{R_j\})$ as in Lemma~\ref{lem:onebody}, suppose $0<Z<2/\pi$. Then for every density matrix $0\le\gamma\le1$ on $L^2(\R^3;\C^q)$,
\[
\Tr\bigl((|p_A|-V)\gamma\bigr)\ge-c'q\,Z^4(1-\pi Z/2)^{-3}\sum_{j\in\mathcal C}D_j^{-1},\qquad
c'=L_D\bigl(7.625\,\pi^4/16+4\pi\bigr).
\]
\end{lemma}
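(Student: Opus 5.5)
The plan is to split the kinetic energy as $|p_A|=(1-\pi Z/2)|p_A|+(\pi Z/2)|p_A|$, in the same spirit as the $\eta$-splitting in Step~2 of Lemma~\ref{lem:onebody}. The second piece will remove the Coulomb singularity near each centre via Lemma~\ref{lem:relhardy}. The first piece will absorb the remaining bounded-type potential via the Daubechies bound of Lemma~\ref{lem:daub}.

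\emph{Removing the singularity.} Let $\mathcal B_j=\mathcal B(R_j,D_j)$. As in the proof of Lemma~\ref{lem:onebody}, these balls are disjoint, each lies in the Voronoi cell of $R_j$, and so $V=Z/|x-R_j|$ on $\mathcal B_j$. Since the trace is linear in $\gamma$, it suffices to prove the operator (form) inequality
\[
\tfrac{\pi Z}2|p_A|-V\ \ge\ -W
\]
on single functions $u$ in the form domain of $|p_A|$, and then sum over the eigenfunctions of $\gamma$. For such $u$, Lemma~\ref{lem:superadd} applied to the disjoint sets $\mathcal B_j$ gives $\langle u,|p_A|u\rangle\ge\sum_j\langle|u|,|p|_{\mathcal B_j}|u|\rangle$. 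To each term I apply Lemma~\ref{lem:relhardy} with the real function $v=|u|$ and radius $D_j$, and multiply by $\pi Z/2$. The resulting $\frac2\pi\cdot\frac{\pi Z}2=Z$ times $\int_{\mathcal B_j}|u|^2/|x-R_j|$ cancels $V$ inside the balls exactly. Hence the inequality holds with
\[
W=\frac{\pi Z}{2D_j}\,\Upsilon\Bigl(\frac{|x-R_j|}{D_j}\Bigr)\ \text{ on }\mathcal B_j,\qquad W=V\ \text{ off }\textstyle\bigcup_j\mathcal B_j .
\]
If $|\mathcal C|=1$, I would take $R\to\infty$ in Lemma~\ref{lem:relhardy}; the error then vanishes, so $W\equiv0$, consistent with $D_j=\infty$.

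\emph{Daubechies bound.} Now
\[
\Tr\bigl((|p_A|-V)\gamma\bigr)\ge\Tr\bigl(((1-\pi Z/2)|p_A|-W)\gamma\bigr)\ge-\Tr\bigl((1-\pi Z/2)|p_A|-W\bigr)_- ,
\]
using $0\le\gamma\le1$. Lemma~\ref{lem:daub} with $a=1-\pi Z/2>0$ bounds the last trace by $qL_D(1-\pi Z/2)^{-3}\int W^4$. It remains to compute $\int W^4$ cell by cell. On $\mathcal B_j$, scaling $x-R_j=D_jy$ gives
\[
(\pi Z/2)^4D_j^{-4}D_j^3\!\int_{|y|<1}\Upsilon(|y|)^4dy\le\frac{7.625\,\pi^4}{16}\,Z^4D_j^{-1}.
\]
On the rest of the Voronoi cell of $R_j$, $W^4=Z^4|x-R_j|^{-4}$ with $|x-R_j|\ge D_j$, so this part contributes at most $4\pi Z^4\int_{D_j}^\infty r^{-2}dr=4\pi Z^4D_j^{-1}$. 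Summing over $j$ gives exactly $c'qZ^4(1-\pi Z/2)^{-3}\sum_jD_j^{-1}$.

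\emph{Main obstacle.} The delicate point is the first step: justifying that the nonlocal form can be localized and the Hardy inequality applied to $|u|$. This requires checking that $\langle|u|,|p|_{\mathcal B_j}|u|\rangle<\infty$, which follows from Lemma~\ref{lem:superadd} since the left side there is finite. I would also confirm that Lemma~\ref{lem:daub} applies with $U=W$. Its hypotheses hold because $W\in L^4$ by the computation above (for $\Upsilon$, using the stated bound on $\int\Upsilon^4$), and the finite-rank/trace-class issues are handled by the remark following Lemma~\ref{lem:bks}.
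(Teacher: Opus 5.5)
Your proposal is correct and follows the paper's proof essentially step by step. You spend the fraction $\pi Z/2$ of $|p_A|$ on Lemma~\ref{lem:superadd} plus Lemma~\ref{lem:relhardy} in the balls $\mathcal B(R_j,D_j)$, apply Lemma~\ref{lem:daub} with $a=1-\pi Z/2$ to the remaining potential, and then carry out the same cell-by-cell estimate of the fourth-power integral. The one difference, letting $R\to\infty$ when $|\mathcal C|=1$ rather than quoting the Kato--Herbst inequality on $\mathbb{R}^3$ directly, is harmless: $\Upsilon$ is bounded on $[0,1)$, so the error term vanishes in the limit.
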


\begin{proof}
Let $a=1-\pi Z/2>0$ and $\mathcal B_j=\mathcal B(R_j,D_j)$. By Lemmas~\ref{lem:superadd}
and~\ref{lem:relhardy}, applied to $|\psi|$,
\[
(1-a)\langle\psi,|p_A|\psi\rangle\ge\sum_j\int_{\mathcal B_j}\Bigl(\frac{Z}{|x-R_j|}
-\frac{\pi Z}{2D_j}\,\Upsilon\Bigl(\frac{|x-R_j|}{D_j}\Bigr)\Bigr)|\psi|^2 .
\]
(If $|\mathcal C|=1$, use the relativistic Hardy inequality $|p|\ge\frac2\pi|x|^{-1}$ on $\R^3$ instead, and drop
the $\Upsilon$ term.) Hence $|p_A|-V\ge a|p_A|-U$, where $U=\frac{\pi Z}{2D_j}\Upsilon(|x-R_j|/D_j)$ on $\mathcal B_j$ and $U=V$ off the balls. By
Lemma~\ref{lem:daub}, $\Tr(a|p_A|-U)_-\le qL_Da^{-3}\int U^4$. Finally,
$\int_{\mathcal B_j}U^4\le7.625\,(\pi/2)^4Z^4D_j^{-1}$ and
$\int_{\text{cell}_j\setminus\mathcal B_j}V^4\le4\pi Z^4D_j^{-1}$.
\end{proof}

\begin{proposition}[Relativistic nearest-neighbour bound]\label{prop:relkin}
For fermionic $\Psi$,
\[
\sum_i\langle\Psi,|p_A|_i\Psi\rangle\ \ge\ \ell_0\,q^{-1/3}\Bigl\langle\Psi,\sum_id_i^{-1}\Psi\Bigr\rangle,
\qquad\ell_0=0.085.
\]
\end{proposition}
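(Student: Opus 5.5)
The plan follows the proof of Proposition~\ref{prop:kinP}, with Lemma~\ref{lem:relonebody} in place of Lemma~\ref{lem:onebody}. Since $|p_A|$ and $d_i^{-1}$ scale in the same way, no energy parameter $\nu$ is needed, and the bound comes out homogeneous. Write $\kappa=\ell_0q^{-1/3}$. The idea is to prove
\[
\sum_i|p_A|_i-s\kappa\sum_id_i^{-1}\ \ge\ -\text{(error)}\sum_id_i^{-1}
\]
for a parameter $s>1$, and then to absorb the error into the excess $(s-1)\kappa\sum_id_i^{-1}$.

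\emph{Step 1: splitting.} Apply~\eqref{eq:split} with centre probability $p$ and $\ell=\infty$:
\[
\sum_i|p_A|_i-s\kappa\sum_id_i^{-1}\ \ge\ \frac1{1-p}\,\E\sum_{i\notin\mathcal C}\bigl(|p_A|_i-ZW^\infty_{\mathcal C}(x_i)\bigr),\qquad Z=\frac{s\kappa}{p}.
\]
Here $Z<2/\pi$ is needed. With $\kappa\approx0.085\,q^{-1/3}$ this holds easily for moderate $p$ and $s$.

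\emph{Step 2: fixed centres.} Fix $\mathcal C$ and the positions of the centres. By Lemma~\ref{lem:slice}, the satellites give a one-body density matrix $0\le\gamma\le1$. Lemma~\ref{lem:relonebody} then bounds the bracket below by
\[
-c'qZ^4(1-\pi Z/2)^{-3}\sum_{j\in\mathcal C}(D_j^{\mathcal C})^{-1}.
\]
Taking the expectation and using $\E\sum_{j\in\mathcal C}(D_j^{\mathcal C})^{-1}\le2p\sum_jd_j^{-1}$ from Lemma~\ref{lem:split}, we get
\[
\sum_i|p_A|_i-s\kappa\sum_id_i^{-1}\ \ge\ -\frac{2p}{1-p}\,c'q\,Z^4(1-\pi Z/2)^{-3}\sum_id_i^{-1}.
\]

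\emph{Step 3: absorption.} The claim follows if
\[
\frac{2p}{1-p}\,c'q\,\Bigl(\frac{s\kappa}{p}\Bigr)^4\Bigl(1-\frac{\pi s\kappa}{2p}\Bigr)^{-3}\ \le\ (s-1)\kappa .
\]
Write $\kappa=\ell_0q^{-1/3}$. The factor $q\kappa^3=\ell_0^3$ is then independent of $q$. The factor $(1-\pi s\kappa/2p)^{-3}$ is worst at $q=1$ and only improves for larger $q$, so it suffices to check the case $q=1$. The condition becomes
\[
\frac{2c'\,s^4\,\ell_0^3}{p^3(1-p)\,(s-1)}\,\Bigl(1-\frac{\pi s\ell_0}{2p}\Bigr)^{-3}\ \le\ 1 .
\]

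\emph{Main obstacle.} Everything except the numerics is routine; the main work is the numerical optimisation. One has $c'=L_D(7.625\,\pi^4/16+4\pi)\approx0.0684\cdot(46.4+12.6)\approx4.03$. I would first try $p$ near $\tfrac12$ (for instance $p\approx0.45$) and $s\approx\tfrac43$, and then tune $(p,s)$ numerically so that $\ell_0=0.085$ satisfies the condition. Rounding must go in the safe direction. Degenerate cases, such as $\mathcal C=\emptyset$ or $|\mathcal C|=1$, are already covered by the lemmas.
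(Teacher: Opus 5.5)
Your reduction is correct and is the paper's own argument. The paper also uses the random splitting with $\ell=\infty$, Lemma~\ref{lem:relonebody} for the satellites, the bound $\E\sum_{j\in\mathcal C}(D_j^{\mathcal C})^{-1}\le2p\sum_jd_j^{-1}$, and absorption of the $\kappa^4$ error. Your $s\kappa$ is the paper's coupling $\kappa$ in the split, and your final inequality is correctly derived, including the reduction to $q=1$.

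\emph{The gap.} What is missing is the one step that carries the content of the proposition: exhibiting $(p,s)$ for which that inequality holds at $\ell_0=0.085$. The region you propose to search is the wrong one.
\begin{itemize}
\item Your condition contains the factor $2/(p^3(1-p))$, which is smallest at $p=\frac34$. The factor $(1-\pi s\ell_0/2p)^{-3}$ also penalizes small $p$.
\item At $p=0.45$, $s=\frac43$ the left side of your condition is about $4.2$, not $\le1$.
\item For $p$ near $\frac12$ no choice of $s$ works. At $p=\frac12$ the argument gives at most about $0.068$ in place of $0.085$.
\item Already $p=0.7$ fails. Only centre probabilities in a narrow window around $\frac34$, roughly $0.74$ to $0.84$, reach $0.085$.
\end{itemize}

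\emph{The fix.} The paper takes centre probability $\frac34$ and $s\ell_0=\kappa_0=\frac19$, so $s\approx1.307$. Your condition then reads
\[
\frac{512}{27}\,c'\kappa_0^4\Bigl(1-\frac{2\pi}3\kappa_0\Bigr)^{-3}\ \le\ \kappa_0-\ell_0 .
\]
With $c'\le4.04$, the left side is about $0.02585$ and the right side about $0.02611$. Equivalently, $\kappa_0-\frac{512}{27}c'\kappa_0^4(1-\frac{2\pi}3\kappa_0)^{-3}\ge0.0852$. The margin is about one per cent, so the choice of centre probability is what makes $\ell_0=0.085$ attainable. With this choice inserted, your proof is complete.
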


\begin{proof}
The constant $c'$ of Lemma~\ref{lem:relonebody} satisfies $c'\le4.04$.
Let $0<\kappa<3/(2\pi)$.

By~\eqref{eq:split} with centre probability $\frac34$ and $\ell=\infty$,
$\sum_i|p_A|_i-\kappa\sum_id_i^{-1}\ge4\,\E\sum_{i\notin\mathcal C}\bigl(|p_A|_i-\frac43\kappa W_{\mathcal C}^\infty(x_i)\bigr)$.
Apply Lemma~\ref{lem:relonebody} with $Z=\frac43\kappa$ to the one-body density matrix of the satellites
(Lemma~\ref{lem:slice}), and take the expectation using the last claim of Lemma~\ref{lem:split}.
This gives
\[
\sum_i|p_A|_i\ge\bigl[\kappa-\tfrac{512}{27}\,c'q\,\kappa^4(1-\tfrac{2\pi}3\kappa)^{-3}\bigr]\sum_id_i^{-1}.
\]
The error from the centres is of order $\kappa^4$, so it is absorbed for small $\kappa$. This is the
mechanism of Lieb and Yau~\cite{LY88}, and it is where the smallness of
$\alpha$ in Theorems~\ref{thm:rel} and~\ref{thm:dirac} comes from.

Now choose $\kappa=\kappa_0\,q^{-1/3}$ with $\kappa_0=\frac19$. Since $q\ge1$, we have
$\kappa\le\kappa_0$, and the bracket is at least
$q^{-1/3}\bigl[\kappa_0-\frac{512}{27}c'\kappa_0^4(1-\frac{2\pi}3\kappa_0)^{-3}\bigr]\ge0.0852\,q^{-1/3}$.
\end{proof}

\begin{proof}[Proof of Theorem~\ref{thm:rel}]
We have $\sqrt{p_A^2+m^2}\ge|p_A|$, and $\alpha V_A\ge-3\alpha Q^2\sum_id_i^{-1}$ by
Proposition~\ref{prop:electro}. By Proposition~\ref{prop:relkin},
$H_R\ge(\ell_0q^{-1/3}-3\alpha Q^2)\sum_id_i^{-1}\ge0$.
\end{proof}

\begin{proof}[Proof of Theorem~\ref{thm:relcollapse}]
At $A=0$, put the two particles in the colour singlet, where by~\eqref{eq:twobody}
$\alpha V_0=-3\alpha e_1e_2/|x_1-x_2|$. Take $\Psi=f(x_1-x_2)\,g((x_1+x_2)/2)$ (with the other
internal indices fixed), with $f$ radial and $f,g\in C_c^\infty$ normalized. This is antisymmetric in colour and symmetric in space, so
it is admissible also when both particles belong to one species. Let $p$ and $P$ be the relative and
the total momentum. In Fourier space $|p_1|+|p_2|\le2|p|+|P|$, and $\sqrt{p^2+m^2}\le|p|+m$. Hence
\[
\langle H_R\rangle\le2\Bigl\langle f,\Bigl(|p|-\frac{3\alpha e_1e_2/2}{|x|}\Bigr)f\Bigr\rangle
+\langle g,|P|g\rangle+m_1+m_2 .
\]
If $\frac32\alpha e_1e_2>\frac2\pi$, there is a radial $f\in C_c^\infty$ for which the first term is negative,
since the constant $2/\pi$ in the relativistic Hardy inequality is sharp~\cite[Theorem~2.1]{He77},
already for radial functions. Replacing $f$ by $\ell^{-3/2}f(\cdot/\ell)$ multiplies this term by $\ell^{-1}$,
so it tends to $-\infty$ as $\ell\to0$. The second term can be made arbitrarily small by spreading $g$.
\end{proof}

\paragraph{The form of $H_D$.} We now prove the claim of Section~\ref{sec:domains}. Under the
hypotheses of Theorem~\ref{thm:dirac}, Propositions~\ref{prop:electro} and~\ref{prop:relkin} give
$\alpha V_A\ge-\lambda\sum_i|p_A|_i$ on fermionic states, with $\lambda<1$ as in
Theorem~\ref{thm:dirac}, and Corollary~\ref{cor:domains}, applied with any $\theta\in(\lambda,1)$, gives
$\lambda|p_A|\le(\lambda/\theta)(|D_A|+c)$. Since
$D_A=|D_A|$ on the range of $\Lambda_A^+$, this bounds the quadratic form of $H_D$ below by a positive
multiple of that of $\sum_i|D_A|_i$, up to an additive constant. In the other direction,
$|V_A|\le3Q^2\sum_{i<j}|x_i-x_j|^{-1}$ (see the proof of Proposition~\ref{prop:firstkind}), which is form-bounded with respect to
$\sum_i|p_A|_i$ by Lemma~\ref{lem:superadd} with $\Omega=\R^3$ and the relativistic Hardy inequality, and hence with respect to $\sum_i|D_A|_i$ by
Corollary~\ref{cor:domains}; for details see the proof of Proposition~\ref{prop:rough}.

\begin{proof}[Proof of Theorem~\ref{thm:dirac}]
We may assume $\int|B|^2<\infty$; otherwise the field energy is infinite. By
Corollary~\ref{cor:domains}, states of finite Dirac kinetic energy have finite $\sum_i|p_A|_i$-energy,
so Proposition~\ref{prop:relkin} applies to them. On the range of $\Lambda_A^+$ we have $D_A=|D_A|$.
Since $D_A^2=(\balpha\cdot p_A)^2+m^2$, the L\"owner--Heinz inequality gives
$|D_A|\ge|\balpha\cdot p_A|$. The right side does not depend on the mass, so it is the same one-body operator for every
species. Propositions~\ref{prop:electro} and~\ref{prop:relkin} give
$\alpha V_A\ge-3\alpha Q^2\sum_id_i^{-1}\ge-\lambda\sum_i|p_A|_i$, so
\[
H_D\ge\sum_i(|\balpha\cdot p_A|-\lambda|p_A|)_i+\frac1{8\pi\alpha}\int|B|^2 .
\]
The one-body density matrix of $\Psi$ (as in Lemma~\ref{lem:slice}, with $\mathcal C=\emptyset$) satisfies
$0\le\gamma\le1$ on $L^2(\R^3;\C^q)$, so the first term is at least
$-\Tr(|\balpha\cdot p_A|-\lambda|p_A|)_-$. By Lemma~\ref{lem:bks}, whose hypotheses hold as in the
proof of Corollary~\ref{cor:domains}, and by~\eqref{eq:D2} with $m=0$,
\begin{align*}
\Tr(|\balpha\cdot p_A|-\lambda|p_A|)_-&\le\Tr\bigl((\balpha\cdot p_A)^2-\lambda^2p_A^2\bigr)_-^{1/2}
\le\Tr\bigl((1-\lambda^2)p_A^2-\tfrac{\sqrt3}2|B|\bigr)_-^{1/2}\\
&\le\tfrac34\,qL_D(1-\lambda^2)^{-3/2}\!\int|B|^2,
\end{align*}
using Lemma~\ref{lem:zeeman} for $\Sigma_l$ and Lemma~\ref{lem:daub}.
Comparing with the field energy proves the claim.
\end{proof}

\subsection{The free Dirac projection}\label{sec:free}

The choice of projection matters for two reasons, one for each part of Theorem~\ref{thm:free}.
Part (a) is a one-particle statement. A pure gauge $A=U^*dU$ has no field energy, and the unitary $U$ conjugates $D_A$ to $D_0$
but does not preserve the free positive subspace. So one can lower the energy of a free
positive-energy state without limit. Part (b) is the many-particle mechanism of~\cite{LSS97} (see
also~\cite{GT99}). Free
positive-energy states carry a current that couples linearly to the gauge field, while a gauge field that
is constant over the system costs a field energy that grows only like the diameter of the system, of order
$N^{1/3}$. Optimizing the amplitude of the field gains an energy of order $N^{5/3}$, which beats the kinetic energy of order $N$. Colour
plays two roles. Colour-up and colour-down particles see opposite fields, so with time-reversed
orbitals for the second species the two currents add rather than cancel. The two colour charge
densities, by contrast, cancel, so the Coulomb energy is negligible.

Two facts are used. First, for $\psi$ in the range of $\Lambda_0^+$,
$\langle\psi,\alpha_j\psi\rangle=\langle\psi,(p_j/E(p))\psi\rangle$ with $E(p)=\sqrt{p^2+m^2}$,
because $\Lambda_0^+(p)\alpha_j\Lambda_0^+(p)=(p_j/E(p))\Lambda_0^+(p)$ pointwise in momentum space~\cite[Sect.~1.6]{Th92}. Second, time reversal $\Theta\psi=i\alpha_1\alpha_3\overline\psi$ (in the Dirac
representation) is antiunitary and commutes with $D_0$. Hence
$\Theta$ preserves the range of $\Lambda_0^+$, preserves densities $\psi^\dagger\psi$, and reverses
currents $\psi^\dagger\balpha\psi$. We write $\zhat$ for the unit vector in the $x_3$-direction.

\begin{proof}[Proof of Theorem~\ref{thm:free}(a)]
Let $f$ be smooth with compact support, $U=\operatorname{diag}(e^{if},e^{-if})\in SU(2)$
and $A=U^*dU=i\tau^3df$. Then $F=0$, and $D_A=U^*D_0U$. Take $\psi=\phi\otimes|{\uparrow}\rangle$
with $\phi\in\Lambda_0^+L^2(\R^3;\C^4)$, normalized, and $c:=\langle\phi,(p_3/E)\phi\rangle>0$
(take $\hat\phi$ with compact support in $\{p_3>0\}$). Then
\[
\langle\psi,D_A\psi\rangle=\langle\phi,D_0\phi\rangle+\langle\phi,(\balpha\cdot\nabla f)\phi\rangle .
\]
Now take $f=-t\,x_3\chi(x/R)$ with $t>0$, where $\chi\in C_c^\infty(\R^3)$ satisfies $\chi=1$ on $\mathcal B(0,1)$ and $\operatorname{supp}\chi\subset\mathcal B(0,2)$. Then $\nabla(x_3\chi(x/R))\to\zhat$ pointwise
as $R\to\infty$, with $|\nabla(x_3\chi(x/R))|\le\|\chi\|_\infty+2\|\nabla\chi\|_\infty$ for all $R$, so by dominated convergence
$\langle\phi,(\balpha\cdot\nabla(x_3\chi(x/R)))\phi\rangle\to\langle\phi,\alpha_3\phi\rangle=c$. Fix $R$
with this quantity $\ge c/2$. Then
$\langle\psi,D_{A}\psi\rangle\le\langle\phi,D_0\phi\rangle-tc/2\to-\infty$ as $t\to\infty$.
\end{proof}

\begin{proof}[Proof of Theorem~\ref{thm:free}(b)]
The two species have masses $m_1,m_2$ and $N=n^3$ particles each. Three parameters are chosen: first
$N$ is fixed, large enough (Step~6); then, for each lattice spacing $\ell$ of the orbitals, the
amplitude $t$ of the field is optimized; finally $\ell$ is sent to $0$, the ultrarelativistic limit in which the
masses become negligible. We also fix $k_0=20$, and write $\Lambda^+_{0,m}$ for the free projection for mass $m$.

\emph{Step 1: orbitals.} Let $\varphi\ge0$ be smooth with $\int\varphi=1$ and
$\operatorname{supp}\varphi\subset\mathcal B(0,\frac1{10})$, and let $f\ge0$ with
$f^2=(2\pi)^{-3}\1_{[-\pi,\pi]^3}*\varphi$. Then $\sum_{\nu\in\mathbb Z^3}f(\zeta+2\pi\nu)^2\equiv(2\pi)^{-3}$
and $\int f^2=1$. Fix a unit vector $u_0$ with $\alpha_3u_0=u_0$, and for $m\in\{m_1,m_2\}$ put
$u_m(p)=\Lambda^+_{0,m}(p)u_0/|\Lambda^+_{0,m}(p)u_0|$. This is well defined for $p_3\ge0$, since
$|\Lambda^+_{0,m}(p)u_0|^2=\frac12(1+p_3/E_m(p))$. Let $\{y_k\}$ be the $n^3$ points of
$\ell\mathbb Z^3$ (for odd $n$) or of $\ell(\mathbb Z^3+\frac12(1,1,1))$ (for even $n$) in a cubic block of
side $\ell(n-1)$ centred at $0$, so that
$|y_k|\le\frac{\sqrt3}2\ell N^{1/3}$, and set
\[
\hat\phi^m_k(p)=\ell^{3/2}f(\ell p-k_0\zhat)\,e^{-ip\cdot y_k}u_m(p).
\]
Then $\phi^m_k$ lies in the range of $\Lambda^+_{0,m}$, and the periodization identity gives
$\langle\phi^m_k,\phi^m_l\rangle=\delta_{kl}$. On the support, $|p|\le(k_0+6)/\ell$ and
$p_3\ge(k_0-4)/\ell$.

\emph{Step 2: the massless limit.} Substituting $p=(\zeta+k_0\zhat)/\ell$ gives
$|\phi^m_k(x)|=\ell^{-3/2}|\eta_{m,\ell}((x-y_k)/\ell)|$, where $\eta_{m,\ell}=\mathcal F^{-1}[f\,\tilde u_{m,\ell}]$
and $\tilde u_{m,\ell}(\zeta)=u_m((\zeta+k_0\zhat)/\ell)$. As $\ell\to0$, $\tilde u_{m,\ell}\to\tilde u$
uniformly on $\operatorname{supp}f$, with error $O(m\ell)$; here $\tilde u$ is the massless spinor,
the same for both masses. Since $f$ has compact support, Plancherel gives $\eta_{m,\ell}\to\eta_0$ in $H^{1/2}(\R^3)$.

\emph{Step 3: trial state and kinetic energy.} The trial state is
\[
\Psi=\bigwedge_{k=1}^N\bigl(\phi^{m_1}_k\otimes|{\uparrow}\rangle\bigr)\otimes\bigwedge_{k=1}^N\bigl(\Theta\phi^{m_2}_k\otimes|{\downarrow}\rangle\bigr),
\]
where $\bigwedge$ denotes the normalized antisymmetric product, that is, the Slater determinant:
species~1 occupies the orbitals $\phi^{m_1}_k$ with colour up, and species~2 the time-reversed orbitals
$\Theta\phi^{m_2}_k$ with colour down.
Time reversal preserves the range of $\Lambda^+_{0,m_2}$, densities and kinetic energy, and reverses
currents. Hence the densities are $\rho_1=\sum_k|\phi^{m_1}_k|^2$ and $\rho_2=\sum_k|\phi^{m_2}_k|^2$,
and the currents are $j_1$ and $-j_2$, where $j_1=\sum_k(\phi^{m_1}_k)^\dagger\balpha\,\phi^{m_1}_k$
and $j_2$ likewise. Each orbital has
$\langle D_0\rangle\le(k_0+6)/\ell+m$. For $m\ell\le1$ it also has
$\langle\phi^m_k,\alpha_3\phi^m_k\rangle=\int|\hat\phi^m_k|^2p_3/E_m\ge\frac{k_0-4}{k_0+7}>\frac12$.

\emph{Step 4: magnetic energy.} Let $R=\sqrt3\,\ell N^{1/3}$ and
$w=\curl\bigl(\chi(x/R)\tfrac12\zhat\times x\bigr)$, with $\chi$ as in the proof of (a). Then $\nabla\cdot w=0$, $w=\zhat$ on
$\mathcal B(0,R)$, $|w|\le c_\chi$, and $\int|\curl w|^2=RY_1$, where $c_\chi$ and $Y_1$ depend only on
$\chi$. Take $A=ia\tau^3$ with $a=-tw$, $t>0$;
this is in Coulomb gauge. Colour-up sees $p+a$ and colour-down sees $p-a$, so the magnetic terms of
the kinetic energy are
\[
\int a\cdot j_1-\int a\cdot(-j_2)=-t(X_1+X_2),\qquad X_1=\int w\cdot j_1,\quad X_2=\int w\cdot j_2 .
\]
Since $|w-\zhat|\le(1+c_\chi)\1_{|x|>R}$, $|\psi^\dagger\balpha\psi|\le|\psi|^2$ pointwise, and $|x|>R$ implies $|x-y_k|>\frac{\sqrt3}2\ell N^{1/3}$,
\begin{gather*}
X_1\ge N\bigl[\tfrac12-(1+c_\chi)\,\epsilon_{m_1}(\ell)\bigr]\quad\text{and likewise }X_2,\\
\epsilon_m(\ell)=\int_{|z|>\sqrt3N^{1/3}/2}|\eta_{m,\ell}|^2\ \xrightarrow[\ell\to0]{}\
\epsilon_N:=\int_{|z|>\sqrt3N^{1/3}/2}|\eta_0|^2 .
\end{gather*}
The field energy is $\frac1{8\pi\alpha}\int4|\curl a|^2=t^2RY_1/(2\pi\alpha)$.

\emph{Step 5: Coulomb energy.} As in Section~\ref{sec:collapse}, $K_A^{33}=1/(4\pi|x-y|)$ and every
colour-flip term has vanishing expectation: within a species all orbitals have the same colour, and
between the species there are no exchange terms, since particles of different species are
distinguishable. So the Coulomb energy is that of an abelian system with
charges $\pm e$. For Slater determinants, exchange terms only lower it, so
$\langle\alpha V_A\rangle\le\alpha e^2D(\rho_1-\rho_2)$, where
$D(h)=\frac12\iint h(x)h(y)|x-y|^{-1}dx\,dy$. By scaling,
$D(\rho_1-\rho_2)=\ell^{-1}D(\tilde\rho_{1,\ell}-\tilde\rho_{2,\ell})$, with
$\tilde\rho_{1,\ell}(z)=\sum_k|\eta_{m_1,\ell}(z-y_k/\ell)|^2$ and $\tilde\rho_{2,\ell}$ likewise. By
Step~2 (and $H^{1/2}\subset L^3$), both converge to the same limit in
$L^1\cap L^{3/2}\subset L^{6/5}$. Hence $D(\tilde\rho_{1,\ell}-\tilde\rho_{2,\ell})\to0$ by the Hardy--Littlewood--Sobolev inequality~\cite{LL01}.

\emph{Step 6: conclusion.} Choose $N$ so large that $\epsilon_N\le1/(8(1+c_\chi))$ and
$N^{2/3}>16\sqrt3(k_0+6)Y_1/(\pi\alpha)$. For $\ell$ small, $\epsilon_{m_1}(\ell)$ and $\epsilon_{m_2}(\ell)$ are at most
$1/(4(1+c_\chi))$, so $X_1,X_2\ge N/4$ and $X_1+X_2\ge N/2$. The energy of the trial state is at most the
kinetic energy $2(k_0+6)N/\ell+N(m_1+m_2)$ of Step~3, plus $-t(X_1+X_2)+t^2RY_1/(2\pi\alpha)$ from
Step~4, plus the Coulomb energy of Step~5. The minimum over $t$ of $-tN/2+t^2RY_1/(2\pi\alpha)$ is
$-\pi\alpha N^2/(8RY_1)=-\pi\alpha N^{5/3}/(8\sqrt3\,\ell Y_1)$. Hence
\[
\langle H_D^0\rangle\le\frac1\ell\Bigl[2(k_0+6)N-\frac{\pi\alpha}{8\sqrt3\,Y_1}N^{5/3}
+\alpha e^2D(\tilde\rho_{1,\ell}-\tilde\rho_{2,\ell})\Bigr]+N(m_1+m_2).
\]
By Step~5 and the choice of $N$, the bracket tends to a negative number as $\ell\to0$, so
$\langle H_D^0\rangle\to-\infty$.
\end{proof}

\begin{remark}\label{rem:abelian}
Part (a) holds equally in the abelian case. There it is excluded by the Coulomb gauge used
in~\cite{LSS97}, where the instability of the free projection involves many particles, as in
part (b). In the non-abelian case the Coulomb gauge does not remove the ambiguity: it admits Gribov
copies~\cite{Gr78,Sn78}, gauge-equivalent fields that all satisfy the gauge condition, so even within
the Coulomb gauge the free-projection model may depend on the choice of representative. The projection
$\Lambda_A^+$ of Theorem~\ref{thm:dirac} is gauge covariant and has no such ambiguity.
\end{remark}

\section{The general setting: proof of Theorem~\ref{thm:groups}}\label{sec:groups}

We prove Theorem~\ref{thm:groups} in steps. For one representation and one kinetic energy,
nothing in the stability proofs is specific to $SU(2)$ except two constants: the factor $3$ in the
electrostatic inequality, which came from $\sum_{ab}\mathcal K^{ab}\tau^a\tau^b=(\tr\mathcal K)\1$, and the
constant $\frac{\sqrt3}2$ in the Zeeman bound. We identify their replacements first, then admit
several representations, mixtures of kinetic energies and product groups, and finally prove the two
collapse results.

We use the setting of Section~\ref{sec:general}, with the orthonormal basis $\{t_a\}$ chosen so that
its first elements span the Cartan subalgebra $\mathfrak t$ and the remaining ones span $\mathfrak t^\perp$. We first take all particles in one unitary representation
$\mathcal R$ of $G$ on $\C^n$, with generators $\mathcal T^a=i\mathcal R(t_a)$; the maxima in the definitions of
$C_{\mathcal R}$ and $\tau_{\mathcal R}$ are then over this representation alone. The derivation of
Appendix~\ref{app:derivation} gives $e_i=1$. A weight vector $v$ with weight $w$ satisfies $\mathcal T^av=w^av$ for $t_a\in\mathfrak t$, where
$w^a=t_a\cdot w$ are the components of $w$. For $t_a\in\mathfrak t^\perp$ one has $\langle v,\mathcal T^av\rangle=0$, because
such $\mathcal T^a$ map a weight vector to a combination of weight vectors with other weights.

\begin{proof}[Proof of Theorem~\ref{thm:groups}]
\emph{(i) Stability: one representation, one kinetic energy.} Lemmas~\ref{lem:dom}, \ref{lem:pos}, \ref{lem:maxp} and~\ref{lem:newton} hold
for any $\mathfrak g$-valued $A$ (with $\R^3$ replaced by $\mathfrak g$), in both representations, since $\mathcal R(A_j)$ and $[A_j,\cdot\,]$
are anti-Hermitian. In Proposition~\ref{prop:electro}, the only $SU(2)$-specific step is
$\sum_{ab}\mathcal K^{ab}\tau^a\tau^b=(\tr\mathcal K)\1$. Since $\|\mathcal K\|\1-\mathcal K$ is symmetric and
positive semidefinite, it can be written as $\mathcal M^{\mathsf T}\mathcal M$, and then
$\sum_{ab}(\|\mathcal K\|\delta^{ab}-\mathcal K^{ab})\mathcal T^a\mathcal T^b=\sum_k(\sum_a\mathcal M_{ka}\mathcal T^a)^2\ge0$. Hence
$\sum_{ab}\mathcal K^{ab}\mathcal T^a\mathcal T^b\le\|\mathcal K\|C_{\mathcal R}$, and the proof gives
$V_A\ge-C_{\mathcal R}\sum_ie_i^2/d_i$. The Zeeman term is $-\sum_{l,a}B_l^a\sigma_l\otimes\mathcal T^a$, of
norm at most $\tau_{\mathcal R}\sum_l|B_l|\le\sqrt3\,\tau_{\mathcal R}|B|$; the same holds with
$\Sigma_l$ in place of $\sigma_l$. All kinetic estimates use only the total number $q$ of internal
states, now counted as in Section~\ref{sec:general}. The constants change as follows. The
Coulomb coefficient becomes $\kappa=2MC_{\mathcal R}\alpha Q^2$, so $\kappa/\varepsilon=8\pi C_{\mathcal R}\alpha^2Q^2$,
and the constants of Theorems~\ref{thm:spinless} and~\ref{thm:pauli} acquire the factor
$(2C_{\mathcal R}/6)^2$. The Zeeman constant enters only through $b^2=3\tau_{\mathcal R}^2|B|^2$, in $L_2$ of
Lemma~\ref{lem:onebody} and in $\int b^2$ in the proof of Theorem~\ref{thm:dirac}; so $L_2$ is
multiplied by $4\tau_{\mathcal R}^2$, both branches of $\kappa_P$ by $1/(4\tau_{\mathcal R}^2)$, and $C_P$ is
unchanged. This gives the constants displayed after Theorem~\ref{thm:groups}.

\emph{Several representations.} If the particles carry different representations, the same argument bounds the self-energy of
particle $i$ by $\|\mathcal K_i\|C_{\mathcal R_i}$, so that $V_A\ge-\sum_iC_{\mathcal R_i}e_i^2/d_i$, and bounds its
Zeeman term by $\sqrt3\,\tau_{\mathcal R_i}|B|$; the adjoint kernel $K_A$ is the same for all
particles, since it is fixed by the group. The kinetic operators now depend on the representation
through $\mathcal R_i(A)$, so the one-body bounds are applied species by species, as in
Remark~\ref{rem:nucleiproof}. Since every one-body bound, including the field term of
Lemma~\ref{lem:onebody}, is linear in the number of internal states, only the total $q$ enters.
Replacing $C_{\mathcal R_i}$ and $\tau_{\mathcal R_i}$ by their maxima gives \textup{(i)} when all species
have the same kinetic energy.

\emph{Mixtures of kinetic energies.} If the species have different kinetic energies, the one-body
bounds are again applied species by species. Let $\lambda=C_{\mathcal R}\alpha Q^2q^{1/3}/\ell_0$. If all species are
non-relativistic, Lemma~\ref{lem:onebody} holds also for the spinless ones, since its proof uses only
$T\ge0$ and $T\ge p_A^2-b$; so Proposition~\ref{prop:kinP} and the proof of Theorem~\ref{thm:pauli}
apply. If all species are relativistic, the proof of Theorem~\ref{thm:dirac} applies, since a species
with kinetic energy $\sqrt{p_A^2+m^2}$ contributes at least $(1-\lambda)|p_A|\ge0$; if no species is a Dirac
particle, the field energy is not needed, and $\lambda\le1$ suffices, as in Theorem~\ref{thm:rel}. If both kinds are present,
suppose that $\lambda\le\frac12$, and let $c=2\lambda$. By the spectral theorem, a non-relativistic particle has
$p_A^2/2m\ge c|p_A|-mc^2/2$, or $(\sigma\cdot p_A)^2/2m\ge c|\sigma\cdot p_A|-mc^2/2$; a relativistic one has
$\sqrt{p_A^2+m^2}\ge|p_A|$, or $D_A\ge|\balpha\cdot p_A|$ on the range of $\Lambda_A^+$. Moreover,
Propositions~\ref{prop:electro} and~\ref{prop:relkin} give $\alpha V_A\ge-\lambda\sum_i|p_A|_i$. A spinless
particle then contributes at least $\lambda|p_A|$ or $(1-\lambda)|p_A|$, both nonnegative. A particle with spin
contributes $c(|\sigma\cdot p_A|-\frac12|p_A|)$ or $|\balpha\cdot p_A|-\lambda|p_A|$. These terms are bounded as in
the proof of Theorem~\ref{thm:dirac}, since $(\balpha\cdot p_A)^2$ consists of two copies of
$(\sigma\cdot p_A)^2$; as $c\le1$ and $\lambda\le\frac12$, the field energy pays for them if
$24\pi\tau_{\mathcal R}^2\alpha qL_D\le(\frac34)^{3/2}$. What remains is $-\frac{c^2}2\sum_im_i$, summed over the
non-relativistic particles, which is at least $-(2C_{\mathcal R}^2q^{2/3}/\ell_0^2)\,M\alpha^2Q^4N$.
So for a mixture of both kinds the threshold is $\lambda\le\frac12$, together with
$24\pi\tau_{\mathcal R}^2\alpha qL_D\le(\frac34)^{3/2}$ if some species has spin, and
$K=2C_{\mathcal R}^2q^{2/3}/\ell_0^2$. When all species are non-relativistic, the threshold and $K$ are those
of Theorem~\ref{thm:pauli}, or of Theorem~\ref{thm:spinless}, with no threshold, if all are spinless; when
all are relativistic, the threshold is that of Theorem~\ref{thm:dirac}, or of Theorem~\ref{thm:rel} if none is a Dirac
particle. In all cases the constants are replaced as in the display after
Theorem~\ref{thm:groups}.

\emph{Product groups.} For $G=G_1\times\dots\times G_r$ with couplings $\alpha_1,\dots,\alpha_r$, take the invariant
inner product $\sum_j(\alpha/\alpha_j)\langle\cdot,\cdot\rangle_j$, where $\langle\cdot,\cdot\rangle_j$ is an invariant inner
product on the Lie algebra of $G_j$. The field energy then splits into $\sum_j\frac1{8\pi\alpha_j}\int|B_j|^2$,
and the Coulomb terms of factor $j$ carry the coupling $\alpha_j$.

\emph{(ii) Relativistic collapse.} At $A=0$ the kernel is $K_0^{ab}=\delta^{ab}/(4\pi|x-y|)$, whatever the
representations of the particles. Put the two particles in an eigenvector of $\sum_a\mathcal T^a\otimes\mathcal T^a$ with eigenvalue
$-\kappa_{\mathcal R}$, antisymmetric in colour if the particles belong to one species; the spatial
part of the trial state is symmetric. There $\alpha V_0=-\alpha e_1e_2\kappa_{\mathcal R}/|x_1-x_2|$, and the trial
states of Theorem~\ref{thm:relcollapse} give collapse if $\alpha e_1e_2\kappa_{\mathcal R}>4/\pi$, which is the stated condition.
For two particles in different representations $\mathcal R_1$ and $\mathcal R_2$, let
$-\kappa_{\mathcal R_1\mathcal R_2}$ be the smallest eigenvalue of $\sum_a\mathcal T^a_{(1)}\otimes\mathcal T^a_{(2)}$. The same
trial state then gives collapse if $\alpha e_1e_2\kappa_{\mathcal R_1\mathcal R_2}>4/\pi$; for a quark and an antiquark,
the colour singlet gives $\kappa_{\mathcal R_1\mathcal R_2}=C_{\mathcal R}$.

\emph{(iii) Pauli collapse.} Fix $v_1,v_2$ and $t$ as in the statement. As in the proof of Theorem~\ref{thm:collapse}, a field along $t$ acts on
each $v_j$ as an abelian field, so both particles can be put in zero modes. Let $(a,\psi_0)$ be the Loss--Yau pair of Section~\ref{sec:collapse}, so that
$\psi_0$ and $\tilde\psi_0=\sigma_2\overline{\psi_0}$ are zero modes of $\sigma\cdot(p+a)$ and $\sigma\cdot(p-a)$, respectively.
Take $A_j=\omega^{-1}a_jt$. Since $\mathcal R(A_j)=\omega^{-1}a_j\mathcal R(t)$ and $\mathcal R(t)v_j=-i(t\cdot w_j)v_j$, the operator $\sigma\cdot p_A$ maps a function
$f\otimes v_j$, with $f$ spinor-valued, to $\bigl(\sigma\cdot(p-(t\cdot w_j/\omega)a)f\bigr)\otimes v_j$. So it annihilates
$\phi_j=\pi^{-1}\psi_0\otimes v_j$ if $t\cdot w_j=-\omega$, and $\phi_j=\pi^{-1}\tilde\psi_0\otimes v_j$ if $t\cdot w_j=\omega$.

Particles of different species are distinguishable (Section~\ref{sec:hamiltonians}), and for them we take the
product state $\phi_1\otimes\phi_2$. Two particles of one species need an antisymmetric state. If
$t\cdot w_1=-t\cdot w_2$, the orbitals $\phi_1$ and $\phi_2$ have the spinor parts $\psi_0$ and $\tilde\psi_0$, so they are
orthogonal, and we take their Slater determinant. In both cases the kinetic energy vanishes.

The operator $[t,\cdot\,]$ vanishes on $\mathfrak t$ and preserves $\mathfrak t^\perp$, so $K_A$ is block diagonal, with
$K_A^{ab}=\delta^{ab}/(4\pi|x-y|)$ for $t_a,t_b\in\mathfrak t$. Moreover, $\phi_j^\dagger\mathcal T^a\phi_j$ equals $\rho\,w_j^a$ for
$t_a\in\mathfrak t$ and vanishes for $t_a\in\mathfrak t^\perp$. For the product state this gives
$\langle\alpha V_A\rangle=\alpha e_1e_2(w_1\!\cdot w_2)I_{\LY}$. For the Slater determinant, use the formula for
$\langle V_A\rangle$ in the proof of Theorem~\ref{thm:collapse}, with $\tau^a$ replaced by $\mathcal T^a$. The direct term
now gives $\rho(x)\rho(y)(w_1+w_2)^a(w_1+w_2)^b$. In the exchange term, the terms with $k\ne l$ vanish as there,
since the spinor parts are $\psi_0$ and $\tilde\psi_0$, and the terms with $k=l$ give
$\rho(x)\rho(y)(w_1^aw_1^b+w_2^aw_2^b)$. The difference is $\rho(x)\rho(y)(w_1^aw_2^b+w_2^aw_1^b)$, so
$\langle\alpha V_A\rangle$ is again $\alpha e_1e_2(w_1\!\cdot w_2)I_{\LY}$. Since the components of $A$ commute and $|t|=1$, $|B|^2=\omega^{-2}|\curl a|^2$, so the
field energy is $18\pi^2/(8\pi\alpha\omega^2)$. Dilation gives collapse if
$\alpha|e_1e_2\,w_1\!\cdot w_2|/\pi>18\pi^2/(8\pi\alpha\omega^2)$, which is the stated condition. For
$SU(2)$ with $\mathcal T^a=\tau^a/2$ this reproduces Theorem~\ref{thm:collapse}.
\end{proof}

\appendix
\section{Derivation of the model}\label{app:derivation}

We derive the Coulomb potential~\eqref{eq:VAintro} and the Hamiltonians of Section~\ref{sec:model} from
the classical Yang--Mills Lagrangian coupled to matter. The coupling constant $g$ is scaled into the
gauge field: $A_\mu$ is $\su$-valued, the matter field sees $\partial_\mu+A_\mu$, and the field energy carries the factor $g^{-2}=(4\pi\alpha)^{-1}$. The derivation has four steps:
the Gauss law, the energy, the elimination of the longitudinal electric field, and the quantization of the
matter field.

Let the temporal component $A_0$ and $A=(A_1,A_2,A_3)$ be $\su$-valued, and let $E=\nabla_AA_0-\partial_tA$, where
$\nabla_A$ acts on $A_0$ in the adjoint representation. For a Schr\"odinger field $\psi$ of mass $m$
with values in the internal space of Section~\ref{sec:model}, the Lagrangian is
\begin{equation}\label{eq:lagr}
\mathcal L=\frac1{2g^2}\int\bigl(|E|^2-|B|^2\bigr)
+\int\Bigl(\operatorname{Re}\bigl\langle\psi,i(\partial_t+A_0)\psi\bigr\rangle_{\C}
-\frac1{2m}|\nabla_A\psi|^2\Bigr).
\end{equation}
Since $A_0$ is anti-Hermitian on $\C^2_{\rm col}$, the term $\langle\psi,iA_0\psi\rangle_{\C}$ is real.

\emph{The Gauss law.} $\mathcal L$ contains no time derivative of $A_0$, so $A_0$ is a Lagrange multiplier.
Vary $A_0$ by a compactly supported $\delta A_0$. This changes $E$ by $\nabla_A\delta A_0$, so
\[
\delta\,\frac1{2g^2}\int|E|^2=\frac1{g^2}\int\langle E,\nabla_A\delta A_0\rangle
=-\frac1{g^2}\int\langle\nabla_A\cdot E,\delta A_0\rangle .
\]
The second equality is integration by parts. It holds for $\nabla_A$ because $[A_j,\cdot\,]=A_j\times\cdot$
is antisymmetric for the inner product, so that
$\partial_j\langle u,v\rangle=\langle\nabla_{A,j}u,v\rangle+\langle u,\nabla_{A,j}v\rangle$.
Since $i\,\delta A_0=\frac12\sum_a\delta A_0^a\tau^a$, the matter term changes by
$\int\operatorname{Re}\langle\psi,i\,\delta A_0\psi\rangle_{\C}=\int\sum_a\delta A_0^a\,\tfrac12\psi^\dagger\tau^a\psi
=\int\langle\delta A_0,\rho\rangle$, where $\rho^a=\frac12\psi^\dagger\tau^a\psi$. The two changes must
add up to zero for every $\delta A_0$, which gives the Gauss law
\begin{equation}\label{eq:gauss}
\nabla_A\cdot E=g^2\rho=4\pi\alpha\rho .
\end{equation}

\emph{Energy.} The momentum conjugate to $A$ is $-E/g^2$. In the Legendre transform the terms containing $A_0$ combine to
$\int\langle A_0,g^{-2}\nabla_A\cdot E-\rho\rangle$, which vanishes by~\eqref{eq:gauss}. This gives the energy
\begin{equation}\label{eq:energy}
\mathcal E(\psi,A,E)=\frac1{2g^2}\int\bigl(|E|^2+|B|^2\bigr)+\frac1{2m}\int|\nabla_A\psi|^2 .
\end{equation}

\emph{Longitudinal field.} Write $E=E_\perp-\nabla_A\varphi$ with $\nabla_A\cdot E_\perp=0$ and
$\varphi$ decaying at infinity. The two parts are orthogonal in $L^2$, because
$\int\langle E_\perp,\nabla_A\varphi\rangle=-\int\langle\nabla_A\cdot E_\perp,\varphi\rangle=0$.
By~\eqref{eq:gauss}, $-\Delta_A\varphi=g^2\rho$, so $\varphi=g^2K_A\rho$, and
\[
\frac1{2g^2}\int|\nabla_A\varphi|^2=\frac1{2g^2}\langle\varphi,-\Delta_A\varphi\rangle
=\frac{g^2}2\sum_{a,b}\iint\rho^a(x)K_A^{ab}(x,y)\rho^b(y)\,dx\,dy ,
\]
which, since $g^2=4\pi\alpha$, is the right side of~\eqref{eq:coulombenergy}. The energy thus splits
into the transverse electric energy $\frac1{8\pi\alpha}\int|E_\perp|^2\ge0$, the magnetic energy
$\frac1{8\pi\alpha}\int|B|^2$, the matter kinetic energy, and this instantaneous Coulomb energy.

\emph{Quantization.} Only the matter field is quantized; the gauge field remains classical.
Replace $\psi$ by fermionic field operators, one for each species, and normal-order the Coulomb
energy. This removes the self-interaction of each particle. With several species, $\rho^a$ is the
sum over the species of $\frac12\psi^\dagger\tau^a\psi$. In the sector with $N$ particles, the colour
charge density is $\rho^a(x)=\frac12\sum_i\delta(x-x_i)\tau^a_{(i)}$, and the Coulomb energy becomes
\[
\frac{g^2}2\sum_{i\ne j}\sum_{a,b}K_A^{ab}(x_i,x_j)\,\tfrac14\tau^a_{(i)}\tau^b_{(j)}
=4\pi\alpha\sum_{i<j}\tfrac14\sum_{a,b}K_A^{ab}(x_i,x_j)\,\tau^a_{(i)}\tau^b_{(j)} ,
\]
since $K_A^{ab}(x,y)=K_A^{ba}(y,x)$. This is $\alpha V_A$ with $e_i=\frac12$: the factor $\frac12$ is that of the generators $\tau^a/2$ of
the fundamental representation, since $V_A$ is written with the Pauli matrices $\tau^a$.

\emph{Other kinetic energies.} The matter term in~\eqref{eq:lagr} enters the Gauss law only through
$\operatorname{Re}\langle\psi,iA_0\psi\rangle_{\C}$. Replacing $(2m)^{-1}|\nabla_A\psi|^2$ by the Pauli
energy $(2m)^{-1}|(\sigma\cdot\nabla_A)\psi|^2$, or by the Dirac energy
$\operatorname{Re}\langle\psi,D_A\psi\rangle_{\C}$, therefore leaves~\eqref{eq:gauss} and the Coulomb
energy unchanged. The Pauli energy gives $H_P$, and the Dirac energy gives $H_D$ once the particles are
restricted to the positive spectral subspace of $D_A$, as in Section~\ref{sec:relmodels}. The operator
$\sqrt{p_A^2+m^2}$ in $H_R$ is not derived from a local Lagrangian; it is the standard model for
relativistic effects, as in~\cite{LY88} and~\cite[Chap.~8]{LS10}.

\emph{The Hamiltonians.} Altogether, for $e_i=\frac12$ and the Schr\"odinger, Pauli and Dirac kinetic energies, this gives the
Hamiltonian $H_{\rm tot}$ of~\eqref{eq:Htot}, and the ground state energy $\mathcal E_N$ of
Section~\ref{sec:intro}. The transverse electric energy
$\frac1{8\pi\alpha}\int|E_\perp|^2$ is needed for the time-dependent equations, but not for the ground
state energy: since $E_\perp$ enters the energy only through this nonnegative term, the energy is
smallest when $E_\perp=0$. The reduction to $H_S$, $H_P$, $H_R$ and $H_D$ is explained in
Sections~\ref{sec:hamiltonians} and~\ref{sec:relmodels}.

For classical charge densities, the same argument shows that the Coulomb
energy~\eqref{eq:coulombenergy} is the smallest electric energy compatible with the Gauss law; in
particular, it is at most the electric energy in the Coulomb gauge. We do not claim this for the normal-ordered $N$-particle Hamiltonians: the comparison includes the
self-energy terms, and normal ordering removes them, so the inequality need not survive.

\section{Rough fields}\label{sec:rough}

The stability theorems are proved first for smooth $A$, but the constants do not depend on $A$ at
all.
One therefore expects them to hold for every $A$ for which the Hamiltonians make sense. This appendix shows that the bounds for $H_S$ and $H_R$ hold for every $A\in L^2_{\rm loc}$, and that the bound for $H_P$ holds when, in addition, $\int|B|^2<\infty$. This is the natural class: the form $\int|\nabla_A\psi|^2$ is finite on $C_c^\infty$
exactly when $A\in L^2_{\rm loc}$, and then the quadratic term $[A_j,A_k]$ of the curvature is locally
integrable. The one piece of work is to show that the covariant Laplacian, its heat kernel
and the Coulomb kernel behave well when $A$ is approximated by smooth fields. The plan is as
follows. Lemma~\ref{lem:maxmin} identifies the form domain of $p_A^2$, which is needed to recognize
limits. Lemma~\ref{lem:srs} shows that resolvents and heat kernels converge when $A$ is approximated
by smooth fields. Proposition~\ref{prop:rough} then transfers the diamagnetic inequality, the bound
on $K_A$ and the stability bounds for $H_S$ and $H_R$, and Proposition~\ref{prop:roughpauli} does the
same for $H_P$.

Throughout this appendix $A=(A_1,A_2,A_3)$ is a measurable vector field on $\R^3$ whose components
are anti-Hermitian $n\times n$ matrices with $A_j\in L^2_{\rm loc}$, and $\nabla_A=\nabla+A$ on
$L^2(\R^3;\C^n)$. This covers both representations of an $\su$-valued $A\in L^2_{\rm loc}$: the
fundamental one, and the adjoint one $A\mapsto[A,\cdot\,]$ after complexification. We define $p_A^2$
as the operator of the closure of the form $\int|\nabla_A\psi|^2$ on $C_c^\infty$. The form is
closable: if $\psi_k\to0$ and $\nabla_A\psi_k\to w$ in $L^2$, then for $\varphi\in C_c^\infty$,
$\int\langle\varphi,w_j\rangle=\lim_k\int\langle-\partial_j\varphi-A_j\varphi,\psi_k\rangle=0$,
because $A_j\varphi\in L^2$.

The first lemma says that the minimal form (closure from $C_c^\infty$) and the maximal form (all $u$
with $\nabla_Au\in L^2$) coincide; in the abelian case this is due to Simon~\cite[Theorem~2.2]{Si79b}.

\begin{lemma}[Maximal and minimal forms]\label{lem:maxmin}
Let $u\in L^2(\R^3;\C^n)$ be such that the distribution $\nabla_Au=\nabla u+Au$ lies in $L^2$. Then
there are $u_k\in C_c^\infty$ with $u_k\to u$ and $\nabla_Au_k\to\nabla_Au$ in $L^2$. In particular,
$u$ lies in the form domain of $p_A^2$, and $|u|\in H^1$ with $|\nabla|u||\le|\nabla_Au|$ a.e.
\end{lemma}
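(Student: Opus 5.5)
The plan follows Simon's abelian argument: first a pointwise Kato-type inequality for $|u|$, then truncation, then localisation, and finally mollification with a commutator estimate.

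\emph{Step 1: the diamagnetic bound.} Set $u_\varepsilon=(|u|^2+\varepsilon^2)^{1/2}$. Since each $A_j$ is anti-Hermitian, $\operatorname{Re}\langle u,A_ju\rangle=0$, and so formally
\[
\partial_ju_\varepsilon=\operatorname{Re}\langle u,\nabla_{A,j}u\rangle/u_\varepsilon .
\]
To make this rigorous I would mollify $u$ alone: $u*j_\delta\to u$ in $L^2_{\rm loc}$ and $\nabla(u*j_\delta)\to\nabla u$ in $L^1_{\rm loc}$. Here $\nabla u=\nabla_Au-Au\in L^1_{\rm loc}$, because $A\in L^2_{\rm loc}$ and $u\in L^2$. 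The chain rule for $u_\varepsilon$ then holds in $W^{1,1}_{\rm loc}$. Because $\operatorname{Re}\langle u,Au\rangle=0$ pointwise, this gives $|\nabla u_\varepsilon|\le|\nabla_Au|$. Letting $\varepsilon\to0$ yields $|u|\in H^1$ with $|\nabla|u||\le|\nabla_Au|$ a.e. This is the same local argument already invoked in Lemma~\ref{lem:ball}.

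\emph{Step 2: reduce to bounded, compactly supported $u$.} For truncation, use $T_Mu=u\,\min(1,M/|u|)$, for which
\[
\nabla_A(T_Mu)=\min(1,M/|u|)\,\nabla_Au+u\,\nabla\min(1,M/|u|).
\]
The second term is supported where $|u|>M$ and is bounded by $|\nabla|u||\le|\nabla_Au|$. So $T_Mu\to u$ and $\nabla_AT_Mu\to\nabla_Au$ in $L^2$ by dominated convergence. For localisation, multiply by $\chi(x/R)$; the error $u\nabla\chi/R\to0$ in $L^2$. It therefore suffices to treat $u\in L^\infty$ with compact support.

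\emph{Step 3: mollification.} Set $u_k=u*j_{1/k}$, which lies in $C_c^\infty$. Then $\nabla_Au_k-(\nabla_Au)*j_{1/k}=Au_k-(Au)*j_{1/k}$. Since $u$ is bounded with compact support, $Au\in L^2$, so $(Au)*j_{1/k}\to Au$ in $L^2$. Also $u_k\to u$ a.e. with $|u_k|\le\|u\|_\infty$, and the supports stay in a fixed compact set where $A\in L^2$, so $Au_k\to Au$ in $L^2$ by dominated convergence. Together with $(\nabla_Au)*j_{1/k}\to\nabla_Au$, this gives $\nabla_Au_k\to\nabla_Au$.

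A diagonal argument over $M$, $R$ and $k$ completes the approximation. The form-domain claim then follows from the definition of $p_A^2$ as a closure. Step~1 was proved directly for general $u$, so it holds independently of the approximation. The main obstacle is the commutator term in Step~3, which is exactly why the truncation of Step~2 is needed; that truncation in turn relies on the pointwise inequality of Step~1.
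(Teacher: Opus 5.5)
Your proposal is correct and follows the paper's proof step for step. It establishes the pointwise diamagnetic inequality via $(|u|^2+\varepsilon^2)^{1/2}$ and $\operatorname{Re}\langle u,Au\rangle=0$, then truncates, applies a spatial cutoff and mollifies, with boundedness and compact support giving $Au_k\to Au$ in $L^2$ by dominated convergence. The differences are cosmetic: you truncate with the Lipschitz factor $\min(1,M/|u|)$ instead of a smooth $\phi(|u|/\lambda)$, which is harmless because $\nabla|u|=0$ a.e.\ on $\{|u|=M\}$, and you write the last step as a commutator estimate rather than going through $u\in H^1$.
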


\begin{proof}
\emph{Pointwise diamagnetic inequality.} Since $Au\in L^1_{\rm loc}$, $\nabla u=\nabla_Au-Au\in L^1_{\rm loc}$, so
$u\in W^{1,1}_{\rm loc}$. For $\varepsilon>0$ the chain rule gives
$\nabla(|u|^2+\varepsilon^2)^{1/2}=\operatorname{Re}\langle u,\nabla u\rangle(|u|^2+\varepsilon^2)^{-1/2}$.
Since $A$ is anti-Hermitian, $\operatorname{Re}\langle u,Au\rangle=0$, so we may replace $\nabla u$ by
$\nabla_Au$. Letting $\varepsilon\to0$ gives $|u|\in H^1$ and $|\nabla|u||\le|\nabla_Au|$.

\emph{Truncation.} Fix $\phi\in C^\infty([0,\infty))$ with $0\le\phi\le1$, $\phi=1$ on $[0,1]$ and
$\phi=0$ on $[2,\infty)$, and let $u_\lambda=\phi(|u|/\lambda)u$. Then $|u_\lambda|\le2\lambda$, and
\[
\nabla_Au_\lambda=\phi(|u|/\lambda)\nabla_Au+\lambda^{-1}\phi'(|u|/\lambda)(\nabla|u|)\,u .
\]
The last term is bounded by $2\|\phi'\|_\infty\1(|u|\ge\lambda)|\nabla_Au|$, which tends to $0$ in
$L^2$ as $\lambda\to\infty$. Hence $u_\lambda\to u$ and $\nabla_Au_\lambda\to\nabla_Au$ in $L^2$ by
dominated convergence.

\emph{Cutoff.} For $\chi\in C_c^\infty$ with $\chi=1$ near $0$ and $\chi_R=\chi(\cdot/R)$,
$\nabla_A(\chi_Ru)=\chi_R\nabla_Au+(\nabla\chi_R)u\to\nabla_Au$ as $R\to\infty$. So we may assume that
$u$ is bounded with compact support.

\emph{Mollification.} Then $Au\in L^2$, so $\nabla u=\nabla_Au-Au\in L^2$ and $u\in H^1$. Let
$u_k=j_{1/k}*u\in C_c^\infty$ for a mollifier $j$. Then $u_k\to u$ in $H^1$, the $u_k$ are bounded by
$\|u\|_\infty$ and supported in a fixed compact set, and $u_k\to u$ a.e.\ along a subsequence. By
dominated convergence $Au_k\to Au$ in $L^2$, hence $\nabla_Au_k\to\nabla_Au$ in $L^2$.
\end{proof}

\begin{lemma}[Approximation]\label{lem:srs}
Let $A_k$ be smooth with $A_k\to A$ in $L^2_{\rm loc}$. Then
$(p_{A_k}^2+1)^{-1}\to(p_A^2+1)^{-1}$ and $e^{t\Delta_{A_k}}\to e^{t\Delta_A}$ strongly, for every
$t\ge0$.
\end{lemma}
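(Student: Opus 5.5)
The plan is to prove strong resolvent convergence through convergence of the quadratic forms on a common core, and then pass to the heat semigroups by the standard functional-calculus argument.

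\emph{Step 1: a priori bounds.} Fix $f\in L^2$ and set $u_k=(p_{A_k}^2+1)^{-1}f$. Then
\[
\|u_k\|\le\|f\|,\qquad \|\nabla_{A_k}u_k\|\le\|f\|.
\]
By Lemma~\ref{lem:maxmin}, which applies to the smooth $A_k$ as well, $|u_k|$ is bounded in $H^1$. Passing to a subsequence, $u_k\rightharpoonup u$ weakly in $L^2$ and $\nabla_{A_k}u_k\rightharpoonup w$ weakly in $L^2$. On every ball, Rellich compactness for $|u_k|$ alone does not give strong local convergence of $u_k$, since the phases are uncontrolled. I would therefore only use weak limits, and identify $w=\nabla_Au$ in the distributional sense. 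For $\varphi\in C_c^\infty$,
\[
\int\langle\varphi,\nabla_{A_k,j}u_k\rangle=\int\langle-\partial_j\varphi-A_{k,j}\varphi,u_k\rangle,
\]
and $A_{k,j}\varphi\to A_j\varphi$ strongly in $L^2$ because $A_k\to A$ in $L^2_{\rm loc}$ and $\varphi$ is bounded with compact support. Combined with the weak convergence of $u_k$, this gives $\int\langle\varphi,w_j\rangle=\int\langle-\partial_j\varphi-A_j\varphi,u\rangle$. Hence $\nabla_Au=w\in L^2$, and by Lemma~\ref{lem:maxmin} $u$ lies in the form domain of $p_A^2$.

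\emph{Step 2: identify the limit.} For $\varphi\in C_c^\infty$ the resolvent equation reads
\[
\int\langle\nabla_{A_k}\varphi,\nabla_{A_k}u_k\rangle+\langle\varphi,u_k\rangle=\langle\varphi,f\rangle.
\]
Since $\nabla_{A_k}\varphi\to\nabla_A\varphi$ strongly in $L^2$ (same reason as in Step~1), strong-times-weak convergence gives
\[
\int\langle\nabla_A\varphi,\nabla_Au\rangle+\langle\varphi,u\rangle=\langle\varphi,f\rangle .
\]
By Lemma~\ref{lem:maxmin}, $C_c^\infty$ is a form core for $p_A^2$, so $u=(p_A^2+1)^{-1}f$. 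The limit is independent of the subsequence, so the whole sequence converges weakly.

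\emph{Step 3: upgrade to strong convergence.} Pair the resolvent equation for $u_k$ with $u_k$ itself:
\[
\|\nabla_{A_k}u_k\|^2+\|u_k\|^2=\langle u_k,f\rangle\to\langle u,f\rangle=\|\nabla_Au\|^2+\|u\|^2 .
\]
Weak lower semicontinuity of each term, together with convergence of the sum, forces $\|u_k\|\to\|u\|$. Weak convergence plus convergence of norms gives $u_k\to u$ strongly. This is the step I expect to need the most care: the operators $p_{A_k}^2$ and $p_A^2$ share no common domain, so everything must be routed through forms and the core property of Lemma~\ref{lem:maxmin}.

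\emph{Step 4: heat semigroups.} Strong resolvent convergence at the single point $-1$ implies strong resolvent convergence in general, and hence $g(p_{A_k}^2)\to g(p_A^2)$ strongly for every bounded continuous $g$ on $[0,\infty)$ \cite{RS75}. Taking $g(\lambda)=e^{-t\lambda}$ gives $e^{t\Delta_{A_k}}\to e^{t\Delta_A}$ strongly for every $t\ge0$.
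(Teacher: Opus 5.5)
Your proof is correct and matches the paper's argument step for step: the same a priori bounds from the energy identity, the same distributional identification of $w=\nabla_Au$ and of $u$ through the form core of Lemma~\ref{lem:maxmin}, and the same lower-semicontinuity upgrade from weak to strong convergence. The only difference is in the semigroup step, where the paper proves the needed case directly instead of citing the general functional-calculus theorem: it writes $e^{t\Delta_{A_k}}=g_t\bigl((p_{A_k}^2+1)^{-1}\bigr)$ with $g_t(x)=e^{-t(1/x-1)}$ continuous on $[0,1]$ and applies the Weierstrass theorem, which also avoids the point that your resolvent point $-1$ is real, whereas the standard theorem is usually stated for a nonreal one.
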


\begin{proof}
Fix $f\in L^2$ and let $u_k=(p_{A_k}^2+1)^{-1}f$. The energy identity
\begin{equation}\label{eq:energyid}
\|\nabla_{A_k}u_k\|^2+\|u_k\|^2=\operatorname{Re}\langle f,u_k\rangle
\end{equation}
gives $\|u_k\|\le\|f\|$ and $\|\nabla_{A_k}u_k\|\le\|f\|$. Pass to a subsequence with
$u_k\rightharpoonup u$ and $\nabla_{A_k}u_k\rightharpoonup w$ weakly in $L^2$.

\emph{Identification of $w$.} For $\varphi\in C_c^\infty$,
$\int\langle\varphi,\nabla_{A_k,j}u_k\rangle=\int\langle-\partial_j\varphi-A_{k,j}\varphi,u_k\rangle$,
first for $u_k\in C_c^\infty$ and then by density. Since $A_{k,j}\varphi\to A_j\varphi$ strongly in
$L^2$, the right side tends to $\int\langle-\partial_j\varphi-A_j\varphi,u\rangle$. Hence
$w=\nabla_Au$ as distributions. By Lemma~\ref{lem:maxmin}, $u$ lies in the form domain of $p_A^2$,
and $\|\nabla_Au\|\le\liminf\|\nabla_{A_k}u_k\|$.

\emph{Identification of $u$.} For $\varphi\in C_c^\infty$,
$\langle\nabla_{A_k}u_k,\nabla_{A_k}\varphi\rangle+\langle u_k,\varphi\rangle=\langle f,\varphi\rangle$.
Since $\nabla_{A_k}\varphi\to\nabla_A\varphi$ strongly, the limit is
$\langle\nabla_Au,\nabla_A\varphi\rangle+\langle u,\varphi\rangle=\langle f,\varphi\rangle$. As
$C_c^\infty$ is a form core, $u=(p_A^2+1)^{-1}f$. The limit does not depend on the subsequence, so
the whole sequence converges weakly.

\emph{Strong convergence.} By~\eqref{eq:energyid},
$\|\nabla_{A_k}u_k\|^2+\|u_k\|^2\to\operatorname{Re}\langle f,u\rangle=\|\nabla_Au\|^2+\|u\|^2$.
Together with the lower semicontinuity of both terms this forces $\|u_k\|\to\|u\|$, so $u_k\to u$
in $L^2$.

\emph{Semigroups.} Let $G_k=(p_{A_k}^2+1)^{-1}$ and $G=(p_A^2+1)^{-1}$, so that $0\le G_k,G\le1$.
Polynomials in $G_k$ converge strongly to the same polynomials in $G$, and $e^{t\Delta_{A_k}}=g_t(G_k)$,
where $g_t(x)=e^{-t(1/x-1)}$ for $x>0$ and $g_t(0)=0$ is continuous on $[0,1]$ for $t>0$. By the
Weierstrass theorem, $g_t(G_k)\to g_t(G)$ strongly.
\end{proof}

\begin{proposition}[Rough fields]\label{prop:rough}
Let $A\in L^2_{\rm loc}$ be $\su$-valued, with $p_A^2$ defined as above in both representations. Then
the diamagnetic inequality of Lemma~\ref{lem:dom} holds for $A$, the kernel $K_A$ exists and
satisfies~\eqref{eq:Kbound}, Proposition~\ref{prop:electro} holds for almost every configuration,
and the bounds of Theorems~\ref{thm:spinless} and~\ref{thm:rel} hold for $A$ with the same constants. In particular,
$H_S$ is stable of the first and of the second kind for every $A\in L^2_{\rm loc}$ and every $\alpha$.
\end{proposition}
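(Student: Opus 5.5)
The plan is to approximate $A$ by smooth fields and pass every ingredient to the limit using Lemma~\ref{lem:srs}. Fix smooth $A_k\to A$ in $L^2_{\rm loc}$, for instance mollifications $A_k=j_{1/k}*A$, taken componentwise so that they stay $\su$-valued.

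\emph{Diamagnetic inequality and heat kernel.} By Lemma~\ref{lem:dom} applied to $A_k$, we have $|e^{t\Delta_{A_k}}f|\le e^{t\Delta}|f|$ pointwise. Lemma~\ref{lem:srs} gives $e^{t\Delta_{A_k}}f\to e^{t\Delta_A}f$ in $L^2$, hence a.e.\ along a subsequence, so $|e^{t\Delta_A}f|\le e^{t\Delta}|f|$. Domination by a Gaussian kernel shows that $e^{t\Delta_A}$ maps $L^1\cap L^2$ into $L^\infty$ with the same bound. By the Dunford--Pettis theorem it therefore has a measurable kernel obeying $\|e^{t\Delta_A}(x,y)\|\le(4\pi t)^{-3/2}e^{-|x-y|^2/4t}$ a.e. Self-adjointness gives the symmetry of this kernel. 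We then define $K_A=\int_0^\infty e^{t\Delta_A}\,dt$; this integral converges by the Gaussian bound, gives~\eqref{eq:Kbound} a.e., and is the kernel of $(-\Delta_A)^{-1}$ in the form sense. Lemma~\ref{lem:pos} transfers verbatim, since its proof used only the semigroup property and domination.

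\emph{Electrostatic inequality.} Lemma~\ref{lem:newton} used smoothness through Dirichlet problems and the maximum principle, and the kernel $K_A$ itself is now only defined a.e. So instead of redoing it, I would pass to the limit. I would show $K_{A_k}(x,y)\to K_A(x,y)$ for a.e.\ pair. This requires more than strong convergence of the semigroups: I would test against $f\otimes g$ with $f,g\in C_c^\infty$, use dominated convergence in $t$, which is justified by the uniform Gaussian bound, and then pass to a subsequence converging a.e., using local $L^1$ convergence of the kernels. That convergence in turn follows from weak convergence plus the uniform domination $\|K_{A_k}(x,y)\|\le 1/(4\pi|x-y|)$, together with Hilbert--Schmidt-type convergence on compacts. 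The weak-versus-pointwise issue is the main obstacle. If direct a.e.\ convergence is hard to get, the fallback is to avoid pointwise statements altogether. Since $V_{A_k}(\ux)\ge-3\sum e_i^2/d_i$ for every $k$, we integrate against $|\Psi|^2$-type test functions, that is, against $\langle\Psi(\ux),\cdot\,\Psi(\ux)\rangle$ for smooth compactly supported $\Psi$. Convergence of $\langle\Psi,V_{A_k}\Psi\rangle\to\langle\Psi,V_A\Psi\rangle$ follows from weak convergence of $K_{A_k}$ as distributions in $(x_i,x_j)$, with the domination $3Q^2/|x_i-x_j|$ giving uniform integrability. The inequality in expectation for all such $\Psi$ then gives the matrix inequality for a.e.\ $\ux$.

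\emph{Stability bounds.} The remaining proofs (Proposition~\ref{prop:kin0} and Proposition~\ref{prop:relkin}, via Lemmas~\ref{lem:count}, \ref{lem:LT}, \ref{lem:superadd}, \ref{lem:relhardy}, \ref{lem:daub}) use only the diamagnetic domination and the pointwise inequality $|\nabla|\psi||\le|\nabla_A\psi|$. The first holds for $A$ as just shown, and the second is Lemma~\ref{lem:maxmin}. The CLR proof of Appendix~\ref{sec:clr} requires only semigroup domination, so Lemma~\ref{lem:count} holds for rough $A$ directly, and all downstream bounds follow with the same constants. Alternatively, we can pass each one-body bound to the limit via strong resolvent convergence and lower semicontinuity of the forms. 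Combining these bounds with the a.e.\ electrostatic inequality yields Theorems~\ref{thm:spinless} and~\ref{thm:rel} for $A$, which gives stability of the second kind.

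Stability of the first kind for $H_S$ is immediate from this. For well-definedness, I would note that $|V_A|\le3Q^2\sum_{i<j}|x_i-x_j|^{-1}$ is infinitesimally form bounded relative to $\sum_i p_{A,i}^2$, by Hardy's inequality combined with Lemma~\ref{lem:maxmin}. So the form of $H_S$ is closed and bounded below on the form domain of $\sum_ip_{A,i}^2$, and the lower bound extends from smooth compactly supported states by density in that form norm, again by Lemma~\ref{lem:maxmin}.
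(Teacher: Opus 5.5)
Your proposal is correct. For the diamagnetic inequality, the kernel $K_A$ and the electrostatic inequality it follows the paper's argument, provided you take your fallback. The paper never proves a.e.\ convergence of $K_{A_k}$, which you would not easily get. It obtains weak $L^1_{\rm loc}(\R^6)$ convergence from the uniform bound $(4\pi|x-y|)^{-1}$, and tests $V_{A_k}$ against $\langle\Psi,\cdot\,\Psi\rangle$ for $\Psi\in C_c^\infty$. Your extension from products $f\otimes g$ to general test functions, using the same uniform bound and density of sums of products, works equally well. You should add joint measurability of the heat kernels in $(t,x,y)$ before integrating in $t$; the paper gets it from limits of mollified pairings.

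The genuine difference is in how you transfer the stability bounds. The paper reruns no one-body estimate for rough $A$. It passes the final inequalities for $A_k$ to the limit on $C_c^\infty$: the bound of Theorem~\ref{thm:spinless}, and $\sum_i|p_{A_k}|_i+\alpha V_{A_k}\ge0$ from the proof of Theorem~\ref{thm:rel}. For this it uses $\langle\psi,p_{A_k}^2\psi\rangle\to\langle\psi,p_A^2\psi\rangle$, the heat-kernel formula for $|p_{A_k}|$ with dominated convergence, and the convergence of $\langle\Psi,V_{A_k}\Psi\rangle$. It then extends to the form domain by form-boundedness of $V_A$.

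You instead rerun Propositions~\ref{prop:kin0} and~\ref{prop:relkin} for $A$ itself. This needs Proposition~\ref{prop:clr}, Lemma~\ref{lem:superadd}, and the form-boundedness hypotheses of Lemmas~\ref{lem:bks} and~\ref{lem:daub}, all for rough $A$. These do hold once your first step gives the a.e.\ kernel domination and Lemma~\ref{lem:maxmin} is available, and there is no circularity.

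The paper's limit argument is more economical, since it only needs convergence of quadratic forms on a core. Your route gives the bounds directly on the whole form domain, and it is the one the paper itself uses for the Pauli case in Proposition~\ref{prop:roughpauli}. If you also want $H_R$ as an operator for rough $A$, you still need the paper's observation that the slack $0.0852>\ell_0$ in Proposition~\ref{prop:relkin} makes its form closed on the form domain of $\sum_i|p_A|_i$.
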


\begin{proof}
Let $A_k=j_{1/k}*A$. These are smooth, $\su$-valued, and $A_k\to A$ in $L^2_{\rm loc}$, in both
representations. By Lemma~\ref{lem:srs}, $e^{t\Delta_{A_k}}\to e^{t\Delta_A}$ strongly in both
representations.

\emph{Diamagnetic inequality.} Along a subsequence $e^{t\Delta_{A_k}}f\to e^{t\Delta_A}f$ a.e., so the
pointwise inequality $|e^{t\Delta_A}f|\le e^{t\Delta}|f|$ of Lemma~\ref{lem:dom} passes to the limit.
It gives $\|e^{t\Delta_A}f\|_\infty\le(4\pi t)^{-3/2}\|f\|_1$ for $f\in L^1\cap L^2$, so each matrix entry of $e^{t\Delta_A}$ extends to a bounded operator from $L^1$ to $L^\infty$. Such an operator is an integral operator with a bounded
kernel~\cite[Theorem~1.3 and its proof]{AB94}, and by duality this kernel is bounded a.e.\ by the Gaussian kernel of $e^{t\Delta}$. The kernels can be chosen jointly measurable in $(t,x,y)$, for instance as a.e.\ limits of the
continuous functions $\langle j_\varepsilon(\cdot-x)w,e^{t\Delta_A}j_\varepsilon(\cdot-y)w'\rangle$ with
$w,w'\in\C^n$. Integrating in $t$, $K_A$ exists as an a.e.\ defined kernel satisfying~\eqref{eq:Kbound}.

\emph{Coulomb kernels.} For $f,g\in L^1\cap L^\infty$, dominated convergence in $t$ gives
$\langle f,K_{A_k}g\rangle\to\langle f,K_Ag\rangle$. Since $|K_{A_k}|\le(4\pi|x-y|)^{-1}\in L^1_{\rm loc}$,
on every compact set the sequence is dominated by an integrable function. By the
Dunford--Pettis theorem~\cite[Sect.~19.5]{RF10}, every subsequence therefore has a further
subsequence that converges weakly in $L^1_{\rm loc}(\R^6)$, and every such limit is $K_A$, since a
function in $L^1_{\rm loc}$ is determined by its integrals against products $f(x)g(y)$. Hence
$K_{A_k}\to K_A$ weakly in $L^1_{\rm loc}(\R^6)$. Consequently $\langle\Psi,V_{A_k}\Psi\rangle\to\langle\Psi,V_A\Psi\rangle$ for
$\Psi\in C_c^\infty$. Applying Proposition~\ref{prop:electro} to $A_k$ and passing to the limit gives
$\langle\Psi,(V_A+3\sum_ie_i^2/d_i)\Psi\rangle\ge0$ for all $\Psi\in C_c^\infty$. Hence the matrix
$V_A(\ux)+3\sum_ie_i^2/d_i$ is positive semidefinite for a.e.\ $\ux$.

\emph{Kinetic energies.} $\langle\psi,p_{A_k}^2\psi\rangle\to\langle\psi,p_A^2\psi\rangle$ for
$\psi\in C_c^\infty$, since $A_k\psi\to A\psi$ in $L^2$. For $|p_A|$, write
$\langle\psi,|p_{A_k}|\psi\rangle=\frac1{2\sqrt\pi}\int_0^\infty(\|\psi\|^2-\langle\psi,e^{t\Delta_{A_k}}\psi\rangle)
t^{-3/2}dt$ and use dominated convergence, with majorant $\min(2\|\psi\|^2,tC)\,t^{-3/2}$.

\emph{Conclusion.} The bound of Theorem~\ref{thm:spinless} for $A_k$ passes to the limit on
$C_c^\infty$. For Theorem~\ref{thm:rel} we pass to the limit in the stronger inequality
$\sum_i|p_{A_k}|_i+\alpha V_{A_k}\ge0$ from its proof, and then use $\sqrt{p_A^2+m^2}\ge|p_A|$.
By~\eqref{eq:Kbound}, as in the proof of Proposition~\ref{prop:firstkind}, $|V_A|\le3Q^2\sum_{i<j}|x_i-x_j|^{-1}$. By the pointwise diamagnetic inequality (Lemma~\ref{lem:maxmin}) and Hardy's inequality, this is form-bounded with respect to
$\sum_ip_{A,i}^2$ with relative bound $0$. The diamagnetic inequality for the heat kernels and the
formula above give $\langle u,|p_A|u\rangle\ge\langle|u|,|p|\,|u|\rangle$, so by the relativistic Hardy inequality $V_A$ is also form-bounded with respect to $\sum_i|p_A|_i$, with a constant depending on
$N$. Hence the energy is continuous in the form norm of the kinetic energy, and the bounds extend from
the core $C_c^\infty$ to the whole form domain. (By construction $C_c^\infty$ is a form core for $p_A^2$, and
hence also for $|p_A|$, since a core for $(p_A^2)^{1/2}$ is a core for $(p_A^2)^{1/4}$.)

\emph{Definition of $H_S$ and $H_R$.} For $H_S$ the relative bound is $0$, so $H_S$ is
defined by the KLMN theorem~\cite{RS75}. It remains to define $H_R$ for rough $A$. The proof of
Proposition~\ref{prop:relkin} gives the constant $0.0852>\ell_0$, and that bound passes
to rough $A$ in the same way. So under the hypothesis of Theorem~\ref{thm:rel},
$\sum_i|p_A|_i+\alpha V_A\ge c\sum_i|p_A|_i$ with $c=1-0.085/0.0852>0$. Since $\sqrt{p^2+m^2}-|p|\ge0$,
the form $\sum_i\sqrt{p_{A,i}^2+m_i^2}+\alpha V_A$ is at least $c$ times the kinetic form, and by the relativistic Hardy bound above it is at most a constant multiple of it. Its form norm is therefore
equivalent to that of the kinetic energy, so it is closed on the form domain of the kinetic energy,
and $H_R$ is the associated operator.
\end{proof}

The only property of $A$ used in Lemmas~\ref{lem:maxmin} and~\ref{lem:srs} is that it takes
anti-Hermitian values, so Proposition~\ref{prop:rough} extends verbatim to the compact groups of
Theorem~\ref{thm:groups}.

\begin{proposition}[Rough fields with spin]\label{prop:roughpauli}
Let $A\in L^2_{\rm loc}$ be $\su$-valued, and suppose that the distribution
$F_{jk}=\partial_jA_k-\partial_kA_j+[A_j,A_k]$ is a function with $\int|B|^2<\infty$.
Define $(\sigma\cdot p_A)^2$ as the operator of the closure of the form
$\int|(\sigma\cdot\nabla_A)\psi|^2$ on $C_c^\infty$. Then~\eqref{eq:pauli} holds in the sense of forms,
and the bound of Theorem~\ref{thm:pauli}, with the same constants, and
Proposition~\ref{prop:firstkind} hold for $A$.
\end{proposition}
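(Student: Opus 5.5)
The plan is to approximate $A$ by smooth fields, as in Proposition~\ref{prop:rough}, and to pass to the limit in the stability bounds of Theorem~\ref{thm:pauli} and Proposition~\ref{prop:firstkind}. That requires three ingredients: the identity~\eqref{eq:pauli} for rough $A$, convergence of the Pauli kinetic forms on $C_c^\infty$, and convergence of the field energies.

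\emph{Step 1: the Lichnerowicz identity.} For $\psi\in C_c^\infty$ I expand $|(\sigma\cdot\nabla_A)\psi|^2=\sum_{j,k}\langle\sigma_j\nabla_{A,j}\psi,\sigma_k\nabla_{A,k}\psi\rangle$ and integrate by parts in the cross terms $j\ne k$. With $A\in L^2_{\rm loc}$ this is legitimate at the level of distributions, because $\psi$ is smooth with compact support and the products $A_jA_k\psi$ lie in $L^1$. The commutator $[\nabla_{A,j},\nabla_{A,k}]$ then produces exactly the distribution $F_{jk}$, which by hypothesis is a function, with $B\in L^2$. This gives
\[
\int|(\sigma\cdot\nabla_A)\psi|^2=\int|\nabla_A\psi|^2-\int\Bigl\langle\psi,\tfrac12\textstyle\sum_{l,a}B_l^a\sigma_l\otimes\tau^a\psi\Bigr\rangle
\]
on $C_c^\infty$. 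By the Zeeman bound of Lemma~\ref{lem:zeeman} and the pointwise diamagnetic inequality of Lemma~\ref{lem:maxmin}, the term with $|B|\in L^2$ is infinitesimally form-bounded relative to $p_A^2$, by the same Sobolev argument as in Proposition~\ref{prop:firstkind}. So the two closures have the same domain, and~\eqref{eq:pauli} holds as forms.

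\emph{Step 2: stability of the first kind.} This now follows exactly as in the proof of Proposition~\ref{prop:firstkind}. We have $(\sigma\cdot p_A)^2\ge p_A^2-b$ with $b$ of relative bound $0$. Moreover $|V_A|\le3Q^2\sum_{i<j}|x_i-x_j|^{-1}$ holds a.e.\ by Proposition~\ref{prop:rough}, and Hardy's inequality combined with Lemma~\ref{lem:maxmin} gives relative bound $0$ for this term as well.

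\emph{Step 3: the second-kind bound.} This is the main obstacle. Mollifying $A_k=j_{1/k}*A$ gives $\int|B_k|^2\to\int|B|^2$? The linear part $\curl A_k$ converges in $L^2$ whenever the linear part of $F$ is in $L^2$. The quadratic term is the problem: $[A_k,A_k]$ need not converge to $j*[A,A]$ in $L^2$ when we only know $A\in L^2_{\rm loc}$.

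I would avoid convergence of $B$ altogether. Instead, I would rerun the proof of Theorem~\ref{thm:pauli} directly for the rough field. Every ingredient of that proof is available for $A$ itself:
\begin{itemize}
\item the CLR bound of Lemma~\ref{lem:count}, which passes to the limit through the strong resolvent convergence of Lemma~\ref{lem:srs}, since counting functions are lower semicontinuous under strong resolvent convergence and the approximation is done at fixed potential $U$;
\item the Zeeman bound, which is pointwise in $B$;
\item the Neumann hydrogen bound of Lemma~\ref{lem:ball}, which uses only the diamagnetic inequality of Lemma~\ref{lem:maxmin};
\item the electrostatic inequality of Proposition~\ref{prop:rough}.
\end{itemize}
Step~1 then places us in the setting of Lemma~\ref{lem:onebody} and Proposition~\ref{prop:kinP} with $T=(\sigma\cdot p_A)^2$, $T\ge0$ and $T\ge p_A^2-b$, which are the only properties of $T$ used there. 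The bound holds on $C_c^\infty$ and extends to the form domain by the form-boundedness from Step~2.

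The point needing care is the CLR limit. I would check it by noting that $N_0(p_{A_k}^2-U)$ is bounded uniformly in $k$ and that eigenvalue counts below $-E$ cannot jump up in a strong resolvent limit.
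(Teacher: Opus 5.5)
Your proposal is correct and follows the paper's proof. You establish~\eqref{eq:pauli} on $C_c^\infty$ and close it by the infinitesimal form-boundedness of $|B|$, you avoid approximating $B$, and you rerun the proof of Theorem~\ref{thm:pauli} directly for the rough field, using only the CLR bound, \eqref{eq:Kbound} with Proposition~\ref{prop:electro}, the pointwise diamagnetic inequality in Lemma~\ref{lem:ball}, and $T\ge0$, $T\ge p_A^2-b$; the bound then extends from $C_c^\infty$ by form-boundedness.

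The one difference is how you obtain the CLR bound for rough $A$. The paper reads it off from Appendix~\ref{sec:clr}, whose proof uses only a.e.\ kernel domination and so applies verbatim once Proposition~\ref{prop:rough} supplies that domination. Your limiting argument also works, but it needs one more step than you give: Lemma~\ref{lem:srs} yields strong resolvent convergence of $p_{A_k}^2$, not of $p_{A_k}^2-W$ with unbounded $W=(U-E)_+$. The simplest fix is min-max. Choose an $n$-dimensional negative subspace for $p_A^2-W$ inside the form core $C_c^\infty$, where $\int|\nabla_{A_k}\psi|^2\to\int|\nabla_A\psi|^2$; this gives $N_0(p_A^2-W)\le\liminf_kN_0(p_{A_k}^2-W)$.
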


\begin{proof}
\emph{The Zeeman identity.} For $\psi\in C_c^\infty$ we have $\nabla_{A,k}\psi\in L^2$. For $j\ne k$, expand
$\int\bigl(\langle\nabla_{A,j}\psi,\sigma_j\sigma_k\nabla_{A,k}\psi\rangle-\langle\nabla_{A,k}\psi,\sigma_j\sigma_k\nabla_{A,j}\psi\rangle\bigr)$
using $\nabla_A=\nabla+A$. The terms with two derivatives of $\psi$ cancel after integration by parts.
The terms linear in $A$ give the distribution $\partial_jA_k-\partial_kA_j$ tested against the smooth,
compactly supported matrix $\psi\psi^\dagger$, and the terms quadratic in $A$ give $[A_j,A_k]\in L^1_{\rm loc}$.
Together they give $-\langle\psi,\sigma_j\sigma_kF_{jk}\psi\rangle$, since $F_{jk}\in L^2_{\rm loc}$. With
$\sigma_j\sigma_k=\delta_{jk}+i\sum_l\varepsilon_{jkl}\sigma_l$ this gives
\[
\int|(\sigma\cdot\nabla_A)\psi|^2=\int|\nabla_A\psi|^2-\sum_{j<k}\langle\psi,\sigma_j\sigma_kF_{jk}\psi\rangle
=\int|\nabla_A\psi|^2-\Bigl\langle\psi,i\sum_l\sigma_l\otimes B_l\,\psi\Bigr\rangle .
\]
This is~\eqref{eq:pauli} on $C_c^\infty$. With Lemma~\ref{lem:zeeman} it gives
$(\sigma\cdot p_A)^2\ge p_A^2-b$, where $b=\frac{\sqrt3}2|B|$. As in the proof of
Proposition~\ref{prop:firstkind}, $b$ is form-bounded with respect to $p_A^2$ with relative bound $0$. By
the KLMN theorem the form is closable, and its closure has the same domain as $p_A^2$, with core
$C_c^\infty$.

\emph{Theorem~\ref{thm:pauli}.} Its proof uses exactly four properties of $A$: the diamagnetic
inequality for the heat kernel, through Lemma~\ref{lem:count}; \eqref{eq:Kbound} and Proposition~\ref{prop:electro};
the pointwise diamagnetic inequality, through Lemma~\ref{lem:ball}; and the inequalities $(\sigma\cdot p_A)^2\ge p_A^2-b$
and $(\sigma\cdot p_A)^2\ge0$. The first two hold by Proposition~\ref{prop:rough}. The third holds by Lemma~\ref{lem:maxmin}, and the fourth was
just shown. So the proofs of Lemma~\ref{lem:onebody}, Proposition~\ref{prop:kinP} and
Theorem~\ref{thm:pauli} apply without change on $C_c^\infty$. Since $b$ and $V_A$ are form-bounded
with respect to the kinetic energy, the energy is continuous in its form norm, and the bound extends
to the whole form domain. The same argument applies to the proof of
Proposition~\ref{prop:firstkind}, and to the compact groups of Theorem~\ref{thm:groups}, with the constant
of the Zeeman bound replaced as there.
\end{proof}

Extending Theorem~\ref{thm:dirac} to rough fields would require a self-adjoint realization of $D_A$
for rough $A$ with the properties used in Section~\ref{sec:rel}. We do not pursue this.

\section{The Cwikel--Lieb--Rozenblum bound for \texorpdfstring{$p_A^2$}{p\_A\textasciicircum 2}}\label{sec:clr}

Lemma~\ref{lem:count} rests on the following extension of the CLR bound to $\C^q$-valued functions. The
idea is simple. In Lieb's proof of the CLR bound, as given by Rozenblum and Solomyak, the semigroup $e^{-tp_A^2}$
enters the constant only through integrals of products of its integral kernels against nonnegative weights. For scalar
functions, kernels dominated by those of $e^{t\Delta}$ can therefore be replaced by them, which gives the same
constant as without a field. For $\C^q$-valued functions the kernels are $q\times q$ matrices, and their products
appear under a trace. Since the trace of a product of $q\times q$ matrices is bounded in absolute value by $q$ times the
product of their norms, and Lemma~\ref{lem:dom} bounds these norms by the kernels of $e^{t\Delta}$, the same replacement costs only
a factor~$q$.

\begin{proposition}[CLR bound for $p_A^2$]\label{prop:clr}
Let $A$ be smooth, or as in Proposition~\ref{prop:rough}, and let $0\le W\in L^{3/2}$. Then the number
$N_0(p_A^2-W)$ of negative eigenvalues of $p_A^2-W$ satisfies $N_0(p_A^2-W)\le q\,L_C\int W^{3/2}$.
\end{proposition}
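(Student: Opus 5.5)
We follow Lieb's proof of the CLR bound in the form given by Rozenblum and Solomyak, and check that the semigroup $e^{t\Delta_A}$ enters only through traces of products of its kernels. Those traces are controlled by Lemma~\ref{lem:dom}, at the cost of a factor $q$.

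\emph{Reduction to a trace.} By the Birman--Schwinger principle, $N_0(p_A^2-W)$ equals the number of eigenvalues larger than $1$ of $K_E=W^{1/2}(p_A^2+E)^{-1}W^{1/2}$ in the limit $E\downarrow0$. It suffices to take $W$ bounded with compact support, since a monotone approximation and the min-max principle then give the general case. Following Lieb, fix a convex function $F\ge0$ on $[0,\infty)$ with $F(s)=0$ for $s$ small and $F(s)\ge c$ for $s\ge1$. Then
\[
N_0\le c^{-1}\Tr F\bigl(W^{1/2}(p_A^2)^{-1}W^{1/2}\bigr).
\]
The key step of Lieb's proof, which Rozenblum and Solomyak abstract, bounds this trace by a Feynman--Kac type formula. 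Write
\[
\Tr F\Bigl(\int_0^\infty W^{1/2}e^{t\Delta_A}W^{1/2}\,dt\Bigr)\le\int_0^\infty\frac{dt}{t}\,\Tr\,\bigl(\ldots\bigr).
\]
In the scalar case, Jensen's inequality applied along Brownian bridge paths turns the argument into path integrals of $W$ against the bridge measure. For matrix kernels, the cleanest route is to run their argument at the level of the discretized products $W^{1/2}e^{t_1\Delta_A}W\,e^{t_2\Delta_A}\cdots W^{1/2}$. Each term in the power series of $F$, or in its integral representation, is then a trace over $L^2(\R^3;\C^q)$ of such products.

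\emph{Domination.} Each such trace is
\[
\int\cdots\int\tr_{\C^q}\bigl[e^{t_1\Delta_A}(x_1,x_2)\cdots e^{t_k\Delta_A}(x_k,x_1)\bigr]\prod_iW(x_i)\,dx .
\]
Since $|\tr M|\le q\|M\|\le q\prod\|M_i\|$, this is bounded by $q$ times the same integral with the scalar kernels $e^{t_i\Delta}(x_i,x_{i+1})$, by Lemma~\ref{lem:dom}. Because $W\ge0$, every scalar term is nonnegative, so the domination goes in the right direction. Summing reproduces $q$ times the free expression, which Lieb's computation bounds by $L_C\int W^{3/2}$ with $L_C=0.1156\ldots$.

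\emph{Main obstacle.} The difficulty is that Lieb's convexity step, Jensen's inequality in path space, is not a positive combination of such products: $F$ is applied to an operator, not to numbers. We would therefore use Rozenblum--Solomyak's version, in which the only semigroup-dependent input is a family of quantities of the form $\Tr\,W^{1/2}e^{t\Delta_A}\,G\,e^{s\Delta_A}W^{1/2}$. Here $G$ is a positive multiplication-type operator built from $W$, or the whole proof is expressed through a positivity-preserving dominated semigroup. One should check at that point that the matrix-valued semigroup satisfies their hypotheses after the trace bound. If this fails, the fallback is to prove the needed trace inequality $\Tr f(K_A)\le q\Tr f(K_0)$ directly for a convex $f$ with $f(0)=0$, expanding $f$ as $\int(\,\cdot-\lambda)_+d\mu(\lambda)$ and comparing moments. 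For rough $A$, the strong convergence of Lemma~\ref{lem:srs} and lower semicontinuity of eigenvalue counts transfer the bound.
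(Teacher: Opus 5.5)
Your mechanism is the one the paper uses: rerun Rozenblum and Solomyak's version of Lieb's proof, and pay a factor $q$ through $|\tr_{\C^q}\prod_kM_k|\le q\prod_k\|M_k\|$ together with the kernel domination of Lemma~\ref{lem:dom}. But the proposal stops at the step that carries the proof, and so has a gap.

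Your domination paragraph covers traces of products $e^{t_1\Delta_A}(x_1,x_2)\cdots e^{t_k\Delta_A}(x_k,x_1)$ against $\prod_iW(x_i)$. As you note yourself, Lieb's $\Tr F(K)$ is not a nonnegative combination of these, so the claim ``summing reproduces $q$ times the free expression'' is unjustified.

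The paper closes this by locating where the semigroup enters the constant in Rozenblum--Solomyak's argument, namely their trace estimate (Proposition~5.1). After a scalar regularization and the averaged Trotter formula, the traces are limits of integrals over discretized paths:
\[
\int\1_\Omega(y)\,P^0_{\varepsilon/2}(y,x_0)\,\tr_{\C^q}\Bigl[\prod_{k=1}^nP^A_{t/n}(x_{k-1},x_k)\Bigr]\,P^0_{\varepsilon/2}(x_n,y)\,\mathcal W(x_0,\dots,x_n)\,dy\,dx_0\cdots dx_n ,
\]
where $P^A_t$ is the kernel of $e^{-t(p_A^2+r)}$ and the weight $\mathcal W\ge0$ does not depend on $A$. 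The convexity step is absorbed into $\mathcal W$, and the matrix structure appears only in the trace of the ordered product along the path. That is what makes the replacement by $q\prod_kP^0_{t/n}$ legitimate. Your guessed form $\Tr W^{1/2}e^{t\Delta_A}Ge^{s\Delta_A}W^{1/2}$, with two semigroup factors, is not what occurs.

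The paper also records two inputs you omit:
\begin{itemize}
\item The domination persists when a nonnegative multiple of $W$ is added to both operators. This follows from the Trotter formula, since $e^{-sW}$ is scalar with values in $[0,1]$.
\item The trace-class lemmas, which only ensure finiteness, follow from the kernel bounds with $q$-dependent constants.
\end{itemize}

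Your fallback would not work. Kernel domination does give $\Tr K_A^k\le q\Tr K_0^k$ for integers $k\ge2$. However, $(x-\lambda)_+$ is not a nonnegative combination of powers, and domination of all power sums does not imply domination of $\Tr(\cdot-\lambda)_+$. For example, the spectra $(1,1,1,0)$ and $(\frac32,\frac12,\frac12,\frac12)$ satisfy $\sum a_i^k\le\sum b_i^k$ for every $k\ge1$, yet $\sum(a_i-\frac12)_+=\frac32>1=\sum(b_i-\frac12)_+$. Moreover, $\Tr K_0=\infty$ in three dimensions. So any bound of the form $\sum_{k\ge2}c_k\Tr K_0^k$ with $c_k\ge0$ grows at least like $\beta^2$ under $W\to\beta W$, and cannot yield $qL_C\int W^{3/2}$.

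For rough $A$, your route through strong resolvent convergence and lower semicontinuity of the eigenvalue count is a workable alternative. It does, however, need strong resolvent convergence of $p_{A_k}^2-W$, not just of $p_{A_k}^2$. The paper avoids this by noting that the argument uses only a.e.\ kernel bounds, which hold for rough $A$ by Proposition~\ref{prop:rough}.
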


\begin{proof}
For $q=1$ the bound is proved by Rozenblum and Solomyak~\cite[Theorem~2.4 and Sect.~3.2]{RS98}, with
Lieb's constant~\cite[Sect.~3.1]{RS98}. We follow their proof~\cite[Sects.~4 and~5]{RS98} and point out where the
matrix structure enters. As there, we first take $W\in L^1\cap L^\infty$ and fix $r>0$. Let $P^A_t(x,y)$ and $P^0_t(x,y)$
be the integral kernels of $e^{-t(p_A^2+r)}$ and $e^{-t(-\Delta+r)}$. By Lemma~\ref{lem:dom},
$\|P^A_t(x,y)\|\le P^0_t(x,y)$. By the Trotter product formula, the same bound holds when a nonnegative multiple of $W$ is
added to both operators, since multiplication by $e^{-sW}$, $s\ge0$, is scalar and lies between $0$ and $1$.

Apart from general operator theory, the proof uses the domination in two places. First, in trace-class bounds
that only ensure finiteness~\cite[Lemmas~4.2 and~4.3]{RS98}; these follow from the kernel bounds, with constants
depending on $q$. Second, in the trace estimate~\cite[Proposition~5.1]{RS98}, which decides the constant. In its proof, a
scalar regularization and the averaged Trotter formula~\cite[(5.11) and Lemma~4.4]{RS98} express the traces as
limits of integrals~\cite[(5.12)]{RS98}, which for $\C^q$-valued functions take the form
\[
\int\1_\Omega(y)\,P^0_{\varepsilon/2}(y,x_0)\,\tr_{\C^q}\Bigl[\,\prod_{k=1}^nP^A_{t/n}(x_{k-1},x_k)\Bigr]\,P^0_{\varepsilon/2}(x_n,y)\,
\mathcal W(x_0,\dots,x_n)\,dy\,dx_0\cdots dx_n,
\]
with $\Omega$ a set of finite measure, $\varepsilon>0$, and a weight $\mathcal W\ge0$
that does not depend on $A$. In the scalar case, Rozenblum and Solomyak bound this integral by replacing each $P^A_{t/n}$
by $P^0_{t/n}$. For $q>1$ the replacement costs a factor $q$, since
\[
\Bigl|\tr_{\C^q}\prod_{k=1}^nP^A_{t/n}(x_{k-1},x_k)\Bigr|\le q\prod_{k=1}^n\|P^A_{t/n}(x_{k-1},x_k)\|\le q\prod_{k=1}^nP^0_{t/n}(x_{k-1},x_k).
\]

The rest of the proof, including the limit $r\to0$ and the extension to general $W$, involves only the kernels $P^0_t$ and operator
inequalities~\cite[Sects.~5.2 and~5.3]{RS98}. The result is $N_0(p_A^2-W)\le0.1156\,q\int W^{3/2}$,
and $L_C=0.116$ rounds Lieb's constant up. Since only a.e.\ bounds on the kernels were used, and these hold for
rough $A$ by Proposition~\ref{prop:rough}, the proof applies to rough $A$ as well.
\end{proof}

\section*{Statements and Declarations}
\addcontentsline{toc}{section}{Statements and Declarations}

\paragraph{Acknowledgements and funding.} JPS and RR were partially supported by the European
Union's Seventh Framework Programme under the ERC Grant Agreement No.\ 321029. RR was partially
supported by the Royal Society through a Newton International Fellowship (NF160313) and by Darwin
College Cambridge through a Research Fellowship. JPS was partially supported by the Villum Foundation through the Villum Centre of Excellence
for the Mathematics of Quantum Theory (QMATH), Grant No.\ 10059, and the Villum Investigator
grant VIL73411. DMAS has been partially
supported by STFC consolidated grant ST/P000681/1 and St John's College, Cambridge. RR would like to
thank the Isaac Newton Institute for Mathematical Sciences, Cambridge, for support and hospitality
during the programme Mathematics of Many-Body Entanglement, where work on this paper was
undertaken. This work was supported by EPSRC grant EP/Z000580/1. RR was partially supported by a grant from the Simons Foundation, through a Simons Foundation Fellowship of the Newton Institute. RR
is a member of the Gruppo Nazionale per la Fisica Matematica (GNFM) of the Istituto Nazionale di
Alta Matematica Francesco Severi (INdAM).



\paragraph{Use of large language models.} This work made substantial use of Claude Fable 5.1 and Opus 5.5 (Anthropic), which the authors used interactively.
\begin{itemize}[leftmargin=1.5em,itemsep=1pt]
\item \emph{By the authors.} They posed the problem, formulated the model and its Hamiltonian
description (including the analysis at $A=0$), and proposed the strategy for the kinetic estimates:
a splitting of the particles combined with Lieb--Thirring-type bounds. They also suggested using
the harmonic extension in the electrostatic estimate.
\item \emph{With the model, under the direction of the authors:}
\begin{itemize}[leftmargin=1.5em,itemsep=0pt]
\item the non-abelian Newton theorem (Lemma~\ref{lem:newton}) and the argument with the matrix of
mutual energies in Proposition~\ref{prop:electro};
\item the form of the random splitting used in Section~\ref{sec:kin};
\item the superadditivity of the restricted form (Lemma~\ref{lem:superadd}), and the relativistic
and Dirac results built on it;
\item Theorems~\ref{thm:free}, \ref{thm:groups} and~\ref{thm:bosons}, Propositions~\ref{prop:rough}
and~\ref{prop:roughpauli} with Lemmas~\ref{lem:maxmin} and~\ref{lem:srs}, the write-up of the
derivation in Appendix~\ref{app:derivation}, the proof of Lemma~\ref{lem:bks} for unbounded
operators, and the extension of the Cwikel--Lieb--Rozenblum bound to $p_A^2$ in Appendix~\ref{sec:clr};

\item the explicit constants;
\item the first draft of the text and much of the subsequent editing.
\end{itemize}
\end{itemize}
The authors checked every argument in this paper. They take full responsibility for its
content.

\end{document}